\newtheorem{theorem}{Theorem}
\DeclareMathOperator*{\argmax}{arg\,max}
\algrenewcommand\algorithmicrequire{\textbf{Input:}}
\algrenewcommand\algorithmicensure{\textbf{Output:}}
\newcommand\wwide[1]{\ThisStyle{%
  \setbox0=\hbox{$\SavedStyle#1$}%
  \stackengine{-.1\LMpt}{$\SavedStyle#1$}{%
    \stretchto{\scaleto{\SavedStyle\mkern.2mu\AC}{.5150\wd0}}{.6\ht0}%
  }{O}{c}{F}{T}{S}%
}}
\newcommand*\circled[1]{\tikz[baseline=(char.base)]{\node[shape=circle,draw,inner sep=0.25pt] (char) {#1};}}
\newcommand{\FF}{{\mathbb F}}
\newcommand{\ZZ}{{\mathbb Z}}
\newcommand{\eps}{\epsilon}
\newcommand{\bx}{{\overline{x}}}
\newcommand{\by}{{\overline{y}}}
\newcommand{\mX}{{\mathcal{X}}}
\newcommand{\mZ}{{\mathcal{Z}}}
\newcommand{\mS}{{\mathcal{S}}}
\newcommand{\mP}{{\mathcal{P}}}
\newcommand{\mO}{{\mathcal{O}}}
\newcommand{\change}{\color{black}}
\newenvironment{breakablealgorithm}
{
    \begin{center}
        \refstepcounter{algorithm}
        \renewcommand{\caption}[1]
        {
            \addcontentsline{loa}{algorithm}{\protect\numberline{\thealgorithm}##1}
            \parbox{0.5\textwidth}
            {
                \hrule height.8pt depth0pt \kern2pt
                {\raggedright\textbf{\fname@algorithm~\thealgorithm} ##1\par}
                \kern2pt\hrule\kern2pt
            }
        }
}
{
        \kern2pt\hrule\relax
    \end{center}
}
\begin{document}

\begin{CJK*}{UTF8}{gbsn}

\title{Extracting topological orders of generalized Pauli stabilizer codes in two dimensions}

\author{Zijian Liang (梁子健)}
\thanks{equal contribution.}
\affiliation{International Center for Quantum Materials, School of Physics, Peking University, Beijing 100871, China}
\affiliation{School of Physics, Shandong University, Jinan 250100, China}

\author{Yijia Xu (许逸葭)}
\thanks{equal contribution.}
\affiliation{Joint Quantum Institute and Joint Center for Quantum Information and Computer Science, NIST/University of Maryland, College Park, Maryland 20742, USA}
\affiliation{Institute for Physical Science and Technology, University of Maryland, College Park, Maryland 20742, USA}

\author{Joseph T.~Iosue}
\affiliation{Joint Quantum Institute and Joint Center for Quantum Information and Computer Science, NIST/University of Maryland, College Park, Maryland 20742, USA}

\author{Yu-An Chen \CJKfamily{bsmi}(陳昱安)}
\email[E-mail: ]{yuanchen@pku.edu.cn}
\affiliation{International Center for Quantum Materials, School of Physics, Peking University, Beijing 100871, China}

\date{\today}

\begin{abstract}
In this paper, we introduce an algorithm for extracting topological data from translation invariant generalized Pauli stabilizer codes in two-dimensional systems, focusing on the analysis of anyon excitations and string operators. The algorithm applies to $\mathbb{Z}_d$ qudits, including instances where $d$ is a nonprime number.
This capability allows the identification of topological orders that differ from the $\mathbb{Z}_d$ toric codes.
{\color{black}It extends our understanding beyond the established theorem that Pauli stabilizer codes for $\mathbb{Z}_p$ qudits (with $p$ being a prime) are equivalent to finite copies of $\mathbb{Z}_p$ toric codes and trivial stabilizers.}
The algorithm is designed to determine all anyons and their string operators, enabling the computation of their fusion rules, topological spins, and braiding statistics. The method converts the identification of topological orders into computational tasks, including Gaussian elimination, the Hermite normal form, and the Smith normal form of truncated Laurent polynomials.
Furthermore, the algorithm provides a systematic approach for studying quantum error-correcting codes. We apply it to various codes, such as self-dual CSS quantum codes modified from the \textcolor{black}{2d honeycomb} color code and non-CSS quantum codes that contain the double semion topological order or the six-semion topological order.
\end{abstract}

\maketitle
\end{CJK*}

\tableofcontents

\section{Introduction}
The topological phases of matter in gapped systems are conceptualized as equivalence classes of physical systems. These systems are considered equivalent if they can be continuously transformed into each other while maintaining an energy gap above the ground state subspace. A defining characteristic of a topological phase is its stability against local perturbation. As a result, research in this field focuses on identifying topological invariants under continuous transformations and building models that exhibit various invariants. To this end, numerous intricate models have been studied and partially classified across different spatial dimensions, both with and without symmetries, in both condensed matter physics \cite{chamon2005quantum, Chen2011Complete, levin2012braiding, chen2012symmetry, gu2014effect,gu2014lattice,jian2014layer,wang2015topological,ye2016topological, yoshida2016topological, vijay2016fracton, Kapustin2017Higher, shirley2018fracton, Lan2018Classification, wang2018towards,cheng2018loop,nandkishore2019fractons, Lan2019Classification, pretko2020fracton,wang2020construction,qi2021fracton,Chen2021Disentangling, Johnson2022Classification, chen2023exactly} and quantum information \cite{bravyi1998quantum, dennis2002topological, kitaev2003fault, kitaev2006anyons, bombin2006topological,haah2011local, bombin2015gauge, bravyi2010majorana, vijay2015majorana, ellison2022pauli}.  Despite these advances, the complete classification and comprehensive characterization of topological phases remain areas of ongoing research. 

A notable subset of lattice models includes generalized Pauli stabilizer codes \cite{gottesman1997stabilizer}, characterized by Hamiltonian terms composed of commuting products of generalized Pauli operators for $\ZZ_d$ qudits.\footnote{A {\bf generalized Pauli stabilizer} Hamiltonian $H= - \sum_i S_i$ consists of stabilizers $S_i$ such that each $S_i$ is a product of generalized Pauli matrices and commutes with each other $[S_i, S_j] = 0$. The mathematical expression of generalized Pauli matrices will be shown explicitly in the next section. The group formed by $S_i$ (and their products) is the {\bf stabilizer group}.} Bombin's investigation into translation invariant topological stabilizer codes in two-dimensional lattices with qubit ($d=2$) degrees of freedom provided foundational insights \cite{bombin_Stabilizer_14}. He demonstrated that any non-chiral topological stabilizer code, under locality-preserving automorphisms of the operator algebra, can be expressed as a direct sum of toric code stabilizers and trivial stabilizers (corresponding to product states). A widely believed conjecture \cite{frohlich1990braid,rehren1989braid,kitaev2006anyons} states that all commuting projector Hamiltonians are non-chiral, suggesting that the classification of translation invariant Pauli stabilizer codes for two-dimensional qubit systems is essentially complete. Extending this framework, Haah applied these principles to prime-dimensional qudits\footnote{ A {\bf prime-dimensional qudit} refers to a $\ZZ_d$ qudit where the qudit dimension $d$ is a prime number.} \cite{haah_module_13, haah2016algebraic, haah_classification_21}. Given that any translation invariant topological general Pauli stabilizer code on such qudit systems must include a nontrivial ``boson'' \cite{haah_QCA_23}, it is inferred that all two-dimensional topological generalized Pauli stabilizer codes for prime-dimensional qudits can be transformed into a direct sum of finite copies of toric code stabilizers and trivial stabilizers using finite-depth Clifford circuits.
However, the classification of generalized Pauli stabilizer models with nonprime-dimensional qudits remains an unresolved challenge. It is conjectured that such systems may be equivalent to a direct sum of finite copies of trivial stabilizers and condensation descendants of the $\ZZ_d$ toric code stabilizers \cite{ellison2022pauli}.

The study of stabilizer codes is also crucial from the quantum code perspective, particularly in the context of error correction.
Self-correcting quantum memory and single-shot error-correcting codes can be constructed from topological phases \cite{bravyi2009no,bravyi2013quantum, terhal2015quantum, brown2016fault, duivenvoorden2018renormalization, campbell2019theory, kubica2022single, li2023phase}. The error correction capabilities and thresholds are also investigated \cite{andrist2015error, dauphinais2019quantum, varona2020determination, song2022optimal,zhu2022topological, schotte2022quantum, dua2023quantum, hastings2021dynamically, kesselring2022anyon, davydova2023floquet, davydova2023quantum, zhang2023xcube}.
In these codes, extended operators correspond to the logical operations within the code space. Recently, significant progress has been made in the experimental implementation of surface codes on various quantum computing platforms \cite{bluvstein2022quantum, bluvstein2023logical, google2023suppressing, krinner2022realizing, iqbal2023topological}.
The advancement of digital quantum computers and simulators has enabled the realization of various topological phases on quantum devices \cite{satzinger2021realizing, google2023non, semeghini2021probing, iqbal2023creation, foss2023experimental}. 
Notable examples of surface codes, including the toric code and the color code, have been sophisticatedly designed and analyzed. This leads to a crucial question: Can we systematically extract topological information from a wide range of quantum codes? For instance, inspired by the \textcolor{black}{2d honeycomb} color code, one can develop many new quantum codes, as shown in Fig.~\ref{fig: modified color codes}. A key challenge is the rapid determination of the properties of these codes.
Given the progress of experimental implementations, analyzing and characterizing different topological codes becomes critical. Many characterization protocols for ground state wavefunctions or parent Hamiltonians of different physical systems have been proposed \cite{kitaev2006anyons, Else2014Classifying, shiozaki2017many, shapourian2017partial, shiozaki2018many, shi2020fusion, Kawagoe2020Microscopic,cian2021many, dehghani2021extraction, Kawagoe2021Anomalies, cian2022extracting, kim2022chiral, kim2022modular, cong2019quantum, cong2022enhancing, ruba2022homological, fan2022generalized, tran2023measuring, xiao2023robust, kobayashi2023extracting, Zhang2023Topological}.
However, the systematic study of generalized Pauli stabilizer codes has been less investigated, and it would be fruitful to explore new models that could be implemented in near-term devices.
The close relationship between coding theory and the topological phases of matter suggests that coding theory techniques \cite{galindo2002information, tillich2013quantum,kovalev2012improved, bravyi2014homological, breuckmann2021quantum, breuckmann2021balanced, panteleev2021degenerate, tan2023fracton} are helpful for the study of topological orders.
This paper presents an algorithm that efficiently identifies the topological data of quantum codes via an algebraic method. This algorithm aims to streamline the study of the topological properties of generalized Pauli stabilizer codes with the assistance of classical computers.

\begin{figure}[htb]
    \centering
    \includegraphics[width=0.45\textwidth]{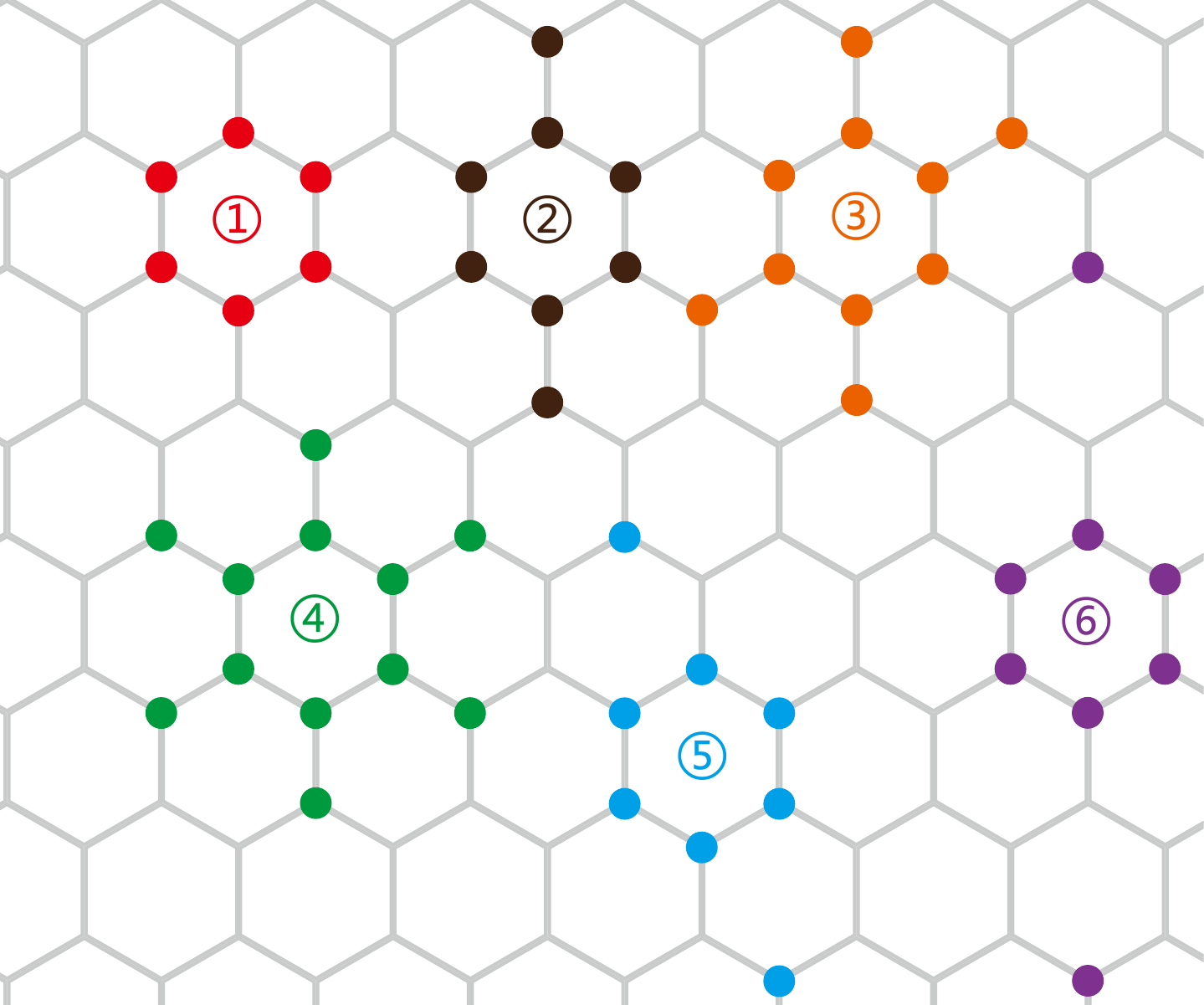}
\caption{Six examples of self-dual CSS codes in the 2d honeycomb lattice, where each vertex hosts one qubit. The first is the \textcolor{black}{2d honeycomb} color code \cite{bombin2006topological}, with stabilizers composed of all Pauli $X$ (or $Z$) operators on red vertices and their translations. The remaining five codes are similar but include additional vertices in each stabilizer term. While it is established that the \textcolor{black}{2d honeycomb} color code is equivalent to two copies of toric codes \cite{kubica2015unfolding}, the broader analysis of 2d translation invariant Pauli stabilizer models remains less clear. Our work introduces an algorithm capable of detecting topological orders in these models and extracting crucial information such as anyon string operators, fusion rules, topological spins, and braiding statistics.}
\label{fig: modified color codes}
\end{figure}

In summary, we provide an algorithm to determine the topological order of generalized Pauli stabilizer codes with $\ZZ_d$ qudits in two dimensions, which applies to both prime and non-prime $d$. Our method checks whether a generalized Pauli stabilizer code satisfies the topological order condition. Upon confirmation, the algorithm outputs the topological data of the input quantum code, the Abelian anyon theory (unitary modular tensor category). This includes identifying anyon types (simple objects), their fusion rules, explicit string operators, self-statistics (topological spins), and braiding statistics. The functions of our algorithm are shown in Fig.~\ref{fig:cartoon}.

The paper is organized as follows: In Sec.~\ref{sec: physical intuitions}, we review the Pauli stabilizer codes of qudits and the theory of Abelian anyons to motivate the development of our work. In Sec.~\ref{sec: laurent_polynomial}, we review the Laurent polynomial formulation of translation invariant Pauli stabilizer codes with examples and the definition of topological data in the Laurent polynomial framework. In Sect.~\ref{sec: computational_method}, we provide details of individual techniques used in the algorithm for extracting topological order, including the procedure to check the topological order condition, solve the anyon equation, and obtain topological spins for given anyons. In Sec.~\ref{sec: algorithm}, we discussed the workflow of the algorithm. In Sec.~\ref{sec: application_to_codes}, we apply our algorithm to various stabilizer codes defined on qubits or nonprime-dimensional qudits and show the effectiveness of this algorithm on extracting anyon string operators, fusion rules, topological spins, and braiding statistics.
{\color{black}
Sec.~\ref{sec: time complexity} analyzes the time complexity of this algorithm and demonstrates its scaling through numerical computation. Finally, the discussion regarding connections with other works and potential future directions is presented in Sec.~\ref{sec: discussion}.
}

\begin{figure*}[t]
    \centering
    \includegraphics[width=0.9\textwidth]{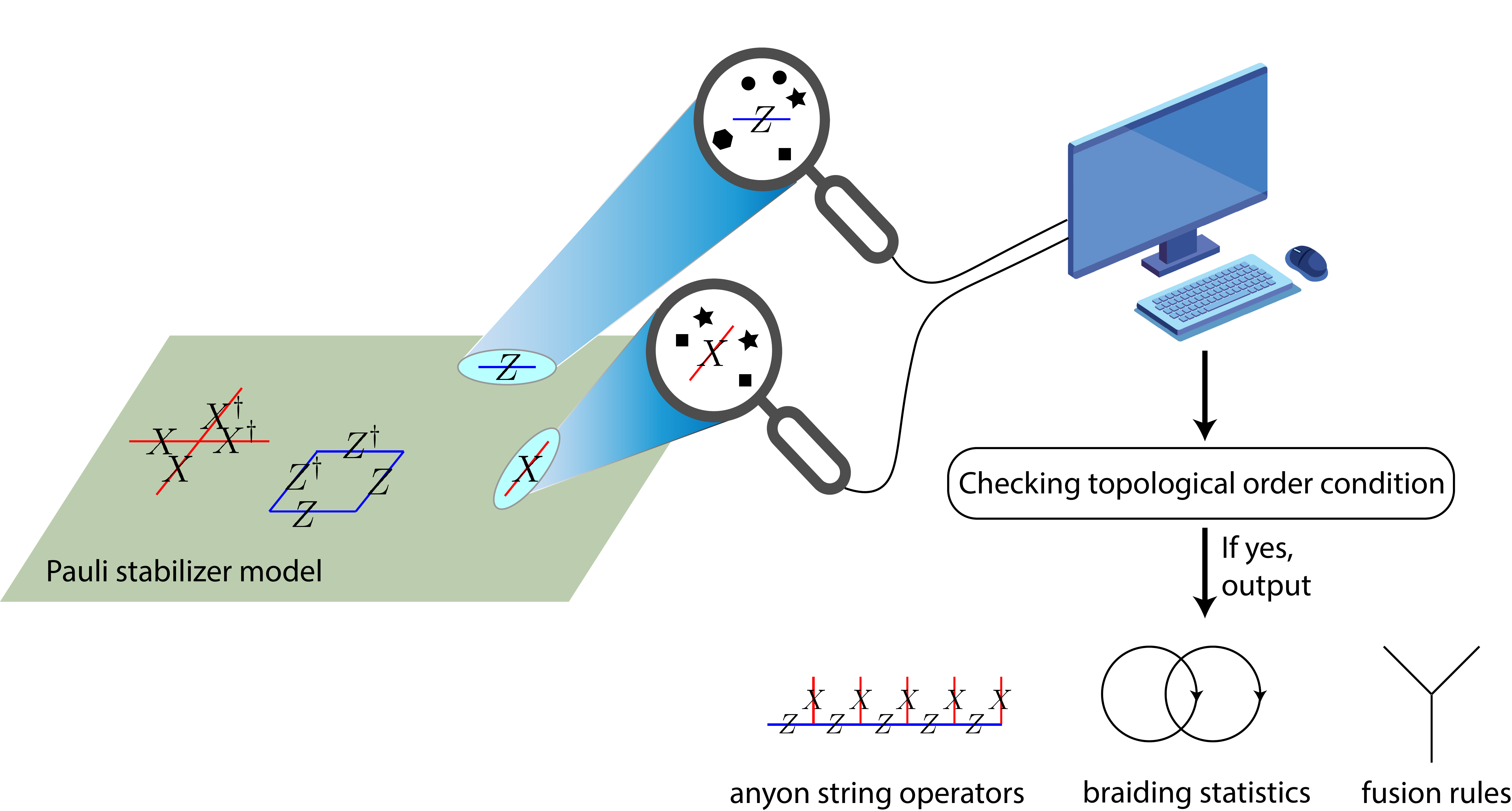}
    \caption{Illustration of the algorithm. We want to determine whether a given Pauli stabilizer model presents a topological order. If it is a topological order, what are anyon string operators, braiding statistics, and fusion rules? In the first step, we record the syndrome pattern of a single Pauli operator, which is indicated by circles, stars, squares, or hexagons representing the violations of different stabilizers. Next, we convert syndrome patterns into the Laurent polynomial rings. Then, we use our algorithm to check whether the Pauli stabilizer code satisfies the topological order condition. If satisfied, this algorithm can obtain the topological data of the given model, including anyon string operators, fusion rules, topological spins, and braiding statistics.}
    \label{fig:cartoon}
\end{figure*}

\section{Physical intuition}\label{sec: physical intuitions}

This section provides a pedagogical overview of generalized Pauli operators and Abelian anyon theories in the stabilizer formalism with the microscopic picture. Then, we outline the workflow of our algorithm.

First, recall standard definitions of $d \times d$ {\bf generalized Pauli matrices} for a $\ZZ_d$ qudit:
\begin{eqs}
    X=\sum_{j \in \mathbb{Z}_d}|j+1\rangle\langle j| ,\quad
    Z=\sum_{j \in \mathbb{Z}_d}  \omega^j |j\rangle\langle j| \text {. }
\end{eqs}
where $\omega$ is defined as $\omega := \exp (\frac{2 \pi i}{d})$. More explicitly,
\begin{eqs}
    X = 
    \begin{bmatrix}
    0 & 0 & \cdots & 0 & 1 \\
    1 & 0 & \cdots & 0 & 0 \\
    0 & 1 & \cdots & 0 & 0 \\
    \vdots & \vdots & \ddots & \vdots & \vdots \\
    0 & 0 & \cdots & 1 & 0
    \end{bmatrix},~
    Z = 
    \begin{bmatrix}
    1 & 0 & 0 & \cdots & 0 \\
    0 & \omega & 0 & \cdots & 0 \\
    0 & 0 & \omega^2 & \cdots & 0 \\
    \vdots & \vdots & \vdots & \ddots & \vdots \\
    0 & 0 & 0 & \cdots & \omega^{d-1}
    \end{bmatrix},
\end{eqs}
$X$ and $Z$ satisfy the commutation relation
\begin{eqs}
    Z X = \omega X Z.
\end{eqs}
For the sake of simplicity, ``Pauli" will be used as a shorthand for ``generalized Pauli".

We begin by considering a local\footnote{``Local" refers to each operator, e.g., stabilizer $S_i$, having Pauli matrices supported on a finite area in the lattice.} Pauli stabilizer Hamiltonian on a two-dimensional lattice. Our initial step is to check whether this Hamiltonian fulfills the {\bf topological order (TO) condition} \cite{haah_module_13, haah2016algebraic, haah_classification_21}\footnote{\color{black} More precisely, this condition stems from the TQO-1 condition as defined in Refs.~\cite{bravyi2010topological, bravyi2011short} for a broader range of commuting projector Hamiltonians.}. The detailed mathematical formulation of this condition will be addressed in subsequent sections. Briefly, the TO condition requires that any local operator $\mathcal{O}$, which commutes with all stabilizers, must be a product of certain stabilizers, denoted as $\mathcal{O} = \prod_{i \in A} S_i$ for a set $A$.
In other words, the TO condition implies that the stabilizer group is ``saturated,'' i.e., no more local operator can be added to {\color{black} it} while commuting with all existing stabilizers.
Fundamentally, this condition indicates the local indistinguishability of the ground state(s) in a local Pauli stabilizer code that satisfies this criterion.
If the Hamiltonian exhibits a degeneracy in its ground state, these states cannot be differentiated by any local operator. An example is the toric code \cite{kitaev2003fault}, which illustrates topological order with a 4-fold degeneracy on a torus. Thus, a local Pauli stabilizer code that meets the TO condition is referred to as {\bf topological Pauli stabilizer code}, which suggests the presence of topological order in this code (which may be a trivial order). 

Furthermore, given a topological Pauli stabilizer code, our interest then lies in identifying the specific type of ``topological order" it exhibits. {\bf Topological orders} in two dimensions are categorized by ``unitary modular tensor categories" (UMTC) \cite{rowell2006quantum, rowell2009classification, wang2010topological, Wang2022in, Plavnik2023Modular}, which describe the properties of local excitations in the low-energy spectrum of the Hamiltonian. It is important to note that stabilizer models are restricted to generating Abelian anyon theories, a subset within the broader scope of UMTCs. While UMTCs have a formal definition for general cases, this paper will focus only on Abelian anyon theories pertaining to stabilizer formalism and provide a more streamlined description.

\begin{figure}[thb]
    \centering
    \includegraphics[width=0.5\textwidth]{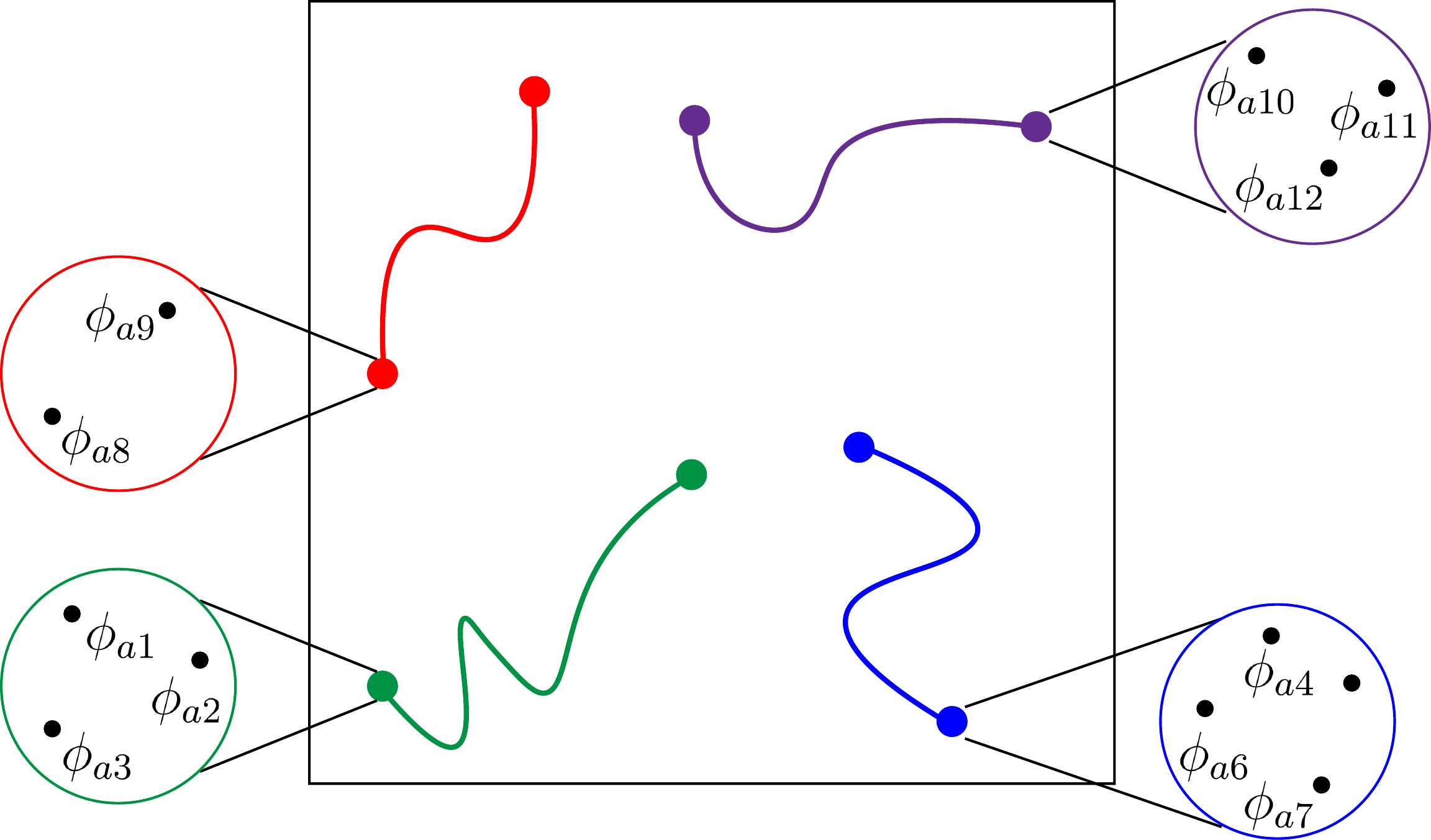}
    \caption{The black dots label the locations where stabilizers act as $e^{i \phi}$ (with $\phi \neq 0$) on the state. These stabilizers are labeled as $S_{a_1}$, $S_{a_2}$, and so on... If these violated stabilizers are spatially far apart, we group $\{ S_{a_i} \}$ into different patches and treat each patch as a local anyon. In topological stabilizer codes, any anyon is generated by a string operator, i.e., a product of Pauli matrices along the string that only fails to commute with a finite number of stabilizers near its endpoints.}
\label{fig: local anyons}
\end{figure}

An {\bf anyon} is a (local) violation of stabilizers on the lattice. Given a ground state $| \Psi_{\text{gs}} \rangle$, which is in the $+1$ eigenstate of all the stabilizers: $S_i | \Psi_{\text{gs}} \rangle = | \Psi_{\text{gs}} \rangle$. However, if a Pauli operator $M$ is applied on the ground state, the perturbed state might not be in the +1 eigenspace of stabilizers: $S_i (M | \Psi_{\text{gs}} \rangle) =  e^{i \phi_i} (M | \Psi_{\text{gs}} \rangle)$ with $\phi_i \in [0, 2 \pi)$, where $\phi_i$ depends on the commutation relation between $S_i$ and $M$. \textcolor{black}{This perturbed state contains an anyon labeled as a list $\{ \phi_1,...,\phi_N\}$ consisting $U(1)$ angles\footnote{\textcolor{black}{Note that we record the angles $\phi$ for all the stabilizers and their translations.}}.}  So far, an anyon is defined for a state, which is a global description.
Next, the locality plays a crucial role. If the violated stabilizers ($\phi_i \neq 0$) are spatially far apart, e.g., created by a long string operator $M$ that violates only a finite number of stabilizers near its two endpoints, we divide the stabilizers with $\phi_i \neq 0$ into local patches, as the physical picture shown in Fig.~\ref{fig: local anyons}. Each patch can be viewed as a local anyon, which $\{ \phi_i \}$ is referred to as the {\bf syndrome pattern}. Another way to visualize this local anyon is to extend $M$ into a semi-infinite\footnote{There are subtleties about semi-infinite operators in the infinite plane since the algebra is only well-defined for finite operators. In practice, a clear way to define a local anyon $a$ labeled by local ${\phi_i}$ is that for any arbitrary large $R$, there always exists some Pauli operator $M$ such that it violates $S_i$ the same as ${\phi_i}$, while commutes with all other stabilizers within range $R$. Refs.~\cite{haah_module_13, haah_classification_21} demonstrate that $M$ in a topological Pauli stabilizer code can always be chosen as a string operator that is extendable.} string, which only violates stabilizers around one endpoint, where the anyon is located. In two-dimensional topological Pauli stabilizer codes, each instance of local violations is attributable to string operators. Nonetheless, this property ceases to be applicable in higher-dimensional models, primarily due to the fracton phases of matter \cite{chamon2005quantum,haah2011local,vijay2016fracton,shirley2018fracton,pretko2020fracton}. Our method is adaptable to higher-dimensional topological Pauli stabilizer codes with string operators; however, our focus is predominantly on two-dimensional applications for the remainder of this study.

Now, anyon types (or superselection sectors) can be defined as equivalence classes under the equivalence relations between anyons $v$ and $v'$:
\begin{eqs}
    v := \{ \phi_{i}\} \sim v' := \{ \phi'_{i}\},
\label{eq: definition of anyon equivalence}
\end{eqs}
if and only if $\{ \phi_{i}\}$ and $\{ \phi'_{i}\}$ are differed by local Pauli operators. In other words, if the syndrome pattern $v'$ can be achieved by applying local Pauli operators on the state with the syndrome pattern $v$, two syndrome patterns $v$ and $v'$ are identified as the same type. With the concept of anyon types, we can now discuss the fusion rules. The {\bf fusion rules} of (Abelian) anyons describe the process of bringing two anyons $a$ and $b$ closer to each other (via their string operators) and identifying the composite of them as a third anyon $c$ under the equivalence relation~\eqref{eq: definition of anyon equivalence}. We express the fusion rule as
\begin{eqs}
    a \times b = c.
\end{eqs}
Also, the {\bf topological spin} $\theta(a)$ can be computed for each anyon $a$, which determines the exchange statistics (by spin-statistics theorem), e.g., boson, fermion, and semion. The T-junction process that exchanges the positions of two particles can detect the topological spin \cite{LW06, alicea2011non, KL20, fidkowski2022gravitational, haah_QCA_23}. Let $\bar{\gamma}_1$, $\bar{\gamma}_2$, and $\bar{\gamma}_3$ be paths sharing a common endpoint $p$ and ordered counter-clockwise around $p$, as in Fig.~\ref{fig: T junction 3 paths}. Then, the topological spin $\theta(a)$ of anyon $a$ is computed by the expression:
\begin{eqs} 
    W^a_{3} (W^a_{2})^\dagger W^a_{1} = \theta(a) W^a_{1}(W^a_{2})^\dagger W^a_{3},
\label{eq: statistics formula}
\end{eqs}
where $W^a_{i}$ is the string operator moving anyon $a$ along the path $\bar{\gamma}_i$.
\begin{figure}
\centering
    \includegraphics[width=.32\textwidth]{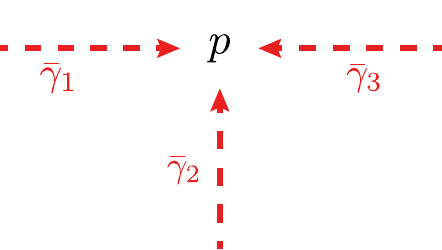}
     \caption{The exchange statistics (topological spin) of anyon $a$ can be computed using the formula in Eq.~\eqref{eq: statistics formula}. $\bar{\gamma}_1$, $\bar{\gamma}_2$, and $\bar{\gamma}_3$ are oriented paths in the lattice incident at the same position $p$. The string operators $W^a_{1}$, $(W^a_{2})^\dagger$ and $W^a_{3}$ of anyon $a$ along paths $\bar{\gamma}_1$, $-\bar{\gamma}_2$, and $\bar{\gamma}_3$ might not commute, giving the exchange statistics $\theta(a)$.}
     \label{fig: T junction 3 paths}
\end{figure}
Eq.~\eqref{eq: statistics formula} represents the exchange of two anyons\footnote{Anyon $a_1$ and $a_2$ are in the same anyon type. The subscript is a label to keep track of them.} $a_1,a_2$ can be seen as follows. Suppose that $a_1$ is initialized at the start $\bar \gamma_1(0)$ of $\bar \gamma_1$ and $a_2$ at the start $\bar \gamma_3(0)$ of $\bar \gamma_3$. Then $(W^a_{3})^\dagger W^a_{2} (W^a_{1})^\dagger W^a_{3}(W^a_{2})^\dagger W^a_{1}$ takes $a_1$ from $\bar\gamma_1(0) \to p \to \bar \gamma_2(0)$, then takes $a_2$ from $\bar\gamma_3(0) \to p \to \bar\gamma_1(0)$, and then takes $a_1$ from $\bar\gamma_2(0) \to p \to \bar\gamma_3(0)$.
Thus, in the end, we have exchanged the positions of $a_1$ and $a_2$, while keeping all dynamical phases cancel out. By Eq.~\eqref{eq: statistics formula}, this results in the phase $\theta(a)$. In Abelian anyon theories, the {\bf braiding statistics} (e.g., the phase factors arising from the Aharonov-Bohm effect) are completely determined by topological spins. More precisely, let $B_\theta(a,b)$ be the $U(1)$ phase of braiding anyon $a$ around anyon $b$ counterclockwise. It is related to the topological spins of $a$, $b$ and $a \times b$ by the relation
\begin{eqs}
    \label{eq:braiding statistics}
    B_\theta (a, b) = \frac{ \theta(a \times b)}{ \theta(a) \theta(b)}.
\end{eqs}
In Appendix~\ref{appendix: braiding_statistics}, we derive Eq.~\eqref{eq:braiding statistics} with the definition $W^{a\times b}_{i} := W^{b}_{i} W^{a}_{i}$ and the fact that $W^{a, b}_{{1,2,3}}$ are Pauli operators that anti-commute by a $U(1)$ phase.
Therefore, the complete data for Abelian anyon theories are anyon types, fusion rules, and topological spins.

\begin{figure*}[t]
    \centering
    \includegraphics[width=0.8\textwidth]{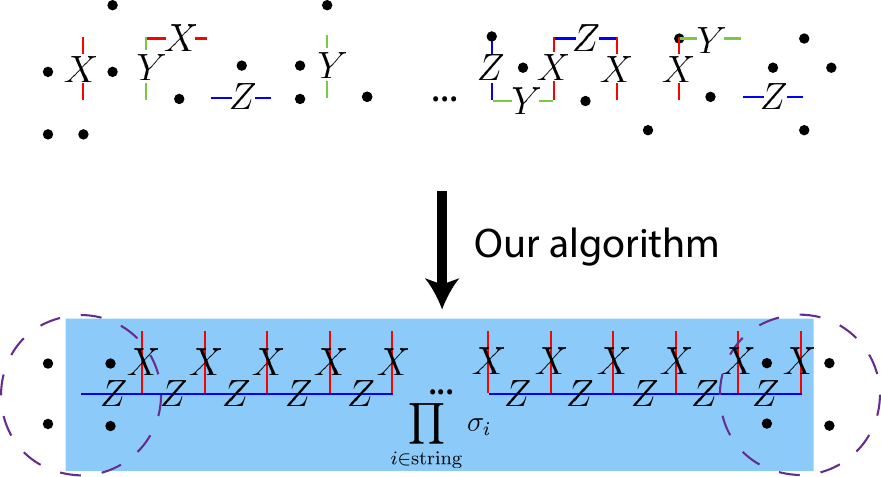}
    \caption{\textcolor{black}{Demonstration of the procedure for finding string operators (details in Sec.~\ref{sec: Solving Anyon Equations Zp}). For each single-qudit Pauli operator $\mathcal{P}$, it will locally violate stabilizer terms at different positions denoted as black dots. Note for the general case, a single-qudit Pauli operator $X/Z$ may violate stabilizer terms in multiple (can be greater than 2) positions in contrast to the toric code where a single-qudit Pauli $X/Z$ error only violates two stabilizer terms. For example, in the $\mathbb{Z}_2$ cases, black dots are the stabilizer terms giving the -1 signs. If we combine the two single-qubit error where some of the syndromes (black dots) are overlapped, the overlapped syndrome will be eliminated. 
    The goal of this procedure is to combine single-qudit Pauli operators at different positions to form a string operator such that only the stabilizers near the endpoints are violated.}}
    \label{fig: finding string operator}
\end{figure*}

In summary, the algorithm for extracting the topological order from a topological Pauli stabilizer code includes the following steps:
\begin{enumerate}
    \item Enumerate string operators that create anyons around its endpoints.
    \item Classify these anyons into different types by the equivalence relation \eqref{eq: definition of anyon equivalence}.
    \item Obtain the fusion rules of anyon types.
    \item Compute the topological spin (and braiding statistics) for each anyon type from its string operator.
\end{enumerate}
The primary aim is to identify string operators for distinct anyon types. This task is inherently challenging, but locality and translation invariance can simplify this problem. We focus on a finite region, designated as $A$, where, due to the finite amount of anyon types in two dimensions and the assumption of translation invariance, all topological data are contained inside $A$ (if $A$ is large enough). Anyon string operators, fusion rules, topological spins, and braiding statistics can be analyzed in this region $A$.
{\color{black}
As illustrated in Refs.~\cite{haah_module_13, watanabe2023ground}, the required size $l_A$ of the region $A$ should be upper bounded by $d \cdot 2^r$, where $d$ is the qudit dimension and $r$ is the range of stabilizers. Our method successfully retrieves all topological information for typical stabilizer codes with reasonable qudit dimensions and ranges of stabilizers.\footnote{\color{black} Our algorithm has the time complexity $O(\log d \cdot l_A^4 ) = O(\log d \cdot d^4  \cdot 2^{4r}$, which is from the modified Gaussian elimination process discussed in Secs.~\ref{sec: nonprime-dimensional qudit} and~\ref{sec: time complexity}. For the exotic $\ZZ_p$ toric code example constructed in Ref.~\cite{watanabe2023ground}, especially with a large prime $p$, such as $p=1009$, the computational power required surpasses what a personal computer can provide.} 
}
We obtain the syndrome pattern for each single Pauli error and combine them to form a string-like operator, shown in Fig.~\ref{fig: finding string operator}. The naive approach of listing all combinations of Pauli matrices will take an exponential time. The approach described later in this paper can achieve the discovery of string operators for all anyon types within polynomial time.

Note that our way of classifying topological Pauli stabilizer codes based on topological data is different from Refs.~\cite{bombin_Stabilizer_14, haah_classification_21}, which utilize Clifford circuits to transform a topological Pauli stabilizer code into finite copies of the standard toric code stabilizers and trivial stabilizers. The equivalence of Pauli stabilizer codes by the Clifford circuit transformation is a stronger condition than the equivalence of their topological data. For example, consider $\ZZ_4$ Pauli stabilizer codes:
\begin{eqs}
    H_1 := -\sum_{i=1}^N Z_i, \quad H_2 := - \sum_{i=1}^N (X_i^2 + Z_i^2).
\end{eqs}
Both codes have trivial topological data since each one has the ground state as a product state $\ket{0}^{\otimes N}$ or $(\frac{\ket{0}+\ket{2}}{\sqrt{2}})^{\otimes N}$, but there is no Clifford circuit that transforms from $H_1$ to $H_2$. Therefore, instead of using the equivalence of parent Hamiltonians under Clifford circuit transformations to classify Pauli stabilizer codes, our classification is based on topological data, which are properties of its low-energy excitations.

\section{Formulating translation invariant Pauli stabilizer models}\label{sec: laurent_polynomial}

In this section, we first review the polynomial method to formulate translation invariant Pauli stabilizer models and demonstrate how to utilize it to derive the topological data in Sec.~\ref{sec: review_polynomial}. In Sec.~\ref{sec: laurent_polynomial_example}, we provide the polynomial representations of various known models.

\subsection{Review of the Laurent Polynomial method on a square lattice}\label{sec: review_polynomial}
\begin{figure}[htb]
\centering
\resizebox{5cm}{!}{%
\begin{tikzpicture}
\draw[thick] (-3,0) -- (3,0);\draw[thick] (-3,-2) -- (3,-2);\draw[thick] (-3,2) -- (3,2);
\draw[thick] (0,-3) -- (0,3);\draw[thick] (-2,-3) -- (-2,3);\draw[thick] (2,-3) -- (2,3);
\draw[->] [thick](0,0) -- (1,0);\draw[->][thick] (0,2) -- (1,2);\draw[->][thick] (0,-2) -- (1,-2);
\draw[->][thick] (0,0) -- (0,1);\draw[->][thick] (2,0) -- (2,1);\draw[->][thick](-2,0) -- (-2,1);
\draw[->][thick] (-2,0) -- (-1,0);\draw[->][thick] (-2,2) -- (-1,2);\draw[->][thick](-2,-2) -- (-1,-2);
\draw[->][thick] (-2,-2) -- (-2,-1);\draw[->][thick] (0,-2) -- (0,-1);\draw[->] [thick](2,-2) -- (2,-1);
\filldraw [black] (-2,-2) circle (1.5pt) node[anchor=north east] {\large 1};
\filldraw [black] (0,-2) circle (1.5pt) node[anchor=north east] {\large 2};
\filldraw [black] (2,-2) circle (1.5pt) node[anchor=north east] {\large 3};
\filldraw [black] (-2,-0) circle (1.5pt) node[anchor=north east] {\large 4};
\filldraw [black] (0,0) circle (1.5pt) node[anchor=north east] {\large 5};
\filldraw [black] (2,0) circle (1.5pt) node[anchor=north east] {\large 6};
\filldraw [black] (-2,2) circle (1.5pt) node[anchor=north east] {\large 7};
\filldraw [black] (0,2) circle (1.5pt) node[anchor=north east] {\large 8};
\filldraw [black] (2,2) circle (1.5pt) node[anchor=north east] {\large 9};
\draw (-1,-1) node{\Large a};
\draw (1,-1) node{\Large b};
\draw (1,1) node{\Large c};
\draw (-1,1) node{\Large d};
\end{tikzpicture}
}
\caption{We put Pauli matrices $X_e$, $Y_e$, and $Z_e$ on each edge.}
\label{fig:square}
\end{figure}
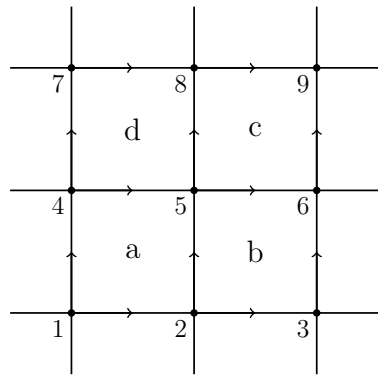

The Laurent polynomial has been demonstrated to be useful for searching 3d fracton phases \cite{dua2019sorting} and error-correcting codes \cite{chen2022error,chien2022optimizing}, which
serves as a basis for our algorithm. In this subsection, we will review the Laurent polynomial and how to use it to represent translation invariant stabilizer codes.

This section examines a scenario involving two $\ZZ_d$ qudits in each unit cell, exemplified by a qudit at each edge of a square lattice. This setting is extendable to cases with \textcolor{black}{$w$} qudits per unit cell.

Our initial step is to demonstrate that any Pauli operator, constituted by a finite tensor product of Pauli matrices at different sites, can be represented (up to an overall factor) as a column vector over the polynomial ring $R = \ZZ_d [x, y, x^{-1}, y^{-1}]$\footnote{This ring includes all polynomials in $x$, $x^{-1}$, $y$, $y^{-1}$, with coefficients in $\ZZ_d$.}, as established in Ref.~\cite{haah_module_13}.
We assign column vectors over $\ZZ_d$ to the (generalized) Pauli matrices $X_{12}$, $Z_{12}$, $X_{14}$, and $Z_{14}$, depicted in Fig.~\ref{fig:square}:
\begin{eqs}
    \mX_{12}=
    \left[\begin{array}{c}
        1 \\
        0 \\
        \hline
        0 \\
        0
    \end{array}\right],~
    \mZ_{12}=
    \left[\begin{array}{c}
        0 \\
        0 \\
        \hline
        1 \\
        0
    \end{array}\right],~
    \mX_{14}=
    \left[\begin{array}{c}
        0 \\
        1 \\
        \hline
        0 \\
        0
    \end{array}\right],~
    \mZ_{14}=
    \left[\begin{array}{c}
        0 \\
        0 \\
        \hline
        0 \\
        1
    \end{array}\right].
\end{eqs}
{\color{black}
In this paper, column vector representations of Pauli operators are denoted using curly letters.
}
The coefficients in these vectors correspond to their powers:
\begin{eqs}
    \mathcal{P} =
    \left[\begin{array}{c}
        i \\
        j \\
        \hline
        k \\
        l
    \end{array}\right]
    ~\Rightarrow~
    \mathcal{P}^m =
    \left[\begin{array}{c}
        m i \\
        m j \\
        \hline
        m k \\
        m l
    \end{array}\right],
    \forall m \in \ZZ_d.
\end{eqs}
The translation of operators is achieved using polynomials of $x$ and $y$ to denote translations in the $x$ and $y$ directions, respectively. To illustrate, translating the operator on edge $e_{12}$ to edge $e_{78}$ (using the vector $(0, 2)$) or to edge $e_{58}$ (using the vector $(1, 1)$) involves multiplying the column vector of the operator by $y^2$ or $xy$, respectively:
\begin{eqs}
    \mZ_{78}= y^2
    \mZ_{12}
    =
    \left[\begin{array}{c}
        0 \\
        0 \\
        \hline
        y^2 \\
        0
    \end{array}\right],~
    \mX_{58}= x y
    \mX_{14}
    =
    \left[\begin{array}{c}
        0 \\
        xy \\
        \hline
        0 \\
        0
    \end{array}\right].
\end{eqs}
In conclusion, any Pauli operator can be decomposed as follows:
\begin{eqs}
    P = \eta X^{a_1}_{e_1} X^{a_2}_{e_2} \cdots X^{a_n}_{e_n} Z^{b_1}_{e'_1} Z^{b_2}_{e'_2} \cdots Z^{b_m}_{e'_m},
\end{eqs}
where $\eta$ represents a root of unity of order $2d$. After dropping the overall phase $\eta$, the corresponding column vector for this operator is a linear combination of individual Pauli matrices, expressed as
\begin{eqs}
    \mathcal{P} = & a_1 \mX_{e_1} + a_2 \mX_{e_2} + \cdots + a_n \mX_{e_n} \\
    & + b_1 \mZ_{e'_1} + b_2 \mZ_{e'_2} + \cdots + b_m \mZ_{e'_m}.
\end{eqs}
More examples are included in Fig.~\ref{fig:example_poly}.

\begin{figure}[htb]
    \centering
    \includegraphics[width=0.45\textwidth]{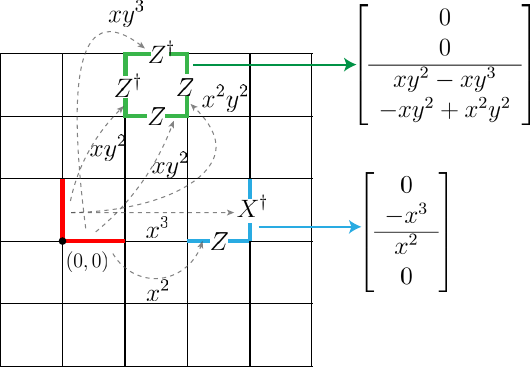}
    \caption{Examples of polynomial expressions for Pauli strings. The flux term on a plaquette and the $XZ$ term on edges are shown. The factors such as $x^2 y^2$ and $x^2$ represent the locations of the operators relative to the origin.}
\label{fig:example_poly}
\end{figure}

Next, we introduce the {\bf antipode map} that is a $\ZZ_d$-linear map from $R$ to $R$ defined by
\begin{eqs}
    x^a y^b \rightarrow \overline{x^a y^b}:=x^{-a} y^{-b}.
\end{eqs}
To determine whether two Pauli operators represented by vectors $v_1$ and $v_2$ commute or anti-commute, we define the dot product as
\begin{eqs}
    v_1 \cdot v_2 = \overline{v}_1^{T} \Lambda v_2,
\end{eqs}
where $T$ is the transpose operation on a matrix and
\begin{eqs}
    \Lambda=
    \left[\begin{array}{cc | cc}
        0 & 0 & 1 & 0 \\
        0 & 0 & 0 & 1 \\
        \hline
        -1 & 0 & 0 & 0 \\
        0 & -1 & 0 & 0 \\
    \end{array}\right]
\end{eqs}
is the matrix representation of the standard {\bf symplectic bilinear form}. For simplicity, we denote $\overline{(\cdots)}^T$ as $(\cdots)^\dagger$.

The two operators $v_1$ and $v_2$ commute if and only if the constant term of $v_1 \cdot v_2$ is zero. For example, we calculate the dot products
\begin{eqs}
    \mX_{12} \cdot \mZ_{12} = 1, \quad \mX_{58} \cdot \mZ_{14} = x^{-1}y^{-1},
\end{eqs}
and, therefore, $X_{12}$ and $Z_{12}$ anti-commute, whereas $X_{58}$ and $Z_{14}$ commute (their dot product only has a non-constant term $x^{-1} y^{-1}$). Furthermore, the physical interpretation of $\mX_{58} \cdot \mZ_{14} = x^{-1} y^{-1}$ is that shifting of $X_{58}$ in $-x$ and $-y$ directions by 1 step will anti-commute (by the factor of $\omega$) with $Z_{14}$. 

A translation invariant stabilizer code forms an $R$-submodule\footnote{The $R$-submodule is similar to a subspace of a vector space, but the entries of the vector are in the ring $R=\ZZ_d [x, y, x^{-1}, y^{-1}]$. In a ring, the inverse element may not exist. This is the distinction between a module and a vector space.} $\sigma$ such that
\begin{eqs}
    v_1 \cdot v_2 = v_1^\dagger \Lambda v_2 = 0, \quad \forall v_1, v_2 \in \sigma,
\label{eq: stabilizer condition}
\end{eqs}
i.e., a module of commuting Pauil operators. This $\sigma$ is named the {\bf stabilizer module}.
The Hamiltonian could have two (or more) terms per square to have a unique ground state on a simply-connected manifold, denoted as $H = -\sum_{\text{cells}} (S_1 + S_2)$ with corresponding column vectors
\begin{eqs}
    \mS_1 = \left[\begin{array}{c}
        f_1(x,y) \\
        f_2 (x,y) \\
        \hline
        g_1(x,y) \\
        g_2 (x,y)
    \end{array}\right], \quad
    \mS_2 = \left[\begin{array}{c}
        h_1(x,y) \\
        h_2 (x,y) \\
        \hline
        k_1(x,y) \\
        k_2 (x,y)
    \end{array}\right].
\label{eq: generic S1 and S2}
\end{eqs}
{\color{black} $S_1$ and $S_2$ constitute the \textbf{generators} of the stabilizer module, and will henceforth be referred to as \textbf{stabilizer generators}.}
For example, the trivial phase $H_0 = - \sum_e X_e$ is
\begin{eqs}
    \mS_1 = \left[\begin{array}{c}
        1 \\
        0 \\
        \hline
        0 \\
        0
    \end{array}\right], \quad
    \mS_2 = \left[\begin{array}{c}
        0 \\
        1 \\
        \hline
        0 \\
        0
    \end{array}\right],
\label{eq: trivial H0 SA SB}
\end{eqs}
and the standard $\ZZ_d$ toric code Hamiltonian 
\begin{eqs}
    H_{\text{TC}} = - \sum_v \vcenter{\hbox{\includegraphics[scale=.25]{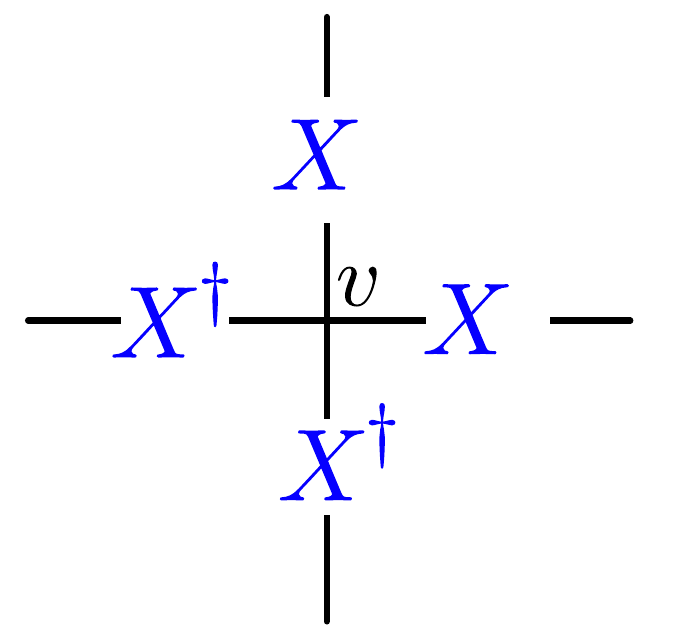}}} - \sum_p \vcenter{\hbox{\includegraphics[scale=.25]{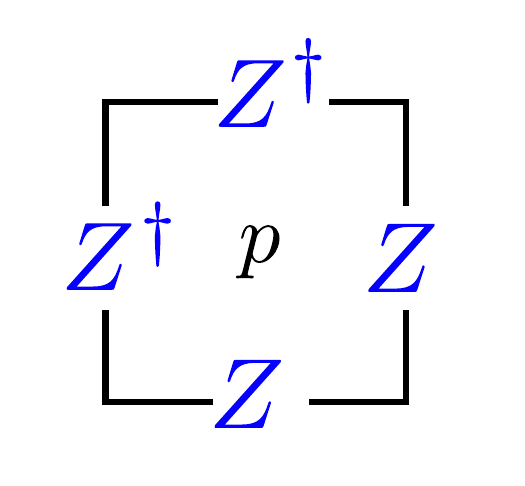}}},
\label{eq: toric code Hamiltonian}
\end{eqs}
corresponds to
\begin{eqs}
    \mS_1 = \left[\begin{array}{c}
        1-\bx \\
        1- \by \\
        \hline
        0 \\
        0
    \end{array}\right], \quad
    \mS_2 = \left[\begin{array}{c}
        0 \\
        0 \\
        \hline
        1-y \\
        -1+x
    \end{array}\right].
\label{eq: standard toric code SA SB}
\end{eqs}
For general stabilizers $\mathcal{S}_1$ and $\mathcal{S}_2$ in Eq.~\eqref{eq: generic S1 and S2}, the commutation condition in Eq.~\eqref{eq: stabilizer condition} implies
\begin{eqs}
    \begin{cases}
        -\overline{g}_1 f_1  -\overline{g}_2 f_2 + \overline{f}_1 g_1 + \overline{f}_2 g_2 &=0,\\
        -\overline{k}_1 h_1  -\overline{k}_2 h_2 + \overline{h}_1 k_1 + \overline{h}_2 k_2 &=0,\\
        -\overline{k}_1 f_1  -\overline{k}_2 f_2 + \overline{h}_1 g_1 + \overline{h}_2 g_2 &=0.\\
    \end{cases}
\end{eqs}
One simple solution is $g_1 = g_2 = h_1 = h_2 = 0$ and $f_1 = -\overline{k}_2$ and $f_2 = \overline{k}_1$:
\begin{eqs}
    \mS_1 = \left[
    \def\arraystretch{1.2}
    \begin{array}{c}
        f_1 \\
        f_2 \\
        \hline
         0 \\
        0
    \end{array}\right], \quad
    \mS_2 = \left[
    \def\arraystretch{1.2}
    \begin{array}{c}
        0 \\
        0 \\
        \hline
        \overline{f}_2 \\
        -\overline{f}_1
    \end{array}\right],
\end{eqs}
where $\mS_1$ only has $X$ part and $\mS_2$ only has $Z$ part, representing translation invariant CSS codes on a square lattice.

Next, we define the {\bf excitation map} for any Pauli operator $\mP$ (a column vector over $R$)
{\color{black}on a general Pauli stabilizer code with stabilizer generators $\mS_1,... ,\mS_t$ as
\begin{eqs}
    \eps (\mP):=[  \mS_1 \cdot \mP ,   \mS_2 \cdot \mP,..., \mS_t \cdot \mP],
\end{eqs}
which indicates how the Pauli operator violates stabilizers $\mS_1,...,,\mS_t$.}
The error syndromes of a single Pauli matrix for the generic Hamiltonian \eqref{eq: generic S1 and S2} \textcolor{black}{with stabilizers $\mS_1$ and $\mS_2$} can be written as
\begin{eqs}
    \eps(\mX_1) &= [  \mS_1 \cdot \mX_{12} ,   \mS_2 \cdot \mX_{12}] = [ -\overline{g}_1, -\overline{k}_1 ], \\
    \eps(\mX_2) &= [  \mS_1 \cdot \mX_{14},  \mS_2 \cdot \mX_{14}] = [ -\overline{g}_2, -\overline{k}_2 ], \\
    \eps(\mZ_1) &= [  \mS_1 \cdot \mZ_{12}, \mS_2 \cdot \mZ_{12}] = [ \overline{f}_1, \overline{h}_1 ], \\
    \eps(\mZ_2) &= [  \mS_1 \cdot \mZ_{14},  \mS_2 \cdot \mZ_{14}] = [ \overline{f}_2, \overline{h}_2 ].
    \label{eq: get error syndromes}
\end{eqs}
Each syndrome is a row vector with two entries in $\ZZ_d[x, y, x^{-1}, y^{-1}]$.
{\color{black}
For a general case where the stabilizer group is generated by $t$ stabilizers, the syndrome would be a row vector with $t$ entries in $\ZZ_d[x, y, x^{-1}, y^{-1}]$.
}

Given this excitation map $\eps$, the {\bf topological order (TO) condition} can be formulated as
\begin{eqs}
    \ker \eps = \sigma,
\label{eq: topological condition}
\end{eqs}
where $\sigma$ is the stabilizer module and the kernel of $\eps$ represents the space of local Pauli operators commuting with all stabilizers. This TO condition requires that if a local operator commutes with stabilizers, this operator must be a product of stabilizers (a column vector in the stabilizer module). If the TO condition is satisfied, the ground state space is indistinguishable by any local operator. The ground state degeneracy on different manifolds is a part of the {\bf topological order} of Pauli stabilizer codes. More precisely, the ground state degeneracy on a torus equals the number of anyon types, with each ground state labeled by an anyon string operator.

To obtain the possible anyons in this theory, for each $n \geq 1$, we solve the {\bf anyon equation}
\begin{eqs}
    &\eps \big(
    \alpha(x,y)  \mX_1 + \beta(x,y) \mX_2
    + \gamma(x,y) \mZ_1 + \delta(x,y) \mZ_2
    \big) \\
    &= (1-x^n) [\cdots, \cdots] := (1-x^n) v,
\label{eq: anyon equation}
\end{eqs}
where $v$ is a length-2 row vector, referred to as an anyon. This anyon equation is previously sketched in Fig.~\ref{fig: finding string operator}, and here is the precise mathematical definition in the Laurent polynomial formalism. The physical interpretation of this equation is that when we apply Pauli matrices $X_1$, $X_2$, $Z_1$, and $Z_2$ at locations $\alpha(x,y)$, $\beta(x,y)$, $\gamma(x,y)$, and $\delta(x,y)$, it violates the stabilizers around the origin $(0,0)$ and the point $(n, 0)$ with patterns $v$ and $-v$, respectively. This operator creates an anyon $v$ at $(0,0)$ and its antiparticle at $(n, 0)$.
Note that if $v$ is an anyon, $x^a y^b v$ is also an anyon for all $a,b \in \ZZ$. In addition, an anyon solved by a particular $n$ can be solved from any multiple of $n$. The anyon equation \eqref{eq: anyon equation} defines the string operator that moves an anyon in the $x$-direction.
{\color{black} In principle, the anyon equations should also be formulated in the $y$-direction. However, as noted in Refs.~\cite{haah_module_13, bombin_Stabilizer_14}, in two dimensions, an anyon inherently possesses both $x$-mover and $y$-mover operators, allowing it to move in both the $x$- and $y$-directions. Consequently, addressing the anyon equations solely in the $x$-direction suffices to determine all possible anyonic excitations.\footnote{This assertion does not hold in three dimensions, where fracton phases are present \cite{haah2011local, Vijay2015newTO, vijay2016fracton}. Here, particle excitations may exhibit restricted mobility.}
}

To check whether two anyons are of the same type, we rely on the following equivalence relation:
\begin{eqs}
    v'\sim v ~ (\text{$v'$ is equivalent to $v$}),
\end{eqs}
if and only if there exist finite-degree polynomials $p_1(x,y)$, $p_2(x,y)$, $p_3(x,y)$, $p_4(x,y)$ such that
\begin{eqs}
    v' = & ~v + p_1(x,y) \eps(\mX_1) + p_2(x,y) \eps(\mX_2) \\
    & + p_3(x,y) \eps(\mZ_1) + p_4(x,y) \eps(\mZ_2).
\label{eq: equivalence relation of anyons}
\end{eqs}
Physically, this can be understood as two anyons are of the same type if and only if they \textcolor{blue}{differ} by some local Pauli operators $X_1$, $X_2$, $Z_1$, $Z_2$ at locations specified by polynomials $p_1(x,y)$, $p_2(x,y)$, $p_3(x,y)$, $p_4(x,y)$. 
Note that even if anyon $v$ implies the existence of another anyon $x^a y^b v$, they do not need to be the same type. The Wen plaquette model is an example \cite{wen2003quantum}.

\subsection{Examples}\label{sec: laurent_polynomial_example}

\subsubsection{Trivial model $H_0 = - \sum_e X_e$}

We first consider the trivial Hamiltonian Eq.~\eqref{eq: trivial H0 SA SB}. We can easily compute
\begin{eqs}
    &\eps(\mX_1) = [0, 0], ~\eps(\mX_2) = [0, 0], \\
    &\eps(\mZ_1) = [1, 0], ~\eps(\mZ_2) = [0, 1].
\end{eqs}
Since linear combinations of $\{ \eps(\mX_1), \eps(\mX_2), \eps(\mZ_1), \eps(\mZ_2)\}$ already generate all possible polynomials in each entry, all anyons are equivalent to the trivial anyon $v_0 = [0, 0]$. There is only one trivial anyon in this model.

\subsubsection{Standard $\ZZ_d$ toric code}
We analyze the standard $\ZZ_d$ toric code Eq.~\eqref{eq: standard toric code SA SB}. \textcolor{black}{The syndromes of single-qubit Pauli errors} are
\begin{eqs}
    \eps(\mX_1) &= [0, -1+\by], ~\eps(\mX_2) = [0, 1-\bx], \\
    \eps(\mZ_1) &= [1-x, 0], ~\eps(\mZ_2) = [1-y, 0].
\end{eqs}
From the anyon equation \eqref{eq: anyon equation}, $\eps(\mX_2)$ can generate a solution $v_m = [0, 1]$ and general solutions $[0, x^a y^b] \quad \forall a, b \in \ZZ$ (by choosing $n=1$ and $\beta(x,y) = - x^{a+1} y^b$). Also, $\eps(\mZ_1)$ can generate a solution $v_e = [1, 0]$ and general solutions $[x^c y^d, 0] \quad \forall c, d \in \ZZ$. Therefore, a generic anyon can be expressed as $[x^{a_1} y^{b_1}+x^{a_2} y^{b_2}+\cdots,~ x^{c_1} y^{d_1}+x^{c_2} y^{d_2}+\cdots]$.

Next, we want to check whether these anyons are equivalent or not. From the identity
\begin{eqs}
    (1-x)(1+x+x^2+\cdots + x^{k-1}) = 1 -x^k,
\end{eqs}
we have $[ x^k p(y),\cdots] \sim [ p(y),\cdots]$ since they \textcolor{blue}{differ} by the polynomial factor $(1+x+x^2+\cdots + x^{k-1})p(y)$. Similarly, we have $\eps(\mZ_2)$ for the $y$ part. More precisely, we summarize the equivalence relation
\begin{eqs}
    [0, x^a y^b] &\sim [0, 1], \quad \forall a, b \in \ZZ,\\
    [x^c y^d, 0] &\sim [1, 0], \quad \forall c, d \in \ZZ.
\end{eqs}
Giving an anyon, adding polynomial factors of $\eps(\mX_1), \eps(\mX_2), \eps(\mZ_1), \eps(\mZ_2)$ can not change the parity of the sum of coefficients\footnote{The sum of coefficients can be computed by substituting $x \rightarrow 1$ and $y \rightarrow 1$.} in each entry. We conclude that the following four anyons are inequivalent:
\begin{eqs}
    v_0 = [0, 0], ~v_e= [1, 0], ~v_m= [0, 1], ~v_f= [1, 1].
\end{eqs}
Note that we can choose $v_e$ and $v_m$ as the {\bf basis anyons}, which generate the entire set of anyons. This choice is not canonical since another choice $\{ v_e, v_f\}$ is also valid.

\subsubsection{Two copies of $\ZZ_d$ toric codes}
Consider a new Hamiltonian modified from the toric code:
\begin{eqs}
    H_{\text{TC}} = &- \sum_v \vcenter{\hbox{\includegraphics[scale=.25]{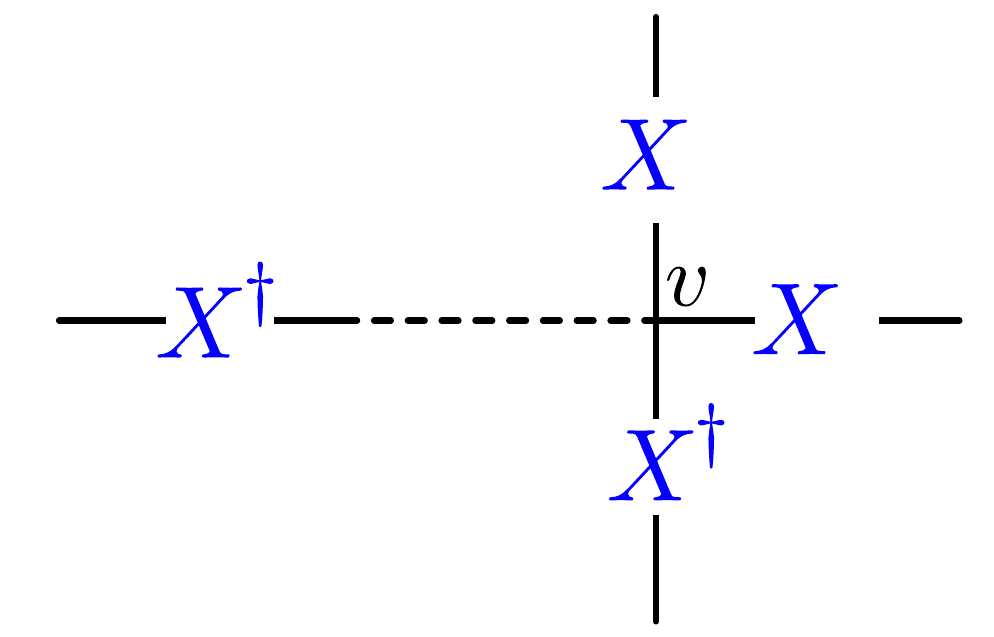}}}\\
    &- \sum_p \vcenter{\hbox{\includegraphics[scale=.25]{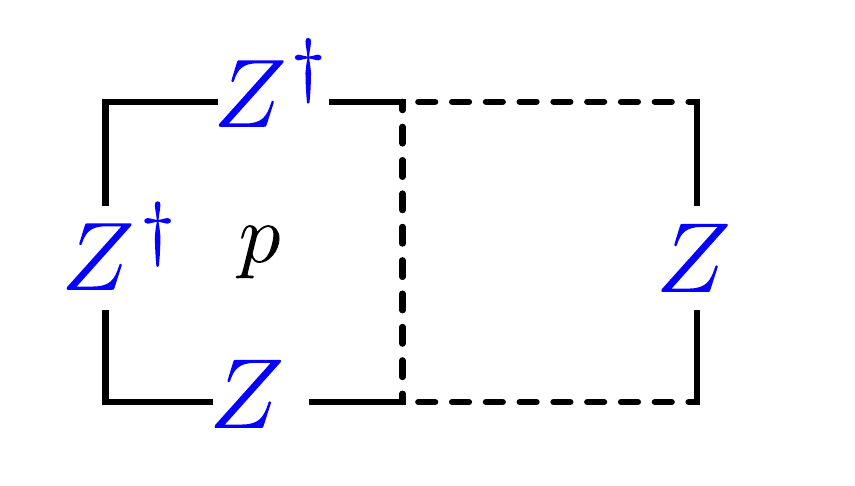}}},
\label{eq: 2 toric code Hamiltonian}
\end{eqs}
which stabilizers differ from Eq.~\eqref{eq: standard toric code SA SB}:
\begin{eqs}
    \mS_1 = \left[\begin{array}{c}
        1-\bx^2 \\
        1- \by \\
        \hline
        0 \\
        0
    \end{array}\right], \quad
    \mS_2 = \left[\begin{array}{c}
        0 \\
        0 \\
        \hline
        1-y \\
        -1+x^2
    \end{array}\right].
\end{eqs}
The square lattice can be partitioned so the Hamiltonian is exactly two decoupled $\ZZ_d$ toric codes. It has syndromes:
\begin{eqs}
    \eps(\mX_1) &= [0, -1+\by], ~\eps(\mX_2) = [0, 1-\bx^2],\\
    \eps(\mZ_1) &= [1-x^2, 0], ~\eps(\mZ_2) = [1-y, 0].
\end{eqs}
We focus on the $Z$ part first and try to solve 
\begin{eqs}
    \gamma(x,y)  ~ \eps(\mZ_1) + \delta(x,y) ~ \eps(\mZ_2)  = (1-x^n) v,
\end{eqs}
For $n=1$, we have
\begin{eqs}
    x^a y^b \eps(\mZ_1) = x^a y^b [1-x^2, 0] = (1-x) [x^a y^b + x^{a+1} y^b, 0].
\end{eqs}
Therefore, we have anyons $v^{n=1}_{a,b} := [x^a y^b + x^{a+1} y^b, 0]$ for arbitrary $a,b \in \ZZ$.
For $n=2$, we have
\begin{eqs}
    x^c y^d \eps(\mZ_1) = x^c y^d [1-x^2, 0] = (1-x^2) [x^c y^d , 0],
\end{eqs}
which gives anyons $v^{n=2}_{c,d} := [x^c y^d, 0]$ for arbitrary $c,d \in \ZZ$. Notice that all $v^{n=1}_{a,b}$ can be generated by $v^{n=2}_{c,d}$ since
\begin{eqs}
    v^{n=1}_{a,b} = v^{n=2}_{a,b} + v^{n=2}_{a+1,b}.
\end{eqs}
In fact, $v^{n=2}_{c,d}$ has already generated all possible polynomials in the first entry. We do not need to look for $n \geq 3$ cases.

Next, we use $\eps(\mZ_1)$ and $\eps(\mZ_2)$ to find the equivalence classes of those anyons. By some algebraic argument, there are four inequivalent classes:
\begin{eqs}
    1 &:= [0, 0], \\
    e_1 &:= [1, 0], \\
    e_2 &:= [x, 0], \\
    e_1e_2 &:= [1+x, 0]. \\
\end{eqs}
Similarly, for the $X$ part, we can solve
\begin{eqs}
    1 &:= [0, 0], \\
    m_1 &:= [0, 1], \\
    m_2 &:= [0, x], \\
    m_1m_2 &:= [0, 1+x]. \\
\end{eqs}
We can choose $\{ e_1, e_2, m_1, m_2\}$ as the basis anyons for the entire set of anyons.

\subsubsection{Double semion}

The double semion model \cite{levin2012braiding} is an example of topological order with a nontrivial ``twist,'' labeled by a 3-cocycle in the third cohomology $H^3(\ZZ_2, U(1))$ \cite{chen2012symmetry, dijkgraaf1990topological}. This double semion can be realized as a Pauli stabilizer code with $\ZZ_4$ qudits from condensing $e^2 m^2$ anyon in the standard $\ZZ_4$ toric code \cite{ellison2022pauli}. The stabilizers of this double semion code are
\begin{equation} 
    \begin{gathered}
    \mathcal{S}_1 = \vcenter{\hbox{\includegraphics[scale=.25]{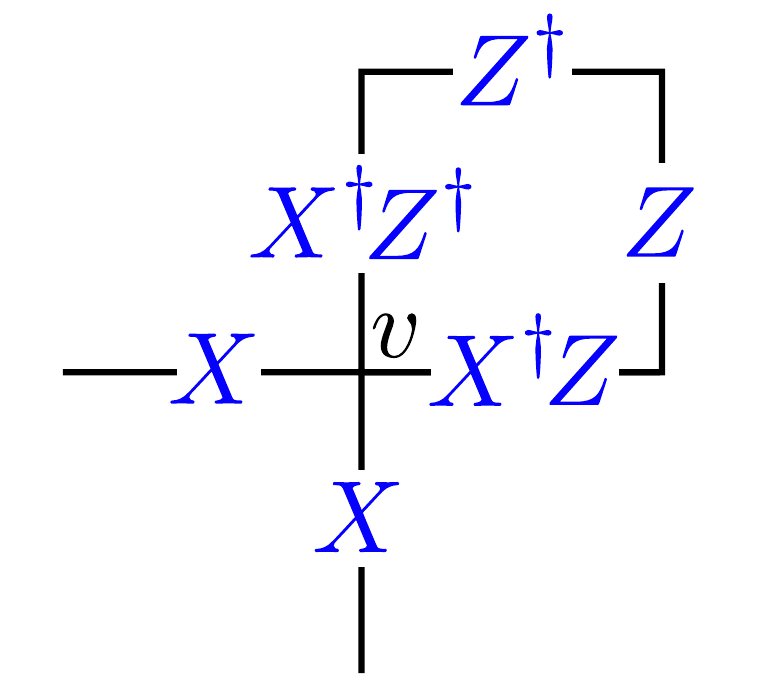}}}, \quad
    \mathcal{S}_2 = \vcenter{\hbox{\includegraphics[scale=.25]{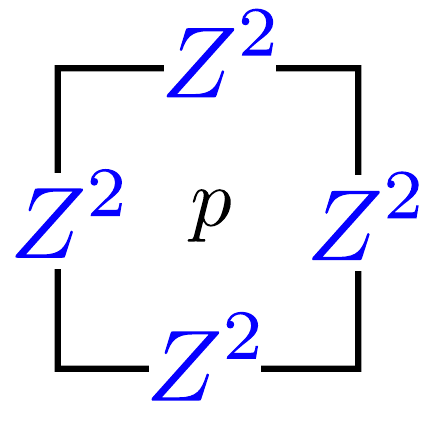}}}, \\
    \mathcal{S}_3 = \vcenter{\hbox{\includegraphics[scale=.25]{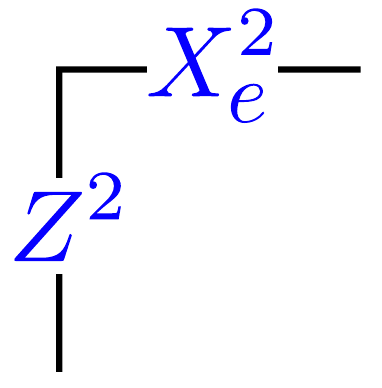}}}, \quad
    \mathcal{S}_4 = \vcenter{\hbox{\includegraphics[scale=.25]{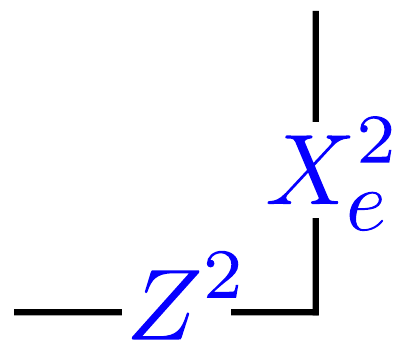}}}.
    \end{gathered}
\label{eq: DS terms}
\end{equation}
The stabilizers can be written as column vectors of $\ZZ_d[x, y ,x^{-1}, y^{-1}]$:
\begin{eqs}\label{eq:DS_stb_poly}
    \mathcal{S}_1&= \left[\begin{array}{c}
        \bx-1 \\
        \by-1\\
        \hline
        1-y \\
        -1+ x
    \end{array}\right], \quad
    \mathcal{S}_2 = \left[\begin{array}{c}
        0 \\
        0\\
        \hline
        2+2y \\
        2+2x 
    \end{array}\right], \\
    \mathcal{S}_3 &= \left[\begin{array}{c}
        2 \\
        0 \\
        \hline
        0 \\
        2\by
    \end{array}\right], \quad
    \mathcal{S}_4 = \left[\begin{array}{c}
        0 \\
        2\\
        \hline
        2\bx \\
        0 
    \end{array}\right].
\end{eqs}
It has error syndromes:
\begin{eqs}
    \eps(\mX_1) &= [\by-1,-2\by-2,0,-2x], ~\eps(\mZ_1) = [-1+x,0,2,0], \\
    \eps(\mX_2) &= [1-\bx,-2\bx-2,-2y,0], ~\eps(\mZ_2) = [-1+y,0,0,2].
\end{eqs}
We can find: 
\begin{eqs}\label{eq:DS_x_string}
    \eps(\mX_2)+y\eps(\mZ_1)&=(1-x)[-y-\bx,-2\bx,0,0],\\
    2\eps(\mZ_1)&=(1-x)[-2,0,0,0].
\end{eqs}
Therefore, we can get string operators in the $x$-direction and their corresponding anyons:
\begin{eqs}
    v_s&=[-y-\bx,-2\bx,0,0],\\
    v_b&=[-2,0,0,0].
\end{eqs}

With the anyons $v_s$ and $v_b$, we can find that they also satisfy:
\begin{eqs}\label{eq:DS_y_string}
    \eps(\mX_1)-x\eps(\mZ_2)&=(1-y)[x+\by,-2\by,0,0]=(1-y)(-\bx y)v_s,\\
    2\eps(\mZ_2)&=(1-y)[-2,0,0,0]=(1-y)v_b.
\end{eqs}
Therefore, we can get the string operators in the $y$-direction. And then, we can draw the string operators for $s$ and $b$ as
\begin{eqs}\label{eq:DS_string}
    W^s_e& :=
    \begin{gathered}
        \vcenter{\hbox{\includegraphics[scale=0.25]{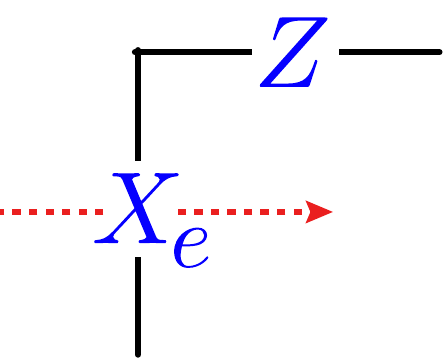}}},
        \vcenter{\hbox{\includegraphics[scale=0.25]{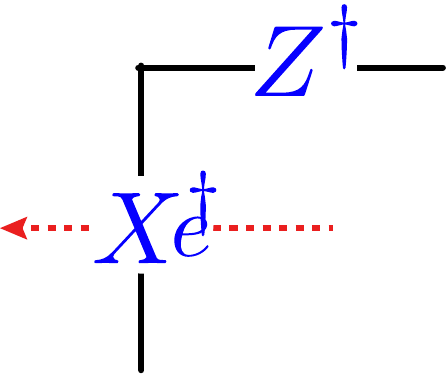}}},
        \vcenter{\hbox{\includegraphics[scale=0.25]{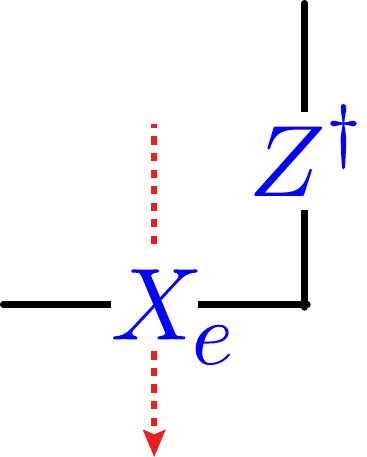}}},
        \vcenter{\hbox{\includegraphics[scale=0.25]{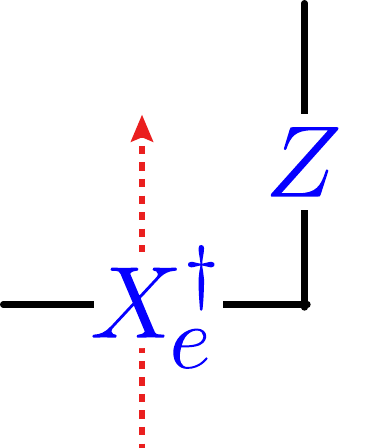}}},
    \end{gathered}\\
    W^b_e&:=
    \begin{gathered}
        \vcenter{\hbox{\includegraphics[scale=0.25]{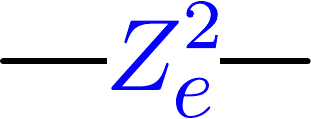}}}, \quad 
        \vcenter{\hbox{\includegraphics[scale=0.25]{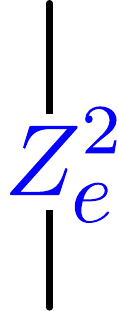}}}.
    \end{gathered}
\end{eqs}

\subsection{Shifted double semion}
Here, we also present a variant of the double semion code, named as the shifted double semion code, and how to obtain the string operators of it.
The stabilizers of the shifted double semion code are
\begin{eqs}\label{eq:shifted_DS_stabilizer}
    \mathcal{S}_1 & :=\begin{gathered}\includegraphics[scale=0.25]{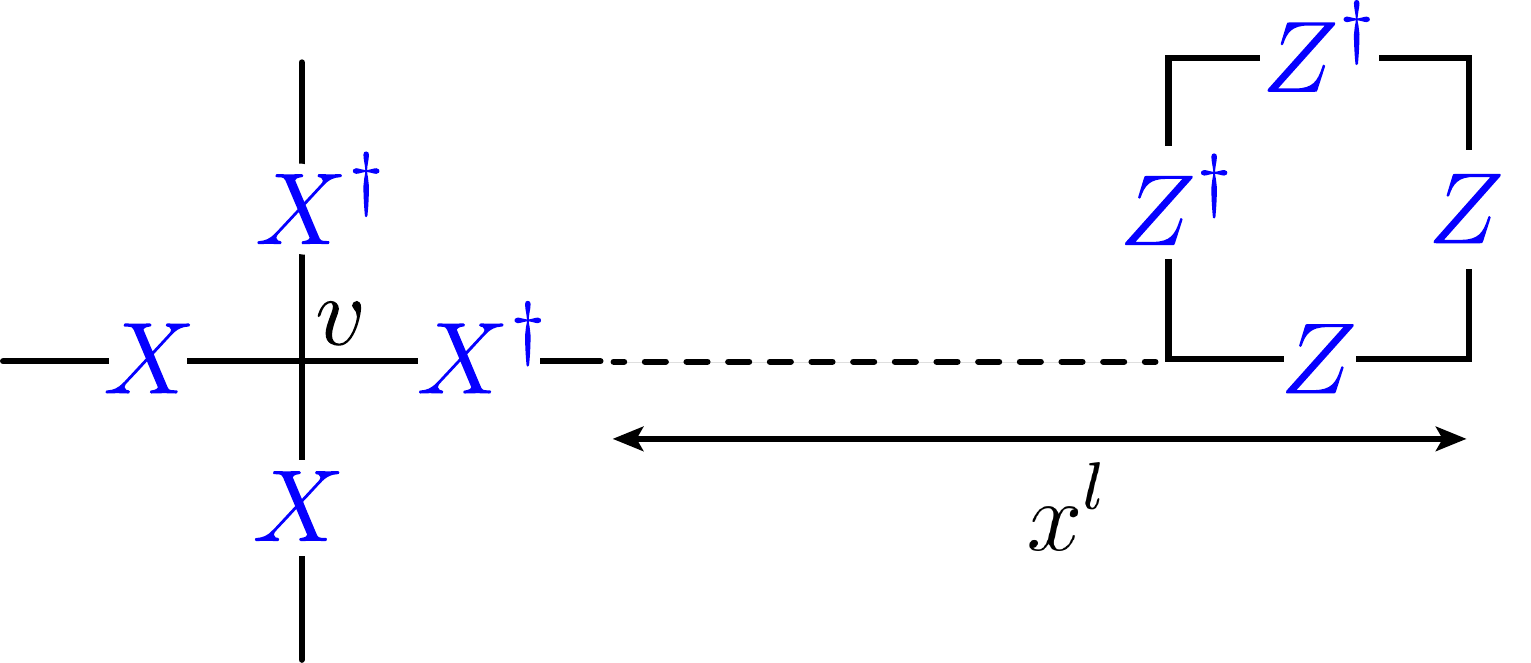}\end{gathered},\\
    \mathcal{S}_2 &:= \begin{gathered}\includegraphics[scale=0.25]{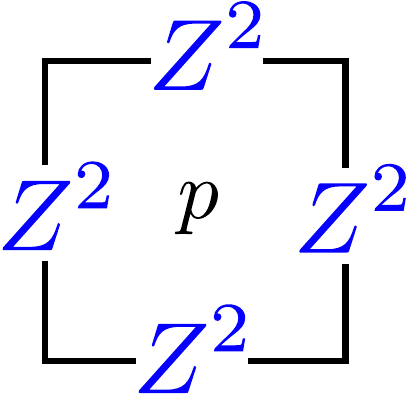}\end{gathered},\\
    \mathcal{S}_3 &:=\begin{gathered}\includegraphics[scale=0.25]{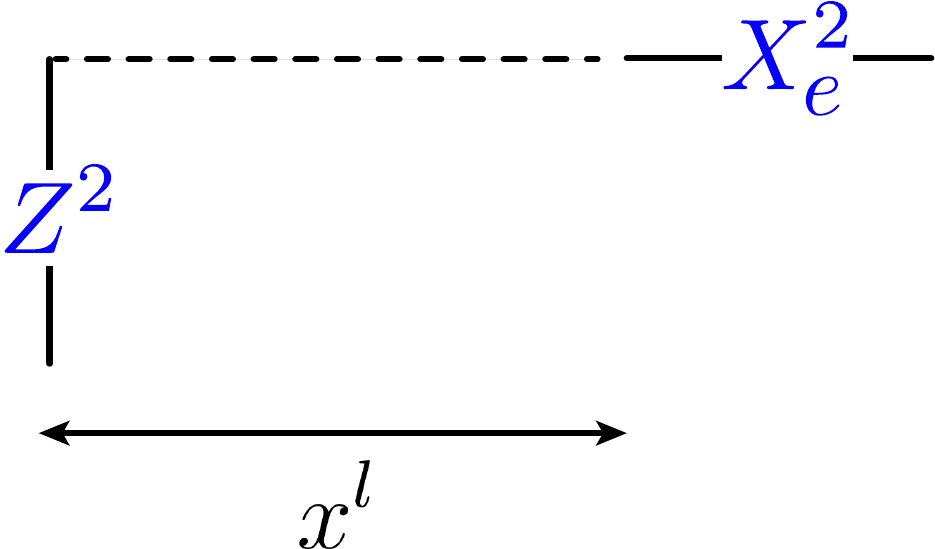}\end{gathered},\\
    \mathcal{S}_4 &:=\begin{gathered}\includegraphics[scale=0.25]{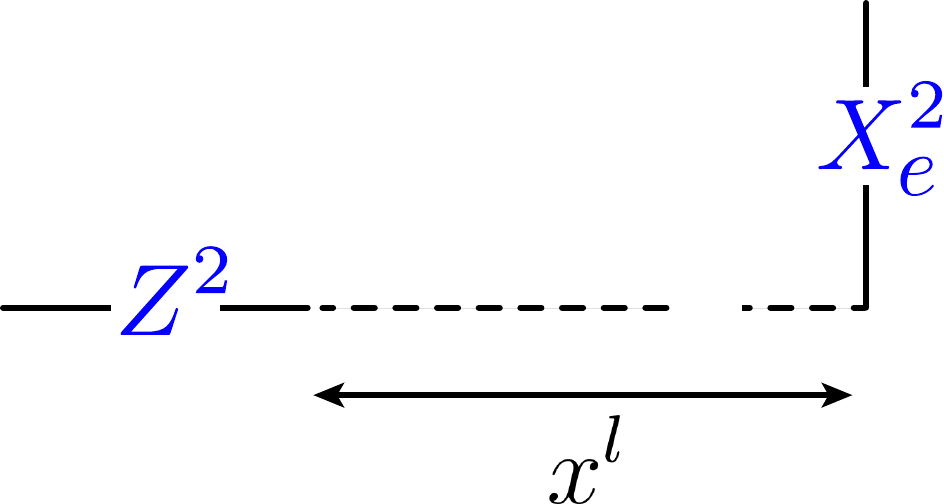}\end{gathered}.
\end{eqs}
For the double-semion code, we condense neighbored $e^2m^2$ anyons. For the shifted double-semion code, instead of condensing neighbored $e^2 m^2$ anyon, we condensed $e^2 m^2$ that are far away by $x^l$, shown as $\mathcal{S}_3, \mathcal{S}_4$ in Eq.~\eqref{eq:shifted_DS_stabilizer}. $\mathcal{S}_1$ and $\mathcal{S}_2$ generate a subgroup of the $\mathbb{Z}_4$ toric code stabilizer group, which commute with condensation terms $\mathcal{S}_3, \mathcal{S}_4$.
The stabilizers of shifted double semion code be written as polynomials
\begin{eqs}\label{eq:shifted_DS_stb_poly}
    \mathcal{S}_1 &=\begin{bmatrix}
        \overline{x}-1\\
        \overline{y}-1\\
        \hline 
        x^l(1-y)\\
        x^l (-1+x)
    \end{bmatrix},\quad \mathcal{S}_2 =\begin{bmatrix}
        0\\
        0\\
        \hline
        2+2y\\
        2+2x
    \end{bmatrix},\\
    \mathcal{S}_3&=\begin{bmatrix}
        2x^l\\
        0\\
        \hline
        0\\
        2\overline{y} 
    \end{bmatrix},\quad \mathcal{S}_4=\begin{bmatrix}
        0\\
        2x^l\\
        \hline 
        2\overline{x}\\
        0
    \end{bmatrix}.
\end{eqs}
If we set $l=0$, Eq.~\eqref{eq:shifted_DS_stb_poly} gives the stabilizers of the double semion code in Eq.~\eqref{eq:DS_stb_poly}.
Given the stabilizers, the error syndromes are 
\begin{eqs}
    \eps(\mX_1)&=[\overline{x}^l (\overline{y}-1), -2\overline{y}-2,0, -2x],\\
    \eps(\mX_2)&=[\overline{x}^l (1-\overline{x}),-2\overline{x}-2,-2 y,0],\\
    \eps(\mZ_1)&=[-1+x,0,2 \overline{x}^l,0], \quad \\
    \eps(\mZ_2)&= [-1+y,0,0,2 \overline{x}^l].
\end{eqs}
Similar to the calculation in Eq.~\eqref{eq:DS_x_string} and \eqref{eq:DS_y_string}, we have

\begin{eqs}
    \epsilon(\mX_2)+x^l y \epsilon(\mZ_1)&=[-\overline{x}^{l+1}-x^l y, -2\overline{x},0,0],\\
    \epsilon(\mX_1)-x^l \epsilon(\mZ_2)&= (1-y)[\overline{x}^l \overline{y}+x^l,-2\overline{y},0,0],\\
    2\epsilon(\mZ_1)&=(1-x)[-2,0,0,0],\\
    2\epsilon(\mZ_2)&=(1-y)[-2,0,0,0].
\end{eqs}
This gives us the string operators for anyon $s$ and $b$ along the $x$- and $y$-directions in Eq.~\eqref{eq:shifted_DS_xstring}. They generate the whole anyon theory of $\{e, s, b, \overline{s} \}$ where $\overline{s}=s\times b$.
\begin{eqs}\label{eq:shifted_DS_xstring}
    W_e^s:=& \begin{gathered}
        \includegraphics[scale=0.3]{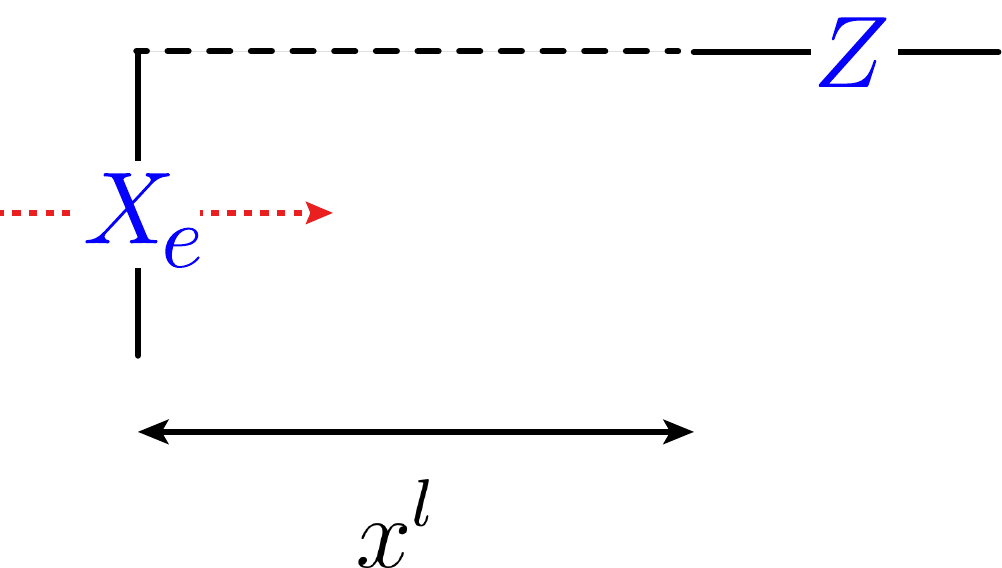}
    \end{gathered} ,\\
    &\begin{gathered}
        \includegraphics[scale=0.3]{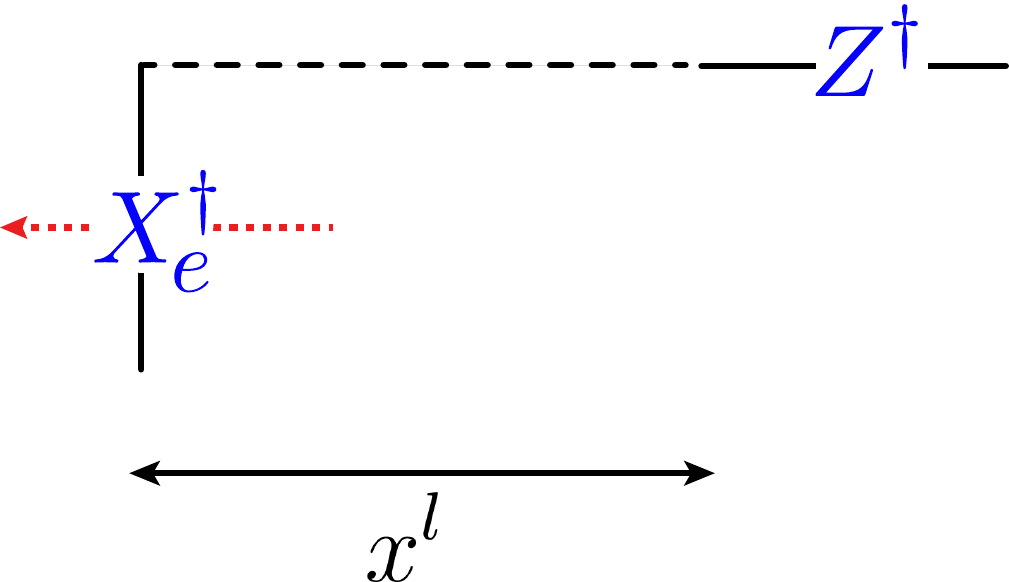}
    \end{gathered}, \\
    & \begin{gathered}
        \includegraphics[height=3cm]{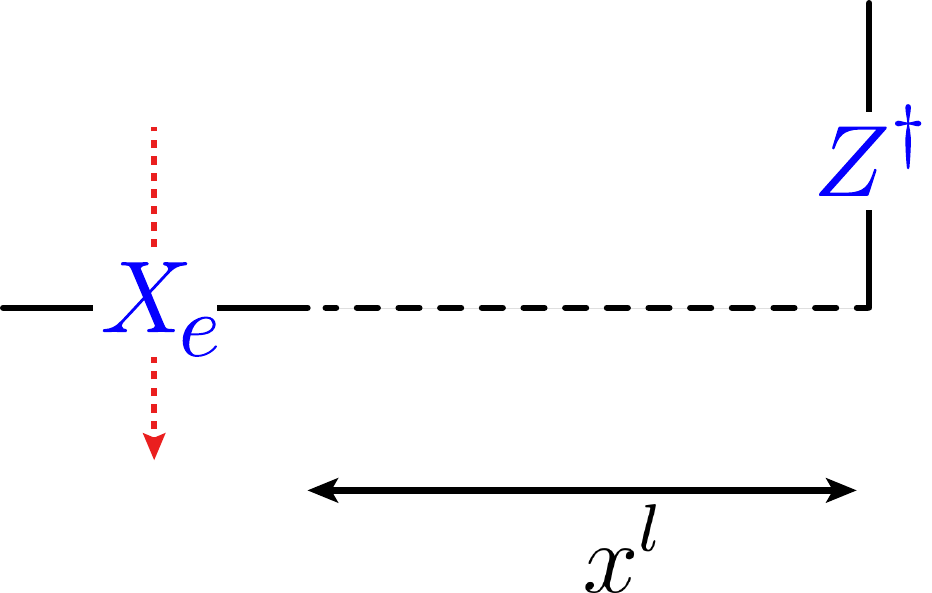}
    \end{gathered} ,\\
    &\begin{gathered}
        \includegraphics[height=3cm]{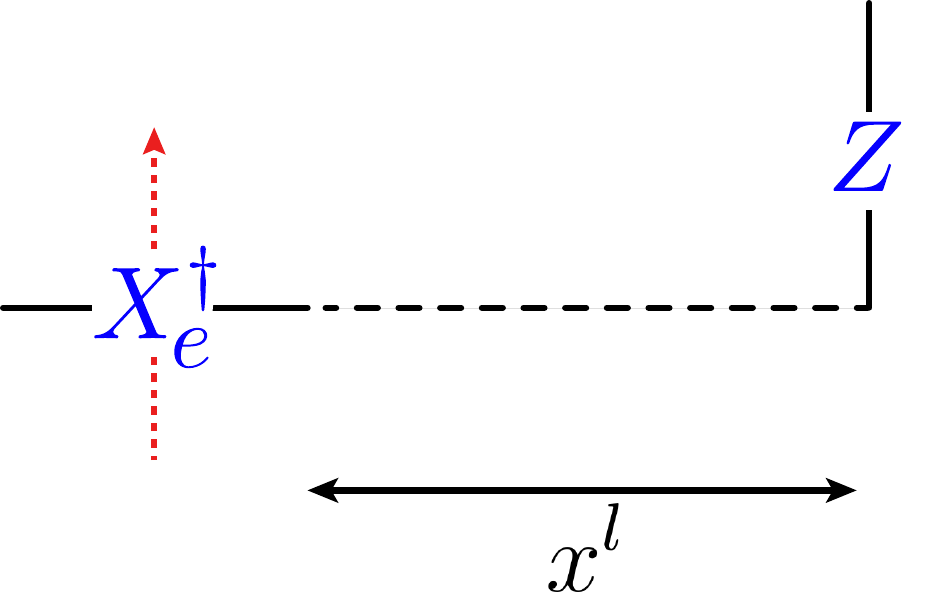}
    \end{gathered}, \\
    W_e^b:= &~~
    \begin{gathered}
        \includegraphics[scale=0.3]{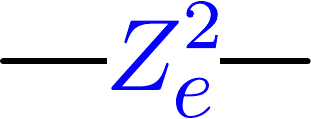}
    \end{gathered}, \quad
    \begin{gathered}
        \includegraphics[height=1.5cm]{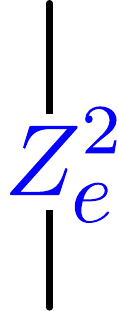}
    \end{gathered} \quad .
\end{eqs}

\section{Computational Method}\label{sec: computational_method}

In this section, we describe how to transform the problem of finding anyon string operators in the previous section into implementing (modified) Gaussian elimination, which is closely related to the Hermite normal form. Computers can perform these algorithms efficiently in a polynomial time. Later, we show that the fusion rules of anyons can be derived from the Smith normal form of anyon relations, and the topological spins and braiding statistics of anyons can be computed from the string operators directly.

To numerically solve the anyon equation~\eqref{eq: anyon equation} to obtain anyons in a stabilizer code, the module over a polynomial ring is not straightforward to work with. As standard techniques in linear algebra, \textbf{Gaussian elimination (GE)} only works for a field, and \textbf{Hermit normal form (HNF)} and \textbf{Smith normal form (SNF)} are only applicable to a \textbf{principal ideal domain (PID)}, which are reviewed in Appendix~\ref{sec: HSF and SNF}.
The polynomial ring $R=\ZZ_d [x, y, x^{-1}, y^{-1}]$ is neither a field nor a PID. Therefore, instead of treating $\eps(P)$ as a row vector in the module over $R$, we truncate the degree of polynomials and store coefficients in the polynomial as a vector over $\ZZ_d$. For instance, a polynomial such as $f(x,y)=1 + 2 x + xy + 3y^2 \in R$ can be expressed as a {\bf coefficient vector} over $\ZZ_d$
\begin{eqs}
    \begin{array}{ccccccccccccc}
                             &1 & x & y & \bx & \by & x^2 & xy & y^2 & x \by &  \bx y&\cdots &\\
        \wwide{f}=\big[  &1 & 2 & 0 & 0   & 0   & 0   & 1  & 3   & 0     &   0   &\cdots & \big],
    \end{array}
\end{eqs}
where each entry represents the coefficient of the corresponding monomial $x^a y^b$ in the polynomial $f(x, y)$.
In the rest of this paper, we denote $\wwide{f}$ as the coefficient vector of the polynomial $f(x,y) \in R$.
In practice, we choose the polynomial within $x^{\pm k}$ and $y^{\pm k}$,  and the monomial $x^a y^b$ corresponds to the $((a+k)+(b+k)(2k+1)+1)$-th entry in the coefficient vector for all $-k \leq a, b \leq k$. The length of the coefficient vector is $(2k+1)^2$. In other words, we restrict ourselves to a finite region $A$ that includes coordinates $(i, j)$ with $-k \leq i,j \leq k$ in a plane, as shown in Fig.~\ref{fig: the region A}.

\begin{figure*}[htb]
    \centering
    \includegraphics[width=.75\textwidth]{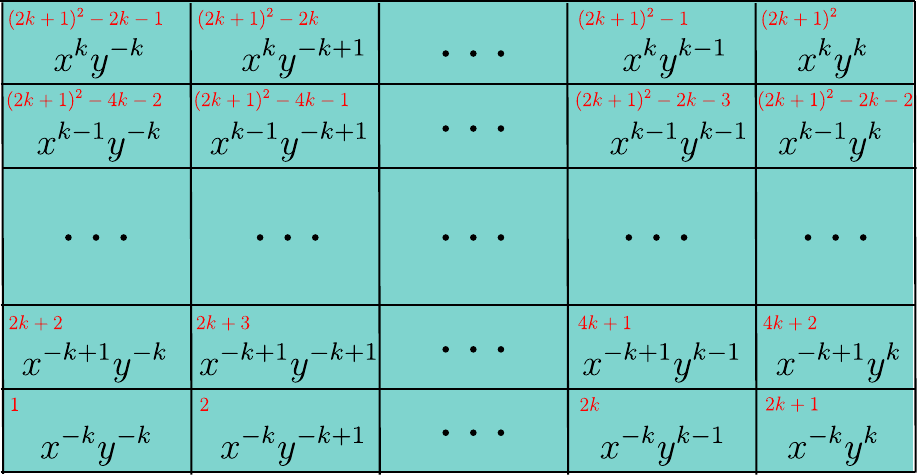}
    \caption{The region $A$ where the polynomials are truncated to. It contains all monomials with $x^{\pm k}$ and $y^{\pm k}$. The red numbers indicate the ordering in the coefficient vector. Any polynomial in $\ZZ_d[x, y, x^{-1}, y^{-1}]$ is truncated to a length-$(2k+1)^2$ row vector over $\ZZ_d$.
    }
    \label{fig: the region A}
\end{figure*}

Given a fixed $k$, we define the {\bf truncation map} as
\begin{eqs}
    f \in \ZZ_d [x, y, x^{-1}, y^{-1}] \to \wwide{f} \in \ZZ_d^{\otimes (2k+1)^2}.
\label{eq:polynomial_truncation}
\end{eqs}
{\color{black} The truncation size is governed by the parameter $k$, a large positive integer relative to the size of the stabilizers.}
We allow this truncation map to act on any {\color{black} $a \times b$} matrix $M$ over $\ZZ_d [x, y, x^{-1}, y^{-1}]$ by acting on each entry to expand to a row vector with length $(2k+1)^2$, and by joining these row vectors to form an {\color{black} $a \times b(2k+1)^2$} matrix $\wwide{M}$ over $\ZZ_d$. As argued in the previous section, sufficiently large $A$ allows us to recover all topological data of a Pauli stabilizer code. To optimize running time, we can choose a rectangular region specified by the polynomial within $x^{\pm k_x}$ and $y^{\pm k_y}$. For example, when focusing on anyon string operators in the $x$-direction, a greater value of $k_x$ compared to $k_y$ is preferable. For convenience, our discussion will focus on the square area $A$ and check the topological order condition, identify anyon string operators, and verify the fusion rules, topological spins, and braiding properties of anyons within $A$.

Also, we define the {\bf translational duplicate map $\mathrm{TD}_m$} that takes the input as a length-$l$ row vector $F=[f_1, f_2, \cdots, f_l]$ with each entry $f_i$ in $\ZZ_d [x, y, x^{-1}, y^{-1}]$ and returns a $(2m+1)^2 \times l$ matrix formed by its translations within $x^{\pm m} y^{\pm m}$:\footnote{We choose $m$ to be smaller then $k$. For example, we set $m = k -2r$, where $r$ is the maximum range of each stabilizer.}
\begin{eqs}
    F\to
    \mathrm{TD}_m(F):=
    \begin{bmatrix}
        x^{-m} y^{-m} F \\
        x^{-m+1} y^{-m} F \\
        \vdots \\
        x^{m-1} y^{-m} F \\
        x^{m} y^{-m} F \\
        \hline
        x^{-m} y^{-m+1} F \\
        x^{-m+1} y^{-m+1} F \\
        \vdots \\
        x^{m-1} y^{-m+1} F \\
        x^{m} y^{-m+1} F \\
        \hline
        \vdots \\
        \vdots \\
        \hline
        x^{-m} y^{m} F \\
        x^{-m+1} y^{m} F \\
        \vdots \\
        x^{m-1} y^{m} F \\
        x^{m} y^{m} F.
    \end{bmatrix}~,
\label{eq: TD definition}
\end{eqs}
where $x^a y^b F$ is the row vector multiplying $x^a y^b $ to each entry of $F$, i.e., $x^a y^b F = [x^a y^b f_1, x^a y^b f_2, \cdots, x^a y^b f_l]$. 

\subsection{Prime-dimensional qudit $\FF_p$}
As a warm-up, we assume that the qudit dimension $d$ is a prime number to illustrate the method, and will later (in Sec.~\ref{sec: nonprime-dimensional qudit}) generalize to an arbitrary integer $d$. When $d=p$ is a prime, $\ZZ_p = \FF_p$ is a finite field, greatly simplifying the computations.

\subsubsection{Topological order condition}\label{sec: checking TO condition Zp}
We check the TO condition~\eqref{eq: topological condition} in the region $A$. Given the syndrome for a single Pauli operator, we find possible local combinations of Pauli operators such that all syndromes are canceled. We verify that these combined operators are in the stabilizer group. 

First, we prepare a matrix consisting of error syndromes:
\begin{eqs}
    \wwide{M}_1 :=
    \begin{bmatrix}
        \wwide{\mathrm{TD}_m( \eps(\mX_1))} \\
        \wwide{\mathrm{TD}_m( \eps(\mX_2))} \\
        \wwide{\mathrm{TD}_m( \eps(\mZ_1))} \\
        \wwide{\mathrm{TD}_m( \eps(\mZ_2))} \\
    \end{bmatrix},
\label{eq: M1 matrix}
\end{eqs}
where the row space consists of the syndrome patterns of local Pauli operators.
Next, we perform the Gaussian elimination to obtain a new matrix, {\color{black} denoted as} $\mathrm{GE}(\wwide{M}_1)$. Each zero row in $\mathrm{GE}(\wwide{M}_1)$ represents a relation:
\begin{eqs}
    \wwide{\eps(p_1\mX_1 + p_2 \mX_2 + p_3 \mZ_1 + p_4 \mZ_2)} = \wwide{[0, 0]},
\end{eqs}
for some polynomials $p_1, p_2, p_3, p_4 \in \FF_p[x, y, x^{-1}, y^{-1}]$, restricted within $x^{\pm m} y^{\pm m}$. We denote the column vector $\mO := \eps(p_1\mX_1 + p_2 \mX_2 + p_3 \mZ_1 + p_4 \mZ_2)$ and compute the difference between the ranks of the following two matrices:
\begin{eqs}
    \wwide{M}_2 :=
    \begin{bmatrix}
        \wwide{\mathrm{TD}_{m'} (\mS_1^\dagger)} \\
        \wwide{\mathrm{TD}_{m'} (\mS_2^\dagger)} \\
    \end{bmatrix},
    ~\mathrm{and}~~
    \wwide{M}_2^\mO :=
    \begin{bmatrix}
        \wwide{\mO^\dagger} \\
        \wwide{\mathrm{TD}_{m'} (\mS_1^\dagger)} \\
        \wwide{\mathrm{TD}_{m'} (\mS_2^\dagger)} \\
    \end{bmatrix}.
\label{eq: M2 and M2O matrix Zp}
\end{eqs}
Note that we choose $m' > m$ (and still $m' < k$) such that stabilizers can cover the support of $\mO$.
The rank difference is $0$ if and only if the operator $\mO$ is spanned by $\mS_1$, $\mS_2$ and their translations. To check the TO condition, we verify that all zero rows in $\mathrm{GE}(\wwide{M}_1)$ correspond to a local operator $\mO$ with vanishing rank difference in Eq.~\eqref{eq: M2 and M2O matrix Zp}.

\subsubsection{Solving the anyon equation}\label{sec: Solving Anyon Equations Zp}

To solve the anyon equation
\begin{eqs}
    &\eps \big(
    \alpha(x,y)  \mX_1 + \beta(x,y) \mX_2
    + \gamma(x,y) \mZ_1 + \delta(x,y) \mZ_2
    \big) \\
    &= (1-x^n) v,
\label{eq: anyon equation 2}
\end{eqs}
we perform Gaussian elimination on the following $6(2m+1)^2 \times 2(2k+1)^2$ matrix
\begin{eqs}
    \wwide{M}_3 :=
    \begin{bmatrix}
        \wwide{\mathrm{TD}_m( \eps(\mX_1))} \\
        \wwide{\mathrm{TD}_m( \eps(\mX_2))} \\
        \wwide{\mathrm{TD}_m( \eps(\mZ_1))} \\
        \wwide{\mathrm{TD}_m( \eps(\mZ_2))} \\
        \wwide{\mathrm{TD}_m( [(1-x^n), 0]} \\
        \wwide{\mathrm{TD}_m( [0, (1-x^n)]} \\
    \end{bmatrix}.
\label{eq: M3 matrix}
\end{eqs}
After Gaussian elimination, each zero row in the matrix $\mathrm{GE} (\wwide{M}_3 )$ gives an anyon $v$ in this region:
\begin{eqs}
    \wwide{\eps(p_1\mX_1 + p_2 \mX_2 + p_3 \mZ_1 + p_4 \mZ_2)} = \wwide{[p_5 (1-x^n), p_6 (1-x^n)]},
\end{eqs}
where the anyon is $v:=[p_5, p_6]$.

\subsubsection{Equivalence relations between anyons}\label{sec: Equivalence relations between anyons Zp}

In this part, we provide an algorithm to categorize anyon types in a Pauli stabilizer model with $\ZZ_p$ qudits. Two anyons $v$ and $v'$ are equivalent if
\begin{eqs}
    v' = & ~v + p_1 \eps(\mX_1) + p_2 \eps(\mX_2) + p_3 \eps(\mZ_1) + p_4 \eps(\mZ_2).
\label{eq: anyon equivalence with restricted polynomials}
\end{eqs}
where $p_1$, $p_2$, $p_3$, and $p_4$ are polynomials within $x^{\pm m} y^{\pm m}$.
In other words, we compute the quotient space of anyons using row vectors $\{ \eps(\mX_1), \eps(\mX_2), \eps(\mZ_1), \eps(\mZ_2)\}$. To check this, we can compute the rank difference between the two matrices
\begin{eqs}\label{eq:syndrome_matrix}
    \wwide{M}_1 =
    \begin{bmatrix}
        \wwide{\mathrm{TD}_m( \eps(\mX_1))} \\
        \wwide{\mathrm{TD}_m( \eps(\mX_2))} \\
        \wwide{\mathrm{TD}_m( \eps(\mZ_1))} \\
        \wwide{\mathrm{TD}_m( \eps(\mZ_2))} \\
    \end{bmatrix},
    ~\mathrm{and}~~
    \wwide{M}_1^{v-v'}:=
    \begin{bmatrix}
        \wwide{v-v'}\\
        \wwide{\mathrm{TD}_m( \eps(\mX_1))} \\
        \wwide{\mathrm{TD}_m( \eps(\mX_2))} \\
        \wwide{\mathrm{TD}_m( \eps(\mZ_1))} \\
        \wwide{\mathrm{TD}_m( \eps(\mZ_2))} \\
    \end{bmatrix}.
\end{eqs}
Two matrices have equal rank if and only if the syndromes span $v-v'$, which is exactly the definition of equivalence $v\sim v'$ in Eq.~\eqref{eq: anyon equivalence with restricted polynomials}.

Moreover, since each qudit has a prime dimension $p$, any topological Pauli stabilizer code is equivalent to finite copies of $\ZZ_p$ toric codes. Therefore, we only need to count the number of basis anyons that generate all anyons, to obtain the topological order. Given the anyon set $V = \{ v_1, v_2, \cdots \}$ solved from the anyon equation~\eqref{eq: anyon equation}, to count the number of basis anyons, we compute the rank difference between
\begin{eqs}
    \wwide{M}_1 =
    \begin{bmatrix}
        \wwide{\mathrm{TD}_m( \eps(\mX_1))} \\
        \wwide{\mathrm{TD}_m( \eps(\mX_2))} \\
        \wwide{\mathrm{TD}_m( \eps(\mZ_1))} \\
        \wwide{\mathrm{TD}_m( \eps(\mZ_2))} \\
    \end{bmatrix},
    ~\mathrm{and}~~
    \wwide{M}_1^{V}:=
    \begin{bmatrix}
        \wwide{v_1}\\
        \wwide{v_2}\\
        \wwide{v_3}\\
        \vdots \\
        \wwide{\mathrm{TD}_m( \eps(\mX_1))} \\
        \wwide{\mathrm{TD}_m( \eps(\mX_2))} \\
        \wwide{\mathrm{TD}_m( \eps(\mZ_1))} \\
        \wwide{\mathrm{TD}_m( \eps(\mZ_2))} \\
    \end{bmatrix}.
\end{eqs}
If the rank difference is $2k$, the topological Pauli stabilizer code corresponds to $k$ copies of $\ZZ_p$ toric codes.

\subsection{Nonprime-dimensional qudit $\ZZ_d$}\label{sec: nonprime-dimensional qudit}
Over a ring $\ZZ_d$, Gaussian elimination does not work since the multiplication inverse of an element in $\ZZ_d$ might not exist, e.g., the element $2$ in $\ZZ_4$. Moreover, $\ZZ_d$ may contain zero divisors, i.e., a nonzero element $a$, such that there exists another nonzero element $x \in \ZZ_d$ satisfying $ax = 0$. The element $2$ in $\ZZ_4$ is again an example since $2 \times 2 = 0$ in $\ZZ_4$.
Therefore, we introduce the \textbf{modified Gaussian elimination (MGE)} algorithm over $\ZZ_d$, inspired by the Hermite normal form:
\begin{enumerate}
    \item Given a $n \times m$ matrix $A$ over $\ZZ_d$. We treat the entries in the first column $a_{i,1}$ $\forall i \in \{1, 2, \cdots ,n\}$ as integers $\{0, 1, 2, 3, \cdots, d-1\}$ in $\ZZ$. {\color{black} To restore the $\mathbb{Z}_d$ periodicity, we append a new row $[d, 0, 0, \cdots 0]$ to the bottom of matrix $A$, transforming it into a $(n+1) \times m$ matrix, denoted as $A'$.} Next, we find the greatest common divisor of its first column $\{a_{1,1}, a_{2,1}, \cdots, a_{n,1}, d \}$, denoted as $\gcd(a_{i,1})_d$.\footnote{For the convenience, we choose $\gcd(a_{i,1})_d$ to be in the set $\{0, 1, 2, 3, \cdots, d-1 \}$.} From the extended Euclidean algorithm, there exists a linear combination:
    \begin{eqs}
        &r_1 a_{1,1} + r_2 a_{2,1} + \cdots + r_n a_{n,1} +\textcolor{black}{r_0 d}\\
        &= \gcd(a_{i,1})_d.
    \end{eqs}
    Moreover, this linear combination can be obtained by subtracting one entry from another greater entry from the following list repetitively,
    \begin{eqs}
        [a_{1,1}, a_{2,1}, a_{3,1}, \cdots, a_{n,1}, d],
    \end{eqs}
     and the final entries are (up to reordering)
     \begin{eqs}
        [\gcd(a_{i,1})_d, 0, 0, \cdots, 0, 0].
    \end{eqs}
    Subtracting one entry from another and reordering would correspond to row operations in the matrix $A'$. Therefore, we apply corresponding row operations in the matrix $A'$ according to the extended Euclidean algorithm, which transforms the first column into
    \begin{eqs}
        [\gcd(a_{i,1})_d, 0, 0, \cdots, 0, 0]^T.
    \end{eqs}    
    \item If $\gcd(a_{i,1})_d$ is a zero divisor, i.e., $\gcd(a_{i,1})_d > 0$ and $$\gcd(a_{i,1})_d \times r^* \equiv 0 ~(\text{mod }d)$$ for some integer $r^*$ with $0 < r^* \leq d-1$ (WLOG, we assume that $r^*$ is the smallest one), we do not perform an operation in this row.\footnote{Denote $\gcd(a_{i,1})_d \times r^* = kd$. We notice that $r^*$ times this row has first entry $0$ (mod $d$) and might affect the space spanned by other rows. In principle, we should insert a new row at the end of the matrix, which is $r^*$ times the first row minus the row $[kd, 0, 0, \cdots, 0]$, before dealing with the second column. However, this inserting is redundant since it is equivalent to inserting a new row $[-kd, 0, 0, \cdots 0]$ that is already covered by the insertion of $[d,0,0, \cdots, 0]$ in step 1.}
    If $\gcd(a_{i,1})_d$ is not a zero divisor, we multiply a proper number in this row to make it $+1$.
    \item The first column and the first row are done. Repeat the above procedures on the submatrix without the first column and the first row.
\end{enumerate}
{\color{black}
In the original matrix $A$, linear \textbf{relations} exist between the row vectors; specifically, certain row vectors can be combined linearly to result in the zero row vector (mod $d$). These relations are crucial as they provide insights into the connections between the row vectors. For instance, suppose one row $r_1$ represents the syndrome pattern of an anyon $v$, another row $r_2$ represents the syndrome pattern of a different anyon $v'$, and a third row $r_3$ represents the syndrome of a local Pauli operator $\mathcal{P}$. If these rows satisfy the relation $r_1 - r_2 + r_3 = 0$, it implies that the anyons $v$ and $v'$ are related through the local operator $\mathcal{P}$, or more specifically,
\begin{eqs}
    v' = v + \eps (\mP),
\end{eqs}
indicating that $v$ and $v'$ are of the same anyon type. Thus, identifying all such relations between the row vectors is essential.

We provide a concrete example of implementing the modified Gaussian elimination algorithm on a selected matrix $A$ over $\ZZ_8$. The matrix $A$ is given by
\begin{equation}
    A=\begin{bmatrix}
    4 & 2 & 0  \\
    6 & 0 & 3  \\
    0 & 7 & 4  \\
\end{bmatrix}= \begin{bmatrix}
    - & v_1 & - \\
    - & v_2 & - \\
    - & v_3 & - 
\end{bmatrix},
\label{eq: example A matrix}
\end{equation}
where $v_1$, $v_2$, and $v_3$ represent the row vectors of $A$. Our goal is to derive the relations between the row vectors $v_1$, $v_2$, and $v_3$.

First, we embed this matrix over $\mathbb{Z}$ such that each entry is chosen to be $0, 1, 2, \cdots, 7$.
We insert a row $[8, 0, 0]$ on the bottom:
\begin{eqs}
    [A'|R]=\left[\begin{array}{c|c}
    \begin{matrix}
    4 & 2 & 0   \\
    6 & 0 & 3   \\
    0 & 7 & 4 \\
    \color{blue} 8 & \color{blue} 0 & \color{blue} 0  \end{matrix}&
    \begin{matrix}
         1& 0& 0& 0\\
         0& 1& 0& 0\\
         0& 0& 1& 0\\
        \color{blue} 0& \color{blue} 0& \color{blue} 0& \color{blue} 1 
    \end{matrix}
\end{array}\right],
\end{eqs}
where the matrix $R$ is used to track row operations during the following process, recording how each current row is derived from the rows in the original matrix $A$ as described in Eq.~\eqref{eq: example A matrix}, including those inserted rows.
The greatest common divisor of the first column is $2$, which can be obtained from $(-1) \times 6 + 1 \times 8$:
\begin{eqs}
    \left[\begin{array}{c|c}
    \begin{matrix}
        4 & 2 & 0  \\
        6 & 0 & 3  \\
        0 & 7 & 4  \\
        \color{blue}2 & \color{blue}0 & \color{blue}-3 
    \end{matrix}&
    \begin{matrix}
        1& 0& 0& 0\\
        0& 1& 0& 0\\
        0& 0& 1& 0\\
        \color{blue} 0& \color{blue} -1& \color{blue} 0& \color{blue} 1
    \end{matrix}
    \end{array}\right].
\end{eqs}
Subsequently, we position the last row at the top and utilize it to eliminate entries in the other rows:
\begin{eqs}
\left[\begin{array}{c|c}
    \begin{matrix}
    2 & 0 & -3  \\
    \color{blue}0 & \color{blue}2 & \color{blue} 6  \\
    \color{blue}0 & \color{blue}0 & \color{blue} 12 \\
    \color{blue}0 & \color{blue}7 & \color{blue} 4 
    \end{matrix}&
    \begin{matrix}
        0& -1& 0& 1\\
        \color{blue} 1& \color{blue} 2& \color{blue} 0& \color{blue} -2 \\
        \color{blue} 0& \color{blue} 4& \color{blue} 0& \color{blue} -3\\
        \color{blue} 0& \color{blue} 0& \color{blue} 1& \color{blue} 0\\
    \end{matrix}
    \end{array}\right]
\end{eqs}
The first row and column have been completed. From now on, the first row will not be involved in subsequent calculations. We will continue the process by initially inserting $[0, 8, 0]$:
\begin{eqs}
\left[\begin{array}{c|c}
    \begin{matrix}
        2 & 0 & -3  \\
        0 & 2 & 6  \\
        0 & 0 & 12  \\
        0 & 7 & 4  \\
        \color{blue}0 & \color{blue}8 & \color{blue}0 
    \end{matrix}& 
    \begin{matrix}
        0& -1& 0& 1& 0\\
        1&  2&  0&  -2& 0 \\
        0&  4&  0&  -3& 0\\
        0&  0&  1&  0& 0\\
        \color{blue}0& \color{blue}0& \color{blue}0& \color{blue}0& \color{blue} 1
    \end{matrix}
    \end{array}\right].
\end{eqs}
The gcd of the second column (except the entry in the first row) is $1$, obtained from $8-7$:
\begin{eqs}
\left[\begin{array}{c|c}
    \begin{matrix}
        2 & 0 & -3  \\
        0 & 2 & 6  \\
        0 & 0 & 12  \\
        0 & 7 & 4  \\
        \color{blue}0 & \color{blue} 1 & \color{blue} -4  
    \end{matrix}& 
    \begin{matrix}
        0& -1& 0& 1& 0\\
        1&  2&  0&  -2& 0 \\
        0&  4&  0&  -3& 0\\
        0&  0&  1&  0& 0\\
        \color{blue}0& \color{blue}0& \color{blue} -1& \color{blue}0& \color{blue}1
    \end{matrix}\end{array}\right].
\end{eqs}
Next, we place the last row in the second position and use it to cancel entries in the rows below:
\begin{eqs}
\left[\begin{array}{c|c}
    \begin{matrix}
        2 & 0 & -3  \\
        0 & 1 & -4  \\
        \color{blue}0 & \color{blue}0 & \color{blue}14  \\
        \color{blue}0 & \color{blue}0 & \color{blue}12 \\
        \color{blue}0 & \color{blue}0 & \color{blue}32  \\
    \end{matrix} & 
    \begin{matrix}
        1& 0& 0& -1& 0\\
        0& 1& 0& 1&  -1\\
        \color{blue} 1 &\color{blue} 2 &\color{blue} 2& \color{blue} -2 &\color{blue} -2\\
        \color{blue} 0& \color{blue} 4& \color{blue}  0& \color{blue} -3 &\color{blue} 0\\
        \color{blue} 0& \color{blue} 0& \color{blue} 8 & \color{blue} 0 &\color{blue} -7 \\        
    \end{matrix}
    \end{array}\right].
\end{eqs} 
Finally, we insert $[0,0,8]$
\begin{eqs}
\left[\begin{array}{c|c}
    \begin{matrix}
        2 & 0 & -3  \\
        0 & 1 & -4  \\
        0 & 0 & 14  \\
        0 & 0 & 12 \\
        0 & 0 & 32  \\
        \color{blue} 0& \color{blue} 0& \color{blue} 8
    \end{matrix} & 
    \begin{matrix}
        1& 0& 0& -1& 0& 0\\
        0& 1& 0& 1&  -1& 0\\
        1 & 2 & 2& -2 & -2& 0\\
        0&  4&   0&  -3 & 0& 0\\
        0&  0&  8 &  0 &-7 & 0\\ 
        \color{blue} 0& \color{blue} 0& \color{blue} 0&  \color{blue} 0& \color{blue} 0& \color{blue} 1
    \end{matrix}
    \end{array}\right],
\end{eqs}
and find the gcd of the third column (except the entries in the first and second rows) is $2$, which can be obtained from $14-12$:
\begin{eqs}
\left[\begin{array}{c|c}
    \begin{matrix}
        2 & 0 & -3  \\
        0 & 1 & -4  \\
        \color{blue}0 & \color{blue}0 & \color{blue}2  \\
        0 & 0 & 12 \\
        0 & 0 & 32  \\
        0& 0& 8
    \end{matrix} & 
    \begin{matrix}
        1& 0& 0& -1& 0& 0\\
        0& 1& 0& 1&  -1& 0\\
        \color{blue}1 & \color{blue}-2 & \color{blue}2& \color{blue}1 & \color{blue}-2& \color{blue}0\\
        0&  4&   0&  -3 & 0& 0\\
        0&  0&  8 &  0 &-7 & 0\\ 
        0& 0& 0&  0&  0& 1
    \end{matrix}
    \end{array}\right].
\end{eqs}
Finally, we use this $2$ to \textcolor{black}{cancel} all entries below:
\begin{eqs}
[A'|R]=\left[\begin{array}{c|c}
    \begin{matrix}
        2 & 0 & -3  \\
        0 & 1 & -4  \\
        0 & 0 & 2  \\
        \color{blue} 0 & \color{blue}0 & \color{blue}0 \\
        \color{blue}0 & \color{blue}0 & \color{blue}0  \\
        \color{blue}0 & \color{blue}0&  \color{blue}0
    \end{matrix} & 
    \begin{matrix}
        1& 0& 0& -1& 0& 0\\
        0& 1& 0& 1&  -1& 0\\
        1 & -2 & 2& 1 & -2& 0\\
        \color{blue}-6&  \color{blue}16&  \color{blue}-12&  \color{blue}-9 & \color{blue}12& \color{blue}0\\
        \color{blue}-16&  \color{blue}32&  \color{blue}-24&  \color{blue}-16 & \color{blue}25 & \color{blue}0\\ 
        \color{blue}-4& \color{blue}8& \color{blue}-8&  \color{blue}-4&  \color{blue}8& \color{blue}1
    \end{matrix}
    \end{array}\right].
    \label{eq: A'|R matrix}
\end{eqs}
We have achieved the row echelon form for the integer matrix $A$.\footnote{\textcolor{black}{To enhance the appearance of the final matrix, we can add the second row to the rows above it, adjusting entries in the second column to range from $0$ to $A'{2,2}-1$. This adjustment can be similarly applied to subsequent rows, ensuring that entries are maintained within $0$ to $A'{i,i}-1$ for the $i$-th column. The algorithm's efficiency could be slightly improved by using $[0, \ldots, d, \ldots, 0]$ to constrain all entries to the range between $0$ and $d$ during the computation. This strategy is beneficial because the Euclidean algorithm, used for computing the greatest common divisor (GCD) between two numbers $a$ and $b$, has a time complexity of $O(\log(\min(a,b)))$.
}} 
We then select the bottom-left $3\times3$ blocks of matrix $R$ to serve as the relation matrix resulting from the modified Gaussian elimination:
\begin{equation}
    \text{relation}:=\left[
    \begin{matrix}
        -6 & 16 & -12 \\
        -16 & 32 & -24 \\
        -4 & 8 & -8 \\
    \end{matrix}\right]
    =
    \left[
    \begin{matrix}
        2 & 0 & 4 \\
        0 & 0 & 0 \\
        4 & 0 & 0 \\
    \end{matrix}\right]
    \mod 8.
\end{equation}
The last three columns in $R$ will be modulo out by $\mathbb{Z}_8$, so they play no roles in the obtained relation.\footnote{During the computation, it is feasible to disregard the information in the last three columns to save computational resources.} The relation matrix traces the relations among $v_1$, $v_2$, and $v_3$ as derived from Eq.~\eqref{eq: example A matrix}:
\begin{equation}
    2v_1 + 4v_3 = [0, 0, 0], \quad 4v_1 = [0, 0, 0] \mod 8.
\end{equation}

}
We will prove the following four theorems about the modified Gaussian elimination.
\begin{theorem}
    \label{thm: reversible}
    The nonzero rows in the final matrix span the same space as the initial matrix.
\end{theorem}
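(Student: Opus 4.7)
The plan is to show that every step of the modified Gaussian elimination preserves the $\ZZ_d$-row span of the working matrix, and then to observe that discarding zero rows at the very end does not change that span.

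First, I would classify the elementary moves that the algorithm performs into two types. Type (i) are integer row operations: adding an integer multiple of one row to another, swapping two rows, and -- in the case where the pivot $\gcd(a_{i,1})_d$ is coprime to $d$ -- multiplying a row by a unit of $\ZZ_d$. Each of these is realized by an invertible $\ZZ_d$-linear map acting on the rows, so it preserves the row module over $\ZZ_d$. Type (ii) are the insertions of a row of the form $[0,\dots,d,\dots,0]$ at the beginning of each column stage. Reduced modulo $d$, such a row is the zero vector, so it contributes nothing to the $\ZZ_d$-row span.

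Second, I would verify that the algorithm is assembled entirely out of these two types of moves. Inside a column stage, the extended Euclidean algorithm used to extract $\gcd(a_{i,1})_d$ is nothing more than repeatedly subtracting an integer multiple of one entry from another, which lifts to a sequence of Type (i) row operations on the augmented matrix. The optional normalization of the pivot to $+1$ when it is a unit is a Type (i) multiplication by an inverse. The zero-divisor branch of the algorithm performs \emph{no} additional operation on the pivot row, and the footnote argues that the candidate row it would otherwise insert is already captured by the earlier $[0,\dots,d,\dots,0]$ insertion; hence no hidden operation enters. Finally, moving to the submatrix in the recursive step only reinterprets which columns we look at next and does not alter any row.

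Combining these observations, a straightforward induction on the list of moves gives the invariant: after every step of the MGE, the $\ZZ_d$-row span of the current matrix coincides with the $\ZZ_d$-row span of the initial matrix $A$. Since zero rows of the final matrix contribute nothing to this span, the nonzero rows already span all of it, which is the desired statement. The only real subtlety I anticipate is the bookkeeping at the recursion step: one must be careful to view the algorithm as acting on the \emph{full} augmented matrix (including all previously inserted $[0,\dots,d,\dots,0]$ rows) so that the invariant is genuinely preserved; this is precisely the role played by appending the $d$-rows at every column stage, and once this viewpoint is fixed the proof reduces to the two-line induction above.
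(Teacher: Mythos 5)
Your proposal is correct and follows the same route as the paper's proof, which simply asserts that each procedure is reversible and no information is lost; you have merely spelled out the details, classifying the moves into invertible $\ZZ_d$-row operations and insertions of rows that vanish modulo $d$. No gap — your version is just a more explicit elaboration of the same one-line argument.
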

\begin{proof}
    Each procedure is reversible, and the matrix does not lose any information.
\end{proof}
{\color{black}Next, we present a theorem that explains how the relations are preserved following the modified Gaussian elimination process:
}
\begin{theorem}
    \color{black} All relations between the original rows can be recovered from the zero rows resulting from the application of the modified Gaussian elimination algorithm.
\end{theorem}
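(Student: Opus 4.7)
The plan is to show that the zero rows arising in the modified Gaussian elimination, read through the tracking matrix $R$, generate the entire module of $\ZZ_d$-relations among the original rows $v_1,\ldots,v_n$. The first step is to reinterpret a mod-$d$ relation $\sum_i c_i v_i \equiv 0 \pmod d$ as an integer relation among the \emph{augmented} set of rows consisting of $v_1,\ldots,v_n$ together with the inserted rows $d\be_1,\ldots,d\be_m$: the congruence is equivalent to $\sum_i c_i v_i = \sum_j k_j (d\be_j)$ over $\ZZ$ for some integers $k_j$, so $(c_1,\ldots,c_n,-k_1,\ldots,-k_m)$ lies in the left integer null space of the augmented matrix $A'$. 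Conversely, any integer left-null vector of $A'$ projects to a mod-$d$ relation among the $v_i$'s. Hence it is enough to recover the left $\ZZ$-null space of $A'$.

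Next, I would observe that every step of the modified Gaussian elimination performs an invertible integer row operation -- row swaps, sign flips, and integer row additions coming from the extended Euclidean algorithm -- so the cumulative transformation satisfies $A^{\text{final}}=R\,A'$ with $\det R = \pm 1$. In particular, the rows of $R$ form a $\ZZ$-basis of $\ZZ^{n+m}$, and the $i$-th row $R_i$ lies in the left null space of $A'$ exactly when the $i$-th row of $A^{\text{final}}$ vanishes. This gives half of the statement automatically: each zero row yields a genuine relation, recovered from the corresponding row of $R$.

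The heart of the argument is completeness, i.e., showing that \emph{every} element of the left null space of $A'$ is an integer combination of the rows of $R$ indexed by the zero rows of $A^{\text{final}}$. For this I would use the echelon structure of $A^{\text{final}}$: the leading entries of the nonzero rows occupy distinct columns and are nonzero (they are the gcds computed column by column), so the nonzero rows of $A^{\text{final}}$ are $\ZZ$-linearly independent. Given any $w$ in the left null space of $A'$, expand $w = \sum_i a_i R_i$ in the $\ZZ$-basis of rows of $R$ and apply $A'$ to obtain $0 = \sum_i a_i A^{\text{final}}_i$; linear independence of the nonzero rows then forces $a_i = 0$ for each index $i$ with $A^{\text{final}}_i \neq 0$, so $w$ lies in the $\ZZ$-span of the ``zero-row'' generators in $R$. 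Reducing modulo $d$ and projecting onto the first $n$ coordinates recovers the original relation among $v_1,\ldots,v_n$.

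The main obstacle is justifying the echelon structure when the column gcd is a zero divisor and hence is not normalized to a unit: the pivot entry is then a nontrivial divisor of $d$ rather than $1$. However, the algorithm still guarantees that this pivot is placed in a strictly new column and that the entries below it in that column are cleared, so the nonzero rows still have leading entries in distinct columns. This is all the linear independence argument uses, and with it the proof is a routine application of the rank-invariance of invertible $\ZZ$ row operations together with the identification of mod-$d$ relations with integer relations in the augmented matrix.
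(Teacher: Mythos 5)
Your proposal is correct and follows essentially the same route as the paper's proof: reversibility of the row operations (the paper's Theorem~1, your $\det R=\pm 1$) shows every relation is expressible through the final matrix, and the echelon structure forces the coefficients on the nonzero final rows to vanish, so the relation is supported on the zero rows; you simply make explicit the identification of mod-$d$ relations with integer relations of the augmented matrix, which the paper leaves implicit. The only point to watch is that the normalization step multiplying a pivot row by a unit of $\ZZ_d$ is not a unimodular operation over $\ZZ$, so $R$ need not literally be a $\ZZ$-basis; since that scaling only touches pivot rows, whose coefficients are forced to zero anyway, your completeness argument survives with a one-line patch.
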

\begin{proof}
    Consider a relation between the original rows $r = \sum_i (c_i a_{i,1}) = 0$. Because each row $a_{a,i}$ is spanned by the rows in the final rows $A'$ (according to Theorem~\ref{thm: reversible}), $r = \sum_i (c_i a_{i,1})$ is spanned by rows in $A'$. We need to show that the zero rows in $A'$ span it. This is trivial since $A'$ is in the row echelon form, and if the linear combination involves any nonzero row, the combined row will also be nonzero.
\end{proof}
{\color{black} Subsequently, to determine whether a row vector $v$ is spanned by the matrix $A$, we employ the modified Gaussian elimination algorithm to reduce $A$ to its row echelon form, denoted as $A'$.} Define $v^{(0)} = v$. Given $v^{(k-1)}$, if $a'_{k,k} | v^{(k-1)}_k$, where $a'_{k,k}$ is the $k$-th diagonal entry of $A'$ and $v^{(k-1)}_k$ is the $k$-th entry of $v^{(k-1)}$, define
\begin{eqs}
    \label{eq:update-rule}
    v^{(k)} = v^{(k-1)} - \frac{v^{(k-1)}_k}{a'_{k,k}} A'[k,-],
\end{eqs}
where $A'[k,-]$ is the $k$-th row of $A'$. Otherwise, we could not define $v^{(k)}$ and stop this process. Note that the first $k$ entries of $v^{(k)}$ are 0.
\begin{theorem}
    A row vector $v$ of length $n$ is spanned by row vectors of the matrix $A$ iff the process defined above achieves $v^{(n)}$.
\label{thm: check span}
\end{theorem}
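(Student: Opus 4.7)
The plan is to prove the two directions separately. The easy direction ($\Leftarrow$) is direct: if the process reaches $v^{(n)}$, then because the first $k$ entries of $v^{(k)}$ vanish by construction, $v^{(n)} = 0$. Unrolling the recursion~\eqref{eq:update-rule} then yields $v = \sum_{k=1}^{n} q_k \, A'[k,-]$ with $q_k = v^{(k-1)}_k / a'_{k,k} \in \ZZ_d$, so $v$ lies in the $\ZZ_d$-row span of $A'$, which equals the row span of $A$ by Theorem~\ref{thm: reversible}.

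For the converse ($\Rightarrow$), I would induct on $k$, maintaining that $v^{(k-1)}$ lies in the row span of $A'$ and has its first $k-1$ entries equal to zero. The base case $k=1$ is the hypothesis. The inductive step only needs the divisibility $a'_{k,k} \mid v^{(k-1)}_k$ in $\ZZ_d$; once that is established, $v^{(k)} = v^{(k-1)} - q_k A'[k,-]$ remains in the row span, and acquires a new zero in slot $k$ while preserving the earlier zeros, because $A'[k,j] = 0$ for $j < k$. Everything thus reduces to the following key lemma, which I expect to be the main obstacle: \emph{any $w$ in the $\ZZ_d$-row span of $A'$ whose first $k-1$ entries are zero satisfies $a'_{k,k} \mid w_k$ in $\ZZ_d$.}

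The proof of the lemma exploits that the modified Gaussian elimination effectively computes a row-echelon form of the augmented integer matrix $[A;\, dI]$, so the $\ZZ$-row span $L$ of $A'$ contains $d\ZZ^n$, each diagonal entry $a'_{k,k}$ is a positive integer, and $A'$ is upper triangular. Given such a $w$, lift it to $\tilde w \in \ZZ^n$ with $\tilde w \equiv w \pmod d$. Since $w$ lies in $L / d\ZZ^n$ and $d\ZZ^n \subset L$, we have $\tilde w \in L$. Writing $\tilde w_j = d s_j$ for $j < k$ and subtracting $\sum_{j<k} s_j (d e_j) \in L$ from $\tilde w$ produces $w' \in L$ whose first $k-1$ integer entries are exactly zero, with $w'_k \equiv w_k \pmod d$. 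Expressing $w' = \sum_i \alpha_i A'[i,-]$ with $\alpha_i \in \ZZ$ and inspecting entries $1, 2, \dots, k-1$ in turn forces $\alpha_1 = \cdots = \alpha_{k-1} = 0$ by upper-triangularity and positivity of the diagonal; then $w'_k = \alpha_k a'_{k,k}$, so $w_k \equiv \alpha_k a'_{k,k} \pmod d$, giving $a'_{k,k} \mid w_k$ in $\ZZ_d$. A small subtlety to handle is that when $a'_{k,k}$ is a zero divisor in $\ZZ_d$, the quotient $q_k$ used in the update rule is not unique, but any valid choice suffices because the induction only uses that $v^{(k)}$ remains in the row span, which holds for any $q_k$.
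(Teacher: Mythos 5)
Your proof is correct, and the $\Rightarrow$ direction takes a genuinely different route from the paper's. The paper fixes a single decomposition $v=\sum_i\alpha_i A'[i,-]$ at the outset and proves by induction that $v^{(k)}=v-\sum_{i\le k}\alpha_i A'[i,-]$, reading off the divisibility $a'_{k,k}\mid v^{(k-1)}_k$ from the identity $v^{(k-1)}_k=\alpha_k a'_{k,k}$. You instead carry only the weaker invariant that $v^{(k-1)}$ lies in the row span with its first $k-1$ entries zero, and isolate the divisibility claim as a standalone lemma proved by lifting to $\ZZ$, using that the $\ZZ$-row span of $A'$ contains $d\ZZ^n$ and that $A'$ is upper triangular with positive integer pivots (each pivot being a gcd of a set containing $d$). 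Your version is longer but buys something real: the paper's induction implicitly identifies the quotient $v^{(k-1)}_k/a'_{k,k}$ chosen by the process with the coefficient $\alpha_k$, which is not forced when $a'_{k,k}$ is a zero divisor in $\ZZ_d$; your invariant is insensitive to which valid quotient the process picks, and you flag this explicitly. The $\Leftarrow$ direction is essentially identical in both arguments. The one point worth stating more carefully is the positivity of the diagonal entries used in your lemma: in the integer echelon form of $[A;\,dI]$ every pivot is a positive divisor of $d$ (even though the paper's convention reduces it mod $d$, where it may read as $0$), and it is this integer pivot that must be used when concluding $\alpha_j=0$ from $\alpha_j a'_{j,j}=0$ over $\ZZ$.
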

\begin{proof}
Since, by Thm.~\ref{thm: reversible}, the row span of $A'$ is the same as the row span of $A$, we must prove that $v$ is spanned by the rows of $A'$ iff the process achieves $v^{(n)}$.
We begin with the $\Rightarrow$ direction. 
We assume that $v$ is spanned by the rows of $A'$, so that $v = v^{(0)} = \sum_{i=1}^n \alpha_i A'[i, -]$ for some coefficients $\alpha_i$. Since $A'$ is upper triangular, it must be that $v_{k} = \sum_{i=1}^k \alpha_i a'_{i,k}$.
Therefore,
\begin{equation}
    \label{eq:a-divides-v}
    \alpha_k a_{k,k}' = v_k - \sum_{i=1}^{k-1}\alpha_i a'_{i,k} .
\end{equation}
We claim that $v^{(k)} = v - \sum_{i=1}^k \alpha_i A'[i,-]$, so that the process indeed achieves $v^{(n)}$. We prove this by induction.
For the base case, we use that $a_{1,1}' \mid v_1$ by Eq.~\eqref{eq:a-divides-v}, so that by Eq.~\eqref{eq:update-rule} $v^{(1)} = v - \frac{v_1}{a_{1,1}'} A'[1,-] = v - \alpha_1 A'[1,-]$.
Now for the inductive step, we assume that $v^{(k-1)} = v - \sum_{i=1}^{k-1} \alpha_i A'[i,-]$. Hence, $v^{(k-1)}_k = \alpha_k a_{k,k}'$ by {\color{black} Eq.}~\eqref{eq:a-divides-v}.
Then by Eq.~\eqref{eq:update-rule} $v^{(k)} = v^{(k-1)} - \alpha_k A'[k,-]$, which proves the inductive step.

We now prove the $\Leftarrow$ direction. We assume that $a_{k,k}' \mid v_k^{(k-1)}$ for every $k$, and we thus define $\alpha_k$ by $v_k^{(k-1)} = \alpha_k a_{k,k}'$. 
Then, reversing Eq.~\eqref{eq:update-rule}, we find that $v^{(k-1)} = v^{(k)} + \alpha_k A'[k,-]$.
Noticing that $v^{(n)}$ must be $(0,\dots,0)$, we find that $v = v^{(0)} = \sum_{i=1}^n \alpha_i A'[i,-]$.
\end{proof}

\begin{theorem}
    The modified Gaussian elimination algorithm on the matrix $A$ over $\ZZ_d$ is equivalent to applying the Gaussian elimination algorithm on the following matrix over a PID $\ZZ$:
    \begin{eqs}
        \tilde A = \begin{bmatrix}
        A_{11} & A_{12} & \cdots & A_{1m} \\
        A_{21} & A_{22} & \cdots & A_{2m} \\
        \vdots & \vdots & \ddots & \vdots \\
        A_{n1} & A_{n2} & \cdots & A_{nm} \\
        d      & 0      & \cdots & 0      \\
        0      & d      & \cdots & 0      \\
        \vdots & \vdots & \ddots & \vdots \\
        0      & 0      & \cdots & d      
        \end{bmatrix}
    \end{eqs}
\label{thm: modified Gaussian as standard Gaussian}
\end{theorem}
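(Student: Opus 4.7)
The plan is to exhibit a specific sequence of row operations for standard Gaussian elimination (GE) on $\tilde{A}$ over $\ZZ$ that mirrors the steps of the modified Gaussian elimination (MGE) on $A$ over $\ZZ_d$ column by column. Since standard GE permits freedom in the ordering of operations, the heart of the argument is choosing that ordering carefully so that the two algorithms synchronize. The proof is then an induction on the column index.

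The central structural observation is as follows. Let $e_j$ denote the appended row with $d$ in position $j$ and zeros elsewhere. Because $e_j$ has zeros in every column $1, \ldots, j-1$, it can remain entirely untouched during the processing of those columns in standard GE on $\tilde{A}$—we simply never include it in a row operation targeting an earlier column, which is always a legal choice since it contributes nothing to the pivot column at that stage. When column $j$ is finally reached, $e_j$ still appears as $[0, \ldots, 0, d, 0, \ldots, 0]$, which is precisely the row that MGE inserts at step $j$.

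I would then proceed by induction on the column index $k$. For the base case, column $1$ of standard GE on $\tilde{A}$ involves the entries $\{a_{1,1}, \ldots, a_{n,1}, d\}$, since the rows $e_2, \ldots, e_m$ contribute zero, and the extended Euclidean sequence that produces their gcd can be carried out with the exact same row operations as the first step of MGE. For the inductive step, assume that processing columns $1, \ldots, k-1$ has produced the same intermediate matrix in both algorithms; the submatrix from row $k$ downward then contains the partially reduced rows originating from $A$ together with the as-yet-untouched rows $e_k, e_{k+1}, \ldots, e_m$. Only $e_k$ has a nonzero entry in column $k$, so the gcd computation, the extended Euclidean row combination, and the elimination of entries below the pivot all coincide with MGE's step $k$.

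The main subtlety I anticipate concerns the interplay between the integer pivots produced by standard GE on $\tilde{A}$ and the $\ZZ_d$-reductions used by MGE. When the pivot is a zero divisor modulo $d$, MGE forgoes rescaling and the pivot equals the integer $\gcd(a_{i,k})_d$ dividing $d$, which is exactly what standard GE returns over $\ZZ$. After applying the projection $\ZZ \twoheadrightarrow \ZZ_d$, the row echelon forms agree, and the tracking of relations in the auxiliary matrix $R$ is mirrored by recording the elementary row operations performed over $\ZZ$. Formalizing this correspondence—and verifying that the zero rows (mod $d$) produced by MGE are exactly the projections of the integer zero rows produced by standard GE on $\tilde{A}$—is the main bookkeeping task, but it follows directly from the step-by-step synchronization established above.
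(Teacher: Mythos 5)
Your proof is correct, but it takes a genuinely different route from the paper's. The paper's proof is a short, abstract argument at the level of the underlying linear-algebra problems: it observes that the system $xA = b \pmod d$ is encoded by $\tilde x \tilde A = b$ over $\ZZ$ with $\tilde x = (x,y)$ for auxiliary variables $y$, so the two matrices have the same solution sets and relations after projecting $\ZZ \twoheadrightarrow \ZZ_d$; the "equivalence" of the algorithms is then inherited from the equivalence of the problems they solve. You instead prove an \emph{operational} equivalence by exhibiting an explicit ordering of row operations for standard GE on $\tilde A$ that synchronizes, column by column, with MGE on $A$ — the key observation being that the appended row $d\,e_j$ has zeros in columns $1,\dots,j-1$ and so can legally be left untouched until column $j$, at which point it is exactly the row MGE inserts. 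Your induction is sound, and your worry about rescaling pivots is actually vacuous: since $d$ always participates in the gcd at each column, the pivot always divides $d$ and is therefore either $1$ or a zero divisor, so MGE never performs a non-unimodular rescaling. What your version buys is a stronger statement (identical intermediate matrices mod $d$, not just identical solution sets), and it directly substantiates the claim in Sec.~\ref{sec: time complexity} that the effective number of rows in MGE stays at $r$ rather than $r+c$, since the appended rows enter the computation only one at a time. What the paper's version buys is brevity and the immediate corollary that relations and spans over $\ZZ_d$ are recovered from the integer computation, which is all that Theorems 2 and 3 require.
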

\begin{proof}
    A linear system $xA = b \pmod d$, for $x$ a $1\times n$ row vector and $b$ a $1\times m$ row vector, is encoded by the linear equation $xA + dy = b$ over $\ZZ$, where $y$ is a $1\times m$ row vector of additional variables. The latter system is exactly $\tilde x \tilde A = b$ over $\ZZ$, where $\tilde x$ is the $1\times (n+m)$ row vector $(x, y)$. It follows that if $x$ is the solution to $xA = b \pmod d$ and $\tilde x$ is the solution to $\tilde x \tilde A = b$, then $\tilde x_i = x_i \pmod d$ for $i=1,\dots, n$.
\end{proof}

\subsubsection{Topological order condition}\label{sec: checking TO condition Zd}
We check the TO condition~\eqref{eq: topological condition} in the region $A$. Similar to Sec.~\ref{sec: checking TO condition Zp}, we prepare a matrix consisting of syndrome pattern of local Pauli operators:
\begin{eqs}
    \wwide{M}_1 :=
    \begin{bmatrix}
        \wwide{\mathrm{TD}_m( \eps(\mX_1))} \\
        \wwide{\mathrm{TD}_m( \eps(\mX_2))} \\
        \wwide{\mathrm{TD}_m( \eps(\mZ_1))} \\
        \wwide{\mathrm{TD}_m( \eps(\mZ_2))} \\
    \end{bmatrix}.
\label{eq: M1 matrix Zd}
\end{eqs}
Next, we perform the modified Gaussian elimination to obtain a new matrix $\mathrm{MGE}(\wwide{M}_1)$, and each zero row in $\mathrm{MGE}(\wwide{M}_1)$ corresponds to a relation:
\begin{eqs}
    \wwide{\eps(p_1\mX_1 + p_2 \mX_2 + p_3 \mZ_1 + p_4 \mZ_2)} = \wwide{[0, 0]},
\end{eqs}
for some polynomials $p_1$, $p_2$, $p_3$, and $p_4$ restricted within $x^{\pm m} y^{\pm m}$. We denote the column vector $\mO := \eps(p_1\mX_1 + p_2 \mX_2 + p_3 \mZ_1 + p_4 \mZ_2)$. However, the rank over a non-free module is not well-defined, so the previous method of computing rank difference in Sec.~\ref{sec: checking TO condition Zp} does not apply. Instead, we use Theorem~\ref{thm: check span} to check whether $\wwide{\mO^\dagger}$ is spanned by $\mathrm{MGE}(\wwide{M}_2)$, where $\wwide{M}_2$ is defined the same as previous Eq.~\eqref{eq: M2 and M2O matrix Zp}:
\begin{eqs}
    \wwide{M}_2 :=
    \begin{bmatrix}
        \wwide{\mathrm{TD}_{m'} (\mS_1^\dagger)} \\
        \wwide{\mathrm{TD}_{m'} (\mS_2^\dagger)} \\
    \end{bmatrix},
\label{eq: M2 and M2O matrix Zd}
\end{eqs}
where $m'$ is slightly larger than $m$ to make sure we have included all possible stabilizers to cover the operator $\mO$.
If all local operators $\mO$ obtained from the zero rows of $\mathrm{MGE}(\wwide{M}_1)$ are spanped by $\mathrm{MGE}(\wwide{M}_2)$, the TO condition is satisfied.

\subsubsection{Solving the anyon equation}\label{sec: Solving Anyon Equations Zd}
Gaussian elimination in Sec.~\ref{sec: Solving Anyon Equations Zp} is replaced by the modified Gaussian elimination in this section. To solve the anyon equation~\eqref{eq: anyon equation 2}, we perform the modified Gaussian elimination on $\wwide{M}_3$:
\begin{eqs}
    \wwide{M}_3 =
    \begin{bmatrix}
        \wwide{\mathrm{TD}_m( \eps(\mX_1))} \\
        \wwide{\mathrm{TD}_m( \eps(\mX_2))} \\
        \wwide{\mathrm{TD}_m( \eps(\mZ_1))} \\
        \wwide{\mathrm{TD}_m( \eps(\mZ_2))} \\
        \wwide{\mathrm{TD}_m( [(1-x^n), 0]} \\
        \wwide{\mathrm{TD}_m( [0, (1-x^n)]} \\
    \end{bmatrix}.
\label{eq: M3 matrix Zd}
\end{eqs}
Each zero row in the matrix $\mathrm{MGE}(\wwide{M}_3)$ gives an anyon $v$ in this region:
\begin{eqs}
    \wwide{\eps(p_1\mX_1 + p_2 \mX_2 + p_3 \mZ_1 + p_4 \mZ_2)} = \wwide{[p_5 (1-x^n), p_6 (1-x^n)]},
\end{eqs}
where the anyon is $v:=[p_5, p_6]$.

\subsubsection{Equivalence relations between anyons}\label{sec: Equivalence relations between anyons Zd}
In this part, we provide an algorithm to categorize anyon types in a Pauli stabilizer model with $\ZZ_d$ qudits. The equivalence relation of two anyons is defined in Eq.~\eqref{eq: anyon equivalence with restricted polynomials}. We can use Themrem~\ref{thm: check span} to check whether $\wwide{v'-v}$ can be spanned by $\mathrm{MGE}(\wwide{M}_1)$, with $\wwide{M}_1$ defined in Eq.~\eqref{eq: M1 matrix Zd}. This determines whether $v$ is equivalent to $v'$.

\subsection{Fusion rules from the Smith normal form}\label{sec: Smith normal form}

In this section, we discuss how to derive fusion rules of anyons. Assume that we have solved the anyon equation and obtain a list of anyon $V=\{v_1, v_2, v_3, \cdots \}$. Many are redundant, and we are looking for the basis anyons.

We consider the following procedure:
\begin{enumerate}
    \item Start with $M^{(0)}:=  MGE(\wwide{M}_1)$ and $V_{\mathrm{gen}}=\{ \}$.
    \item Given $M^{(i-1)}$, we use Theorem~\ref{thm: check span} to check whether $v_i$ is spanned by $M^{(i-1)}$. If $v_i$ is not spanned by $M^{(i-1)}$, we add $v_i$ into $V_{\mathrm{gen}}$ and define $M^{(i)}$ as $M^{(i-1)}$ with an extra row $v_i$ joined below. Otherwise, $M^{(i)}:=M^{(i-1)}$.
    \item Repeat the previous step until running over all anyons in $V$.
\end{enumerate}
From the construction, it is straightforward to prove that $V_{\mathrm{gen}}$ generates all anyons in $V$. However, $V_{\mathrm{gen}}$  can still be redundant. For example, consider the $\ZZ_{12}$ toric code stabilizer, and assume that we have solved anyons
\begin{eqs}
    V=\{v_{m^2}, v_{m^4}, v_{e^2m^4}, v_{e^2,m^3}, v_{e}, v_{m}, v_{e^3}, v_{m^6}\},
\end{eqs}
where $e$ and $m$ are the basis anyons (still unknown). According to the procedure above, the anyon set $V_{\mathrm{gen}}$ is
\begin{eqs}
    V_{\mathrm{gen}} = \{v_{m^2}, v_{e^2 m^4}, v_{m^3}, v_{e}\}.
\end{eqs}
These anyons are still redundant since we have
\begin{eqs}
    3 v_{m^2} + 2v_{m^3} \sim {\bf 0},
\end{eqs}
while both $3 v_{m^2}$ and $2v_{m^3}$ are not the trivial anyon ${\bf 0}$. In this section, for convenience, we use $a+b$ for anyon fusion to replace the standard $a\times b$ notation in the following steps of the Smith normal form technique. To completely remove this redundancy, we slightly modify the procedure above. When we add $v_i$ into $V_{\mathrm{gen}}$, we also compute the least multiple of $v_i$ such that it can be spanned by $M^{(i-1)}$ and record this relation.
In the example above, we obtain
\begin{eqs}
    &6 v_{m^2} \sim {\bf 0}, \\
    &0 v_{m^2} + 6 v_{e^2 m^4} \sim {\bf 0}, \\
    &3 v_{m^2} + 0 v_{e^2 m^4} + 2 v_{m^3} \sim {\bf 0}, \\
    &2 v_{m^2} -1 v_{e^2 m^4} + 0 v_{m^3} + 2v_{e} \sim {\bf 0},
\end{eqs}
which can be represented as the {\bf anyon relation matrix}
\begin{eqs}
    M =
    \begin{blockarray}{*{4}{c} l}
        \begin{block}{*{4}{>{\footnotesize}c<{}} l}
            v_{m^2}& v_{\change e^2 m^4}& v_{m^3}& v_e  \\
        \end{block}
        \begin{block}{[*{4}{c}]>{\footnotesize}l<{}}
            6&0&0&0 \\
            0&6&0&0 \\
        	3&0&2&0 \\
        	2&-1&0&2 \\
        \end{block}
  \end{blockarray}~~.
  \label{eq: anyon relation example}
\end{eqs}
Note that row operations on $M$ do not affect the labels of anyons in the anyon relation matrix, but the column operations on $M$ require the rearrangement of anyon labels on each column. More precisely, adding column $k$ to $j$ in the relation $M$ requires the redefinition of anyons as $v'_j = v_j$ and $v'_k = (- v_j + v_k)$:
\begin{eqs}
    M_{i,j} v_j + M_{i,k} v_k &= (M_{i,j} + M_{i,k}) v'_j + M_{i,k} v'_k ~\forall~i.
\label{eq: anyon rearrangement for M}
\end{eqs}
{\change
Then, we compute the Smith normal form of $M$, obtaining matrices $L$, $R$, and $A$ such that
\begin{equation}
    M=LAR,
\end{equation}
where $A$ is a diagonal integer matrix, and $L$ and $R$ are unimodular matrices (i.e., matrices with determinant $\pm 1$). For the matrix $M$ in Eq.~\eqref{eq: anyon relation example}, the explicit forms of $L$, $A$, and $R$ are:
\begin{eqs}
    L &= \begin{bmatrix}
		6&-66&21&-4 \\
		0&-6&2&-1 \\
		3&-31&10&-2 \\
		2&-19&6&-1
	\end{bmatrix}, \\
    A &= \begin{bmatrix}
		1&0&0&0 \\
		0&1&0&0 \\
		0&0&12&0 \\
		0&0&0&12
	\end{bmatrix}, \\
    R &= \begin{bmatrix}
		1&1&2&10 \\
		0&-3&4&-6 \\
		0&-1&1&-2 \\
		0&-1&0&-1
	\end{bmatrix}.
\end{eqs}
The matrix $R$ determines the new basis of anyons after the rearrangement, as described in Eq.~\eqref{eq: anyon rearrangement for M}. The orders of these basis anyons correspond to the diagonal entries of matrix $A$. For instance, the first row of $R$ indicates that
\begin{equation}
    1v_{m^2} + 1 v_{e^2m^4} + 2 v_{m^3} + 10 v_e = v_{e^{12}m^{12}},
\end{equation}
has order 1. This is consistent with the fact that $e^{12}m^{12}$ represents the trivial anyon in the $\ZZ_{12}$ toric code. In contrast, the third row of $R$ implies that
\begin{equation}
    0v_{m^2}-1v_{e^2m^4}+1v_{m^3}-2v_{e} = v_{e^{-4}m^{-1}},
\end{equation}
is an anyon of order 12. Similarly, the fourth row of $R$ shows that
\begin{equation}
    0v_{m^2}-1v_{e^2m^4}+0v_{m^3}-1v_{e} = v_{e^{-3}m^{-4}},
\end{equation}
is also an anyon of order 12.
We can verify that $e^{-4}m^{-1}$ and $e^{-3}m^{-4}$ generate the complete set of anyons in the $\ZZ_{12}$ toric code, which exhibits a $\ZZ_{12} \times \ZZ_{12}$ fusion rule.
}

\subsection{Topological spins from the T-junction process}\label{sec: topological_spin}

{\color{black}
Given the polynomial representation of stabilizers and anyons, the topological spin of anyons can also be deduced from the polynomial description of anyons \cite{haah_classification_21}. In this context, we reformulate the T-junction process \cite{LW06,alicea2011non, KL20, fidkowski2022gravitational, haah_QCA_23}, as depicted in Fig.~\ref{fig: T junction 3 paths} and described by Eq.~\eqref{eq: statistics formula}, using polynomial formalism.
Given the Pauli operators $P_x^v$ and $P_y^v$, represented in polynomial formalism, which facilitate the movement of anyon $v$ along the $x$- and $y$-directions by $n_x$ and $n_y$ steps respectively, we define extended string operators $u_1, u_2, u_3$ as follows:
\begin{eqs}\label{eq:polynomial_t_junction}
    U_1&=(x^{-q n_x}+ x^{-(q-1) n_x}+\cdots +x^{-n_x} )P_x^v,\\
    U_2&=(1+y^{-n_y}+y^{-2n_y}+\cdots+y^{-q n_y})P_y^v,\\
    U_3&=-(1+x^{n_x}+x^{2 n_x}+\cdots +x^{q n_x})P_x^v,
\end{eqs}
where $U_1$ and $U_3$ are the left- and right-moving string operators along the $x$-direction, respectively, and $U_2$ is the downward-moving string operator along the $y$-direction. Each string has been lengthened by a factor of $q$.
Hence, the topological spin of anyon $v$ can be written as
\begin{eqs}
    \theta(v)&=\lim_{q\rightarrow \infty } \exp \left( \frac{2 \pi i}{d} [U_1,U_2,U_3] \right) \\
    &:= \lim_{q\rightarrow \infty } \exp \left( \frac{2 \pi i}{d} [U_1, U_2]+ [U_2, U_3] + [U_3, U_1] \right),
\end{eqs}
where $[A, B]$ is the $\ZZ_d$ coefficient of $x^0y^0$ in $A^\dagger \Lambda B$. Note that while anyon $v$ is a point-like particle, it possesses a finite size. The limit $q \rightarrow \infty$ is implemented to ensure that the T-junction process yields the correct statistics. However, in practical applications, it is unnecessary to extend this limit to extremely high values of $q$. Instead, we only need to ensure that the lengths $qn_x$ and $q n_y$ are significantly larger than the sizes of the anyons. In this study, we begin our calculations of the braiding statistics at $q=2$ and gradually increase $q$. Once the resulting braiding statistics stabilize and become independent of $q$, we conclude that the selected $q$ value is sufficiently large.
}

\section{Algorithm for extracting topological data}\label{sec: algorithm}

In previous sections, we discussed how to obtain string operators by solving linear equations and how to obtain linearly-independent anyon set by the (modified) Gaussian elimination over fields or non-PID rings. Armed with those techniques, we can develop an algorithm that extracts topological data from a translation invariant Pauli stabilizer model.

In this section, we describe the algorithm that diagnoses and extracts anyonic statistics from the input translation invariant Pauli stabilizer model, which is written as vectors of polynomials over a $R=\ZZ_d[x,y,x^{-1},y^{-1}]$. The output of this algorithm includes the string operators of anyon excitations, their fusion rules, topological spins, and braiding statistics. Note that this algorithm is applicable to nonprime qudit dimensions such as $\mathbb{Z}_8$ or $\mathbb{Z}_9$ qudits.
This algorithm involves two parts:
\begin{enumerate}
    \item Solve string operators for all anyon types.
    \item Obtain fusion rules and topological spins of anyons from given string operators.
\end{enumerate}
It has four adjustable parameters: the truncation degree $k$ for polynomials (in Eq.~\eqref{eq:polynomial_truncation}), range of the translation duplicate map $m$ (in Eq.~\eqref{eq: TD definition}), range of the longest string $N_x, N_y$ (will be discussed in this section), the length of extended string operators $q$ (in Eq.~\eqref{eq:polynomial_t_junction}).

In Sec.~\ref{sec:extract_string}, we provide a detailed description of obtaining anyon string operators by solving linear equations. In Sec.~\ref{sec: algorithm_braiding}, we discuss the second part of this algorithm to extract topological spins and braiding statistics. 

\subsection{Extracting string operators from translation invariant Pauli stabilizers} \label{sec:extract_string}

A prior step of this algorithm is to check whether the translation invariant Pauli stabilizer code has topological order; if so, this algorithm calculates the anyon string operators. The following steps describe the routine of obtaining anyon string operators and their fusion rules.

\begin{itemize}
    \item Step 1: Given input $\mathcal{S}_i, i=1,..., t$, a generating set of stabilizers. Error syndromes $\epsilon(\mX), \epsilon(\mZ)$ can be obtained by Eq.~\eqref{eq: get error syndromes} for a given set of stabilizers $\mathcal{S}_i$.
    \item Step 2: Follow the procedure described in Sec.~\ref{sec: checking TO condition Zd}.
    \item Step 3: {\color{black} Repeat steps 3-1 to 3-3 for $n_x=1,..., N_x$, where $N_x$ are adjustable parameters that determine the longest string we search in the $x$-direction.
    After we sweep from $n_x=1$ to $n_x=N_x$, we will obtain the numbers of anyons for choices of $n_x$. We will choose the smallest value of $n_x$ that reaches the maximum number of anyons.}
    \begin{itemize}
        \item Step 3-1: Use the error syndromes $\epsilon(\mX), \epsilon(\mZ)$ and $(1-x^{n_x})$ to construct a matrix $\wwide{M}_3$ in the form of Eq.~\eqref{eq: M3 matrix}. 
        \item Step 3-2: Depending on prime or nonprime $d$, we perform Gaussian elimination or the modified Gaussian elimination in Sec.~\ref{sec: nonprime-dimensional qudit} on the matrix $\wwide{M}_3$. We obtain the relationship between its rows. Hence, we get the coefficients ($\alpha,\beta,\gamma,\delta$) and anyon $v$ in Eq.~\eqref{eq: anyon equation}.
        \item Step 3-3: We have a set of anyons with redundancy. For example, it includes an anyon spanned by others (up to syndromes of local Pauli operators.). To obtain the most compact form of anyons, we would like to keep the basis anyons and exclude others, following the procedure in Sec.~\ref{sec: Smith normal form}. We first use the matrix $\wwide{M}_1$ formed by the error syndromes. For the matrix $\wwide{M}_1$ over $\ZZ_d$, when $d$ is prime, we sequentially insert anyon row vectors into the matrix one by one and check the rank of the matrix after insertion (or use Theorem~\ref{thm: check span} for nonprime $d$). If the rank of the matrix is unchanged after adding the new anyon (or the anyon is spanned by the matrix), then this newly added anyon is generated by the existing anyons. \textcolor{black}{The new anyon that can be jointly generated by existed anyons and local syndromes will be neglect and we proceed with the rest of anyons such that only inequivalent anyons will remain.} Finally, we compute the Smith normal form of the anyon relation matrix to obtain the basis anyons and their fusion rules.\footnote{Since calculating the Smith normal form for a large matrix is challenging, in practice, we can first repeat the procedure for prime-dimensional qudits to refine the anyons, and further calculate the Smith normal form.}
    \end{itemize}
    \item Step 4: Using the coefficients of the error syndrome polynomials in Eq.~\eqref{eq: anyon equation}, we can obtain the string operators in the $x$-direction. More explicitly, from a solution of the equation
    \begin{eqs}
        &(1-x^{n_x})v \\
        =&~p_1(x,y)\eps(\mX_1) + p_2(x,y)\eps(\mX_2) \\
        &+ p_3(x,y)\eps(\mZ_1) + p_4(x,y)\eps(\mZ_2),
    \end{eqs}
    the string operator $P_x$ is obtained as
    \begin{eqs}
        P_x = \left[
        \def\arraystretch{1.2}
        \begin{array}{c}
            p_1(x,y) \\
            p_2(x,y) \\
            \hline
            p_3(x,y) \\
            p_4(x,y)
        \end{array}\right].
    \end{eqs}
    \item Step 5: From anyon $v$ solved in the $x$-direction, we utilize the matrix $\mathrm{MGE}(\wwide{M}_1)$ (modified Gaussian elimination of $\wwide{M}_1$ in Eq.~\eqref{eq: M1 matrix Zd}) and Theorem~\ref{thm: check span} to solve the anyon equation in the $-y$-direction:
    \begin{eqs}
        &(1-y^{-n_y})v \\
        =&~p_5(x,y)\eps(\mX_1) + p_6(x,y)\eps(\mX_2) \\
        & +p_7(x,y)\eps(\mZ_1) +p_8(x,y)\eps(\mZ_2).
    \end{eqs}
    From this solution, we obtain the string operator $P_y$ in the $-y$-direction
    \begin{eqs}
        P_y = \left[
        \def\arraystretch{1.2}
        \begin{array}{c}
            p_5(x,y) \\
            p_6(x,y) \\
            \hline
            p_7(x,y) \\
            p_8(x,y)
        \end{array}\right].
    \end{eqs}
    {\color{black} The $P_x$ and $P_y$ are shown in Fig.~\ref{fig:t_junction}(a).}
\end{itemize}
Now, we have obtained the string operators of the basis anyons.

\subsection{Extracting topological spins of anyons} \label{sec: algorithm_braiding}

In Sec.~\ref{sec:extract_string}, we have demonstrated how to extract anyon string operators by solving the anyon equation~\eqref{eq: anyon equation}, we can obtain the string operators of all anyons, labeled $P_x^i$ and $P_y^i$, which represent the string operator of the $i$-th anyon in the $x$-direction and $-y$-direction, respectively. To obtain the topological spin between all anyons, we perform the following steps:
\begin{enumerate}
    \item Obtain the topological spin $\theta$ for each basis anyon by using the T-junction process in Sec.~\ref{sec: topological_spin}.
    \item Construct the string operator for anyon $v_i \times v_j$ where $v_i$ and $v_j$ are basis anyons. For example, for anyon $v_1$ and anyon $v_2$, we first construct string operators that move $v_1 \times v_2$ in the $x$- and $-y$-directions of anyon $v_1 \times v_2$ as $P_x^{v_1 \times v_2} = P_x^{v_2} P_x^{v_1}$ and $P_y^{v_1 \times v_2} = P_y^{v_2} P_y^{v_1}$. Then, we substitute $P_x^{v_1\times  v_2}$ and $ P_y^{v_1\times  v_2}$ in Eq.~\eqref{eq:polynomial_t_junction} to obtain long string operators $U_1, U_2, U_3$.
    Because anyon $v_1\times v_2$ may have non-negligible size, we use the extended string operators $U_1, U_2, U_3$ as shown in Fig.~\ref{fig:t_junction} such that the braiding process is unaffected by the size of anyons. We have extended the length of string operators in $x$- and $y$-directions to $q$-times of the original length.
    \item {\color{black} Finally, we can calculate the topological spin of anyon $v_1\times v_2 $:
    \begin{equation*}
        \begin{split}
            &\quad\theta(v_1 \times v_2)= \exp \left( \frac{2 \pi i}{d} [U_1,U_2,U_3] \right) \\
            &= \exp \left( \frac{2 \pi i}{d} [U_1, U_2]+ [U_2, U_3] + [U_3, U_1] \right).
        \end{split}
    \end{equation*}
    Using the fact that the mutual braiding braiding $B(v_1, v_2)=\frac{\theta(v_1 \times v_2)}{\theta(v_1) \theta(v_2)}$ (see Appendix~\ref{appendix: braiding_statistics}), we can obtain the braiding between anyons $v_1$ and $v_2$.
    }
\end{enumerate}

\begin{figure*}[htb]
    \centering
    \includegraphics[width=0.95\textwidth]{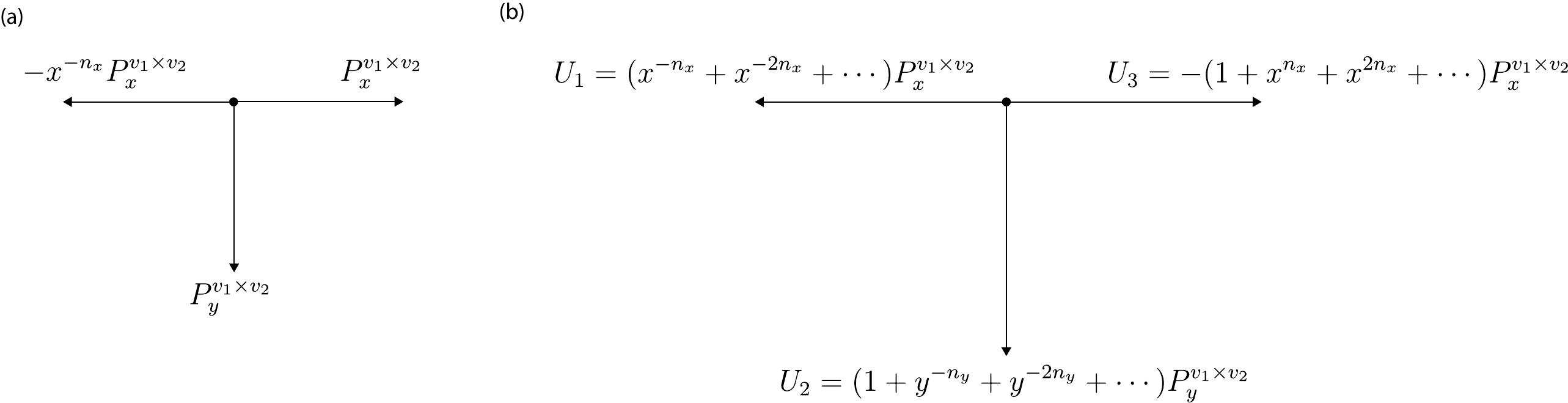}
    \caption{(a) The T-junction process for short string operators. Anyon $v_1\times v_2$ is the composite of two anyons $v_1$ and $v_2$, which might have a finite size larger than the short string, so the process might fail to detect the correct statistics of $v_1\times v_2$. (b) To avoid the size effect of anyon $v_1 \times v_2$, we extend $P_x$ and $P_y$ to the long string operators $U_1 ,U_2 ,U_3$, which is $q$ times longer.}
\label{fig:t_junction}
\end{figure*}

\begin{table}[htb]
    \centering
    \begin{tabular}{|c|c|c|c|c|c|}
    \hline
    &$v_1$& $v_2$& $v_3$& $v_4$ &$\cdots$\\
    \hline
    $v_1$& $\theta(v_1)$& $B(v_1,v_2)$& $B(v_1,v_3)$&$B(v_1,v_4)$&$\cdots$\\
    \hline
    $v_2$& $B(v_2,v_1)$& $\theta(v_2)$& $B(v_2,v_3)$&$B(v_2,v_4)$&$\cdots$\\
    \hline
    $v_3$& $B(v_3,v_1)$& $B(v_3,v_2)$& $\theta(v_3)$& $B(v_3,v_4)$&$\cdots$\\
    \hline
    $v_4$& $B(v_4,v_1)$& $B(v_4,v_2)$& $B(v_4,v_3)$&$\theta(v_4)$&$\cdots$\\
    \hline
    $\vdots$ & $\vdots$ & $\vdots$ & $\vdots$ & $\vdots$ & $\ddots$\\
    \hline
    \end{tabular}
    \caption{Topological spins and braiding statistics of basis anyons $\{v_1, v_2, v_3, v_4, \cdots\}$.}
    \label{tab: Braiding statistics of anyons 1}
\end{table}

\subsection{Rearranging the basis anyons}

This section demonstrates a method that rearranges the basis anyons of a $\ZZ_p$ (prime $p$) Pauli stabilizer code to decoupled $\{e,m\}$ pairs in finite copies of $\ZZ_p$ toric codes. For example, the \textcolor{black}{2d honeycomb} color code is two copies of $\mathbb{Z}_2$ toric codes, and we aim to find the decoupled pairs $\{e_1,m_1\}$ and $\{e_2,m_2\}$ from the basis anyons we found. The procedure is as follows
\begin{enumerate}
\item  Start from the entire anyon set $\mathcal{V}$ and compute all topological spins. {\color{black} We search for a nontrivial boson $b$, defined as a bosonic anyon that is not equivalent to the trivial anyon (the identity anyon) ${\bf 0} := [0, 0, \cdots 0]$. The existence of such a nontrivial boson is proven in Ref.~\cite{haah_QCA_23}.} The order of $b$ must be $p$.

\item We find another anyon $c$ such that 
\begin{eqs}
    \textcolor{black}{B(b,c)=\exp (\frac{2\pi i}{p})}.
\end{eqs}
The existence of such $c$ is proved in Ref.~\cite{Ellison2023paulitopological}.
{\color{black}
Without any loss of generality, we can assume $c$ to be a boson. If $c$ is not be a boson, i.e., $\theta(c) = \exp({2\pi i n_c}/{p})$, we can redefine it as $c' = c \times b^{-n_c}$. This ensures that $c'$ is a boson, with $\theta(c')=\theta(c) B(c,b)^{-n_c}=1$, while maintaining the same braiding properties with $b$.}

\item Start from $\{b, c\}$ and follow the procedure in Sec.~\ref{sec: Smith normal form} to obtain the basis anyons set:
\begin{eqs}
    V_{\mathrm{basis}}=\{ b, c, v_1, v_2, v_3, \cdots\}.
\end{eqs}
{\color{black}
We can further assume that all $v_j$ have no braiding with $b$. If a $v_j$ exhibits braiding with $b$, $B(v_j, b) = \exp({2\pi i n^{(b)}_j}/{p})$, it can be redefined as $v_j \times c^{-n^{(b)}_j}$. Similarly, if $B(v_j, c) = \exp({2\pi i n^{(c)}_j}/{p})$, $v_j$ can be redefined as $v_j \times b^{-n^{(c)}_j}$, ensuring that $v_j$ also has no braiding with $c$. Consequently, the pairs ${b, c}$ are effectively decoupled from the other basis anyons, denoted as $\{e_1, m_1\}$.
}

\item Consider a new set of anyons generated by $\{v_1, v_2, v_3, \cdots \}$.
Repeat the above steps to obtain $\{e_1, m_1\}$ decoupled from the remaining basis anyons.
Continue until all $\{e, m\}$ pairs are found.
\end{enumerate}
After the rearrangement of the basis anyons, the table of topological spins and braiding statistics, defined in Table~\ref{tab: Braiding statistics of anyons 1}, has the specific form:
\begin{equation}
    \begin{tabular}{|c|c|c|c|c|c|c|c|}
    \hline
    &$e_1$& $m_1$& $e_2$& $m_2$  & $e_3$& $m_3$ &$\cdots$\\
    \hline
    $e_1$& $1$ & \color{red}$-1$ & $1$&$1$& $1$&$1$& $\cdots$\\
    \hline
    $m_1$& \color{red}$-1$& $1$& $1$&$1$& $1$&$1$&$\cdots$\\
    \hline
    $e_2$& $1$& $1$& $1$& \color{red}$-1$&$1$&$1$& $\cdots$\\
    \hline
    $m_2$& $1$& $1$& \color{red}$-1$&$1$& $1$&$1$&$\cdots$\\
    \hline
    $e_3$& $1$& $1$& $1$& $1$&$1$&\color{red}$-1$& $\cdots$\\
    \hline
    $m_3$& $1$& $1$& $1$&$1$& \color{red}$-1$&$1$&$\cdots$\\
    \hline
    $\vdots$ & $\vdots$ & $\vdots$ & $\vdots$ & $\vdots$ & $\vdots$ &$\vdots$ &$\ddots$\\
    \hline
    \end{tabular}~.
\end{equation}

\section{Applications on various quantum codes}\label{sec: application_to_codes}

In this section, we apply the algorithm to various quantum codes to test the TO condition and extract their topological data (if there exist). We start from the modified color codes in Sec.~\ref{sec:modified_color_code}, which are self-dual CSS codes on $\mathbb{Z}_2$ qubits. According to Ref.~\cite{haah_classification_21}, they can be decomposed into finite copies of $\mathbb{Z}_2$ toric codes. Our algorithm confirms this and finds the $\{e,m\}$ pairs in the decoupled copies of the $\mathbb{Z}_2$ toric codes.
In Sec.~\ref{sec:css_double_semion}, we investigate anyons of CSS codes on $\mathbb{Z}_4$ qudits {\color{black}induced} from the double semion code \cite{liu2023subsystem}. We confirm the conjecture in Ref.~\cite{liu2023subsystem} that this code is two copies of $\mathbb{Z}_2$ toric code and find the decoupled $\{e,m\}$ pairs.
In Sec.~\ref{sec:six_semion}, we investigate a model called the six-semion code defined on $\mathbb{Z}_4 \times \mathbb{Z}_4$ qudits whose basis anyons are $v_1$ and $v_2$ with topological spin $-i$ that indicates that $v_1$ and $v_2$ are (anti-)semions. The mutual braiding between them gives the $i$ phase. 
The topological orders detected for these examples are summarized in Table.~\ref{tab:example_topological_order}.

\begin{table*}[t]
    \centering
    \begin{tabular}{|c|c|}
    \hline
         model & Topological order  \\
         \hline
        \textcolor{black}{2d honeycomb} color code (example $\circled{1}$)&  Two copies of $\mathbb{Z}_2$ toric codes\\
         \hline
         Example $\circled{2}$ &  Does not satisfy TO condition \\
         \hline
         Modified color code A (example $\circled{3}$) & Four copies of $\mathbb{Z}_2$ toric codes\\
         \hline
         Modified color code B (example $\circled{4}$)& Eight copies of $\mathbb{Z}_2$ toric codes\\
         \hline
         Modified color code C (example $\circled{5}$)& Four copies of $\mathbb{Z}_2$ toric codes\\
         \hline
         Modified color code D (example $\circled{6}$)& Six copies of $\mathbb{Z}_2$ toric codes\\
         \hline
         $\ZZ_4$ CSS code induced from double semion& Two copies of $\mathbb{Z}_2$ toric codes \\
         \hline
         Double semion code with $\ZZ_4$ qudits & Double semion topological order\\
         \hline
         Six-semion code with $\ZZ_4 \times \ZZ_4$ qudits& Six-semion topological order\\
         \hline
    \end{tabular}
    \caption{The topological orders of various examples in Fig.~\ref{fig: modified color codes} and, obtained from the algorithm described in Sec.~\ref{sec: algorithm}.Here, we check the \textcolor{black}{2d honeycomb} color code and double semion code with $\mathbb{Z}_4$ qudits (details in Appendix.~\ref{appendix: string_operators}), which serve as sanity checks. }
    \label{tab:example_topological_order}
\end{table*}

\subsection{Modified color code}\label{sec:modified_color_code}

We construct six examples of self-dual CSS codes on the 2d honeycomb lattice, denoted as modified color codes. Their stabilizer terms are shown in Fig.~\ref{fig: modified color codes}. Although Ref.~\cite{haah_classification_21} has shown that all the 2d translation invariant Pauli stabilizer models for prime-dimensional qudits can be decomposed into finite copies of toric codes, we do not have prior knowledge about the number of copies of toric codes that these modified color codes can be decomposed to. We use these modified color codes as a showcase to demonstrate the effectiveness of our algorithm in determining the topological data. For examples that satisfy the TO condition, the algorithm finds string operators, fusion rules, topological spins, and braiding statistics of anyons.

\begin{widetext}

The stabilizers and syndromes of those six examples are 
\begin{eqs}
    \mathcal{S}_1^{\circled{1}}=\left[\begin{array}{c}
        1+\overline{x}+y\\
        1+\overline{y}+x\\
        \hline
        0\\
        0
    \end{array}\right],\quad \mathcal{S}_2^{\circled{1}}=\left[\begin{array}{c}
         0  \\
         0\\
         \hline
         1+\overline{x}+y\\
         1+\overline{y}+x
    \end{array}\right],
\end{eqs}
\begin{eqs}
    &\epsilon(\mX_1)^{\circled{1}}=[0,1+x+\overline{y}],~ \epsilon(\mX_2)^{\circled{1}}=[0,1+ y  + \overline{x}],\\
     &\epsilon(\mZ_1)^{\circled{1}}=[1+x + \overline{y}, 0],~ \epsilon(\mZ_2)^{\circled{1}}=[1+ y + \overline{x}, 0],
\end{eqs}
\begin{eqs}
    \mathcal{S}_1^{\circled{2}}=\left[\begin{array}{c}
        1+\overline{x}+y+\overline{x} \overline{y}\\
        1+\overline{y}+x+xy\\
        \hline
        0\\
        0
    \end{array}\right], \mathcal{S}_2^{\circled{2}}=\left[\begin{array}{c}
         0  \\
         0\\
         \hline
         1+\overline{x}+y+\overline{x} \overline{y}\\
        1+\overline{y}+x+xy
    \end{array}\right],
\end{eqs}
\begin{eqs}
    &\epsilon(\mX_1)^{\circled{2}}=[0,1+x + \overline{y} + xy],~ \epsilon(\mX_2)^{\circled{2}}=[0,1+y  + \overline{x} + \overline{xy}],\\
    &\epsilon(\mZ_1)^{\circled{2}}=[1+x + \overline{y} + xy, 0],~ \epsilon(\mZ_2)^{\circled{2}}=[1+y  + \overline{x} + \overline{xy}, 0],
\end{eqs}
\begin{eqs}
    \mathcal{S}_1^{\circled{3}}=\left[\begin{array}{c}
        1+\overline{x}+y+\overline{x}\overline{y}+xy\\
        1+\overline{y}+x+\overline{xy}+xy\\
        \hline
        0\\
        0
    \end{array}\right]
    ,\quad \mathcal{S}_2^{\circled{3}}=\left[\begin{array}{c}
         0  \\
         0\\
         \hline
         1+\overline{x}+y+\overline{x}\overline{y}+xy\\
         1+\overline{y}+x+\overline{xy}+xy
    \end{array}\right],
\end{eqs}
\begin{eqs}
    &\epsilon(\mX_1)^{\circled{3}}=[0, 1 +x + \overline{y} + xy + \overline{xy}],
    ~ \epsilon(\mX_2)^{\circled{3}}=[0, 1 +y + \overline{x} + xy + \overline{xy}],\\
    &\epsilon(\mZ_1)^{\circled{3}}=[1+x  + \overline{y} + xy + \overline{xy}, 0],~ \epsilon(\mZ_2)^{\circled{3}}=[1+y + \overline{x} + xy + \overline{xy}, 0],
\end{eqs}
\begin{eqs}
    \mathcal{S}_1^{\circled{4}}=\left[\begin{array}{c}
        1+\overline{x}+y+\overline{xy}+xy+\overline{x}y\\
        1+\overline{y}+x+\overline{xy}+xy+x\overline{y}\\
        \hline
        0\\
        0
    \end{array}\right]
    ,\quad \mathcal{S}_2^{\circled{4}}=\left[\begin{array}{c}
         0  \\
         0\\
         \hline
         1+\overline{x}+y+\overline{xy}+xy+\overline{x}y\\
        1+\overline{y}+x+\overline{xy}+xy+x\overline{y}
    \end{array}\right],
\end{eqs}
\begin{eqs}
    &\epsilon(\mX_1)^{\circled{4}}=[0, 1 +x + \overline{y} + xy+ \overline{xy} + x\overline{y}],
    ~ \epsilon(\mX_2)^{\circled{4}}=[0,1 +y +  \overline{x} + xy + \overline{xy} + \overline{x}y],\\
    &\epsilon(\mZ_1)^{\circled{4}}=[ 1+x + \overline{y} + xy+ \overline{xy} + x\overline{y}, 0],
    ~ \epsilon(\mZ_2)^{\circled{4}}=[1+y  + \overline{x} + xy + \overline{xy} + \overline{x}y, 0],
\end{eqs}
\begin{eqs}
    \mathcal{S}_1^{\circled{5}}=\left[\begin{array}{c}
        1+\overline{x}+y+y^2\\
        1+\overline{y}+x+\overline{y}^2\\
        \hline
        0\\
        0
    \end{array}\right]
    ,\quad \mathcal{S}_2^{\circled{5}}=\left[\begin{array}{c}
         0  \\
         0\\
         \hline
         1+\overline{x}+y+y^2\\
        1+\overline{y}+x+\overline{y}^2\\
    \end{array}\right],
\end{eqs}
\begin{eqs}
    &\epsilon(\mX_1)^{\circled{5}}=[0,1+x + \overline{y} + \overline{y}^2],
    ~ \epsilon(\mX_2)^{\circled{5}}=[0,1+y  + \overline{x} + y^2],\\
    &\epsilon(\mZ_1)^{\circled{5}}=[ 1+x + \overline{y} + \overline{y}^2, 0],
    ~ \epsilon(\mZ_2)^{\circled{5}}=[1+y  + \overline{x} + y^2, 0],
\end{eqs}
\begin{eqs}
    \mathcal{S}_1^{\circled{6}}=\left[\begin{array}{c}
        1+\overline{x}+y+xy^3\\
        1+\overline{y}+x+\overline{x}\overline{y}^3\\
        \hline
        0\\
        0
    \end{array}\right]
    ,\quad \mathcal{S}_2^{\circled{6}}=\left[\begin{array}{c}
         0  \\
         0\\
         \hline
         1+\overline{x}+y+xy^3\\
        1+\overline{y}+x+\overline{x}\overline{y}^3\\
    \end{array}\right],
\end{eqs}
\begin{eqs}
    &\epsilon(\mX_1)^{\circled{6}}=[0,1+x + \overline{y} + \overline{x}\overline{y}^3],
    ~ \epsilon(\mX_2)^{\circled{6}}=[0,1+y  + \overline{x}+ xy^3],\\
    &\epsilon(\mZ_1)^{\circled{6}}=[ 1+x + \overline{y} + \overline{x}\overline{y}^3, 0],
    ~ \epsilon(\mZ_2)^{\circled{6}}=[1+y  + \overline{x}+ xy^3, 0].
\end{eqs}

\end{widetext}

{\color{black} Upon inputting the stabilizer and syndrome polynomials into the algorithm, we found that example \circled{2} does not exhibit topological order, whereas the others do. The operator described in Fig.~\ref{fig:modified_color_code_2_TO_condition} commutes with all stabilizers in example \circled{2}, yet it cannot be generated by those stabilizers:
}
\begin{figure}[htb]
    \centering
    \includegraphics[width=0.35\textwidth]{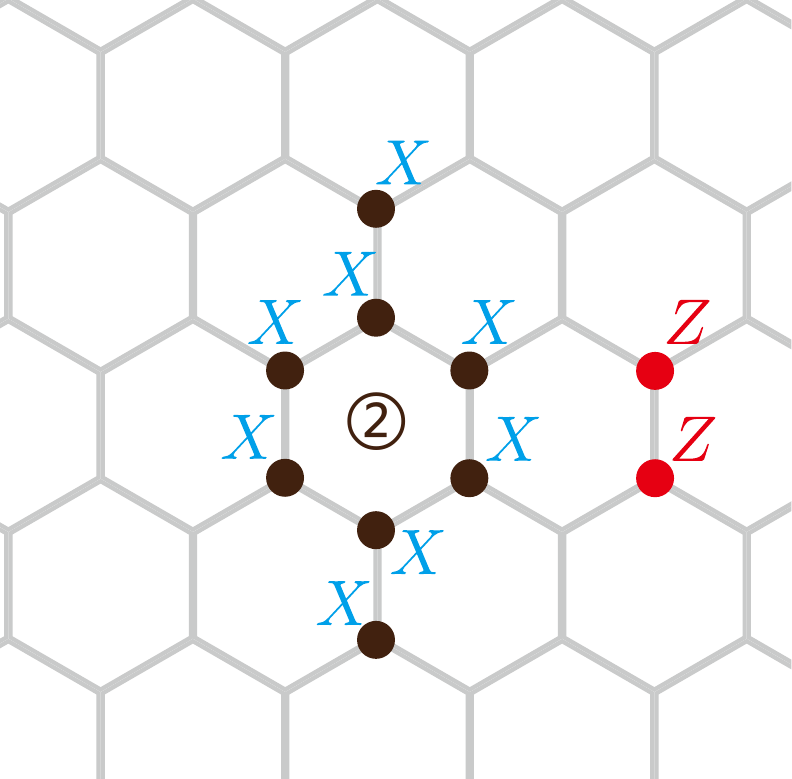}
    \caption{The local $Z$-operator, marked in red, commutes with all the $X$-stabilizers shown on the left. It has been verified that this local $Z$-operator cannot be generated from the stabilizers.}
\label{fig:modified_color_code_2_TO_condition}
\end{figure}

The algorithm provides the string operators of basis anyons for examples that satisfy topological order conditions, with pictorial descriptions shown in Appendix~\ref{appendix: string_operators}. The algorithm then performs the T-junction calculation and obtains the topological spins and braiding statistics for the examples \circled{1}, \circled{3}, \circled{4}, \circled{5}, \circled{6}. \textcolor{black}{We confirm those modified color codes with topological orders can always be written as copies of toric codes. Table~\ref{tab:Model 1} shows the anyon pairs we found match the analytical result of the 2d honeycomb color code. The braiding statistics of examples \circled{3}, \circled{4}, \circled{5}, \circled{6} are shown in Tables.~\ref{tab:Model 3}, \ref{tab:Model 4}, \ref{tab:Model 5}, \ref{tab:Model 6} in Appendix~\ref{appendix: braiding_tables}.}

\textcolor{black}{Here we present the correspondance between decoupled $\{e,m\}$ pairs in finite copies of toric code and anyons of color code examples.}
\begin{itemize}
    \item \textcolor{black}{For 2d honeycomb color code (example \textcircled{1}), we confirm that it can be decomposed into two copies of toric code where we find two pairs of anyons $\{v_1,v_2\},\{v_3,v_4\}$ correspond to two decoupled $\{e,m\} $ pairs of two-copies of toric code. 
    \item For modified color code (example \textcircled{3}), we confirm that it can be decomposed into four copies of toric code where we find four pairs of anyons $\{v_1, v_8\}, \{ v_2, v_4\}, \{v_3, v_6\}, \{v_5,v_7\}$ correspond to four decoupled $\{e,m\} $ pairs of four copies of toric code.
    \item For modified color code (example \textcircled{4}), we confirm that it can be decomposed into eight copies of toric code where we find eight pairs of anyons $\{v_1, v_9\}, \{ v_2, v_{11}\}, \{v_3, v_{13}\}$, $\{v_4,v_{12}\}, \{v_5,v_{14}\}, \{v_6,v_{10}\}, \{v_7,v_{15}\}, \{v_8,v_{16}\}$ correspond to eight decoupled $\{e,m\} $ pairs of eight-copies of toric code.
    \item For modified color code (example \textcircled{5}), we confirm that it can be decomposed into four copies of toric code where we find four pairs of anyons $\{v_1, v_4\}, \{ v_2, v_6\}, \{v_3, v_7\}, \{v_5,v_8\}$ correspond to four decoupled $\{e,m\} $ pairs of four-copies of toric code.
    \item For modified color code (example \textcircled{6}), we confirm that it can be decomposed into six copies of toric code where we find six pairs of anyons $\{v_1, v_4\}, \{ v_2, v_9\}, \{v_3, v_7\}, \{v_5,v_{10}\},  \{v_6,v_{11}\},  \{v_8,v_{12}\}$ correspond to six decoupled $\{e,m\} $ pairs of six-copies of toric code.}
\end{itemize}
This matches the theorem \cite{haah_classification_21} that 2d $\mathbb{Z}_2$ topological Pauli stabilizer codes can always be decomposed to copies of $\mathbb{Z}_2$ toric codes.

\begin{table}[htb]
    \centering
    \begin{tabular}{|c|c|c|c|c|}
    \hline
         &$v_1$& $v_2$& $v_3$& $v_4$\\
         \hline
         $v_1$& 1& \textcolor{red}{-1}& 1&1\\
         \hline
         $v_2$& \textcolor{red}{-1}& 1& 1&1\\
         \hline
         $v_3$& 1& 1& 1& \textcolor{red}{-1}\\
         \hline
         $v_4$& 1& 1& \textcolor{red}{-1}&1\\
         \hline
    \end{tabular}
    \caption{Topological spins and braiding statistics of anyons for the \textcolor{black}{2d honeycomb} color code (example \textcircled{1}). We can see this table is formed by two decoupled copies of toric code, where $\{v_1, v_2\}$ and $\{ v_3, v_4\}$ correspond to $\{e_1, m_1 \}$ and $\{e_2, m_2\}$ of two copies of toric code.}
    \label{tab:Model 1}
\end{table}

\subsection{CSS codes induced from double semion}\label{sec:css_double_semion}

In Sec.~\ref{sec:modified_color_code}, we have shown that our algorithm gives us the expected result for 2d $\mathbb{Z}_2$ topological codes whose local dimension is a prime integer. In this subsection, we apply our algorithm to a more challenging case, which is a topological Pauli stabilizer code defined on nonprime-dimensional qudits, and show its capability of finding the anyons for a topological Pauli stabilizer code with nonprime qudits.

There is a recent paper \cite{liu2023subsystem} discussing the mapping between \textcolor{black}{qudit} Pauli stabilizer codes to CSS codes. The authors map the double-semion Pauli stabilizer code that is a non-CSS code on $\ZZ_4$ qudits to a CSS code by doubling the number of qudits. This code was conjectured to be equivalent to two copies of $\mathbb{Z}_2$ toric codes. \textcolor{black}{Our algorithm confirms the conjecture.}

The stabilizers of this CSS code are
\begin{equation}\label{eq:CSS_double_semion_stb}
    \begin{gathered}
    \mathcal{S}_1 = \vcenter{\hbox{\includegraphics[scale=.25]{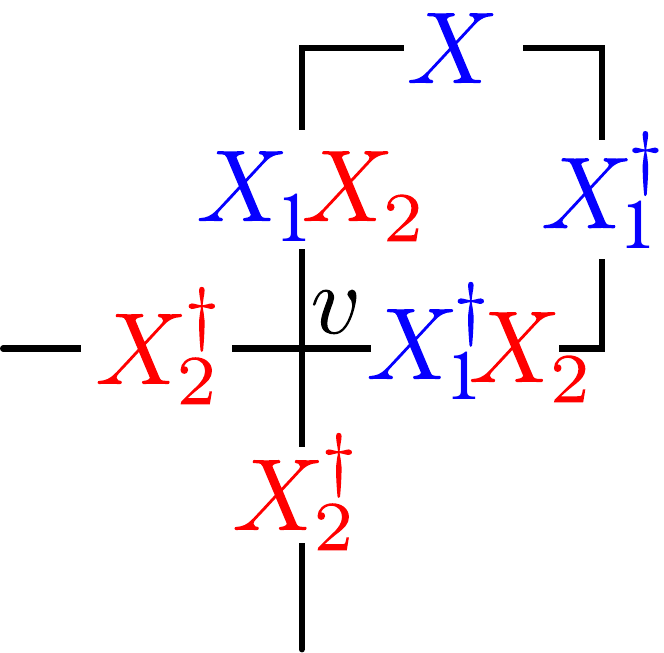}}}, \quad
    \mathcal{S}_2 = \vcenter{\hbox{\includegraphics[scale=.25]{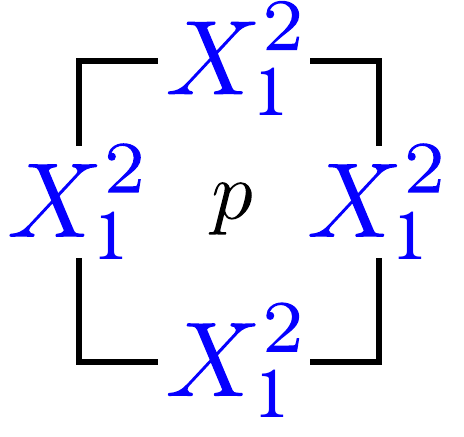}}}, \\
    \mathcal{S}_3 = \vcenter{\hbox{\includegraphics[scale=.25]{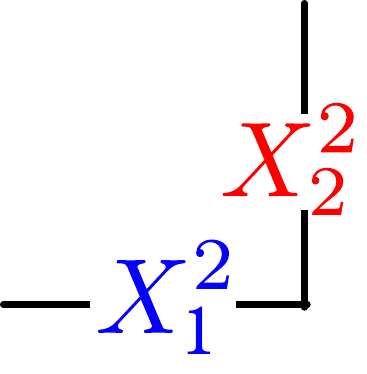}}}, \quad
    \mathcal{S}_4 = \vcenter{\hbox{\includegraphics[scale=.25]{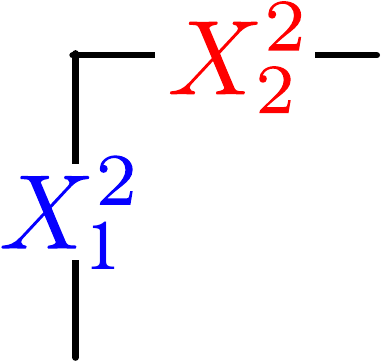}}}.\\
    \mathcal{S}_5 = \vcenter{\hbox{\includegraphics[scale=.25]{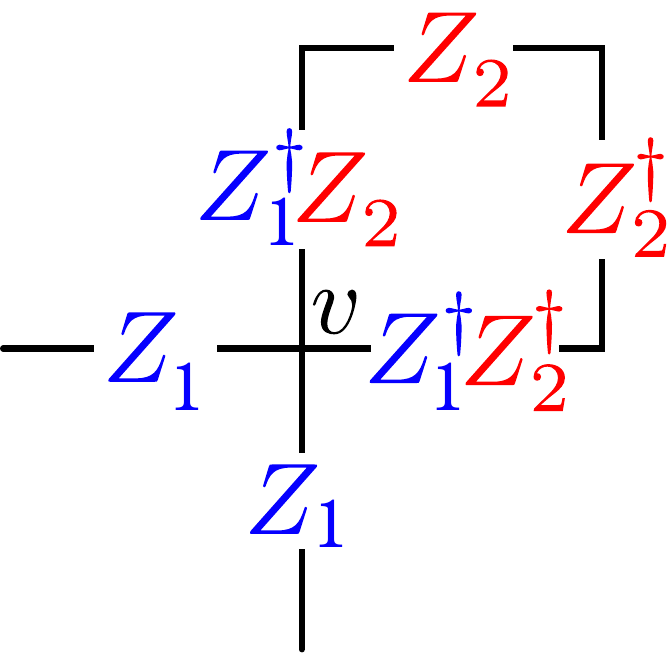}}}, \quad
    \mathcal{S}_6 = \vcenter{\hbox{\includegraphics[scale=.25]{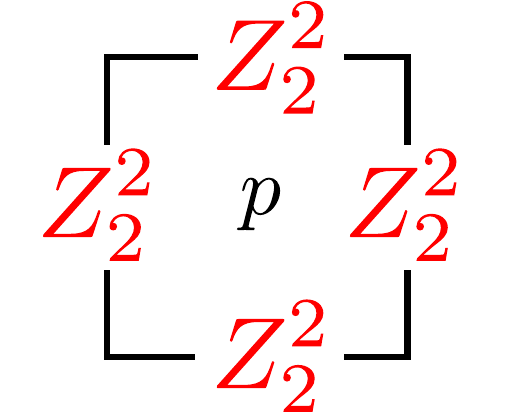}}}, \\
    \mathcal{S}_7 = \vcenter{\hbox{\includegraphics[scale=.25]{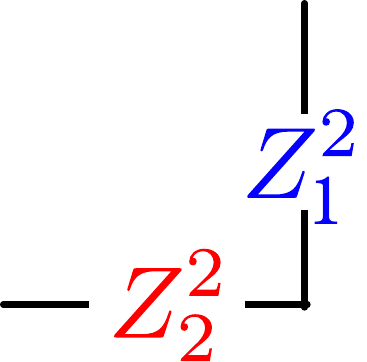}}}, \quad
    \mathcal{S}_8 = \vcenter{\hbox{\includegraphics[scale=.25]{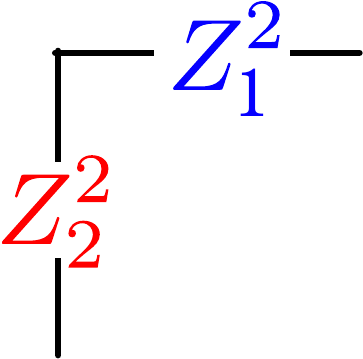}}}.
    \end{gathered}
\end{equation}
In Eq.~\eqref{eq:CSS_double_semion_stb}, we assign two $\ZZ_4$ qudits on each edge whose Pauli matrices are colored blue and red, respectively.

We utilize our algorithm to investigate the anyons in the CSS code and find anyon string operators, fusion rules, topological spins, and braiding statistics. The topological spins and braiding statistics of the basis anyons are shown in Table.~\ref{tab:CSS codes induced from double semion}, which shows there are two decoupled $\{e,m\}$ pairs: $\{v_1, v_3\}$ and $\{v_2, v_4\}$. Moreover, $2v_1 \sim 2 v_2 \sim 2 v_3 \sim 2 v_4 \sim \bf 0$ indicates that all basis anyons have order 2. Hence, this model is two copies of $\mathbb{Z}_2$ toric codes.

\begin{table}[htb]
    \centering
    \begin{tabular}{|c|c|c|c|c|}
    \hline
         &$v_1$& $v_2$& $v_3$& $v_4$\\
         \hline
         $v_1$& 1& 1& \textcolor{red}{-1}&1\\
         \hline
         $v_2$& 1& 1& 1&\textcolor{red}{-1}\\
         \hline
         $v_3$& \textcolor{red}{-1}& 1& 1& 1\\
         \hline
         $v_4$& 1& \textcolor{red}{-1}& 1&1\\
         \hline
    \end{tabular}
    \caption{Topological spins and braiding statistics of anyons for the CSS codes induced from double semion code. We can see this table is formed by two decoupled copies of toric code, where $\{v_1, v_3\}$ and $\{ v_2, v_4\}$ correspond to $\{e_1, m_1 \}$ and $\{e_2, m_2\}$ in the two copies of toric codes.}
    \label{tab:CSS codes induced from double semion}
\end{table}

\subsection{Six-semion stabilizer code}\label{sec:six_semion}

The six-semion stabilizer code has stabilizers
\begin{eqs}\label{eq:six_semion_stb}
    &\mathcal{S}_1=\begin{gathered}
\includegraphics[scale=0.25]{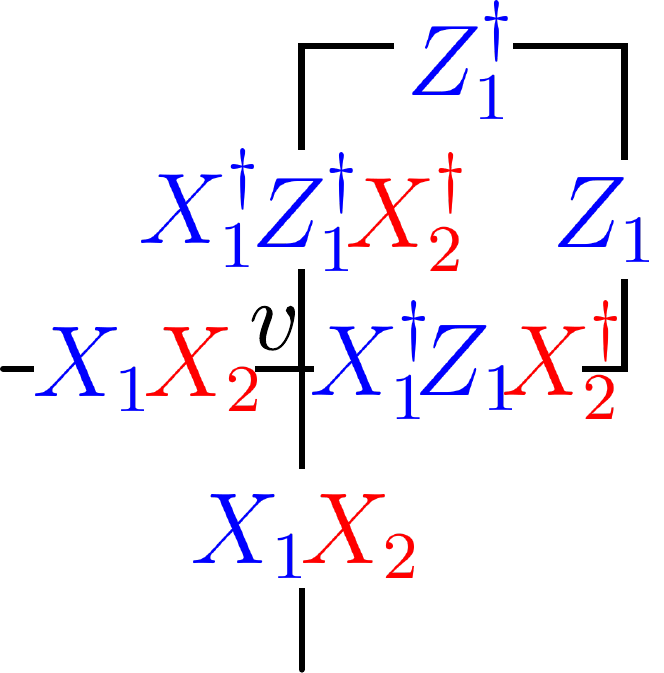}
    \end{gathered},~ \mathcal{S}_2=\begin{gathered}
\includegraphics[scale=0.25]{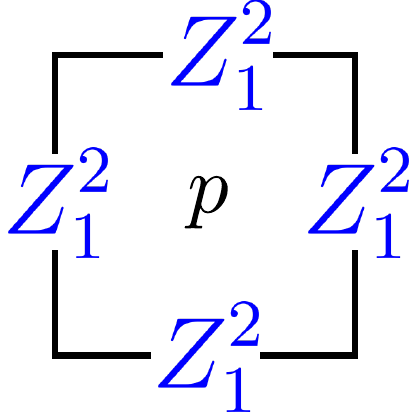}
    \end{gathered},\\
    &\mathcal{S}_3=\begin{gathered}
\includegraphics[scale=0.25]{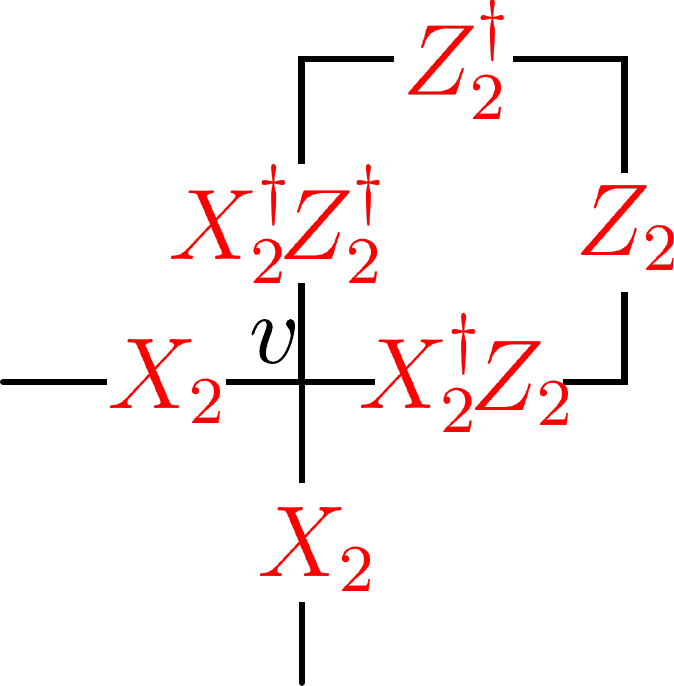}
    \end{gathered},~ \mathcal{S}_4=\begin{gathered}
\includegraphics[scale=0.25]{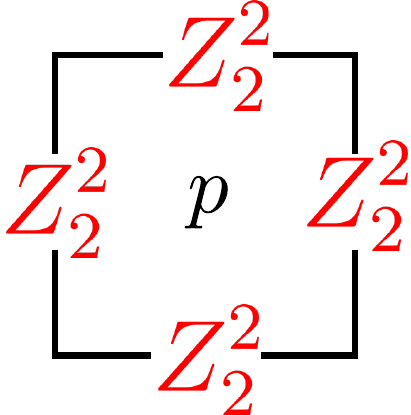}
    \end{gathered},\\
    &\mathcal{S}_5=\begin{gathered}
\includegraphics[scale=0.25]{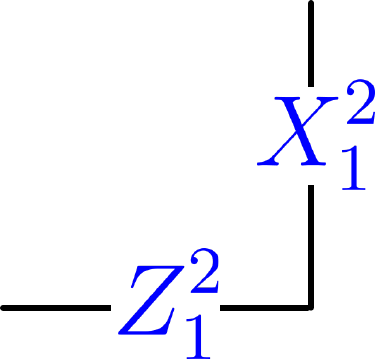}
    \end{gathered},~ \mathcal{S}_6=\begin{gathered}
\includegraphics[scale=0.25]{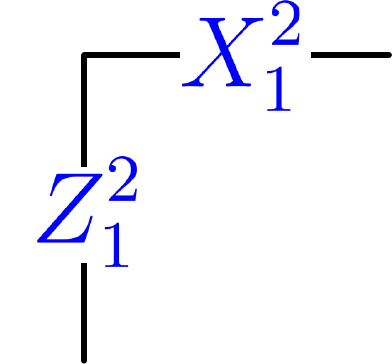}
    \end{gathered},\\
    &\mathcal{S}_7=\begin{gathered}
\includegraphics[scale=0.25]{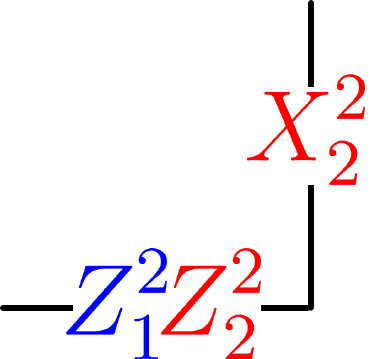}
    \end{gathered},~ \mathcal{S}_8=\begin{gathered}
\includegraphics[scale=0.25]{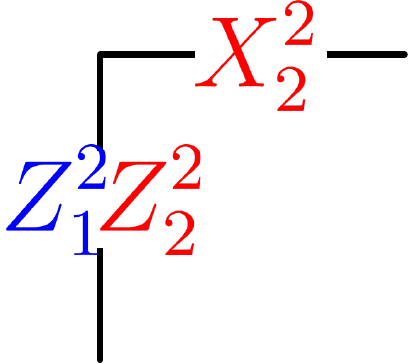}
    \end{gathered}.
\end{eqs}
This model can be regarded as condensing $e_1^2 m_1^2$ and $e_1^2 e_2^2 m_2^2$ anyons in two copies of $\mathbb{Z}_4$ toric codes \cite{ellison2022pauli}. The $\mathcal{S}_5, \mathcal{S}_6, \mathcal{S}_7, \mathcal{S}_8$ are the condensing terms. The first four terms of Eq.~\eqref{eq:six_semion_stb} come from the stabilizer group of two copies of $\mathbb{Z}_4$ toric code. The anyon theory of this model includes four bosons, six semions, and six anti-semions.
\begin{table}[h]
    \centering
    \begin{tabular}{|c|c|c|}
    \hline
         &$v_1$& $v_2$\\
         \hline
         $v_1$& $-i$& $i$\\
         \hline
         $v_2$& $i$& $-i$\\\hline
    \end{tabular}
    \caption{\color{black} Topological spins and braiding statistics of basis anyons in the six-semion code. The mutual braiding between $v_1$ and $v_2$ gives the $i$ phase, and the topological spins of $v_1$ and $v_2$ are $-i$, indicating that $v_1$ and $v_2$ are anti-semions.}
    \label{tab:six-semion}
\end{table}

We use the algorithm discussed in Sec.~\ref{sec: algorithm} to analyze this model and find the basis anyons $v_1$ and $v_2$. The topological spins and braiding statistics of them are shown in Table.~\ref{tab:six-semion}. The fusion rules are $4 v_1 \sim 4 v_2 \sim \bf 0$, indicating that both have order $4$.


\section{Time complexity of the algorithm}\label{sec: time complexity}

{\color{black}
In this section, we discuss the time complexity of our algorithm for extracting topological orders from Pauli stabilizer codes with $\mathbb{Z}_d$ qudits. The running time of the algorithm depends on several parameters, including the number of qudits per unit cell $w$, the number of stabilizer generators $t$, the geometric range of stabilizers $r$, the truncation range $k$ (as defined in Eq.~\eqref{eq:polynomial_truncation}), and the translational duplicate range $m$ and $m'$ (as described in Eqs.~\eqref{eq: TD definition},~\eqref{eq: M1 matrix Zd}, and~\eqref{eq: M2 and M2O matrix Zd}). For simplicity, we might use $m$ to refer to both $m$ and $m'$, as they are approximately equal. The pseudocode for our algorithm is provided in Appendix~\ref{appendix: pseudocode}. The algorithm consists of the following steps:
\begin{enumerate}
    \item Check the topological order condition of given stabilizer codes. The procedures outlined in Secs.~\ref{sec: checking TO condition Zp} and~\ref{sec: checking TO condition Zd} employ the (modified) Gaussian elimination algorithm twice to compute the matrices $\wwide{M}_1$ in Eq.~\eqref{eq: M1 matrix} (and Eq.~\eqref{eq: M1 matrix Zd} for non-prime qudits) and $\wwide{M}_2$ in Eq.\eqref{eq: M2 and M2O matrix Zp} (and Eq.~\eqref{eq: M2 and M2O matrix Zd} for non-prime qudits) where $\wwide{M}_1$ is a $[2w(2m+1)^2]\times [t(2k+1)^2]$ matrix and $\wwide{M}_2$ is a $[t(2m'+1)^2]\times [(2k+1)^2]$ matrix. The size of $\wwide{M}_2$ is smaller than $\wwide{M}_1$, so the time complexity of this step is dominated by $\wwide{M}_1$.
    \item Solve the anyon equations, as illustrated in Secs.~\ref{sec: Solving Anyon Equations Zp} and~\ref{sec: Solving Anyon Equations Zd}. This step involves applying the modified Gaussian elimination to matrix $\wwide{M}_3$ in Eq.~\eqref{eq: M3 matrix} (and Eq.~\eqref{eq: M3 matrix Zd} for non-prime qudits) for $n=1,..., N$ such that all anyons are found.
    This implies that the modified Gaussian elimination must be performed $N$ times on the $[(2w+t)(2m+1)^2] \times [t(2k+1)^2]$ matrix $\wwide{M}_3$. This step is the most time-consuming, dominating the running time in comparison to the previous steps where modified Gaussian elimination was applied to $\wwide{M}_1$ and $\wwide{M}_2$.
    \item Identify the equivalence classes of anyons. The details are discussed in Secs.~\ref{sec: Equivalence relations between anyons Zp} and~\ref{sec: Equivalence relations between anyons Zd}. For prime-dimensional qudits, equivalence can be readily verified by computing the rank, akin to the Gaussian elimination algorithm. For nonprime-dimensional qudits, anyons are examined individually. Whenever an anyon is encountered that cannot be generated by the existing set $V_{\mathrm{gen}}$, it is added to $V_{\mathrm{gen}}$, and the matrix $M^{(i)}$ is updated using the modified Gaussian elimination algorithm. Consequently, the total number of applications of the modified Gaussian elimination algorithm is limited to the total number of inequivalent anyons in the topological order, which is independent of the parameters in our algorithm.
    \item Compute the fusion rules of anyons. The process is described in Sec.~\ref{sec: Smith normal form}, which computes the Smith normal form of the anyon relation matrix.
\end{enumerate}
For an $r \times c$ matrix, the Gaussian elimination (GE) algorithm operates in $O(r \cdot c \cdot \min(r, c))$ time. According to Theorem~\ref{thm: modified Gaussian as standard Gaussian}, the modified Gaussian Elimination (MGE) algorithm applied to an $r \times c$ matrix is equivalent to the Gaussian Elimination (GE) algorithm applied to an \((r+c) \times c\) matrix over \(\mathbb{Z}\). Therefore, one might anticipate its running time to be \(O(\log d \cdot (r+c) \cdot c^2)\), where \(\log d\) factor arises from executing the Euclidean algorithm in the principal ideal domain \(\mathbb{Z}\).
However, it is important to note that in the implementation of the MGE algorithm on matrix $A$ in Eq.~\eqref{eq: example A matrix}, rows are inserted one by one. Consequently, the number of rows actively involved in the computation remains $r$, not $r+c$. Therefore, the effective time complexity for the MGE algorithm remains $O(r \cdot c \cdot \min(r, c))$. Additionally, if we account for tracking the relations as depicted in Eq.~\eqref{eq: A'|R matrix}, the actual running time scales with $O(r \cdot (r+c) \cdot \min(r, c))$ due to the necessity of appending an extra $r$ columns to the right side of $A'$. \footnote{For the GE algorithm, appending extra columns does not increase the computation time since the effective number of columns participating in the calculations at each step remains $c$, not $r+c$.} Consequently, the computational complexity for GE and MGE, including tracking of relations, is characterized as follows:
\begin{eqs}
    \mathrm{GE:}&
    \begin{cases}
        O(r^2 c), ~\text{if } r\leq c,\\
        O(r c^2), ~\text{if } r> c,
    \end{cases}\\
    \mathrm{MGE:}&
    \begin{cases}
        O\big( \log d \cdot (r^3 + r^2 c) \big), ~\text{if } r\leq c,\\
        O\big( \log d \cdot (r^2 c + r c^2)\big ), ~\text{if } r> c.
    \end{cases}\\
\label{eq: time complexity of GE and MGE}
\end{eqs}
We have empirically validated the time complexity of executing the GE and MGE algorithms for \( r \times c \) matrices. Our method involved generating random, nearly dense \( r \times c \) matrices—defined as having densities close to 1, where density refers to the ratio of non-zero entries to the total number of entries. We subsequently recorded the time to perform the GE and MGE on these matrices. For this experiment, \( r \) and \( c \) were chosen sufficiently large to ensure that other factors contributing to the time complexity were negligible.

The results, plotted for matrix sizes \( r = 50, 100, 150, \dots, 2000 \) with \( c = 2000 \), are displayed in Fig.~\ref{fig:change_nonsparse_row}. From these plots, we observe that the running time of the MGE follows a cubic function when \( r < c \), and transitions to a quadratic function when \( r > c \).
These findings are consistent with our theoretical analysis and confirm the predicted scaling behavior.

\begin{figure}[htb]
    \centering
    \includegraphics[width=0.45\textwidth]{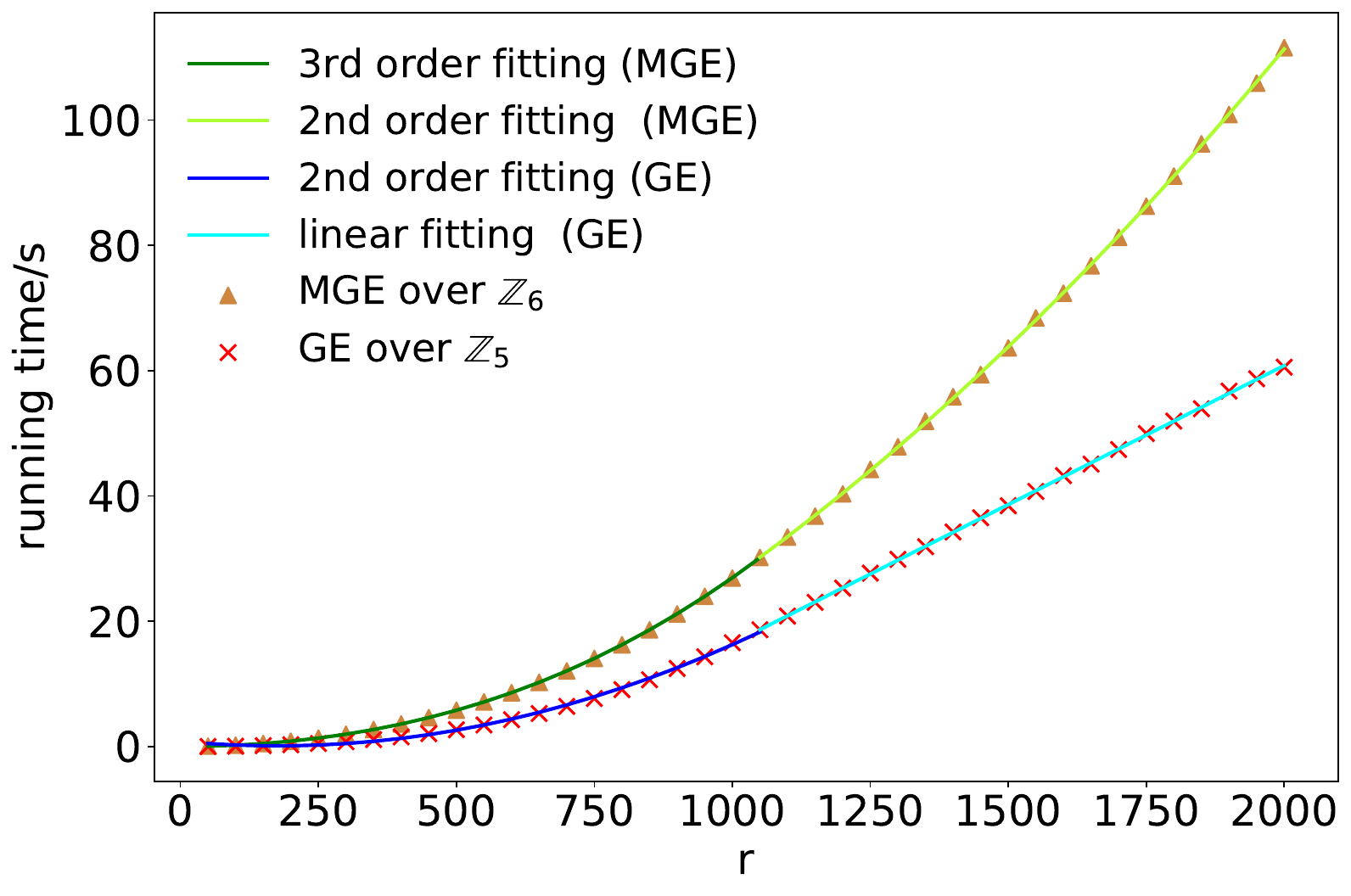}
    \caption{Average running time of Gaussian elimination and modified Gaussian elimination for randomly generated $r \times c$ matrix for $r=50, 100,150,...,2000; c=2000$ on a personal computer. Each data point is averaged over 100 samples of random matrices.}
\label{fig:change_nonsparse_row}
\end{figure}

Typically, computing the Smith normal form is considerably more time-consuming than executing the GE algorithm on matrices of the same size \cite{storjohann1996near}. However, in the described procedures, the Smith normal form is computed for the anyon relation matrix, which is constrained by the number of inequivalent anyons. Consequently, the computational time does not escalate with the truncation range $k$ and the translational duplicate ranges $m, m'$. Consequently, executing the GE or MGE algorithm accounts for the majority of the computational effort. 
Accordingly, our analysis focuses primarily on estimating the time complexity of the GE or MGE, which contributes the most significant terms to the overall time complexity of our algorithm. From our previous analysis, the time complexity for non-prime qudits is predominantly governed by the execution of MGE $N$ times on $\wwide{M}_3$:
\begin{eqs}
    O\Big(N m^2(w+t)^2 (m^2+k^2) \cdot \log d \cdot \min \big((w+t)m^2, tk^2\big)\Big)
\label{eq: overall complexity}
\end{eqs}
As illustrated in Refs.~\cite{haah_module_13, watanabe2023ground}, while the existence of string operators capable of moving anyons is guaranteed, these operators can be long compared to the interaction range. The length of the string operators is only bounded by $l_\mathrm{string}:=d \times 2^r$, where $d$ is the qudit dimension and $r$ is the geometric range of stabilizers.
The parameters \(N\), \(k\), and \(m\) all scale with the length \(l_\mathrm{string}\). Consequently, the worst-case running time of Eq.~\eqref{eq: overall complexity} is bounded by
\begin{equation}
    O(\log d \cdot (w+t)^2 \cdot l_\mathrm{string}^7).
\end{equation}
However, when solving for the string operators in the \(x\)-string, we may select a rectangular truncation region that extends \(l_\mathrm{string}\) in the \(x\)-direction and remains constant in the \(y\)-direction. As a result, the terms \(m^2\) and \(k^2\) are replaced by \(m_x m_y\) and \(k_x k_y\), respectively, which only scale linearly with \(l_\mathrm{string}\). Thus, the worst-case running time is instead bounded by
\begin{equation}
    O(\log d \cdot (w+t)^2 \cdot l_\mathrm{string}^4).
\end{equation}
Fortunately, the worst-case performance scenarios are limited to exotic examples constructed in Refs.~\cite{haah_module_13, watanabe2023ground} and are not typical of the examples usually encountered. 
In practical applications, it is sufficient to select \(N \leq 5\) and \(m, k \leq 10\) to effectively identify all anyons in the examples studied in this paper.
}

\section{Discussion}\label{sec: discussion}

\textcolor{black}{
This work has introduced an algorithm to detect topological order and extract anyon string operators from translation invariant Pauli stabilizer codes. Additionally, the algorithm enables the derivation of braiding statistics and fusion rules from these string operators. Applicable to both prime and non-prime dimensional qudit stabilizer codes, this algorithm has been tested across various translation invariant Pauli stabilizer codes, demonstrating its efficacy. As a result, it serves as an efficient tool for characterizing the topological orders and anyon theories associated with translation invariant Pauli stabilizer codes.}

\textcolor{black}{
Characterizing and constructing various topological orders in qudit systems represents a challenging and enduring problem. Recently, Ref.~\cite{ellison2022pauli} illustrated the possibilities for constructing diverse topological orders using 2d translation invariant Pauli stabilizer codes with non-prime dimensional qudits. However, the analytical and numerical characterization of topological codes with non-prime dimensional qudits was previously lacking. This work presents an algorithm offering a numerical approach to characterizing such topological codes. An intriguing future research direction is exploring Pauli stabilizer codes that exhibit exotic topological orders. For instance, various quantum codes could be constructed using the Quantum Lego \cite{cao2022quantum,cao2023quantum,su2023discovery} formalism; applying our algorithm to characterize the topological properties of these Lego codes will be interesting. Recently, in Ref.~\cite{cao2023weight} it is shown the connections between the underlying tensor network structure of quantum Lego codes and quantum weight enumerator can be used to efficiently predict the properties of the codes, including code distance. Another interesting question is investigating the implications of our polynomial method on the tensor network side.}

\textcolor{black}{
In addition to Pauli stabilizer codes, extensions of the stabilizer formalism, such as the XS \cite{ni2015non} and XP formalisms \cite{webster2022xp, shen2023quantum}, have been developed. These extensions modify the conventional stabilizer approach by integrating roots of Pauli $Z$ into the stabilizer. Since XP stabilizer codes also possess a symplectic representation, extending the polynomial formalism and our algorithm to accommodate these codes is possible. This adaptation is a critical first step toward exploring non-Pauli or even non-Clifford stabilizer codes, which could potentially exhibit interesting non-Abelian anyon statistics that are essential for universal topological quantum computation.} \textcolor{black}{Beyond constructing and characterizing topological order from polynomial formulations, the topological data can be rigorously described by a $G$-crossed braided fusion category \cite{barkeshli2019symmetry}. This framework is crucial for the classification of topological orders, as elucidated in Refs.~\cite{aasen2021torsorial, Classification2022Barkeshli}. Our approach effectively verifies these classifications within Abelian theories using alternative methods.}

\textcolor{black}{
In this work, we demonstrate the applicability of our algorithm across various topological orders by examining extensive cases. Although topological orders are typically discussed concerning closed manifolds, exploring topological orders with gapped boundaries or defects, and their relationship to anyon condensations \cite{kitaev2012models,lan2015gapped} is also crucial. A future extension of this algorithm could involve generating all possible gapped defects and boundaries for any given Pauli stabilizer codes. Furthermore, we contemplate a 3d generalization of our algorithm. Identifying mobile particle excitations in three dimensions is straightforward, similar to our approach in two dimensions. However, challenges arise with the "fracton" phase, where excitations exhibit restricted mobility. We should replace previous string operators with ``fractal operators'' to detect these excitations. Additionally, a new protocol is necessary to detect loop excitations. One potential method involves a dimensional reduction process by compactifying one dimension in 3d, transforming the system into a quasi-2d framework. This adaptation would allow our 2d algorithm to detect both particle and compactified loop excitations effectively.
}

Another potential generalization involves subsystem codes~\cite{Poulin2005subsystem, bombin_Stabilizer_14, Ellison2023paulitopological, liu2023subsystem} and Floquet codes~\cite{hastings2021dynamically, Aasen2022Adiabatic, davydova2023floquet, Ellison2023floquet, Dua2024Floquet}. Subsystem codes offer a relaxation of the requirement that each term in the Hamiltonian must commute; instead, non-commuting terms act as gauge operators. The commutants of all gauge operators then form the stabilizer group.
Furthermore, Floquet codes exploit the temporal sequence of measurements, introducing more structural complexity than subsystem codes. Each measurement cycle induces an instantaneous stabilizer code, such as the toric code in the original example~\cite{hastings2021dynamically}. Thus, applying our algorithm in these scenarios to explore a broader range of subsystem and Floquet codes would yield valuable insights.

Furthermore, an additional extension of our polynomial formalism involves generalizing the $\mathbb{Z}^2$-translational symmetry to more intricate group structures, including non-Abelian groups. Currently, we focus on  Pauli stabilizer codes on two-dimensional lattices, which exhibit a $\mathbb{Z}^2$ symmetry generated by translations in the $x$- and $y$-directions, represented by generators $x$ and $y$ in the Laurent polynomial ring.
We aim to broaden this to incorporate more exotic graphs with any translation group $G$ (potentially non-Abelian), where the translation generators are denoted as $g_1$, $g_2$, $g_3$, etc. Within this framework, we can still employ a "polynomial" ring over these generators. This extension would facilitate the representation of Pauli stabilizer codes over the Cayley graph of $G$, a technique widely used in the construction of quantum low-density parity-check (qLDPC) codes~\cite{Couvreur2011Cayley, breuckmann2021balanced, breuckmann2021quantum, Panteleev2022goodqldpc, Dinur2023Good} recently. We plan to adapt our algorithm to these Cayley graphs, anticipating that our numerical approach will provide deeper insights into these advanced qLDPC codes.

\section*{Acknowledgment}

We thank Nathanan Tantivasadakarn and Victor V. Albert for their helpful discussions and for providing the CSS code example. YC appreciates useful suggestions from Jeongwan Haah at the early stage of this project.
YC thanks Tyler Ellison, Nathanan Tantivasadakarn, and Bowen Yang for teaching the polynomial method, for useful feedback on this manuscript, and for clarifying anyons in stabilizer/subsystem codes. YC thanks Hsin-Po Wang for pointing out the truncation of the Laurent polynomials.
YC thanks Yun-Ting Cheng for technical support. Y.X.\@ thanks the organizers and participants of the YIPQS long-term workshop YITP-T-23-01
``Quantum Information, Quantum Matter and Quantum Gravity (2023),'' held at Yukawa Institute for Theoretical Physics in Kyoto University, where part of this work was completed.
Y.X.\@ was partially supported by ARO Grant No. W911NF-15-1-0397, National Science Foundation QLCI Grant No. OMA2120757, AFOSR-MURI Grant No. FA9550-19-1-0399,
and Department of Energy QSA program. JTI thanks the Joint Quantum Institute at the University of Maryland for support through a JQI fellowship.
Y.X.\@ thanks Yujie Zhang and Shan-Ming Ruan for their hospitality in Tokyo and Kyoto. Y.X. thanks Zhiwei Zhang, Honggeng
Zhang, Xuan Su, Qiongsi Yan, and Bo He for their mental support.

\appendix

\begin{figure*}[t]
    \centering
    \includegraphics[width=0.9\textwidth]{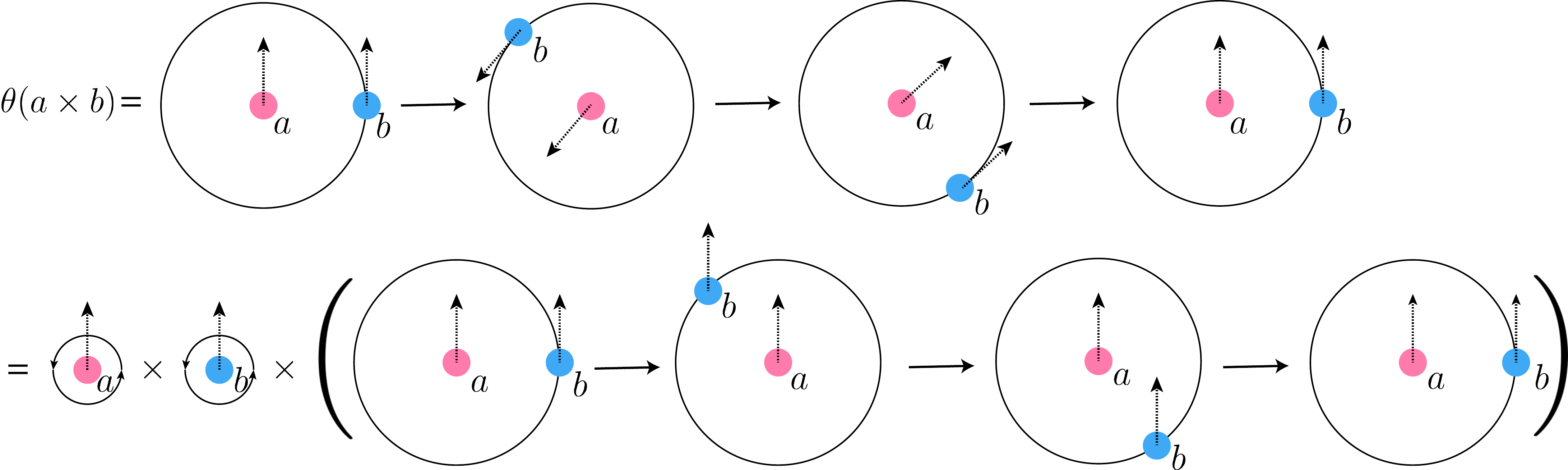}
    \caption{Relation between $\theta(a \times b)$, $\theta(a)$,$\theta(b)$ and the mutual braiding $B(a,b)$. This equality indicates $\theta(a\times b)=\theta(a) \times \theta(b) \times B(a,b)$.}
    \label{fig:relation_topo_spin}
\end{figure*}

\section{Relation between braiding statistics and topological spins}\label{appendix: braiding_statistics}

In this appendix, we review how the topological spins in Abelian anyon theories completely determine the braiding statistics. Specifically, from the topological spin given in Eq.~\eqref{eq: statistics formula}, we derive Eq.~\eqref{eq:braiding statistics}. We again consider the T-junction setup as shown in Fig.~\ref{fig: T junction 3 paths}. Using the T-junction, braiding anyon $a$ counterclockwise around anyon $b$ is seen to be given by
\begin{eqs}
    \label{eq: B theta}
    B_\theta (a, b) =
    &~~(W^b_{3})^\dagger W^b_{2}
    (W^a_{1})^\dagger W^a_{3}
    (W^b_{2})^\dagger W^b_{1}\\
    &\times (W^a_{3})^\dagger W^a_{2}
    (W^b_{1})^\dagger W^b_{3}
    (W^a_{2})^\dagger W^a_{1}.
\end{eqs}
Suppose that $a$ starts at $\bar\gamma_1(0)$ and $b$ at $\bar\gamma_3(0)$. Then this corresponds to moving $a$ and $b$ in the following order:
\begin{enumerate}
    \item $a$: $\bar\gamma_1(0) \to p \to \bar\gamma_2(0)$,
    \item $b$: $\bar\gamma_3(0) \to p \to \bar\gamma_1(0)$,
    \item $a$: $\bar\gamma_2(0) \to p \to \bar\gamma_3(0)$,
    \item $b$: $\bar\gamma_1(0) \to p \to \bar\gamma_2(0)$,
    \item $a$: $\bar\gamma_3(0) \to p \to \bar\gamma_1(0)$,
    \item $b$: $\bar\gamma_2(0) \to p \to \bar\gamma_3(0)$,
\end{enumerate}
thus performing the braid as desired.
Using from Eq.~\eqref{eq: statistics formula} that
\begin{eqs}
    \label{eq: theta a}
    \theta(a) 
    &= (W^a_3)^\dag W^a_2 (W^a_1)^\dag W^a_{3}(W^a_{2})^\dagger W^a_{1},
\end{eqs}
we will show that
\begin{eqs}
    B_\theta (a, b) = \frac{\theta(a\times b)}{\theta(a) \theta(b)}.
\end{eqs}
Ultimately, this follows from the definition $W_i^{a\times b}\coloneqq W_i^b W_i^a$; that is, moving the composite anyon $a\times b$ is equivalent to moving each anyon separately along the same path. 

Since each $W_i^{a}$ is a Pauli, we have that 
\begin{eqs}
    W^a_i W^b_j &= z_{i,j}^{a,b} W^b_j W^a_i \\
    (W^a_i)^\dag (W^b_j)^\dag &= z_{i,j}^{a,b} (W^b_j)^\dag (W^a_i)^\dag \\
    W^a_i (W^b_j)^\dag &= (z_{i,j}^{a,b})^{-1} (W^b_j)^\dag W^a_i ,
\end{eqs}
for some complex numbers $z_{i,j}^{a,b}$.
Starting from Eq.~\eqref{eq: theta a}, we have that
\begin{align}
    \theta&(a\times b)^{-1} \nonumber \\
    &= (W^{a\times b}_1)^\dag W^{a\times b}_2 (W^{a\times b}_3)^\dag W^{a\times b}_{1}(W^{a\times b}_{2})^\dagger W^{a\times b}_{3}  \\
    \begin{split}
        &= (W^a_1)^\dag(W^{b}_1)^\dag W^b_2 W^{a}_2  (W^{a}_3)^\dag (W^b_3)^\dag \\
        &\quad\times W^b_1 W^{a}_{1} (W^a_2)^\dag (W^b_2)^\dag  W^b_3 W^{a}_{3}
    \end{split}\\
    \begin{split}
        &= (z^{a,b}_{2,3})^{-1} z^{a,b}_{3,3} \\
        &\qquad \times (W^a_1)^\dag(W^{b}_1)^\dag W^b_2 (W^b_3)^\dag W^{a}_2  (W^{a}_3)^\dag \\
        &\quad\times  z^{a,b}_{2,2} (z^{a,b}_{1,2})^{-1} \\
        &\qquad \times 
        W^b_1 (W^b_2)^\dag W^{a}_{1} (W^a_2)^\dag W^b_3 W^{a}_{3} 
    \end{split}\\
    \begin{split}
        &= (z^{a,b}_{2,3})^{-1}z^{a,b}_{3,3} z^{a,b}_{1,1} (z^{a,b}_{1,2})^{-1}z^{a,b}_{1,3} \\
        &\qquad \times (W^{b}_1)^\dag W^b_2 (W^b_3)^\dag (W^a_1)^\dag W^{a}_2  (W^{a}_3)^\dag \\
        &\quad\times  z^{a,b}_{2,2} (z^{a,b}_{1,2})^{-1}(z^{a,b}_{2,3})^{-1} z^{a,b}_{1,3} \\
        &\qquad \times 
        W^b_1 (W^b_2)^\dag W^b_3 W^{a}_{1} (W^a_2)^\dag  W^{a}_{3} 
    \end{split}\\
    \begin{split}
        &= (z^{a,b}_{2,3})^{-1} z^{a,b}_{3,3} z^{a,b}_{1,1} (z^{a,b}_{1,2})^{-1}z^{a,b}_{1,3} \\
        &\qquad \times (W^{b}_1)^\dag W^b_2 (W^b_3)^\dag (W^a_1)^\dag W^{a}_2  (W^{a}_3)^\dag \\
        &\quad\times  z^{a,b}_{2,2} (z^{a,b}_{1,2})^{-1}(z^{a,b}_{2,3})^{-1} z^{a,b}_{1,3} \\
        &\qquad \times 
        W^b_1 (W^b_2)^\dag W^b_3 W^{a}_{1} (W^a_2)^\dag  W^{a}_{3} 
    \end{split}\\
    \begin{split}
        &= (z^{a,b}_{2,3})^{-1} z^{a,b}_{3,3} z^{a,b}_{1,1} (z^{a,b}_{1,2})^{-1}z^{a,b}_{1,3} \\
        &\quad\times  z^{a,b}_{2,2} (z^{a,b}_{1,2})^{-1}(z^{a,b}_{2,3})^{-1} z^{a,b}_{1,3} \\
        &\quad\times (z^{a,b}_{3,1})^{-1} z^{a,b}_{2,1}(z^{a,b}_{1,1})^{-1} \\
        &\quad\times z_{3,2}^{a,b} (z^{a,b}_{2,2})^{-1} z^{a,b}_{1,2} \\
        &\quad\times (z^{a,b}_{3,3})^{-1} z^{a,b}_{2,3} (z^{a,b}_{1,3})^{-1} \\
        &\quad \times (W^{b}_1)^\dag W^b_2 (W^b_3)^\dag W^b_1 (W^b_2)^\dag  W^b_3 \\
        &\quad \times (W^a_1)^\dag W^{a}_2  (W^{a}_3)^\dag W^{a}_{1} (W^a_2)^\dag  W^{a}_{3} 
    \end{split}\\
    \begin{split}
        &= z_{3,2}^{a,b} (z^{a,b}_{1,2})^{-1} z^{a,b}_{2,1} (z^{a,b}_{3,1})^{-1} (z^{a,b}_{2,3})^{-1} z^{a,b}_{1,3}   \\
        &\quad \times \theta(b)^{-1} \theta(a)^{-1} .
    \end{split}
\end{align}
Thus,
\begin{equation}
    \label{eq: theta quotient}
    \frac{\theta(a\times b)}{\theta(a) \theta(b)} = (z^{a,b}_{3,2})^{-1} z^{a,b}_{1,2} (z^{a,b}_{2,1})^{-1} z^{a,b}_{3,1} z^{a,b}_{2,3} (z^{a,b}_{1,3})^{-1}.
\end{equation}
Meanwhile, from Eq.~\eqref{eq: B theta},
\begin{align}
    \begin{split}
        B_\theta(a,b)
        &= (W^b_{3})^\dagger W^b_{2}
        (W^a_{1})^\dagger W^a_{3}
        (W^b_{2})^\dagger W^b_{1}\\
        &\quad\times (W^a_{3})^\dagger W^a_{2}
        (W^b_{1})^\dagger W^b_{3}
        (W^a_{2})^\dagger W^a_{1}
    \end{split} \\
    \begin{split}
        &= (z^{a,b}_{3,2})^{-1} z^{a,b}_{1,2} \\
        &\quad \times(W^b_{3})^\dagger
        (W^a_{1})^\dagger W^a_{3} W^b_{1}\\
        &\quad\times (W^a_{3})^\dagger W^a_{2}
        (W^b_{1})^\dagger W^b_{3}
        (W^a_{2})^\dagger W^a_{1}
    \end{split} \\
    \begin{split}
        &= (z^{a,b}_{3,2})^{-1} z^{a,b}_{1,2} (z^{a,b}_{2,1})^{-1} z^{a,b}_{3,1} \\
        &\quad \times(W^b_{3})^\dagger
        (W^a_{1})^\dagger  W^a_{2}
         W^b_{3}
        (W^a_{2})^\dagger W^a_{1}
    \end{split} \\
    &= (z^{a,b}_{3,2})^{-1} z^{a,b}_{1,2} (z^{a,b}_{2,1})^{-1} z^{a,b}_{3,1} z^{a,b}_{2,3} (z^{a,b}_{1,3})^{-1} .
    \label{eq: B in terms of z}
\end{align}
Comparing to Eq.~\eqref{eq: theta quotient}, we see that
\begin{equation}
    B_\theta(a,b) = \frac{\theta(a\times b)}{\theta(a) \theta(b)}.
\end{equation}
Another way to interpret this equation is to treat the topological spin $\theta (a)$ as a $2\pi$ rotation of an anyon $a$, according to the spin-statistics theorem. As depicted in Fig.~\ref{fig:relation_topo_spin}, the $2\pi$ rotation of anyon $a \times b$ can be decomposed into three pieces: the $2\pi$ rotation of anyon $a$, the $2\pi$ rotation of anyon $b$, and the mutual braiding between anyons $a$ and $b$. However, we emphasize that this is a physical picture instead of a rigorous proof derived by Eqs.~\eqref{eq: theta quotient} and \eqref{eq: B in terms of z}.

\section{The Hermite normal form and the Smith normal form}\label{sec: HSF and SNF}

Throughout this work, we have been broadly interested in solving linear systems of equations over a ring $R$. Specifically, suppose we have an $r \times c$ matrix $A$ with entries $A_{ij}\in R$ and a vector $b \in R^r$ represented by a column $b_i \in R$. To \textit{solve} the system defined by $A$ and $b$, we must find a vector $x \in R^c$ such that $Ax = b$, where $=$ denotes equality within $R$. We will denote the linear system by $(A,b)_R$. 

When $R$ is a field $\mathbb K$, such as $\mathbb R$ or a finite field $\mathbb F_{p^n}$, $(A,b)_{\mathbb K}$ can be solved efficiently using Gaussian elimination algorithm. By performing row operations, we effectively transform the system $Ax=b$ into $CAx = C b$, where $C \in \operatorname{GL}(r,\mathbb K)$ with $\operatorname{GL}$ denoting the general linear group. One can efficiently find $C$ such that $CA$ is in reduced echelon form, making $x$ easy to find.

When $R$ is not a field but instead only a principal ideal domain (PID), an analogous procedure finds $C \in \operatorname{GL}(r,R)$ such that $CA$ is the \textit{Hermite normal form} (HNF) of $A$ \cite{martin2012large}.
The HNF of $A$ can be efficiently found.
The most important property of the HNF is that it is upper triangular.
When $R = \mathbb Z$, $C$ is called a \textit{unimodular} matrix. Using the HNF, linear systems $(A,b)_{\mathbb Z}$ over $\mathbb Z$ can be efficiently solved. The ring of modular integers $\mathbb Z_n = \mathbb Z / n \mathbb Z$ is not a PID, but linear systems can still be solved using the HNF by turning the system over $\mathbb Z_n$ into a system over $\mathbb Z$. Specifically, we transform the system $Ax = b \pmod n$ into $A x - ny = b$, where $y \in \mathbb Z^r$ is an additional set of variables. We define the new $(r+c)\times 1$ variable vector $\tilde x = x \oplus y$ and the new $r \times (r+c)$ matrix $\tilde A = \begin{pmatrix}
    A & -n \mathbf 1_{r \times r}
\end{pmatrix}$. The value of $x$ in the solution $\tilde x$ to the system $(\tilde A, b)_{\mathbb Z}$ over $\mathbb Z$ is the solution to the system $(A,b)_{\mathbb Z_n}$ over $\mathbb Z_n$.

In this work, we are most interested in solving linear systems $(A,b)_R$ over the polynomial ring $R = \mathbb Z_n[x,y]$. Using the same procedure above, we can reduce this to solving linear systems $(A,b)_R$ over the polynomial ring $R = \mathbb Z[x,y]$. 
By truncating the polynomials coming from $\mathbb Z[x,y]$ to some maximum degree $d$, we obtain a new (larger) linear system over simply $\mathbb Z$.

One natural question is: can we solve the linear system over $\mathbb Z[x,y]$ \textit{directly}, without truncating the degree of the polynomials?
Unfortunately, $\mathbb Z[x,y]$ is not a PID, and thus the HNF cannot be used. We cannot, in general, get around degree truncation.
Specifically, we can effectively solve linear systems over $\mathbb Z[x,y]$ if and only if we can construct an algorithm that outputs an upper bound on the polynomial degrees that occur during the solving algorithm. 
The $\Rightarrow$ direction is obvious, and the $\Leftarrow$ direction follows from our discussion above, where we reduced a general linear system over $\mathbb Z[x,y]$ to a linear system over $\mathbb Z$ once we picked a degree at which to truncate.
In Eq.~\eqref{eq: anyon equation}, for example, there is no general way of bounding the degree since $v$ can be arbitrary. Hence, the best we can do is pick a degree to truncate.

Finally, we recall the \textit{Smith normal form} (SNF). The HNF of a matrix over a PID is analogous to the reduced echelon form of a matrix over a field.
In a similar way, the SNF of a matrix over a PID is analogous to the diagonalization of a matrix over a field. Specifically, one can efficiently use row and column operations amounting to matrices $C \in \operatorname{GL}(r, R)$ and $D \in \operatorname{GL}(c, R)$ such that $C A D$ is diagonal. $CAD$ is called the SNF of $A$, and it further satisfies that its nonzero diagonal elements $d_1,\dots,d_\ell$ satisfy $d_1 \mid d_2 \mid \dots \mid d_\ell$. $d_1,\dots,d_\ell$ are unique and are called the \textit{elementary divisors} of $A$.

\section{String operators of various examples}\label{appendix: string_operators}

This appendix presents the pictorial description of anyon string operators we solved for the \textcolor{black}{2d honeycomb} color code, the modified color codes, the double semion code, the CSS code from the doubled semion code, and the six-semion code.

As we discussed in Sec.~\ref{sec:extract_string}, we need to solve Eq.~\eqref{eq: anyon equation} for different choices of $n_x,n_y$ and choose the smallest $n_x,n_y$ that maximize the number of anyon solutions. Here, we report the number of independent anyon solutions of the different models with different choices of $n_x, n_y$ in Table.~\ref{tab:number_anyon}.

\begin{table*}[t]
    \centering
    \begin{tabular}{|c|c|c | c| c| c| c| c| c|}
    \hline 
    number of anyons & $n=1$& $n=2$& $n=3$&$n=4$& $n=5$ & $n=6$ & $n=7 $& $n=8$
          \\
          \hline
         \textcolor{black}{2d honeycomb} color code (example \circled{1})~ $x$-direction& 0& 0& 4& 0& 0& 4& 0& 0 \\
         \hline
         \textcolor{black}{2d honeycomb} color code (example \circled{1})~ $y$-direction& 0& 0& 4& 0& 0& 4& 0& 0 \\
         \hline
         Modified color code A (example \circled{3})~ $x$-direction& 0& 0& 0& 0& 8& 0& 0& 0 \\
         \hline
         Modified color code A (example \circled{3})~ $y$-direction& 0& 0& 0& 0& 8& 0& 0& 0 \\
         \hline
         Modified color code B (example \circled{4})~ $x$-direction& 4& 8&8& 12& 4& 12& 4& 12\\
         \hline
         Modified color code B (example \circled{4})~ $y$-direction& 4& 8&8& 12& 4& 12& 4& 12\\
         \hline
         Modified color code C (example \circled{5})~$x$-direction& 2& 4& 2& 8& 2& 4& 2& 8\\
         \hline
         Modified color code C (example \circled{5})~ $y$-direction& 2& 4& 2& 8& 2& 4& 2& 8\\
         \hline
         Modified color code D (example \circled{6})~ $x$-direction& 4& 8& 4& 12& 4& 8& 4& 12\\
         \hline
         Modified color code D (example \circled{6})~ $y$-direction& 4& 8& 4& 12& 4& 8& 4& 12\\
         \hline
         CSS code from double semion code $x$-direction& 4& 4& 4& 4& 4& 4& 4& 4\\
         \hline
         CSS code from double semion code $y$-direction& 4& 4& 4& 4& 4& 4& 4& 4\\
         \hline
         Six-semion code $x$-direction& 2& 2& 2& 2& 2& 2& 2& 2\\
         \hline
         Six-semion code $y$-direction& 2& 2& 2& 2& 2& 2& 2& 2\\
         \hline
         Double semion code& 2& 2& 2& 2& 2& 2& 2& 2\\
         \hline
         Double semion code& 2& 2& 2& 2& 2& 2& 2& 2\\
         \hline
    \end{tabular}
    \caption{The number of independent anyon solutions of Eq.~\eqref{eq: anyon equation} with different choices of $n_x, n_y$ for different examples. We can see that the number of independent anyon solutions has a periodic pattern when we change $n_x, n_y$. Moreover, if $n_x=n_y$, we find that the number of independent anyon solutions along the $x$- and $y$-directions are the same for a given model. The algorithm shows that the modified color code B has 16 independent anyon solutions when $n_x=n_y=12$. We have checked their periodicity up to $n_x, n_y=16$, while only showing $n_x, n_y$ up to 8 in this table.}
    \label{tab:number_anyon}
\end{table*}

\subsection{Color code and modified color codes}\label{appendix:modified_color_code_string}

In this section, we use \textcolor{black}{our algorithm} to investigate five models \textcolor{black}{satisfying topological order conditions shown in Fig.~\ref{fig: modified color codes} except example \circled{2}}, and the resulting string operator is shown in the following figure:

Since all five models are self-dual, we omit whether each qubit represents X or Z Here. The blue dots represent the string operator in the $x$-direction, and the red dots represent the string operator in the $y$-direction. When the string operator in the $x$-direction coincides with the string operator in the $y$-direction, we use purple dots to represent the overlapping qubits. The black dot represents the origin and distinguishes between different stabilizer positions. 

The result is summarized as follows:
\begin{itemize}
    \item For the \textcolor{black}{2d honeycomb} color code (example \circled{1}), we obtain four independent anyon solutions at $(1-x^3)$ and $(1-y^{-3})$. The string operators are shown in Fig.~\ref{fig:string_example1}
    \item The second model does not satisfy the topological condition.
    \item For the modified color code A (example \circled{3}), we can obtain maximally eight independent anyon solutions at $(1-x^5)$ and $(1-y^{-5})$. The string operators are shown in Fig.~\ref{fig:string_example3}
    \item For the modified color code B (example \circled{4}), we can obtain maximally sixteen independent anyon solutions at $(1-x^{12})$ and $(1-y^{-12})$. The string operators are shown in Figs.~\ref{fig:string_example4_1},~~\ref{fig:string_example4_2},~~\ref{fig:string_example4_3},~~\ref{fig:string_example4_4},~~\ref{fig:string_example4_5},~~\ref{fig:string_example4_6},~~\ref{fig:string_example4_7},~~\ref{fig:string_example4_8}.
    \item For the modified color code C (example \circled{5}), we can obtain maximally eight independent anyon solutions at $(1-x^4)$ and $(1-y^{-4})$. The string operators are shown in Fig.~\ref{fig:string_example5}.
    \item For the modified color code D (example \circled{6}), we can obtain maximally twelve independent anyon solutions at $(1-x^4)$ and $(1-y^{-4})$. The string operators are shown in Figs.~\ref{fig:string_example6_1} and \ref{fig:string_example6_2}.
\end{itemize}

\begin{figure*}[htb]
    \centering
    \subfigure[String operator for $v_1$ ]{\includegraphics[width=0.24\textwidth]{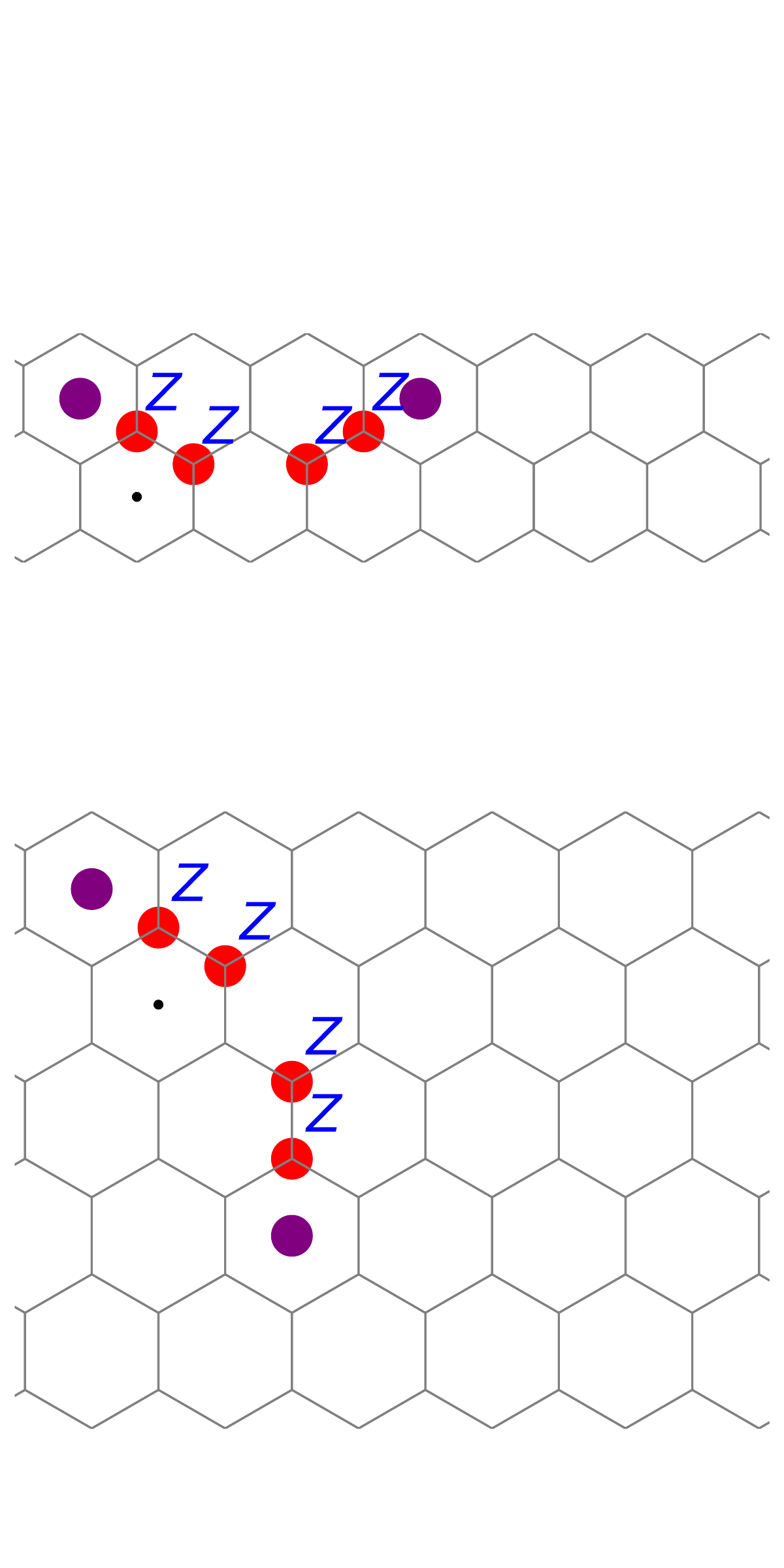}}
    \subfigure[string operator for $v_2$]{\includegraphics[width=0.24\textwidth]{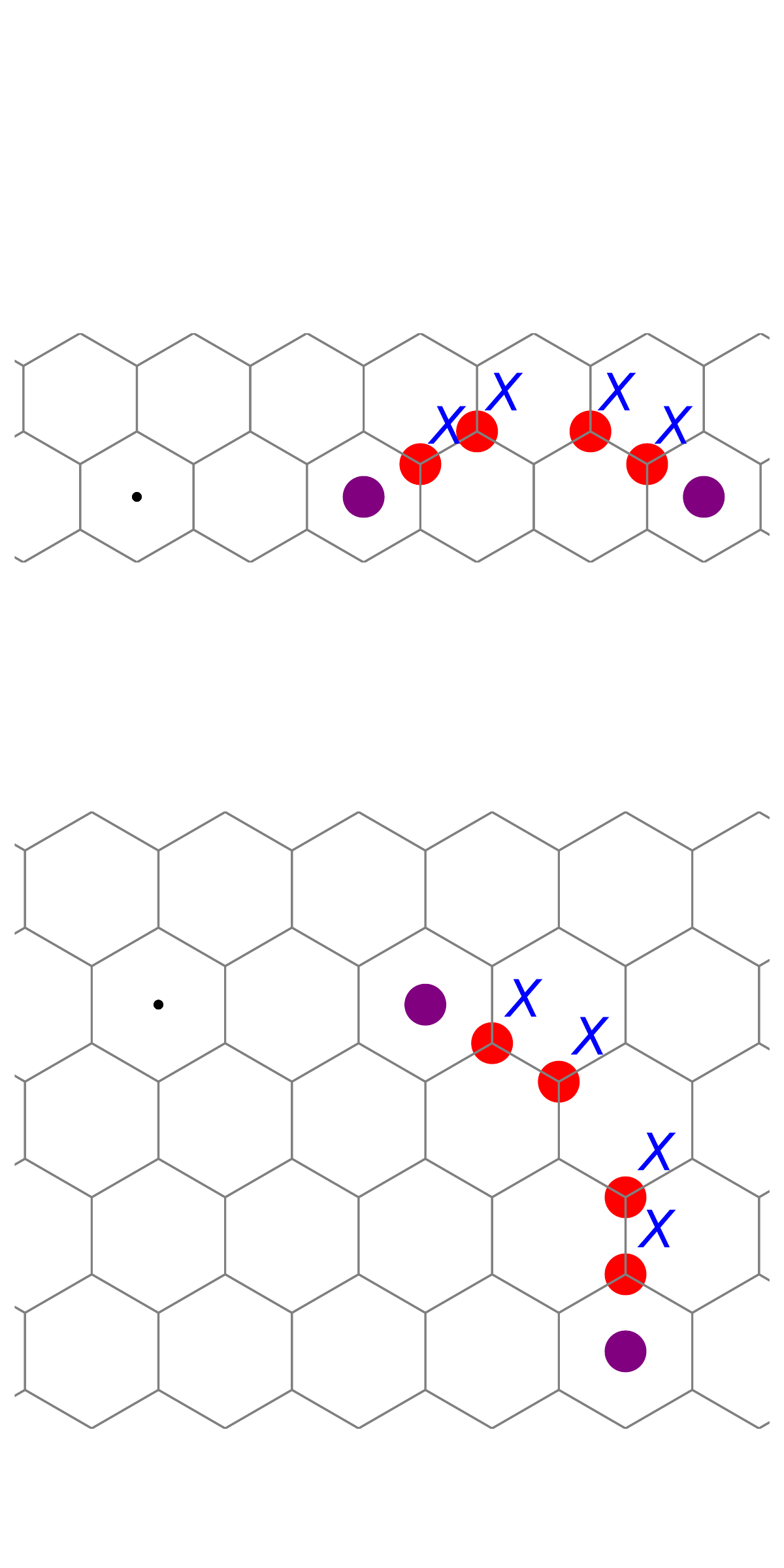}}
    \subfigure[string operator for $v_3$]{\includegraphics[width=0.24\textwidth]{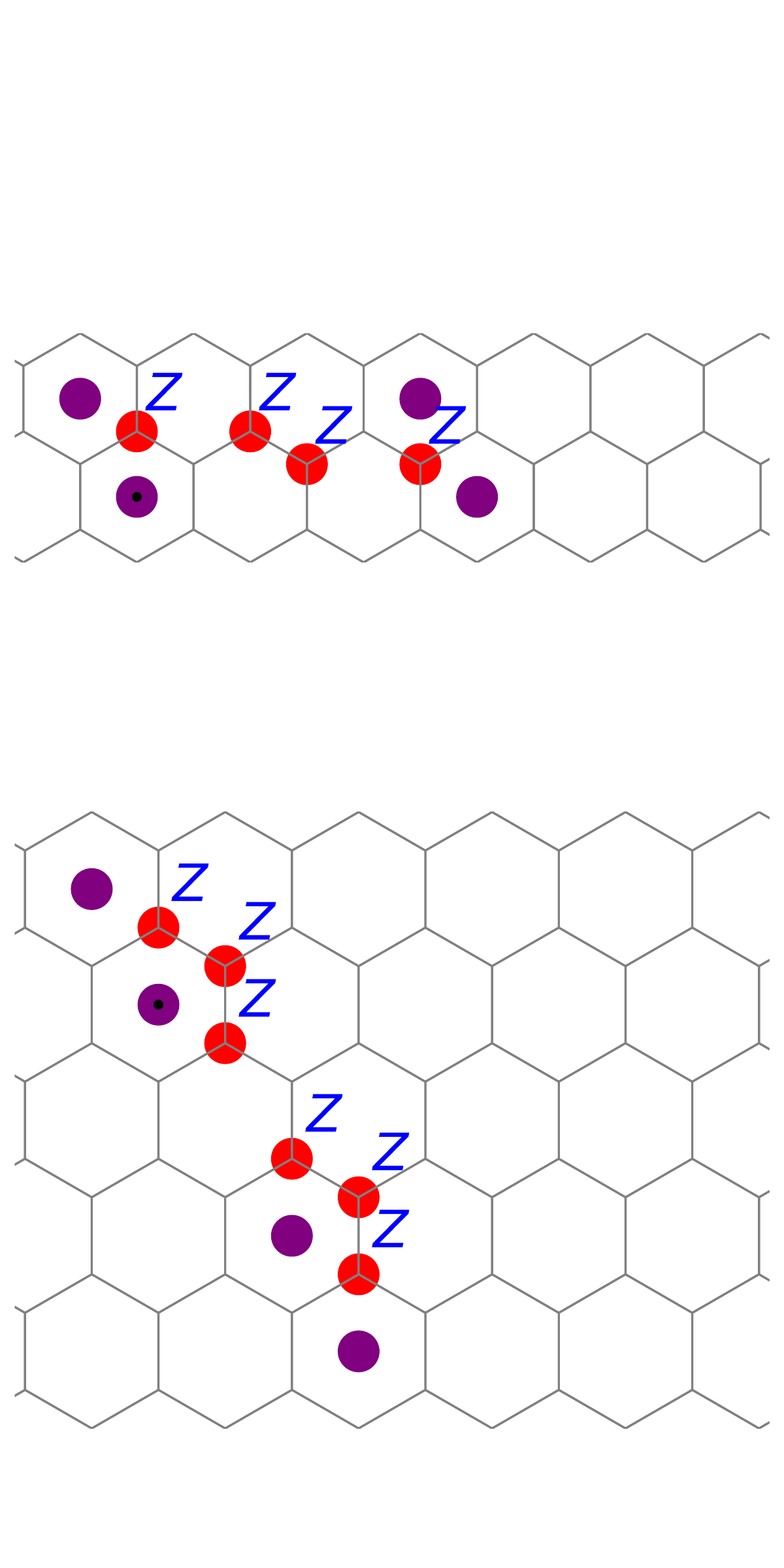}}
    \subfigure[string operator for $v_4$]{\includegraphics[width=0.24\textwidth]{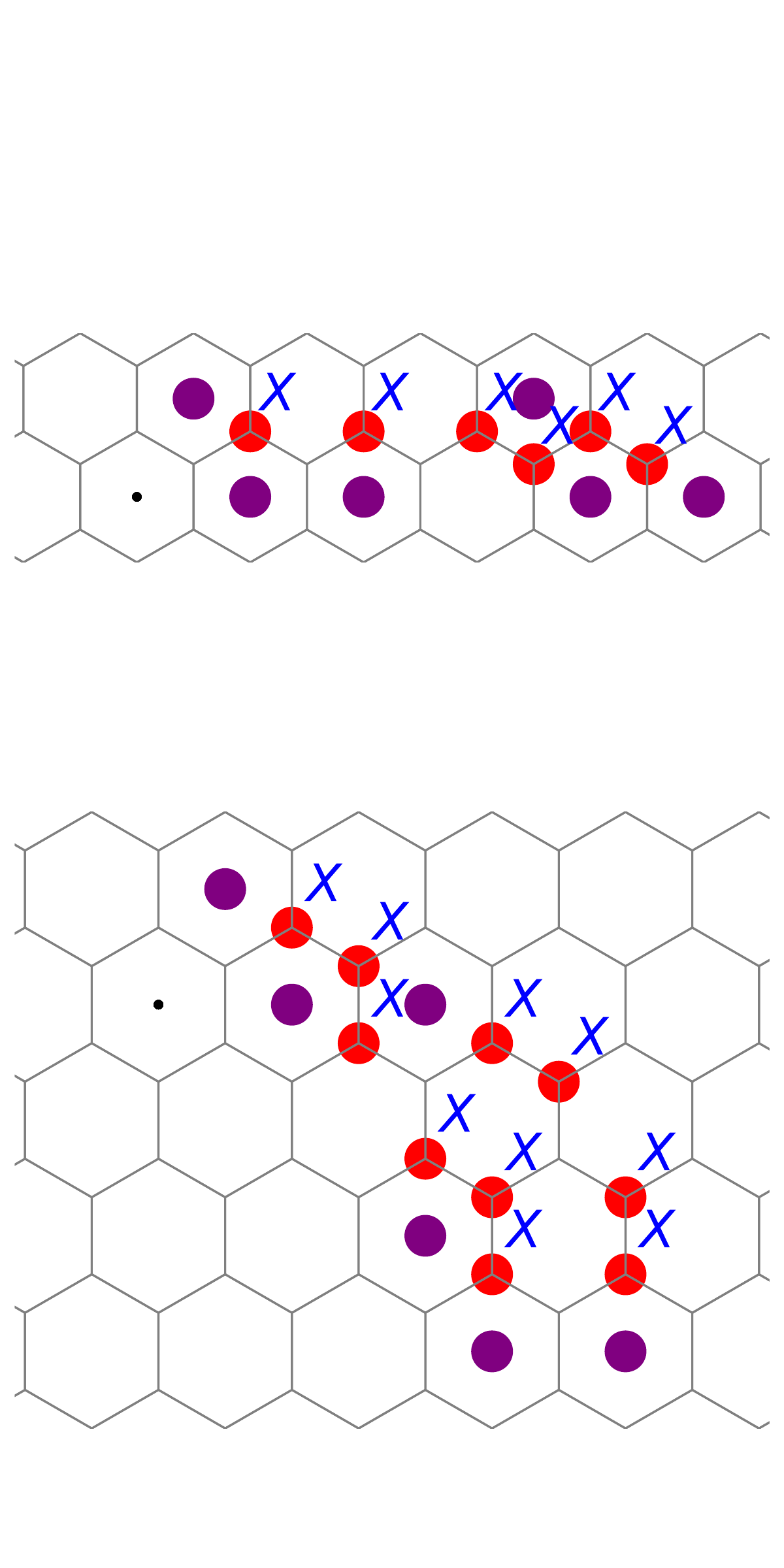}}
    \caption{The string operators in the \textcolor{black}{2d honeycomb} color code (example \textcircled{1}). Subfigures (a)$\sim$ (d) represent the string operators for anyons $v_1,..., v_4$. For each subfigure, the first row represents the anyon string operators that move an anyon along the $x$-direction, and the second row represents the anyon string operators that move an anyon along the $y$-direction.}
\label{fig:string_example1}
\end{figure*}

\begin{figure*}[htb]
    \centering
    \subfigure[string operator for $v_1$]{\includegraphics[width=0.24\textwidth]{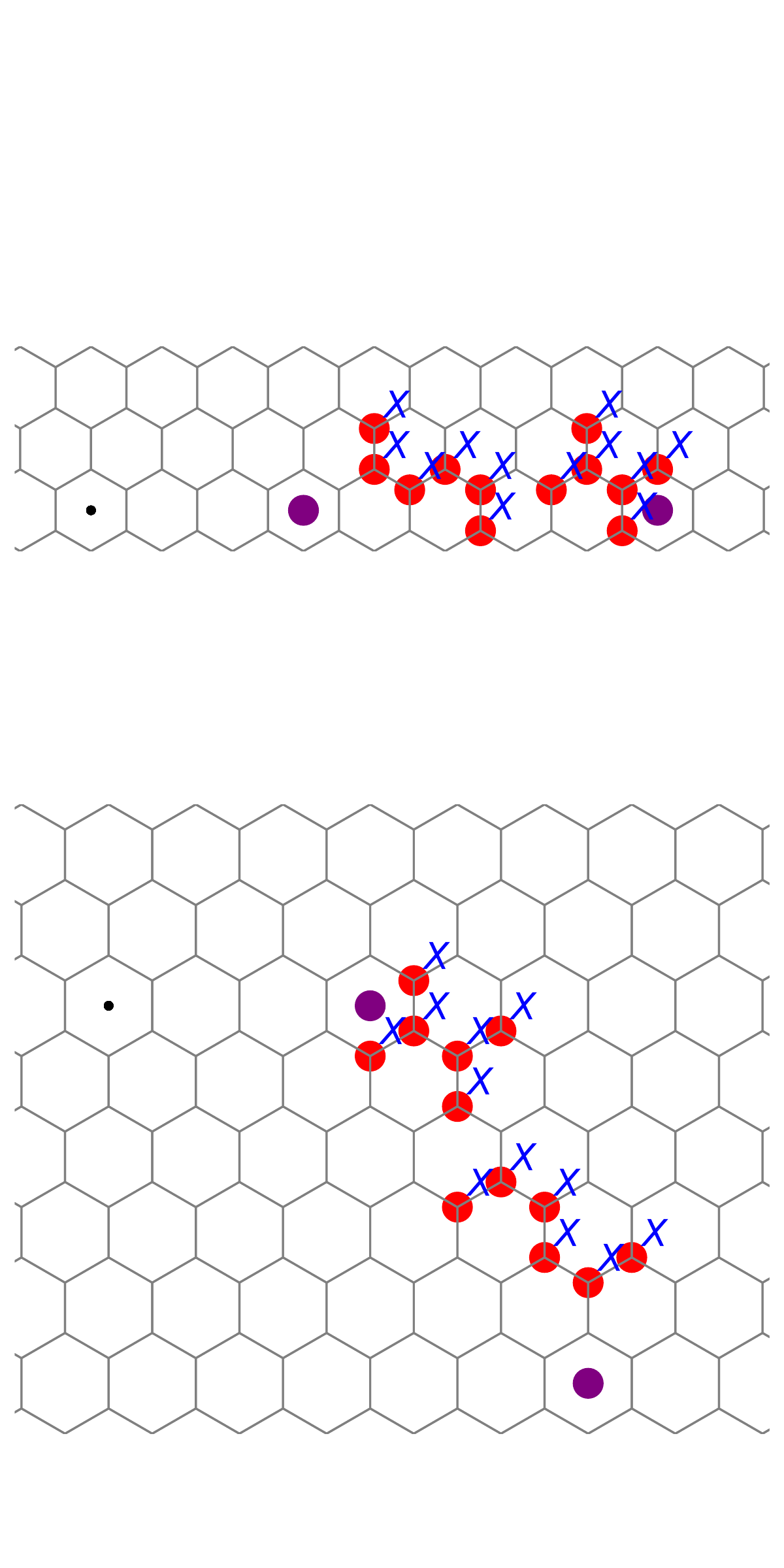}}
    \subfigure[string operator for $v_2$]{\includegraphics[width=0.24\textwidth]{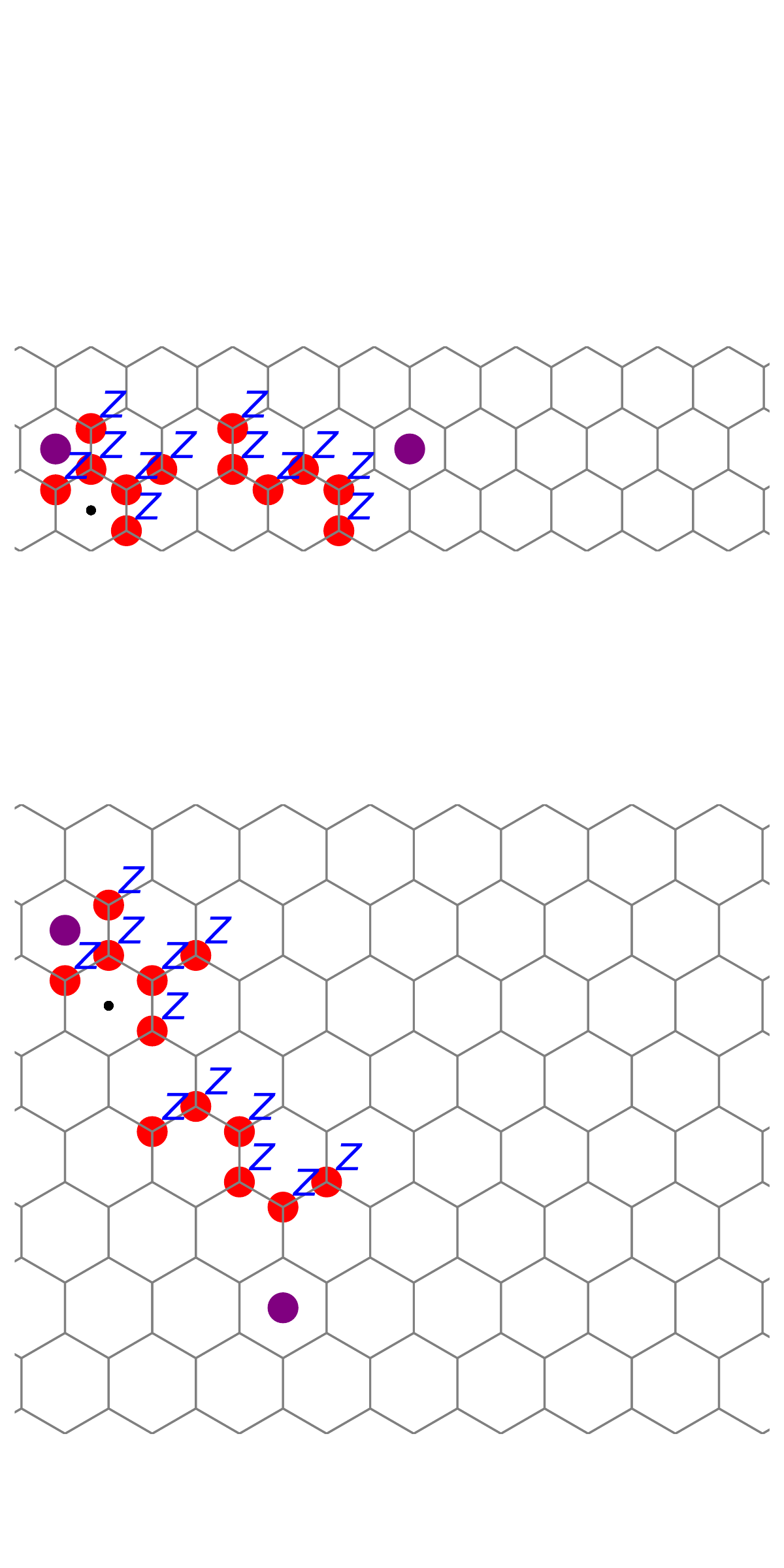}}
    \subfigure[string operator for $v_3$]{\includegraphics[width=0.24\textwidth]{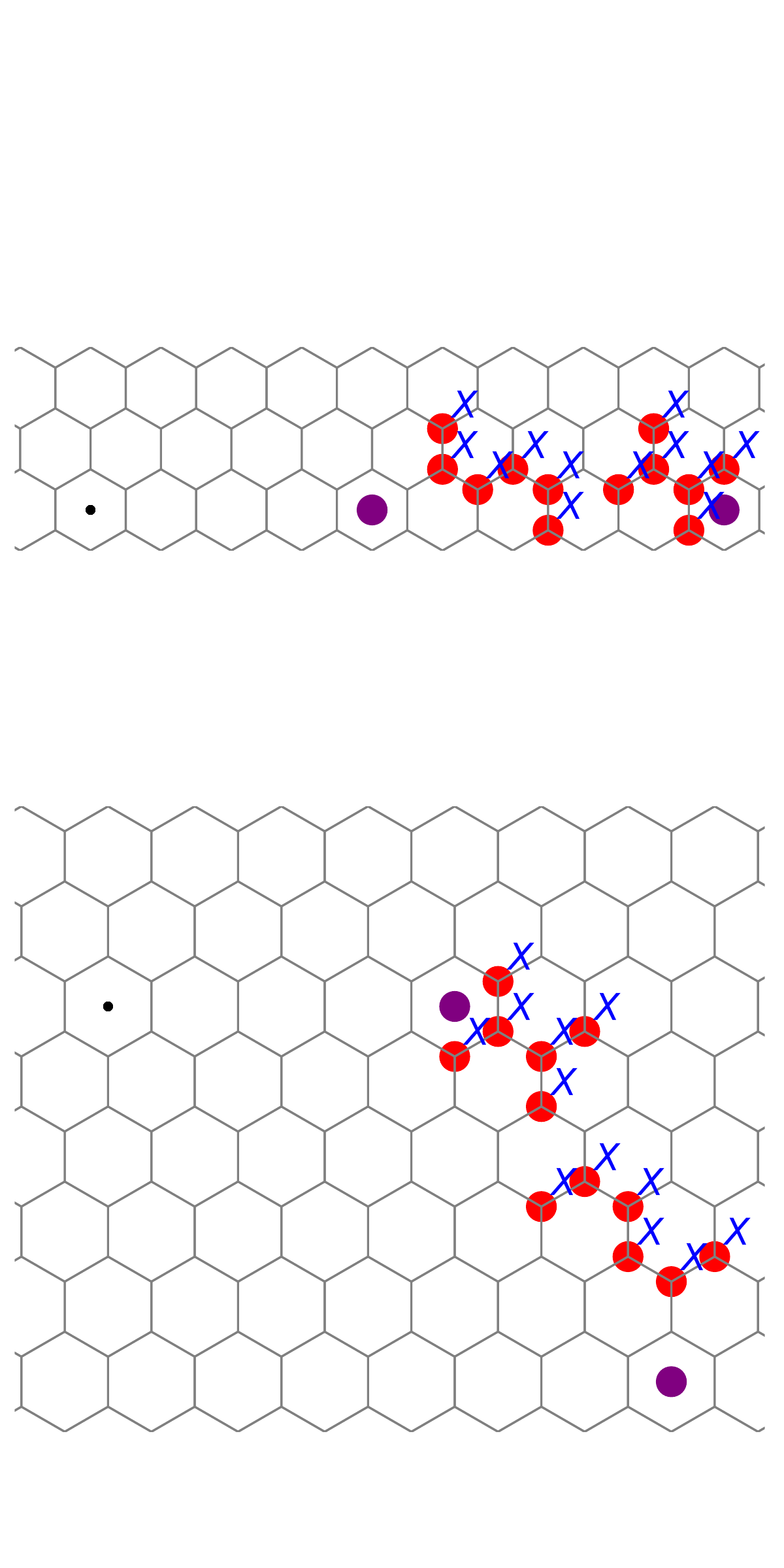}}
    \subfigure[string operator for $v_4$]{\includegraphics[width=0.24\textwidth]{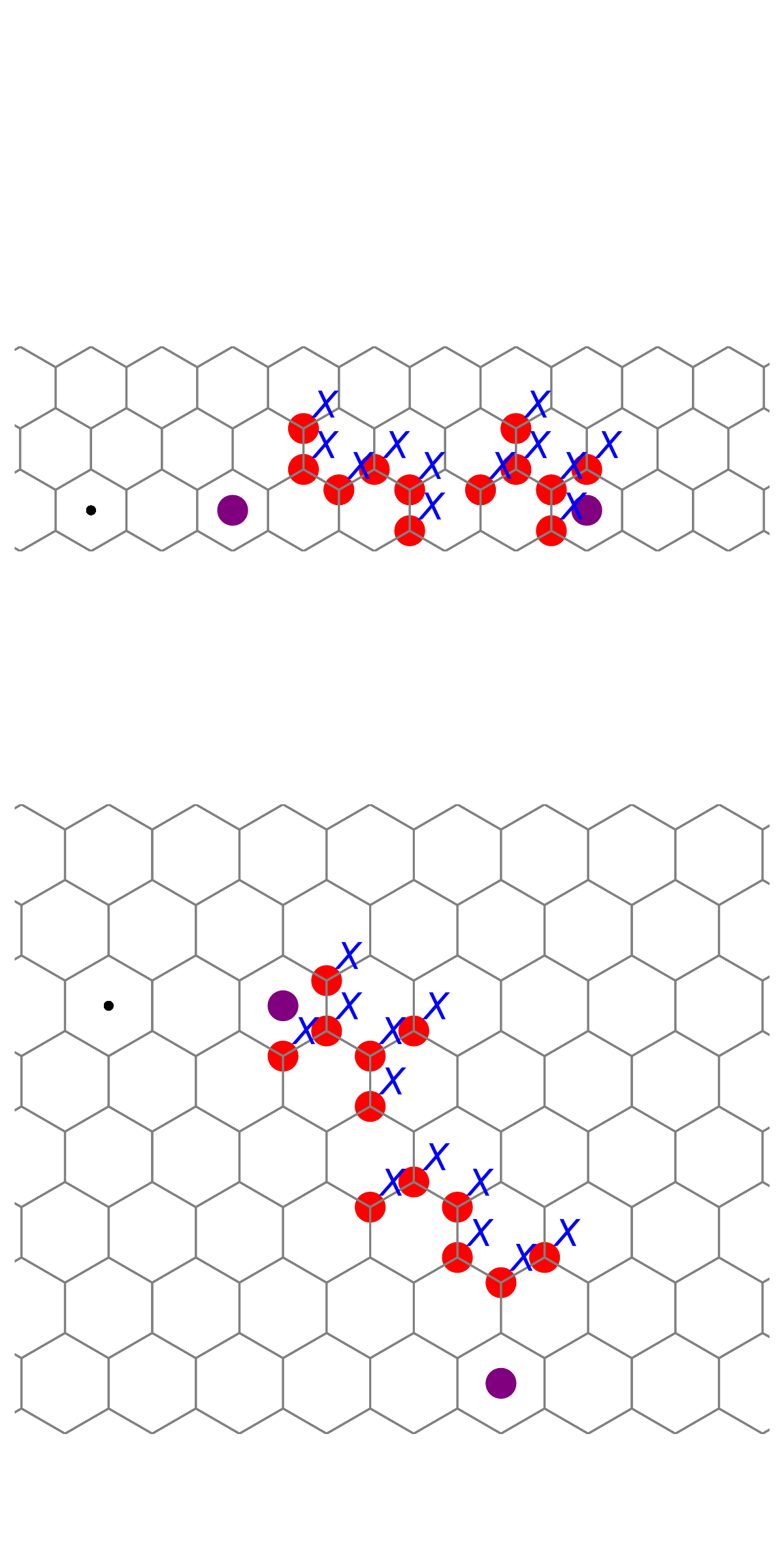}}
    \subfigure[string operator for $v_5$]{\includegraphics[width=0.24\textwidth]{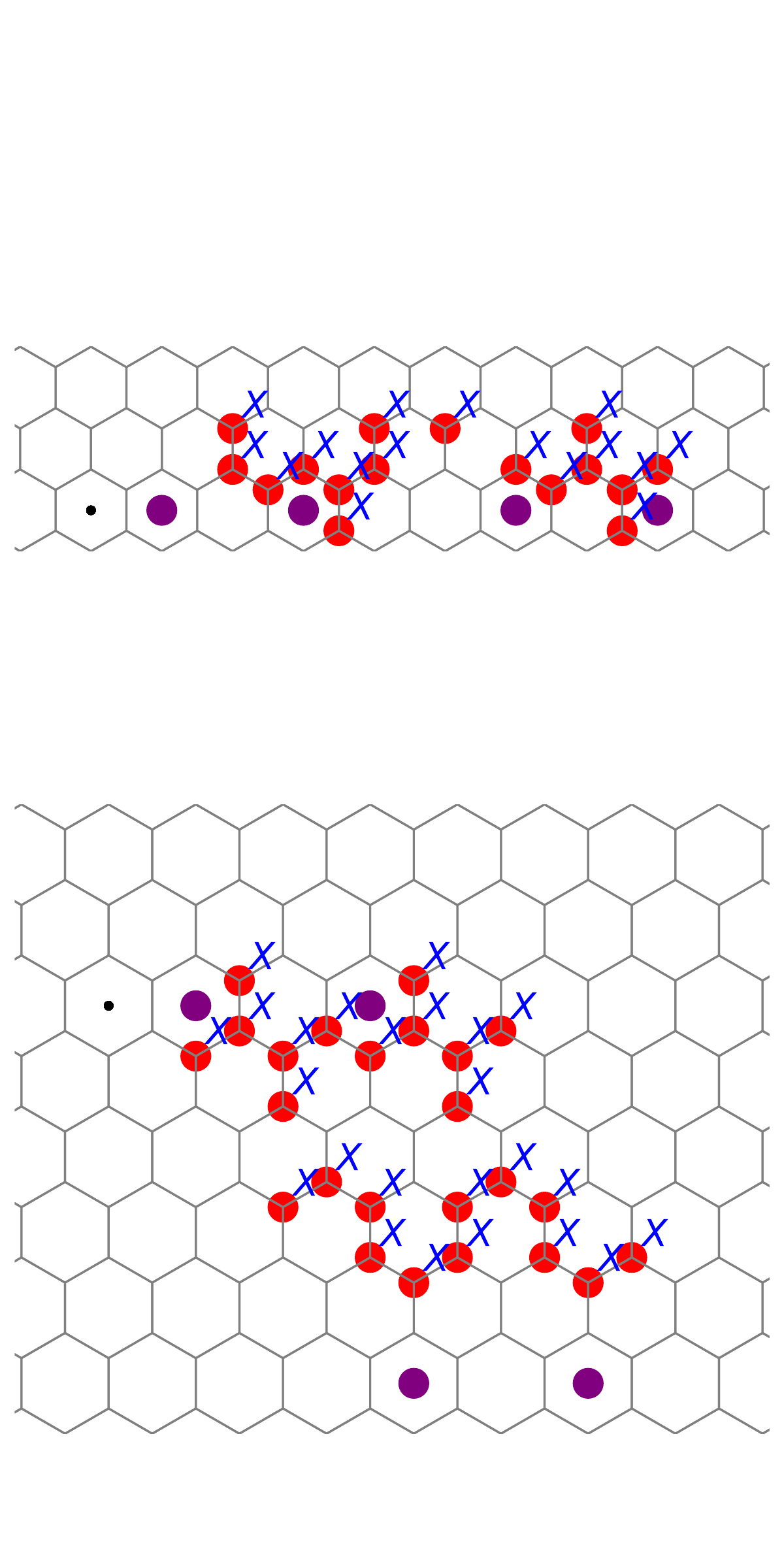}}
    \subfigure[string operator for $v_6$]{\includegraphics[width=0.24\textwidth]{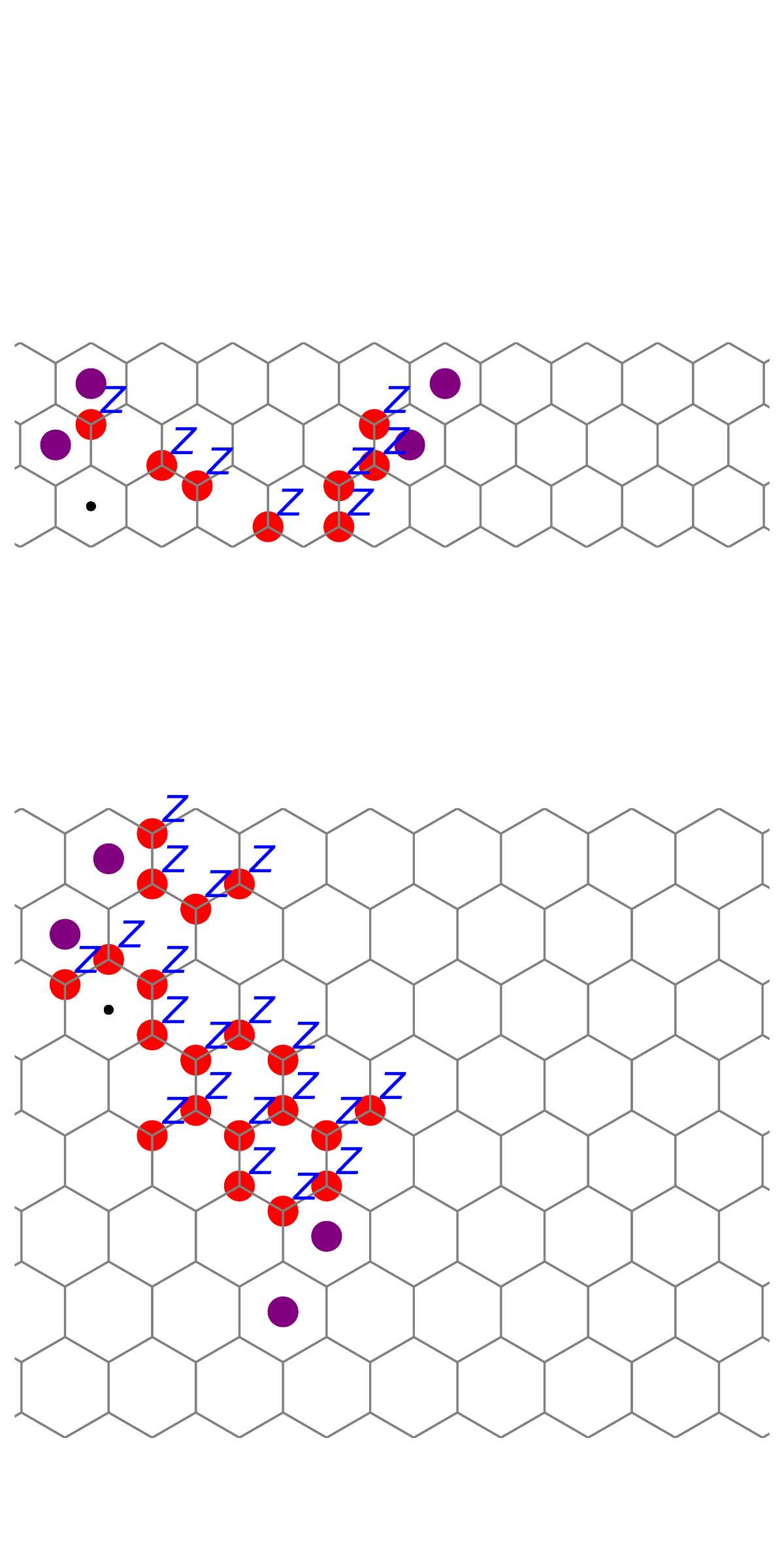}}
    \subfigure[string operator for $v_7$]{\includegraphics[width=0.24\textwidth]{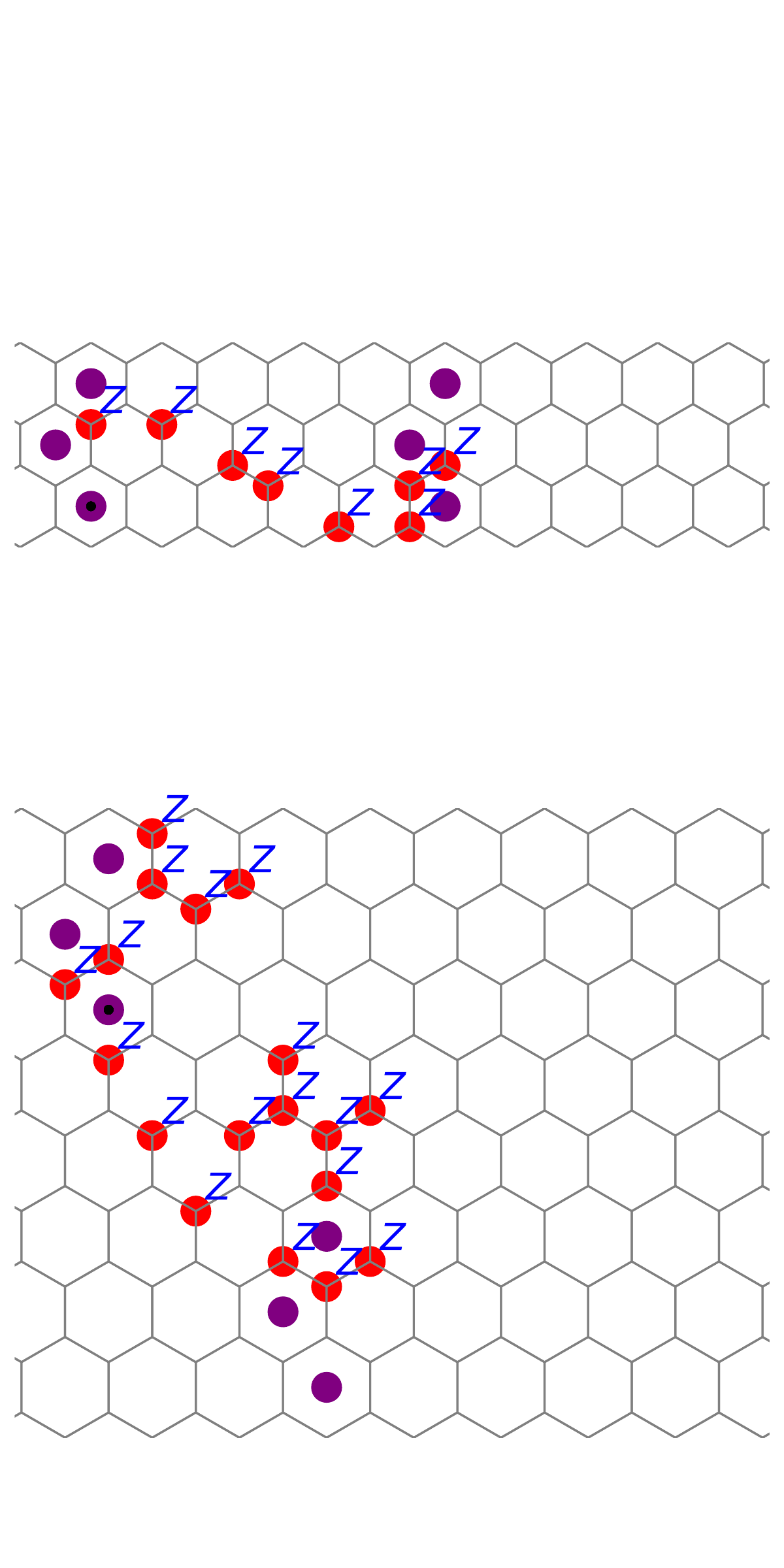}}
    \subfigure[string operator for $v_8$]{\includegraphics[width=0.24\textwidth]{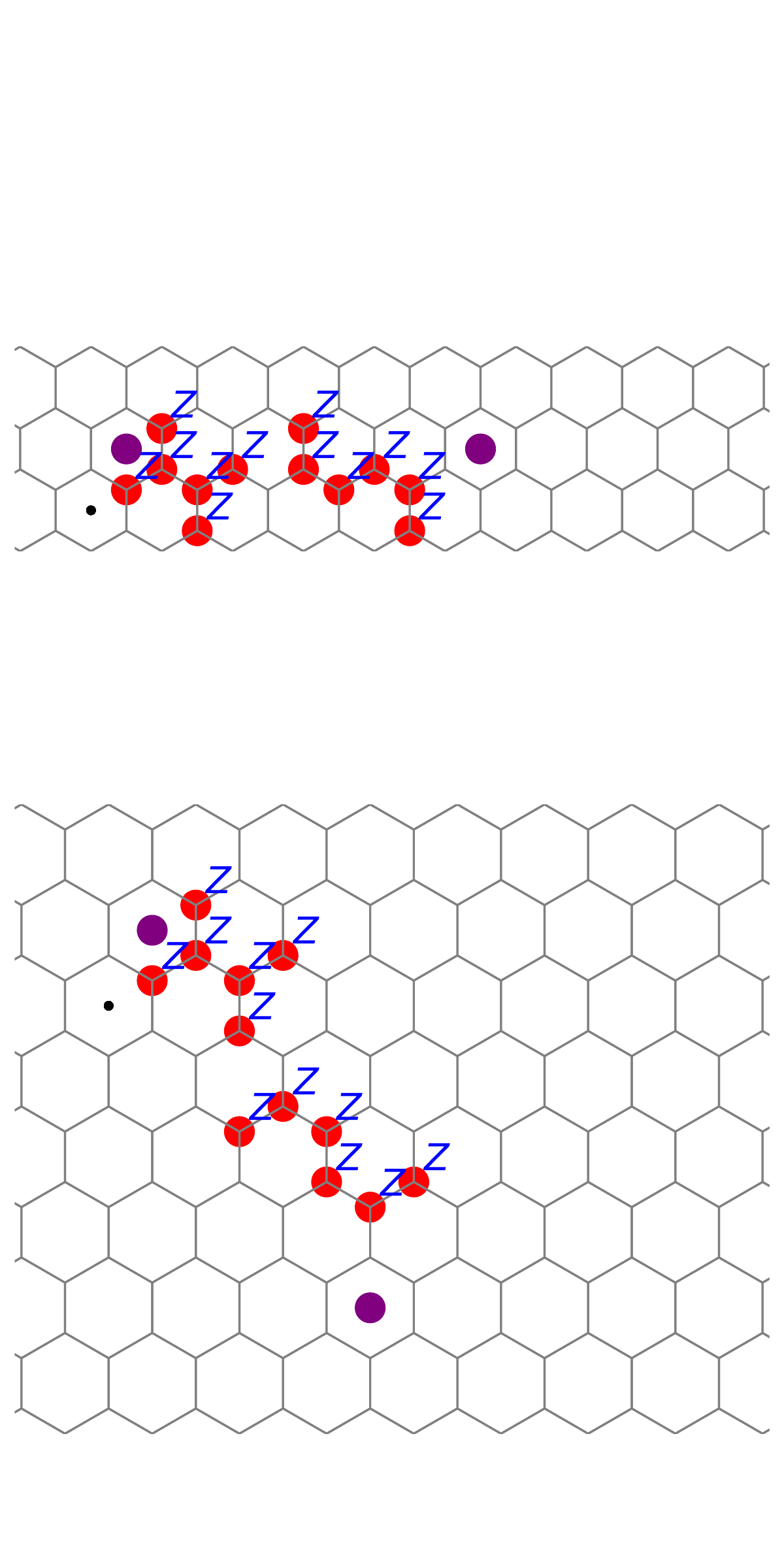}}
    \caption{The string operators in the modified color code A (example \textcircled{3}). Subfigures (a)$\sim$ (h) represent the string operators for anyons $v_1,..., v_8$. For each subfigure, the first row represents the anyon string operators that move an anyon along the $x$-direction, and the second row represents the anyon string operators that move an anyon along the $y$-direction.}
\label{fig:string_example3}
\end{figure*}

\begin{figure*}[htb]
    \centering
    \subfigure[string operator for $v_1$ ]{\includegraphics[width=0.48\textwidth]{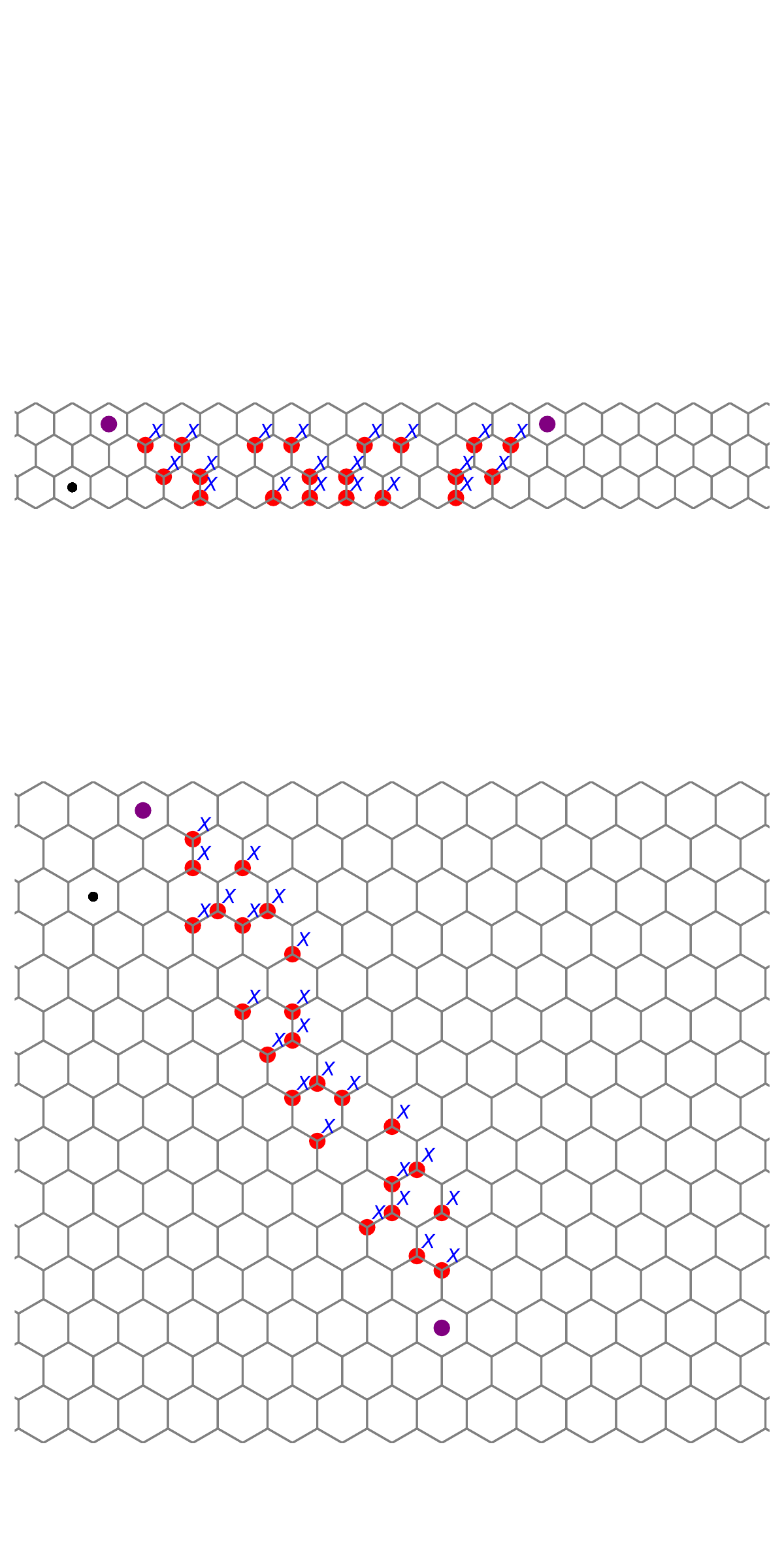}}
    \subfigure[string operator for $v_2$ ]{\includegraphics[width=0.48\textwidth]{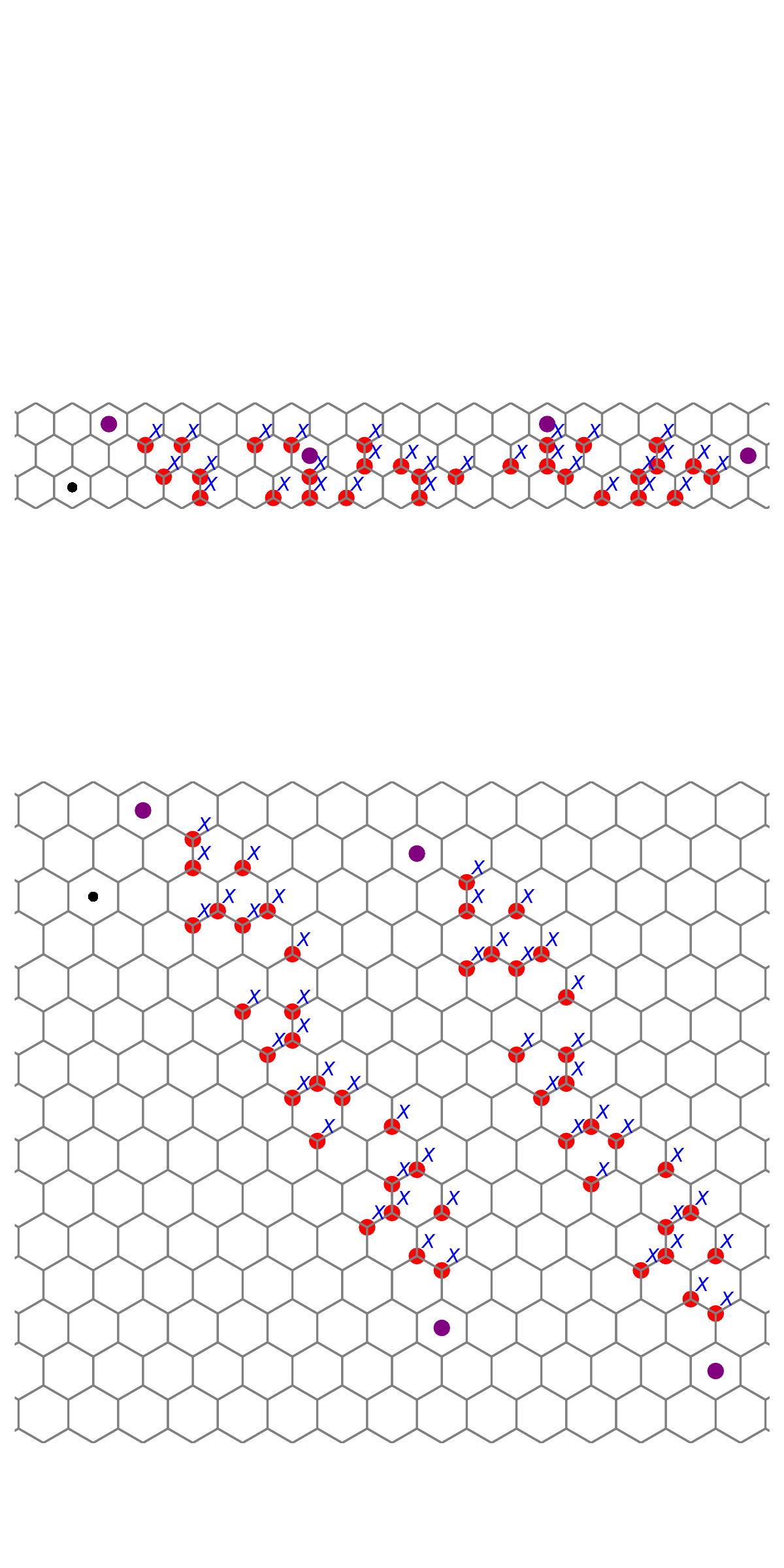}}
    \caption{The string operators in the modified color code B (example \textcircled{4}). Subfigures (a)and (b) represent the string operators for anyons $v_1$ and $v_2$. For each subfigure, the first row represents the anyon string operators that move an anyon along the $x$-direction, and the second row represents the anyon string operators that move an anyon along the $y$-direction.}
\label{fig:string_example4_1}
\end{figure*}

\begin{figure*}[htb]
    \centering
    \subfigure[string operator for $v_3$ ]{\includegraphics[width=0.48\textwidth]{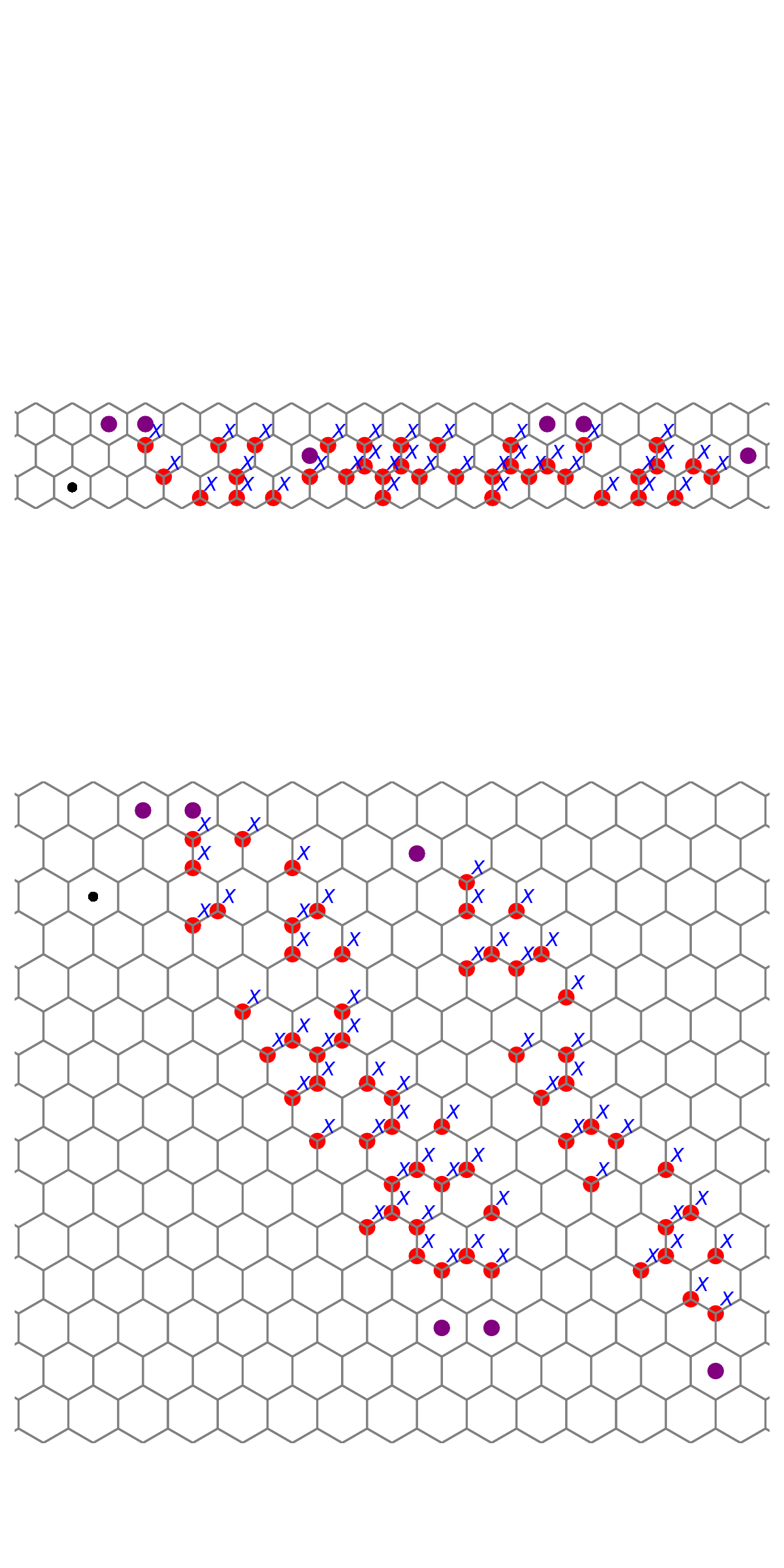}}
    \subfigure[string operator for $v_4$ ]{\includegraphics[width=0.48\textwidth]{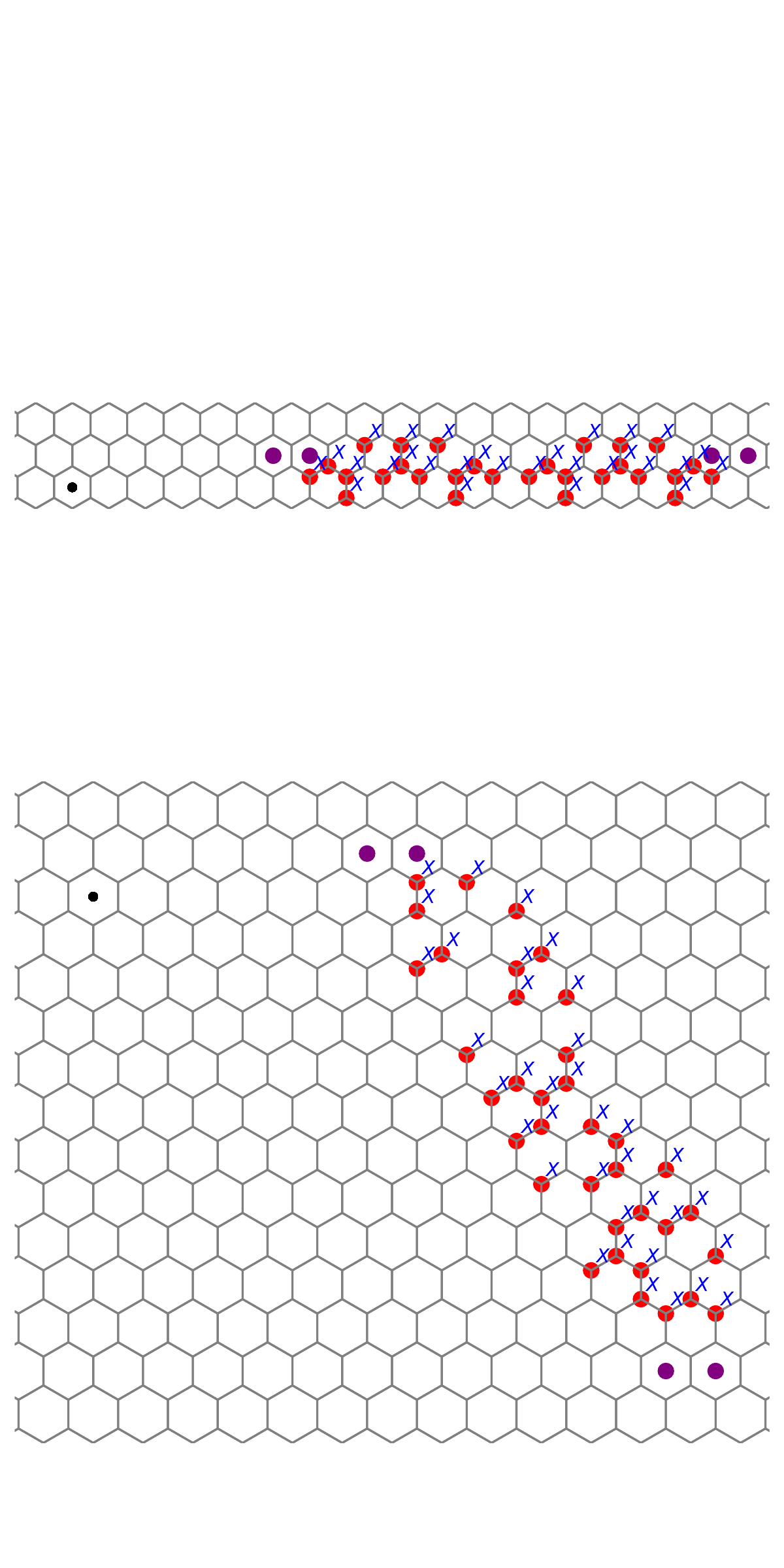}}
    \caption{(Continued) The string operators in the modified color code B (example \textcircled{4}). Subfigures (a) and (b) represent the string operators for anyons $v_3$ and $v_4$. For each subfigure, the first row represents the anyon string operators that move an anyon along the $x$-direction, and the second row represents the anyon string operators that move an anyon along the $y$-direction.}
    \label{fig:string_example4_2}
\end{figure*}

\begin{figure*}[htb]
    \subfigure[string operator for $v_5$ ]{\includegraphics[width=0.48\textwidth]{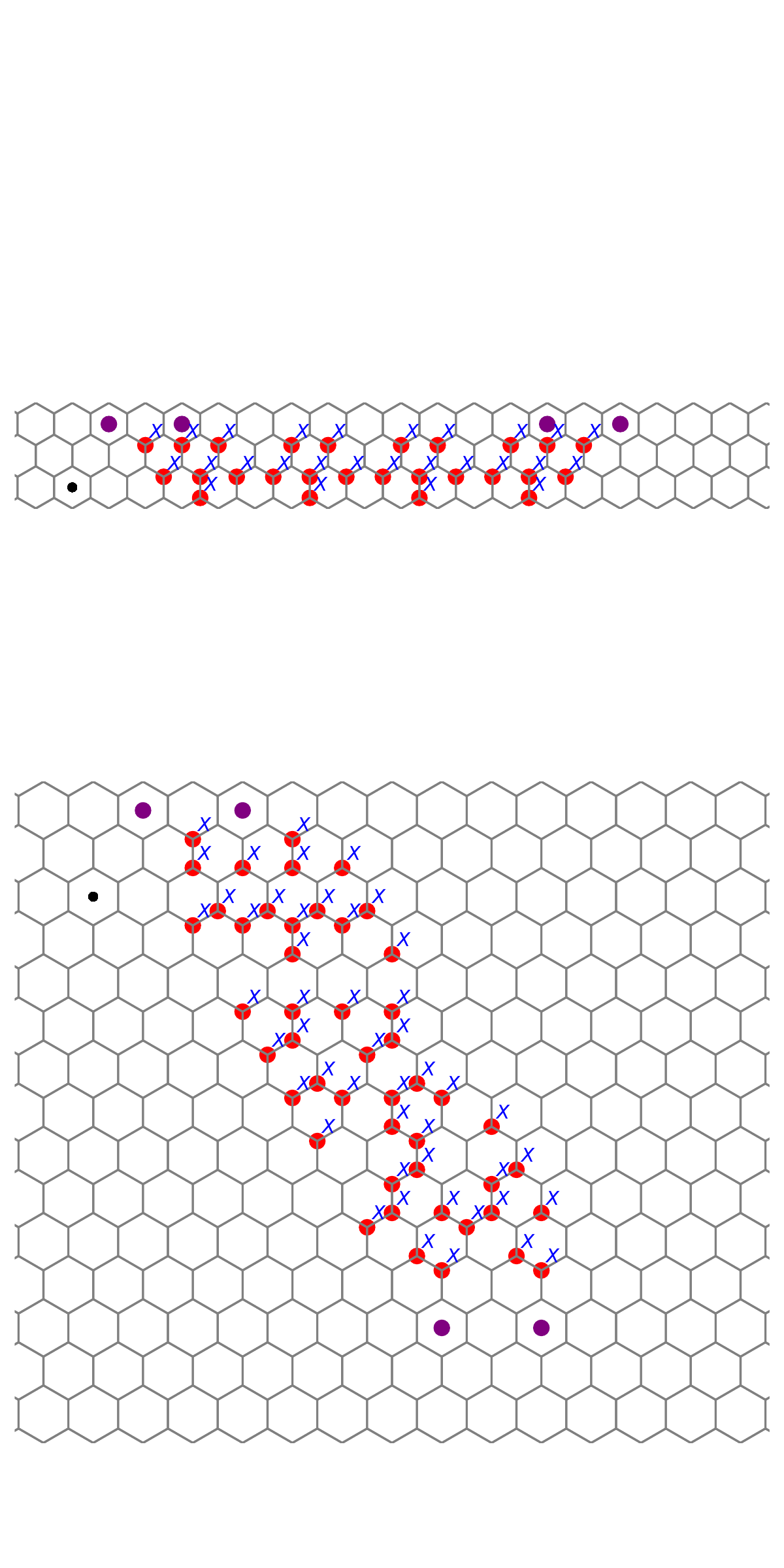}}
    \subfigure[string operator for $v_6$ ]{\includegraphics[width=0.48\textwidth]{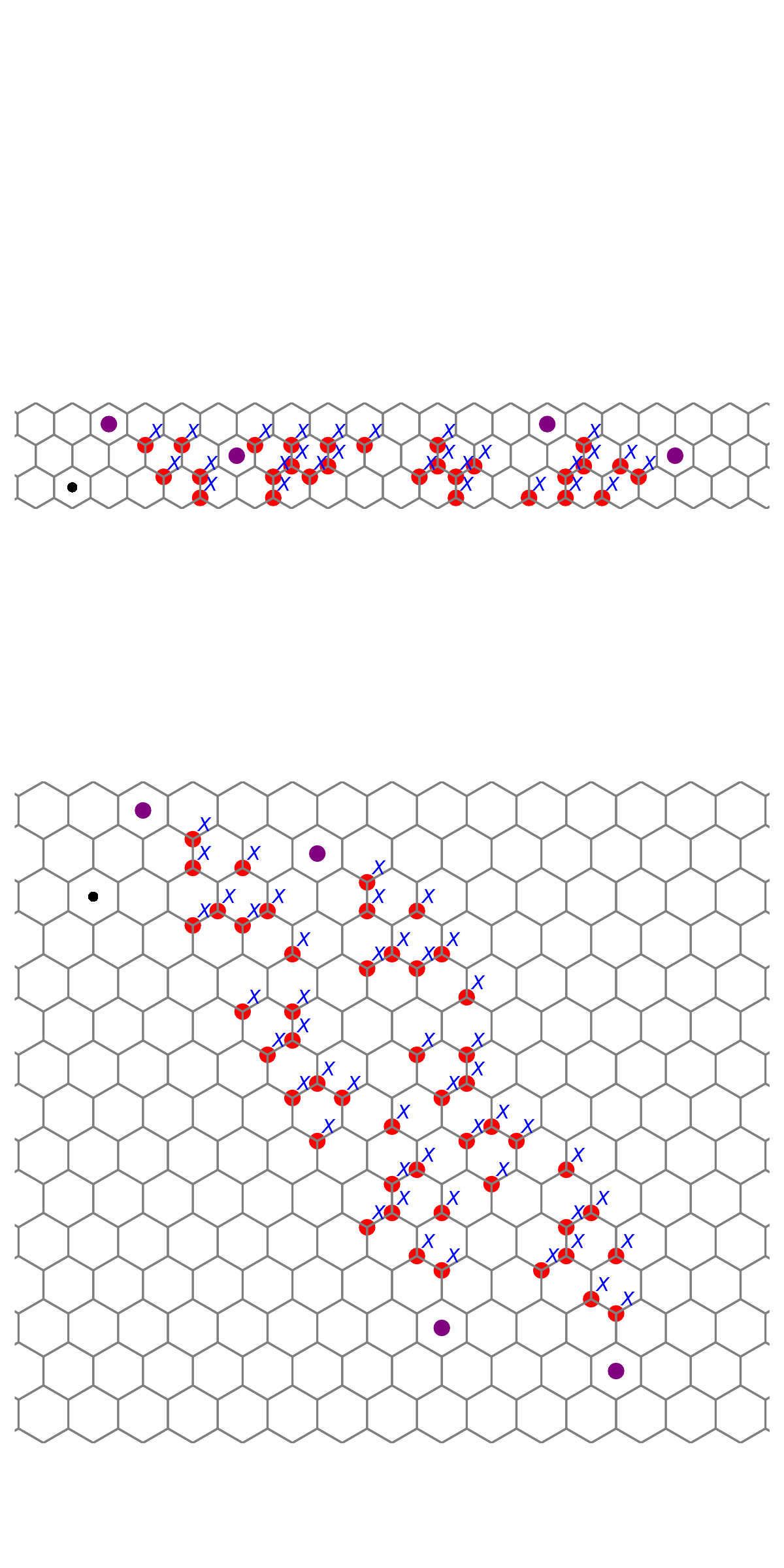}}
    \caption{(Continued) The string operators in the modified color code B (example \textcircled{4}). Subfigure (a) and (b) represent the string operators for anyons $v_5$ and $v_6$. For each subfigure, the first row represents the anyon string operators that move an anyon along the $x$-direction, and the second row represents the anyon string operators that move an anyon along the $y$-direction.}
    \label{fig:string_example4_3}
\end{figure*}

\begin{figure*}[htb]
    \subfigure[string operator for $v_7$ ]{\includegraphics[width=0.48\textwidth]{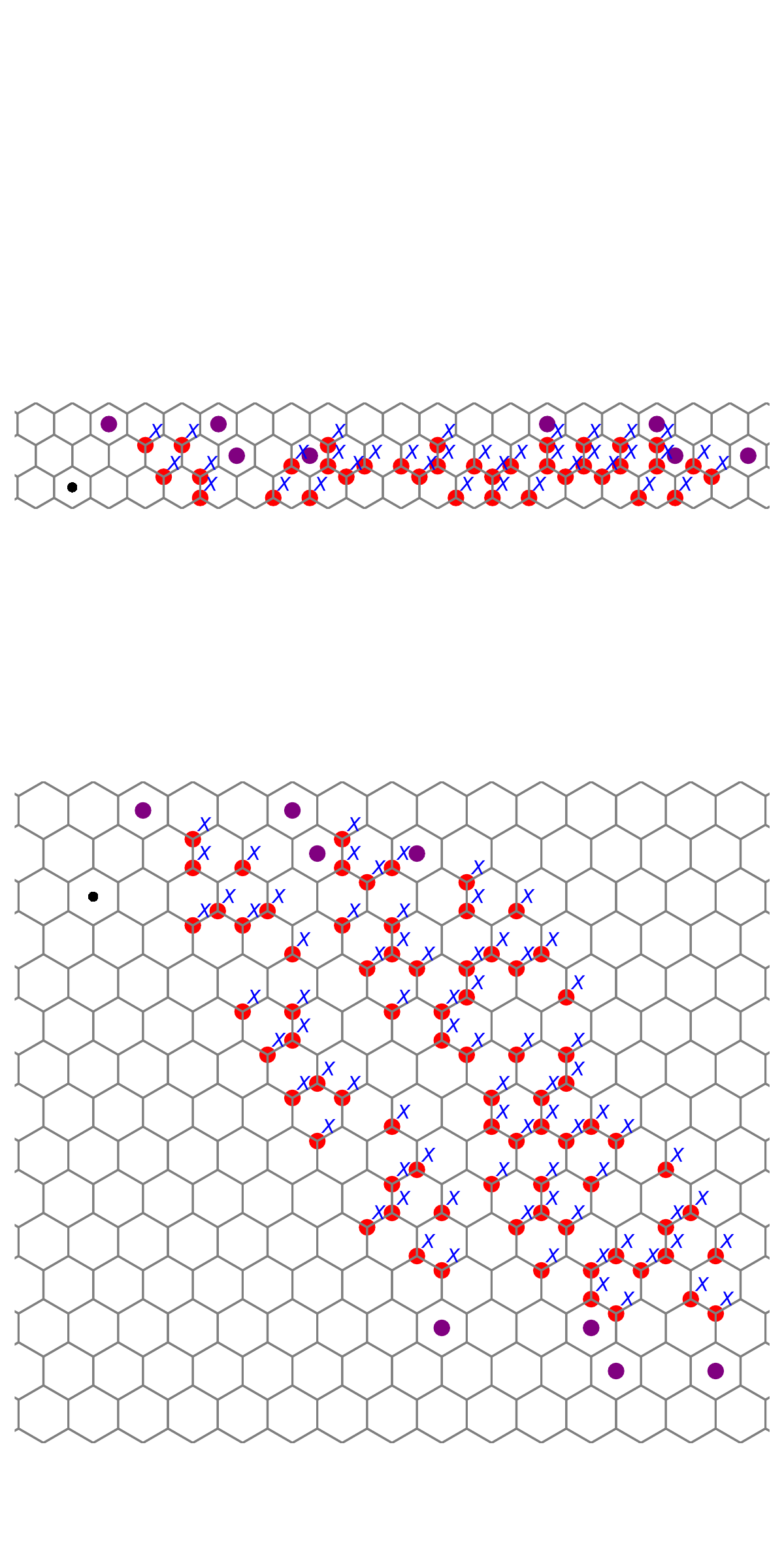}}
    \subfigure[string operator for $v_8$ ]{\includegraphics[width=0.48\textwidth]{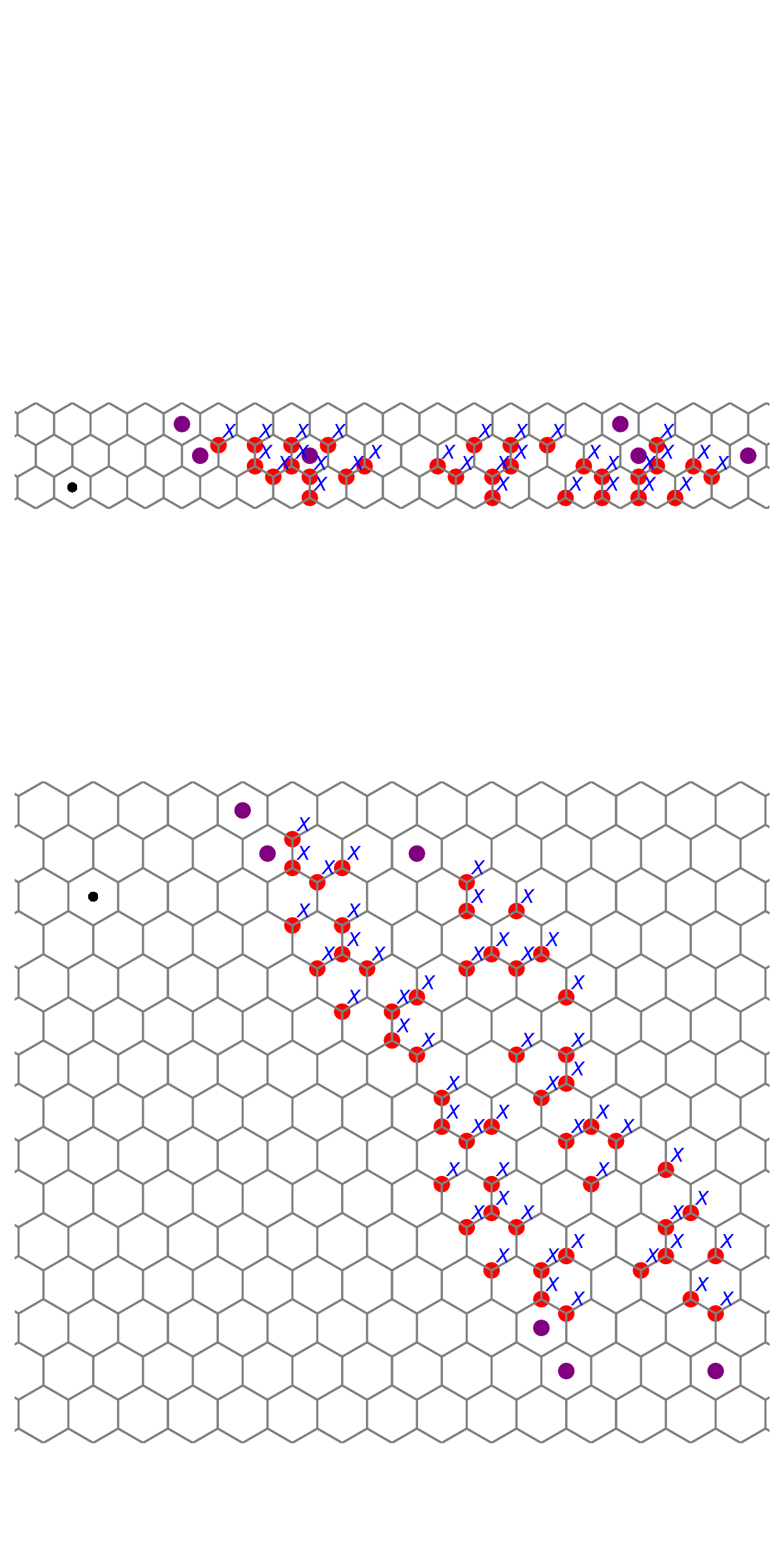}}
    \caption{(Continued) The string operators in the modified color code B (example \textcircled{4}). Subfigures (a) and (b) represent the string operators for anyons $v_7$ and $v_8$. For each subfigure, the first row represents the anyon string operators that move an anyon along the $x$-direction, and the second row represents the anyon string operators that move an anyon along the $y$-direction.}
    \label{fig:string_example4_4}
\end{figure*}

\begin{figure*}[htb]
    \centering
    \subfigure[string operator for $v_9$ ]{\includegraphics[width=0.48\textwidth]{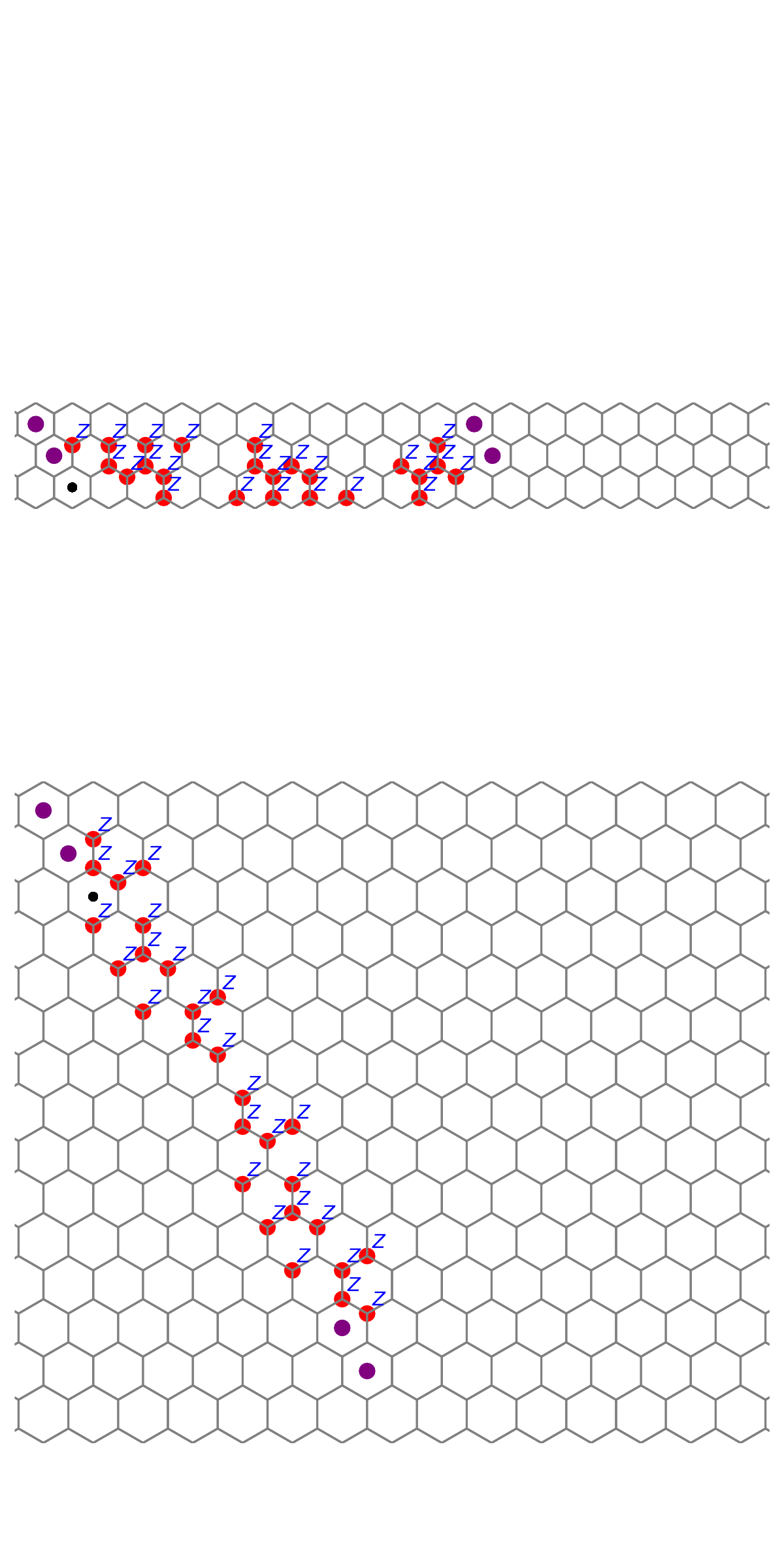}}
    \subfigure[string operator for $v_{10}$ ]{\includegraphics[width=0.48\textwidth]{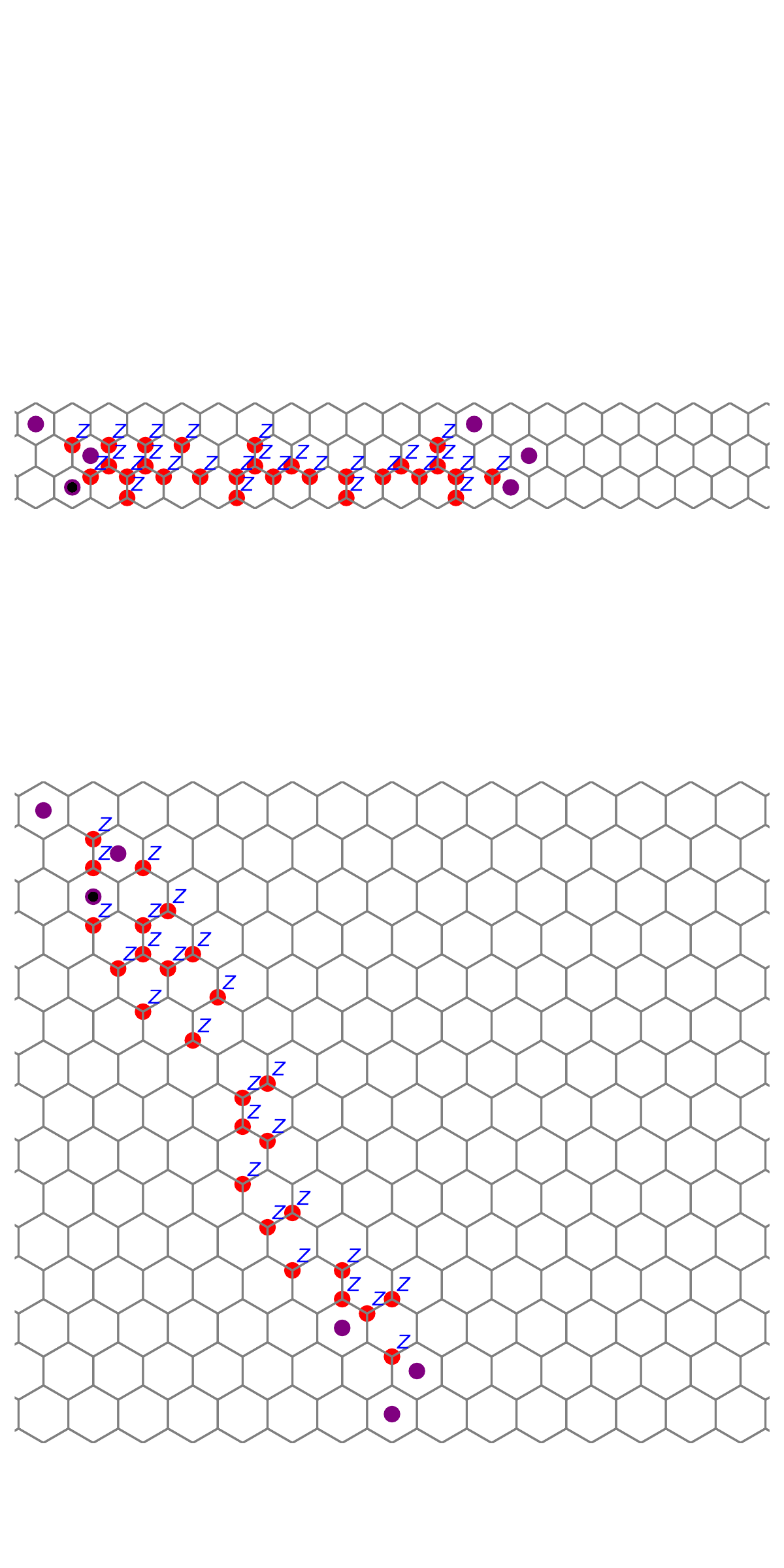}}
    \caption{(Continued) The string operators in of the modified color code B (example \textcircled{4}). Subfigure (a) and (b) represent the string operators for anyons $v_9$ and $v_{10}$. For each subfigure, the first row represents the anyon string operators that move an anyon along the $x$-direction, and the second row represents the anyon string operators that move an anyon along the $y$-direction.}
    \label{fig:string_example4_5}
\end{figure*}

\begin{figure*}
    \centering
    \subfigure[string operator for $v_{11}$ ]{\includegraphics[width=0.48\textwidth]{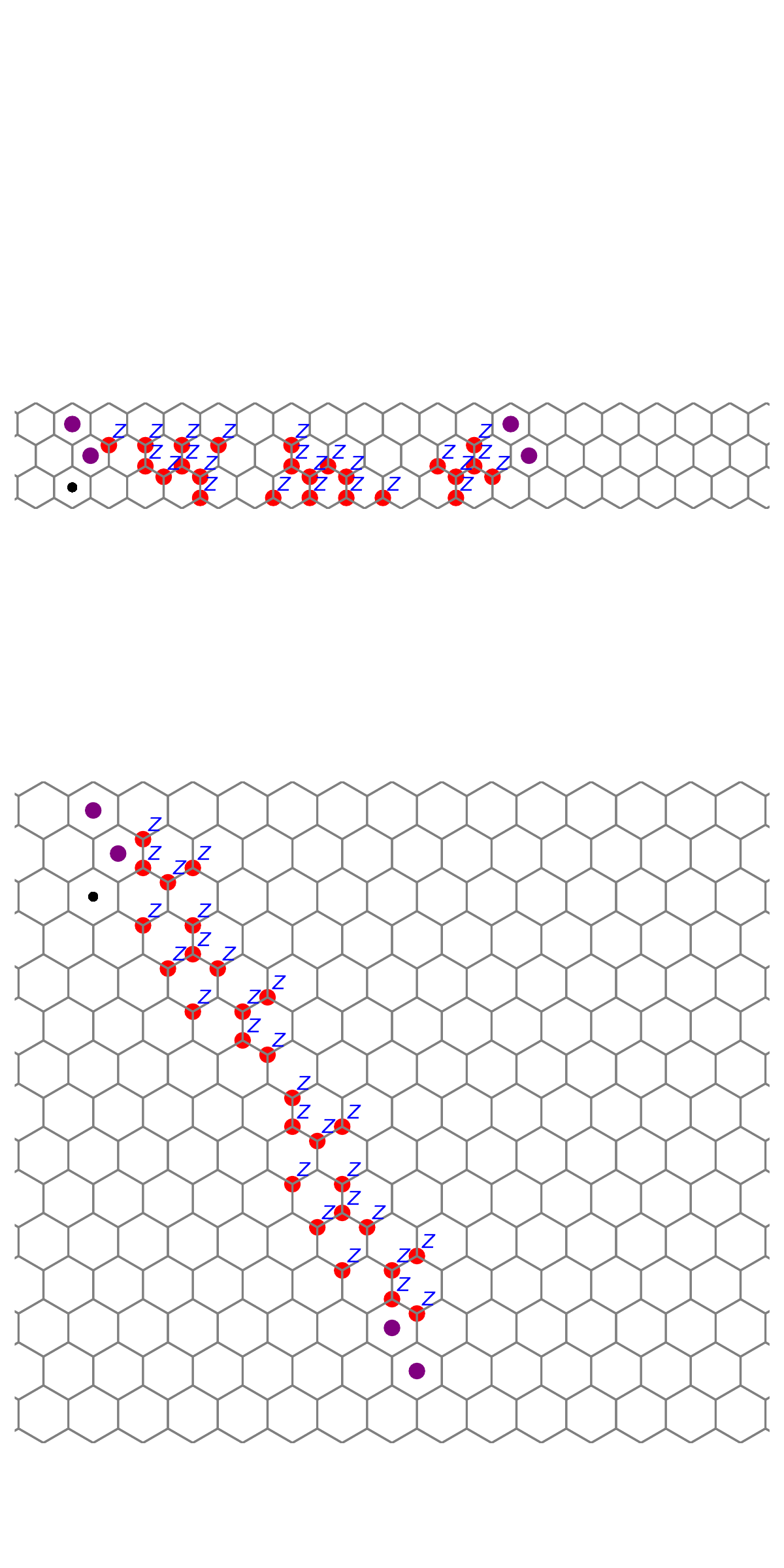}}
    \subfigure[string operator for $v_{12}$ ]{\includegraphics[width=0.48\textwidth]{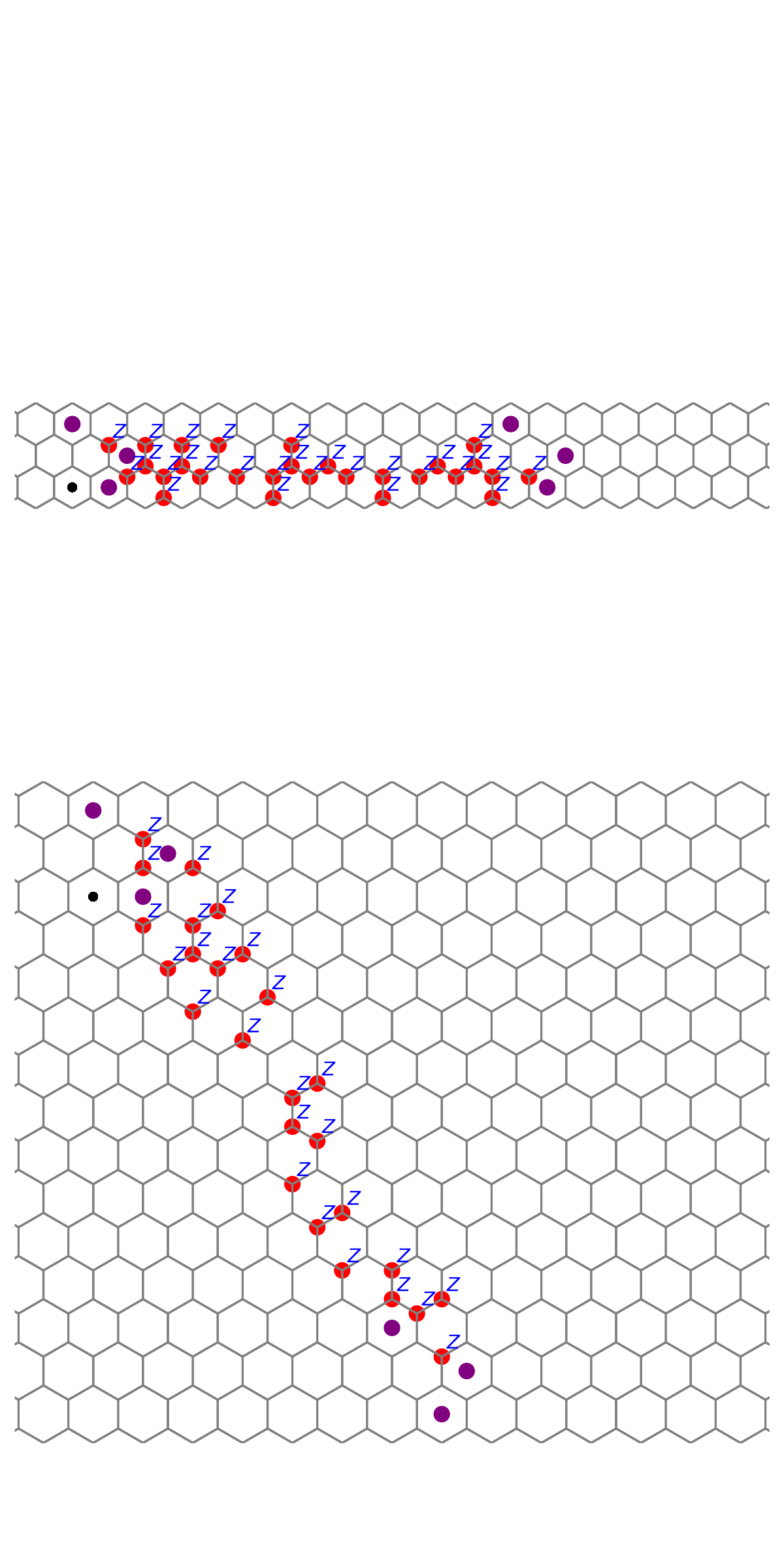}}
    \caption{(Continued) The string operators in the modified color code B (example \textcircled{4}). Subfigure (a) and (b) represent the string operators for anyons $v_{11}$ and $v_{12}$. For each subfigure, the first row represents the anyon string operators that move an anyon along the $x$-direction, and the second row represents the anyon string operators that move an anyon along the $y$-direction.}
    \label{fig:string_example4_6}
    
\end{figure*}

\begin{figure*}
    \centering
    \subfigure[string operator for $v_{13}$ ]{\includegraphics[width=0.48\textwidth]{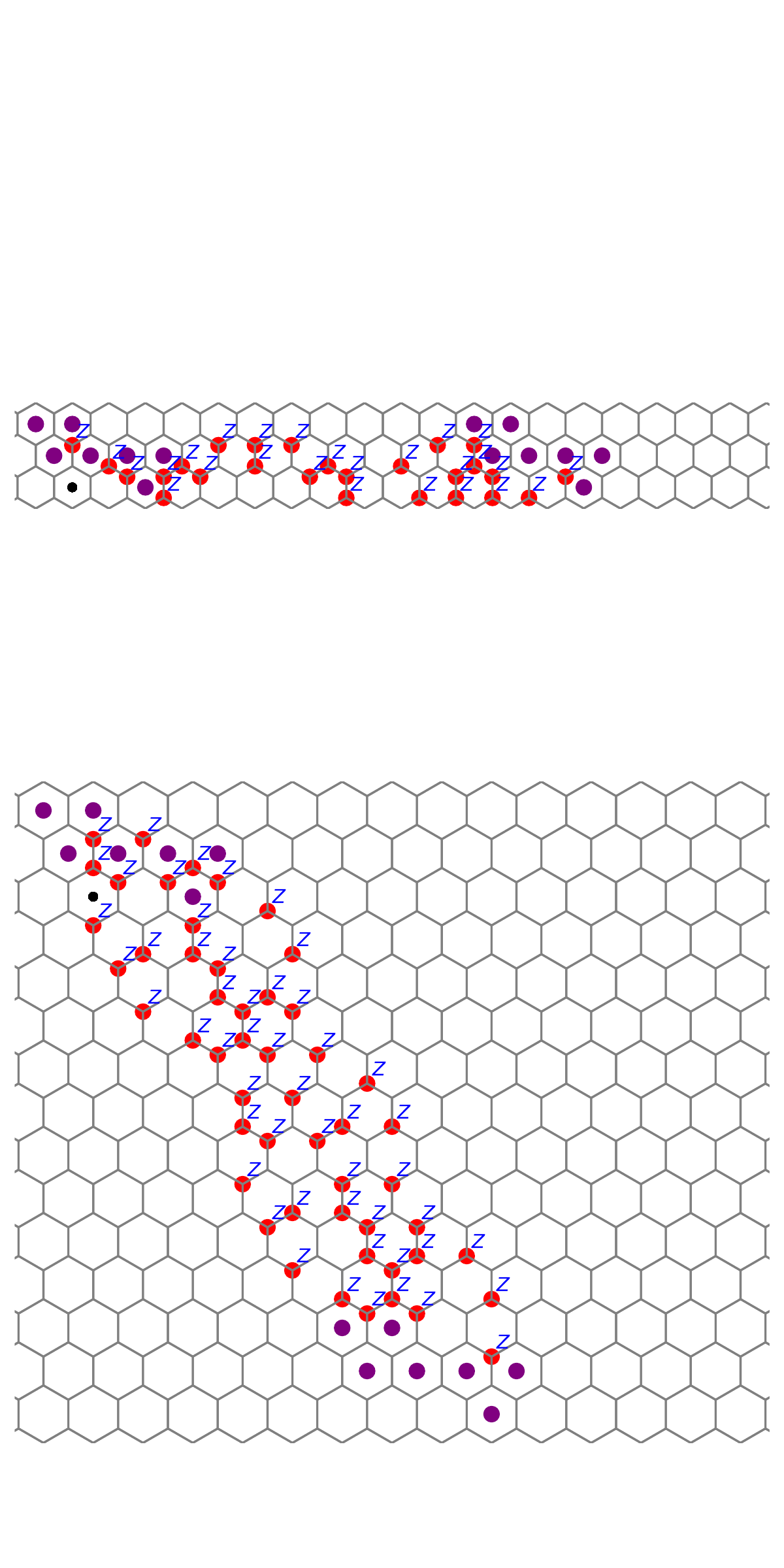}}
    \subfigure[string operator for $v_{14}$ ]{\includegraphics[width=0.48\textwidth]{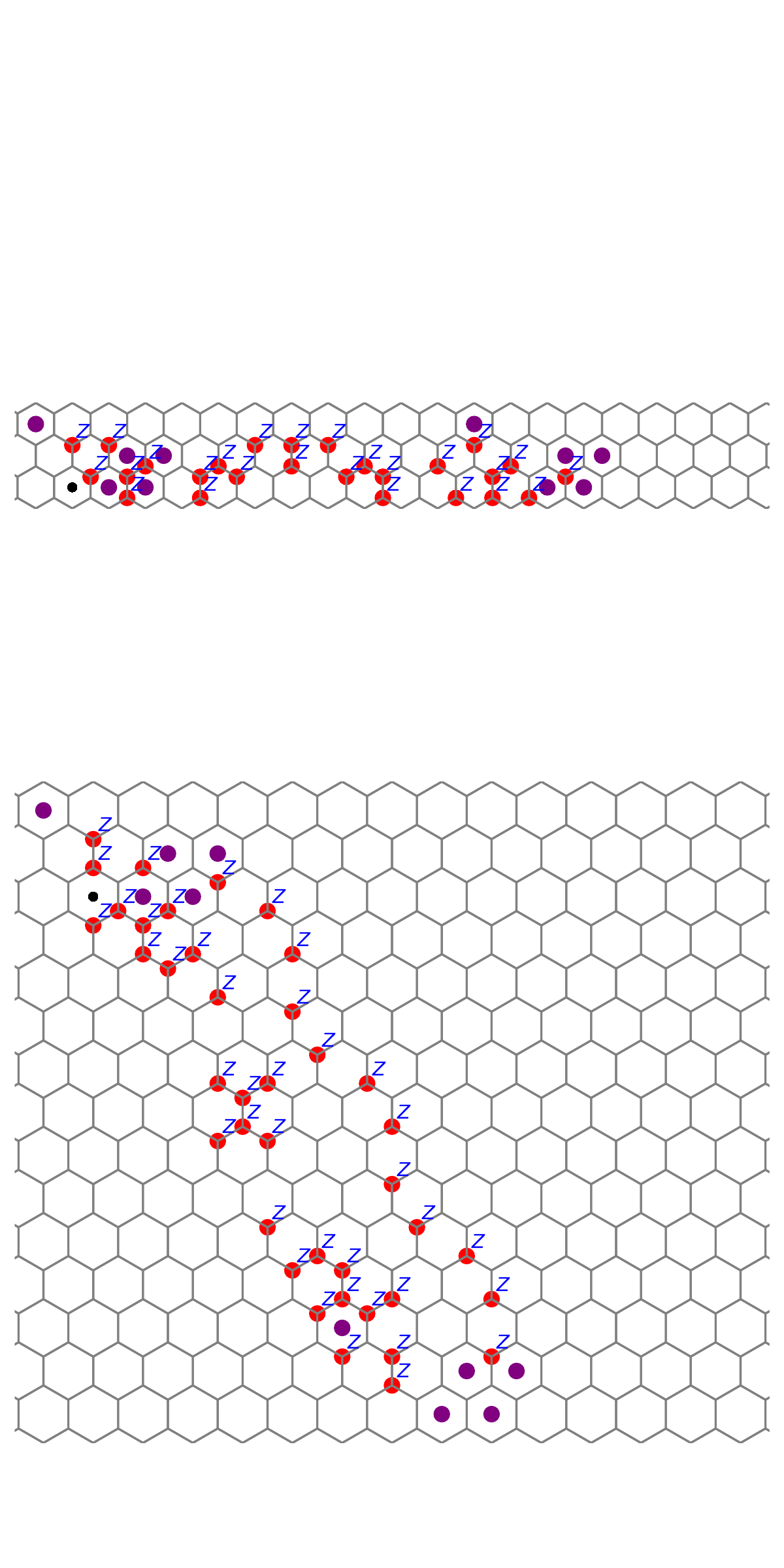}}
    \caption{(Continued) The string operators in the modified color code B (example \textcircled{4}). Subfigures (a) and (b) represent the string operators for anyons $v_{13}$ and $v_{14}$. For each subfigure, the first row represents the anyon string operators that move an anyon along the $x$-direction, and the second row represents the anyon string operators that move an anyon along the $y$-direction.}
    \label{fig:string_example4_7}
\end{figure*}

\begin{figure*}
    \centering
    \subfigure[string operator for $v_{15}$ ]{\includegraphics[width=0.48\textwidth]{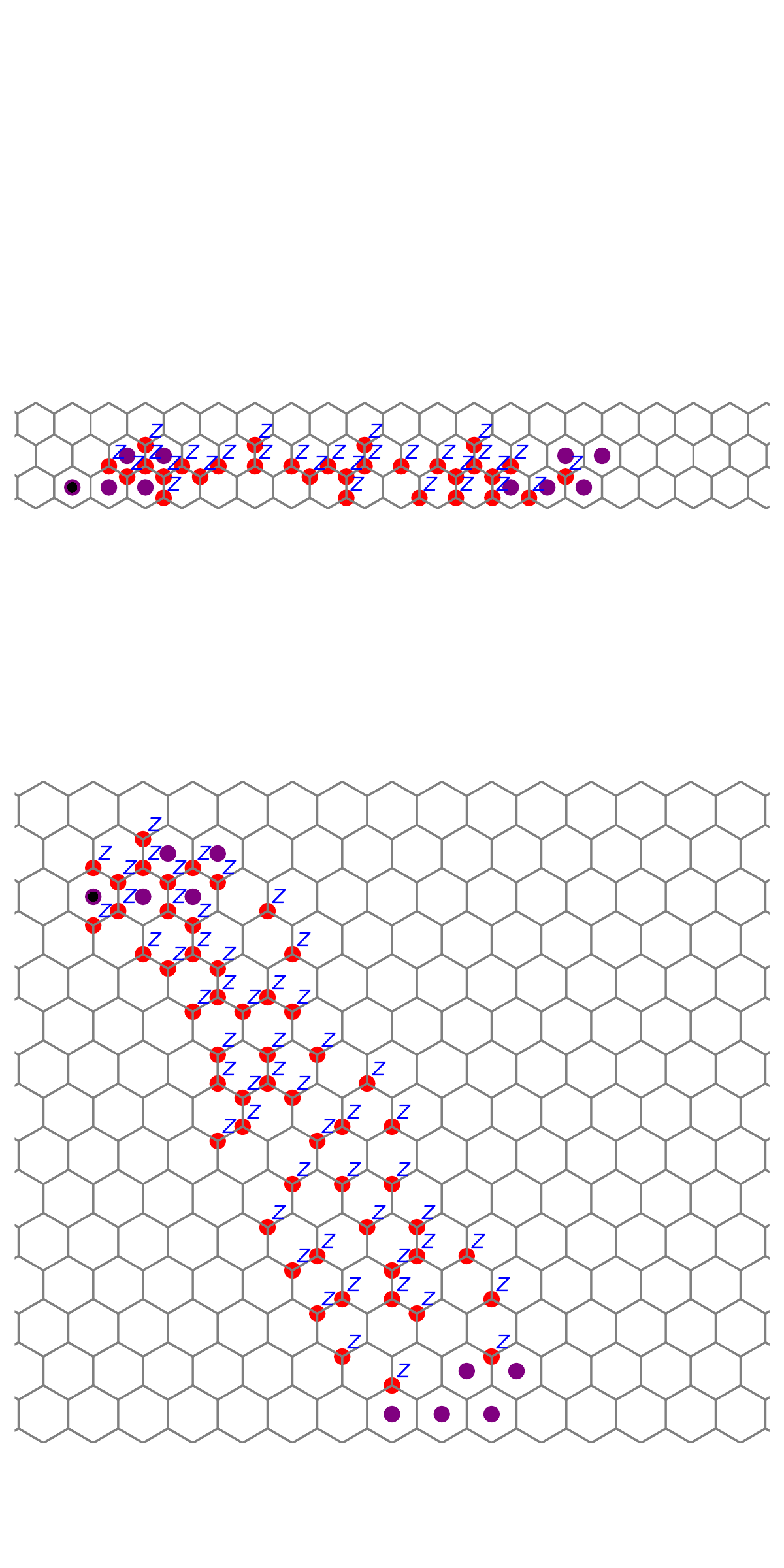}}
    \subfigure[string operator for $v_{16}$ ]{\includegraphics[width=0.48\textwidth]{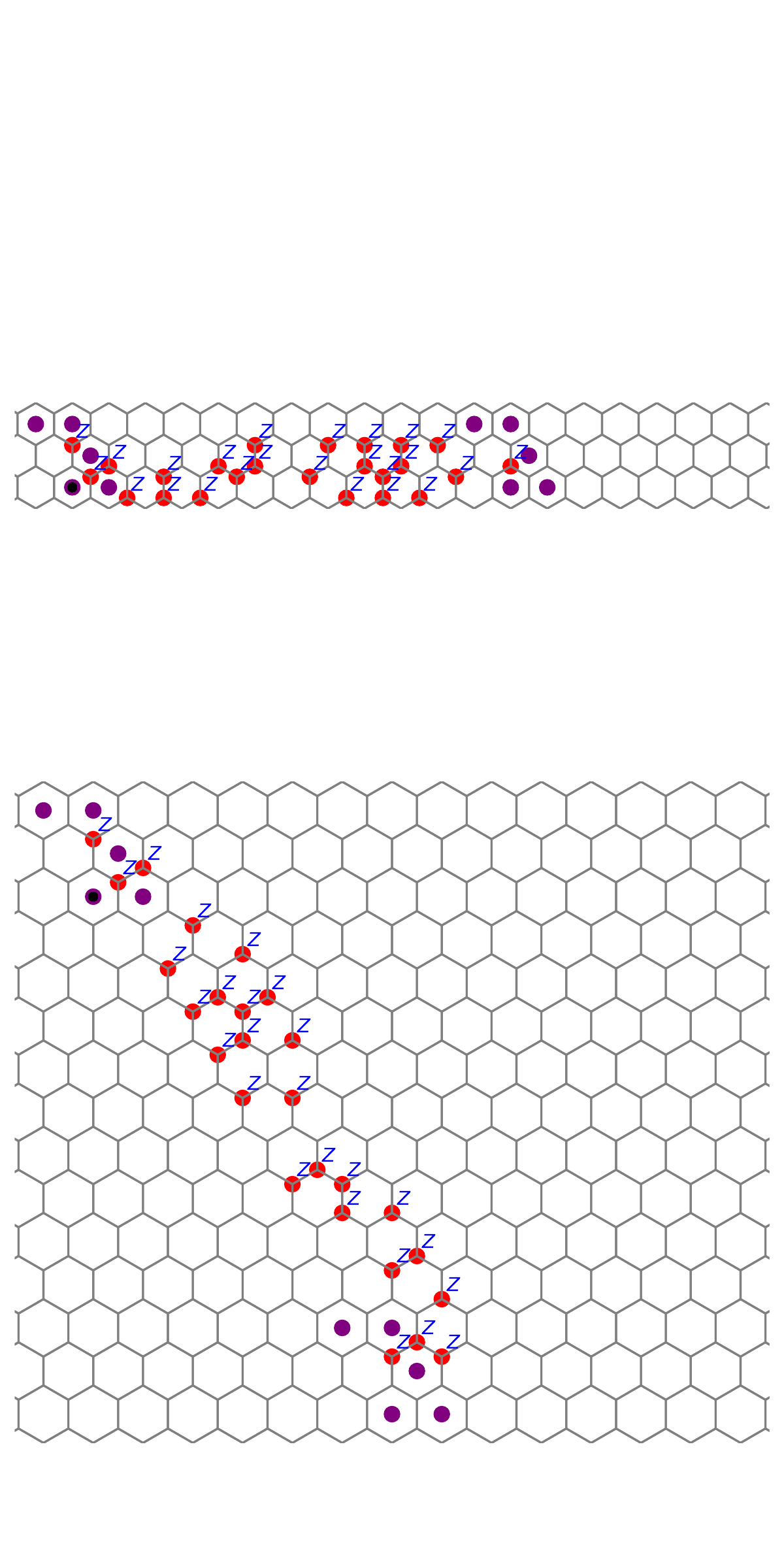}}
    \caption{(Continued) The string operators in the modified color code B (example \textcircled{4}). Subfigure (a) and (b) represent the string operators for anyons $v_{15}$ and $v_{16}$. For each subfigure, the first row represents the anyon string operators that move an anyon along the $x$-direction, and the second row represents the anyon string operators that move an anyon along the $y$-direction.}
    \label{fig:string_example4_8}
\end{figure*}

\begin{figure*}[htb]
    \centering
    \subfigure[string operator for $v_1$ ]{\includegraphics[width=0.24\textwidth]{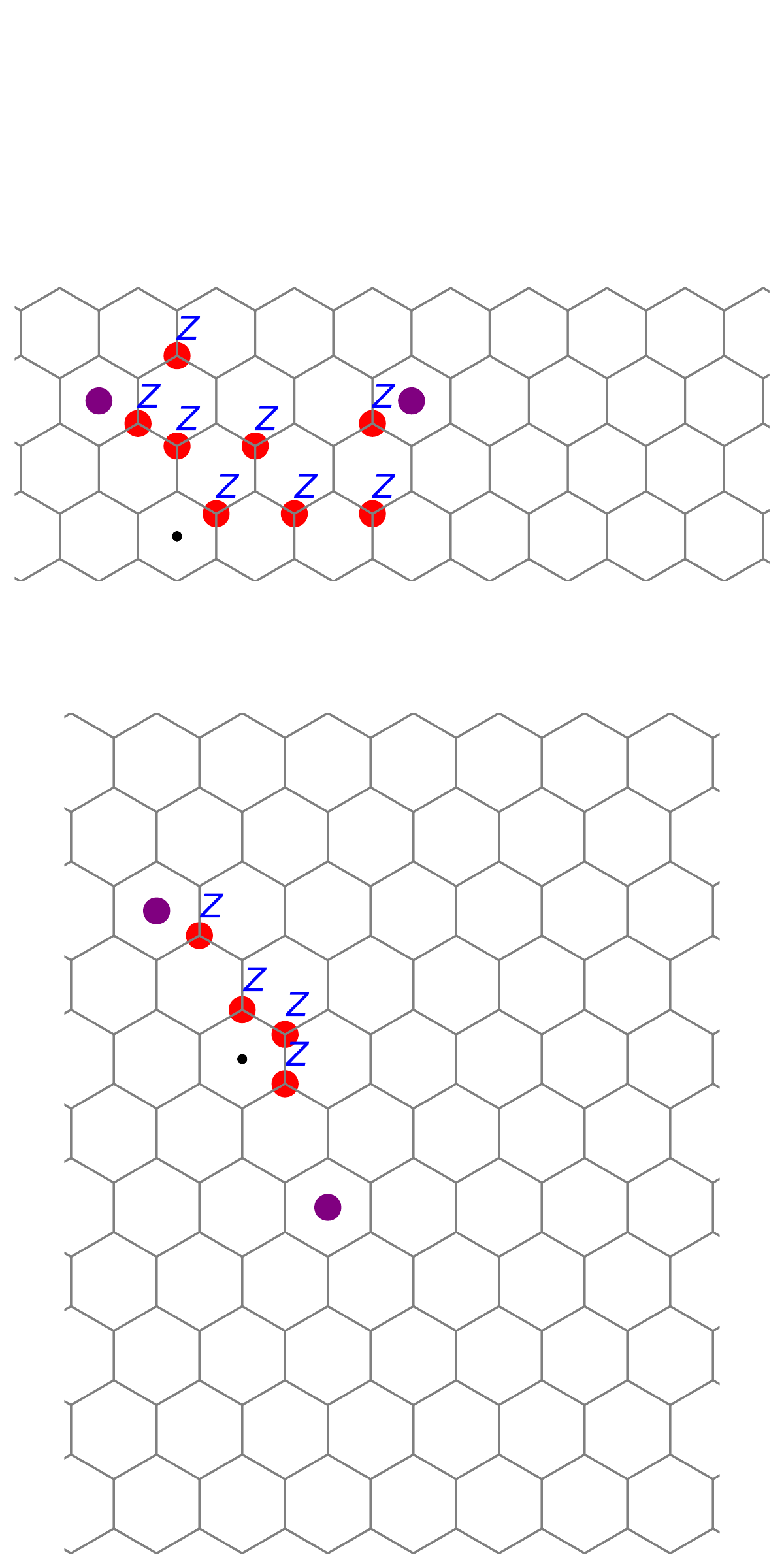}}
    \subfigure[string operator for $v_2$ ]{\includegraphics[width=0.24\textwidth]{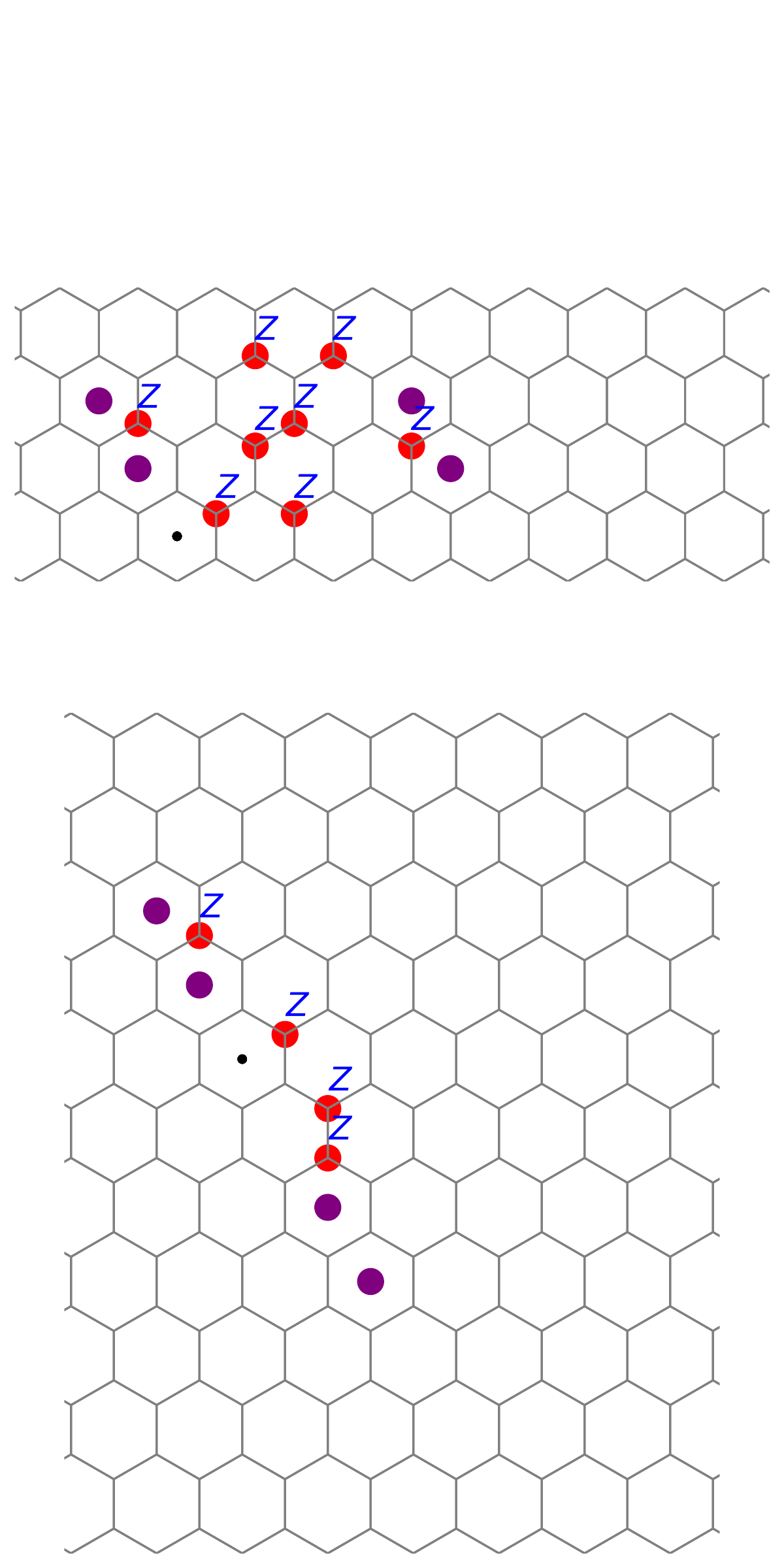}}
    \subfigure[string operator for $v_3$ ]{\includegraphics[width=0.24\textwidth]{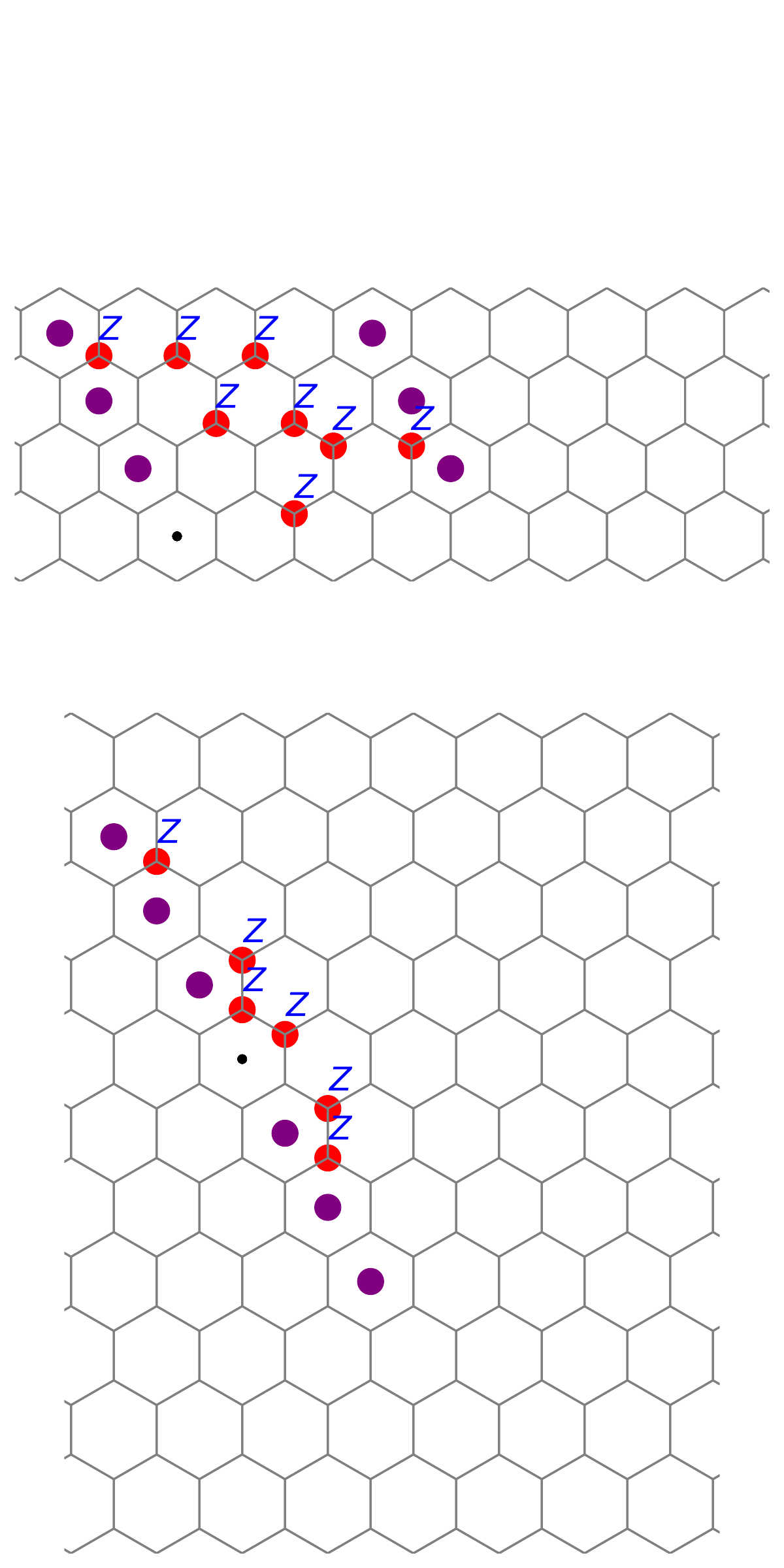}}
    \subfigure[string operator for $v_4$] {\includegraphics[width=0.24\textwidth]{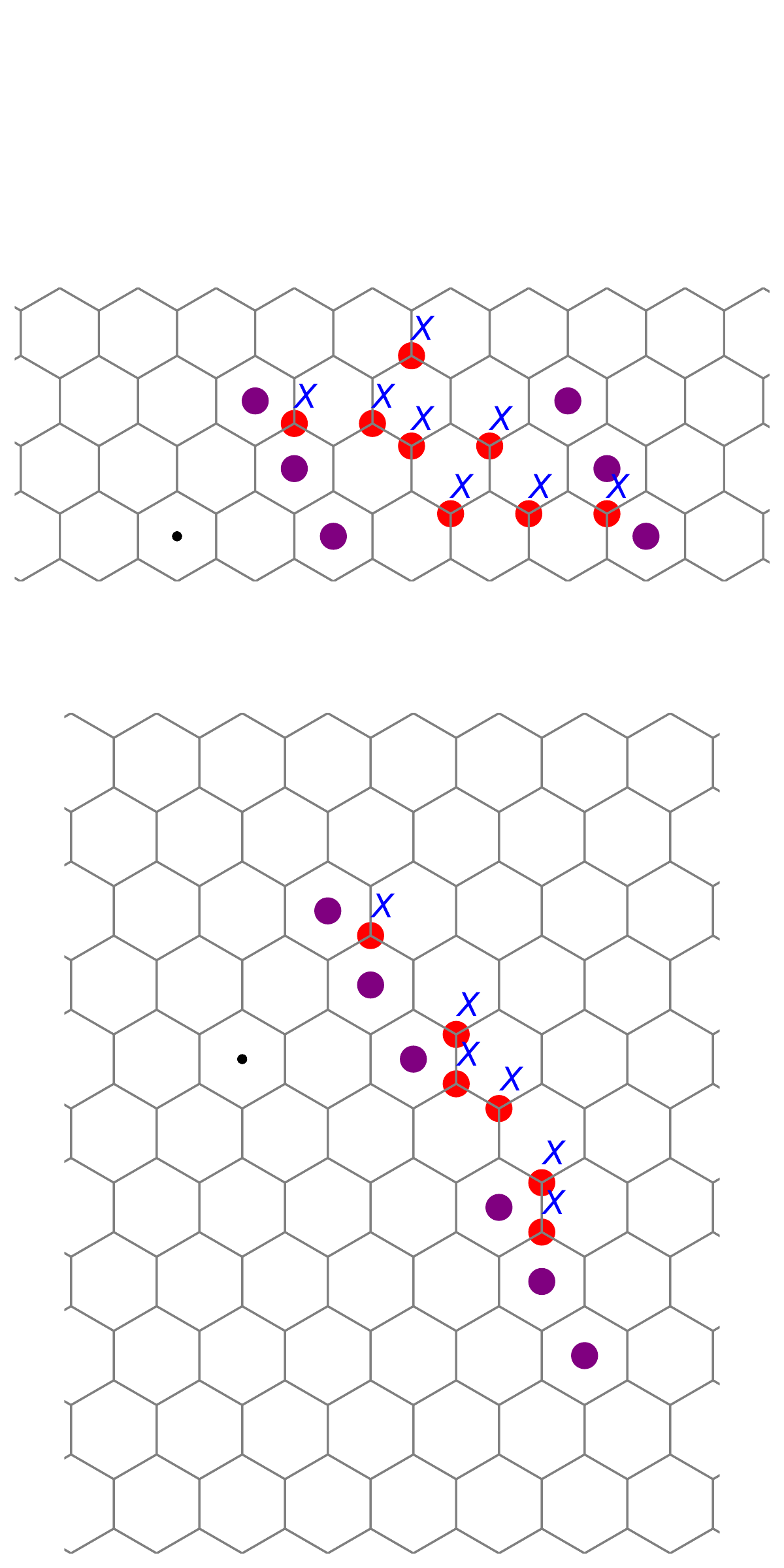}}
    \subfigure[string operator for $v_5$ ]{\includegraphics[width=0.24\textwidth]{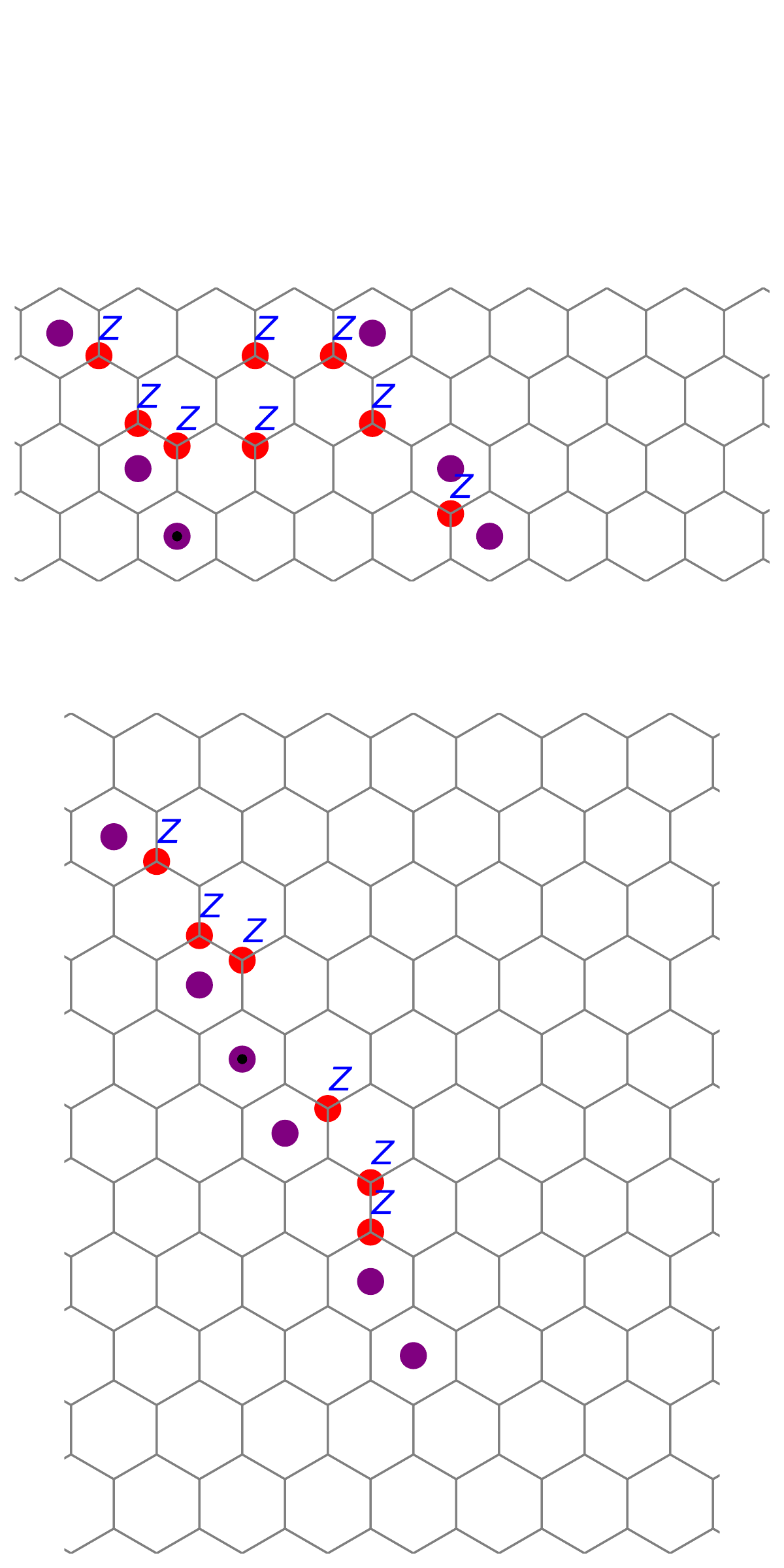}}
    \subfigure[string operator for $v_6$ ]{\includegraphics[width=0.24\textwidth]{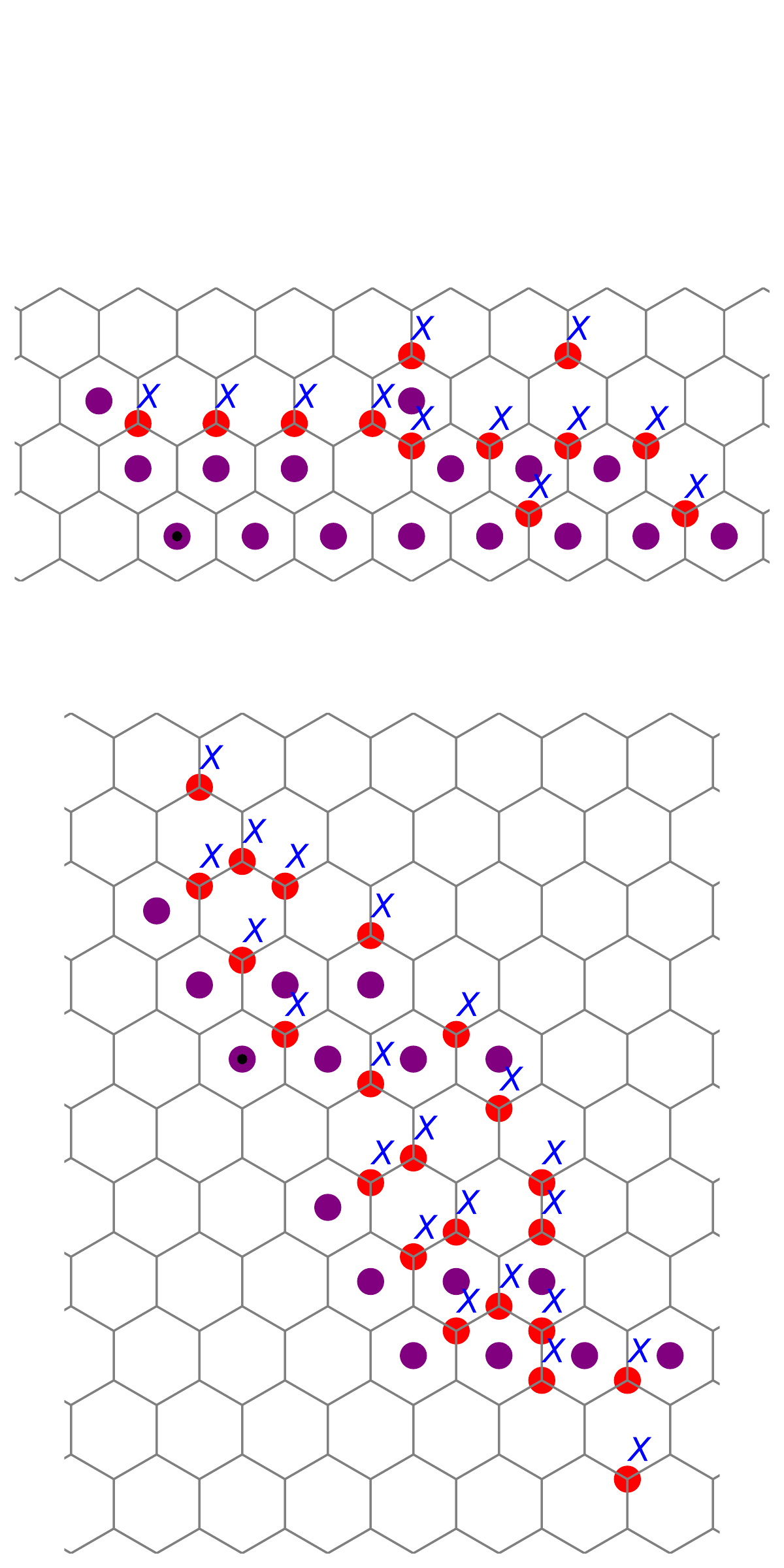}}
    \subfigure[string operator for $v_7$ ]{\includegraphics[width=0.24\textwidth]{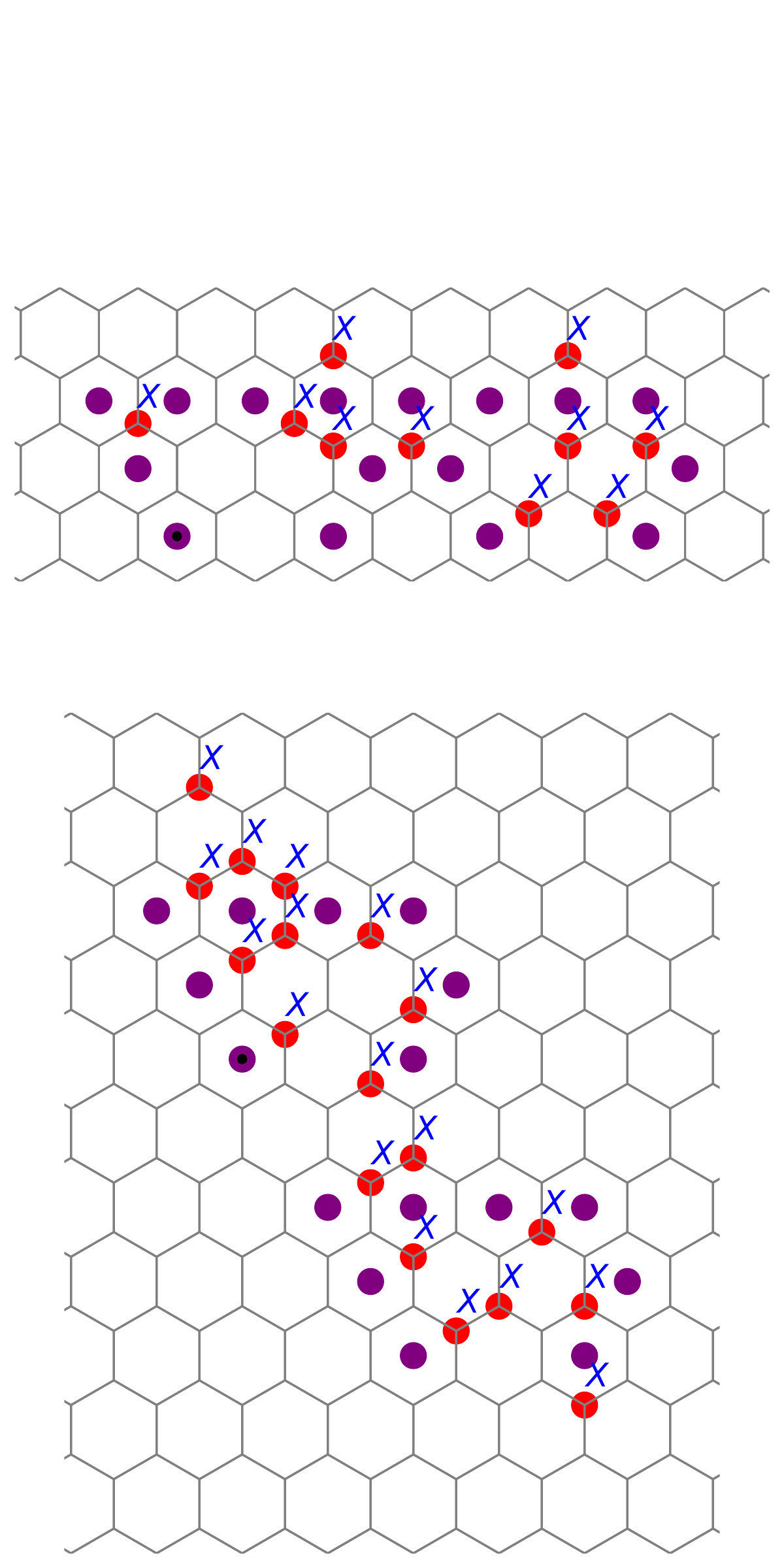}}
    \subfigure[string operator for $v_8$ ]{\includegraphics[width=0.24\textwidth]{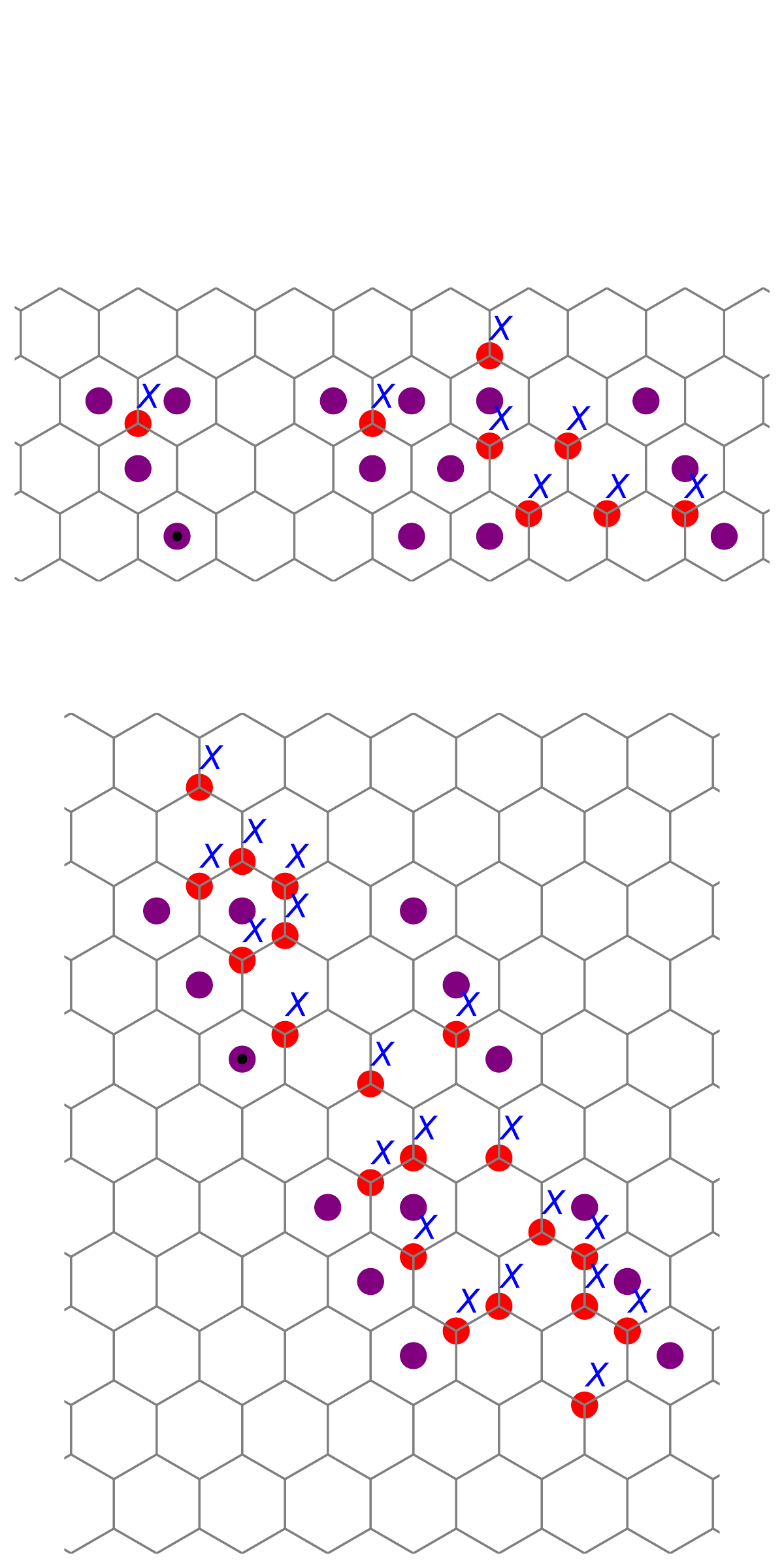}}
    \caption{The string operators in the modified color code C (example \textcircled{5}). Subfigures (a)$\sim$ (h) represent the string operators for anyons $v_1,..., v_8$. For each subfigure, the first row represents the anyon string operators that move an anyon along the $x$-direction, and the second row represents the anyon string operators that move an anyon along the $y$-direction.}
\label{fig:string_example5}
\end{figure*}

\begin{figure*}[htb]
    \centering
    \subfigure[String operator for $v_1$ ]{\includegraphics[width=0.24\textwidth]{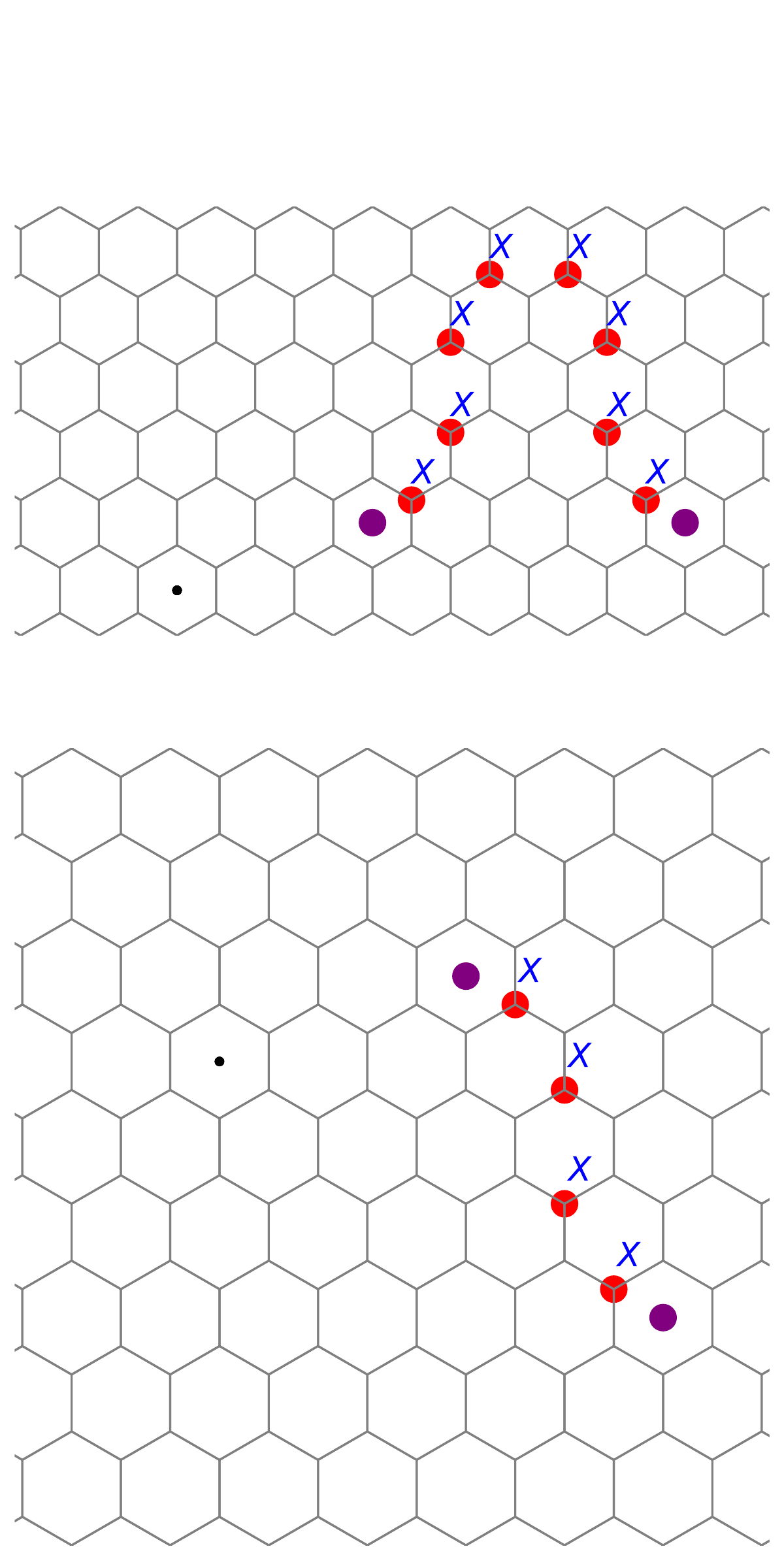}}
    \subfigure[String operator for $v_2$ ]{\includegraphics[width=0.24\textwidth]{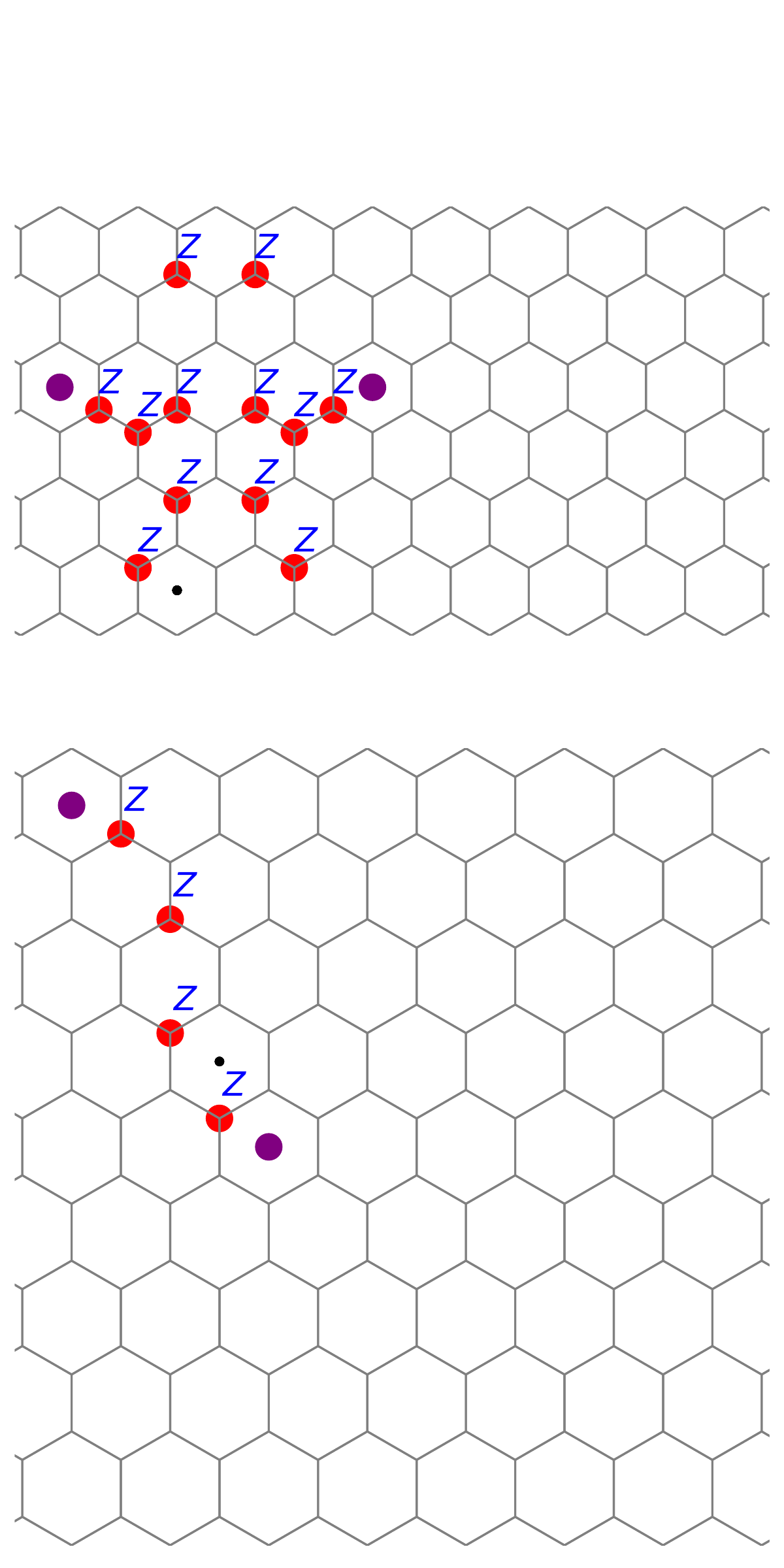}}
    \subfigure[String operator for $v_3$ ]{\includegraphics[width=0.24\textwidth]{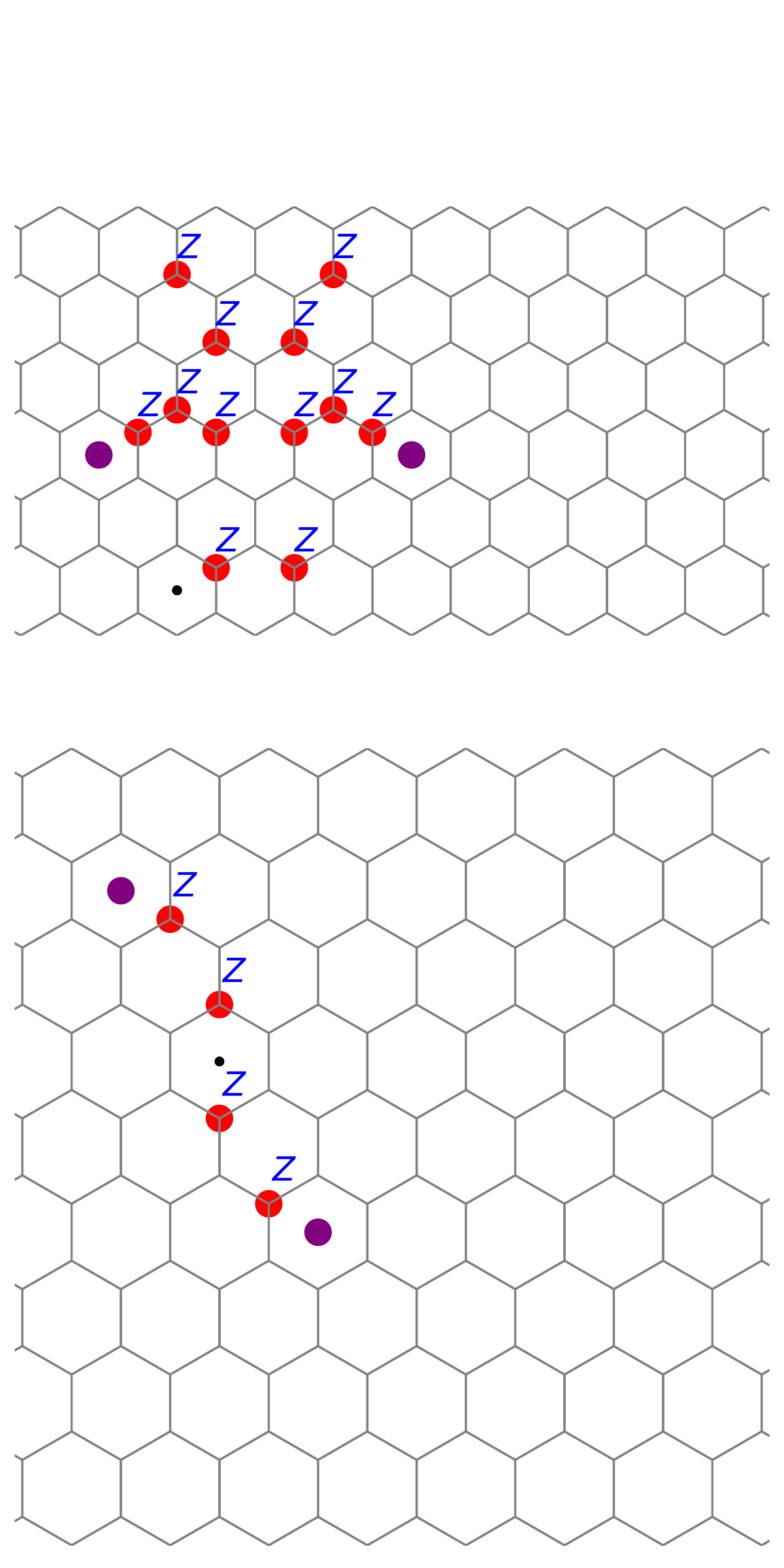}}
    \subfigure[string operator for $v_4$ ]{\includegraphics[width=0.24\textwidth]{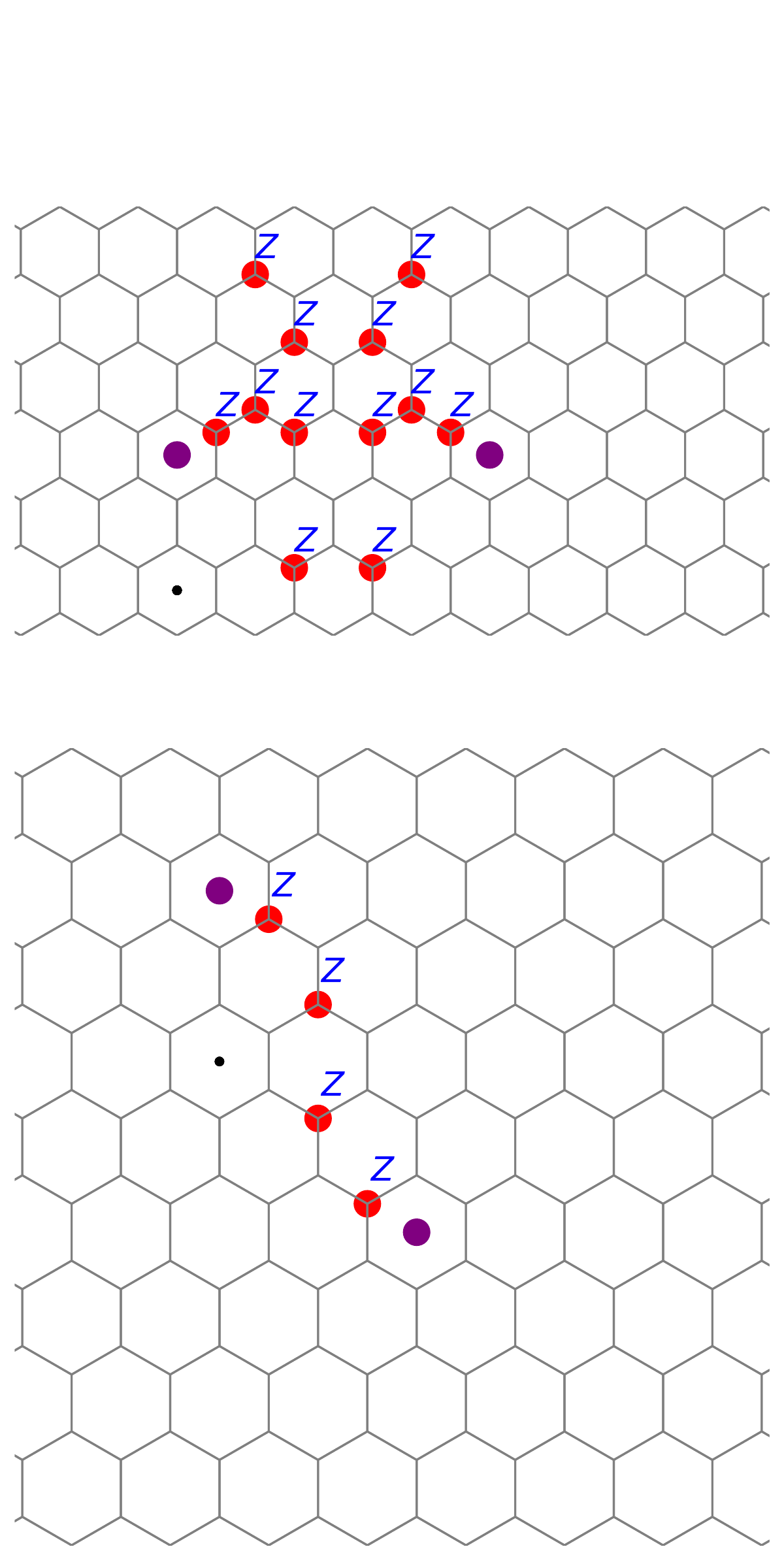}}
    \subfigure[string operator for $v_5$ ]{\includegraphics[width=0.24\textwidth]{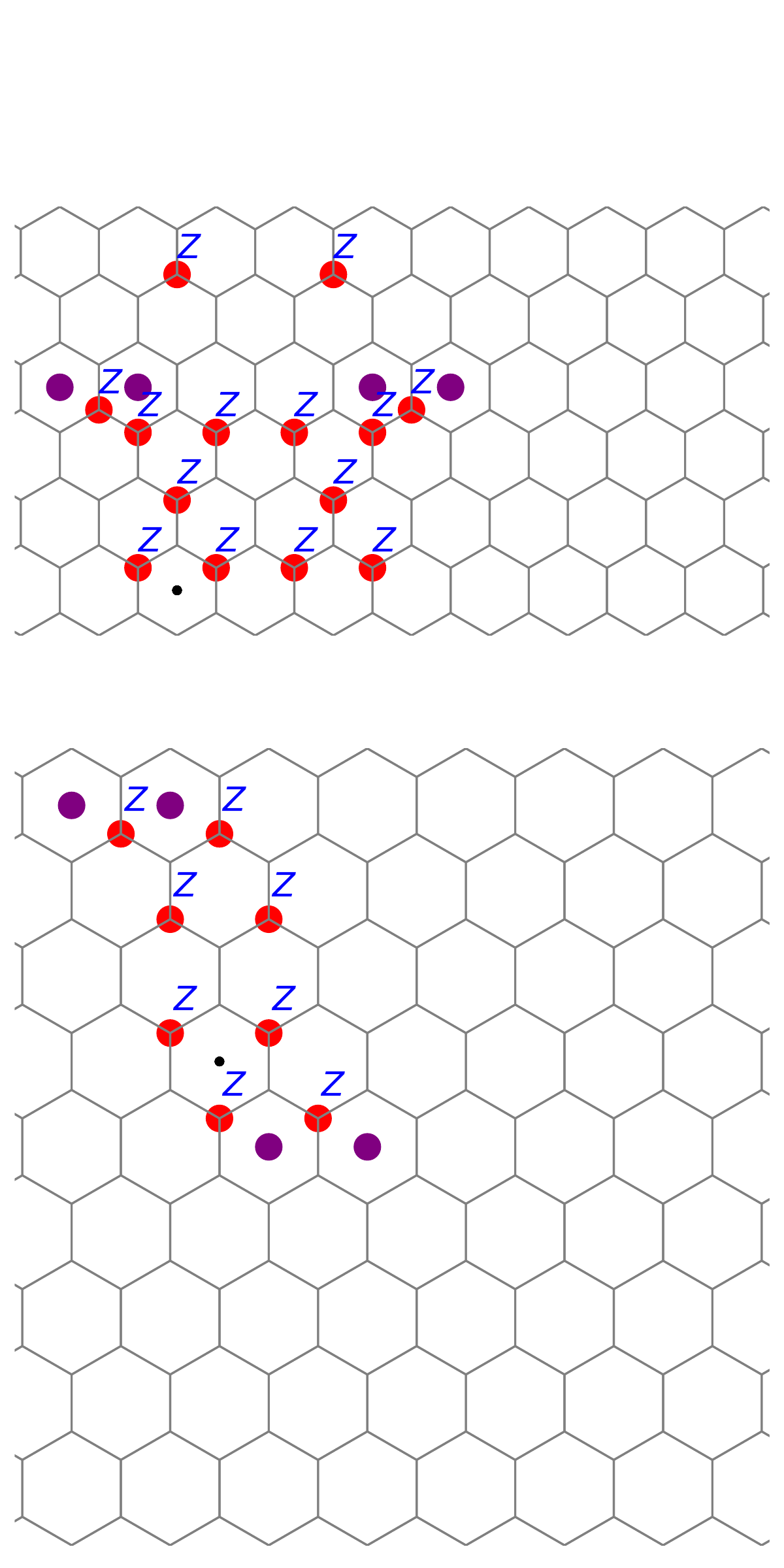}}
    \subfigure[string operator for $v_6$ ]{\includegraphics[width=0.24\textwidth]{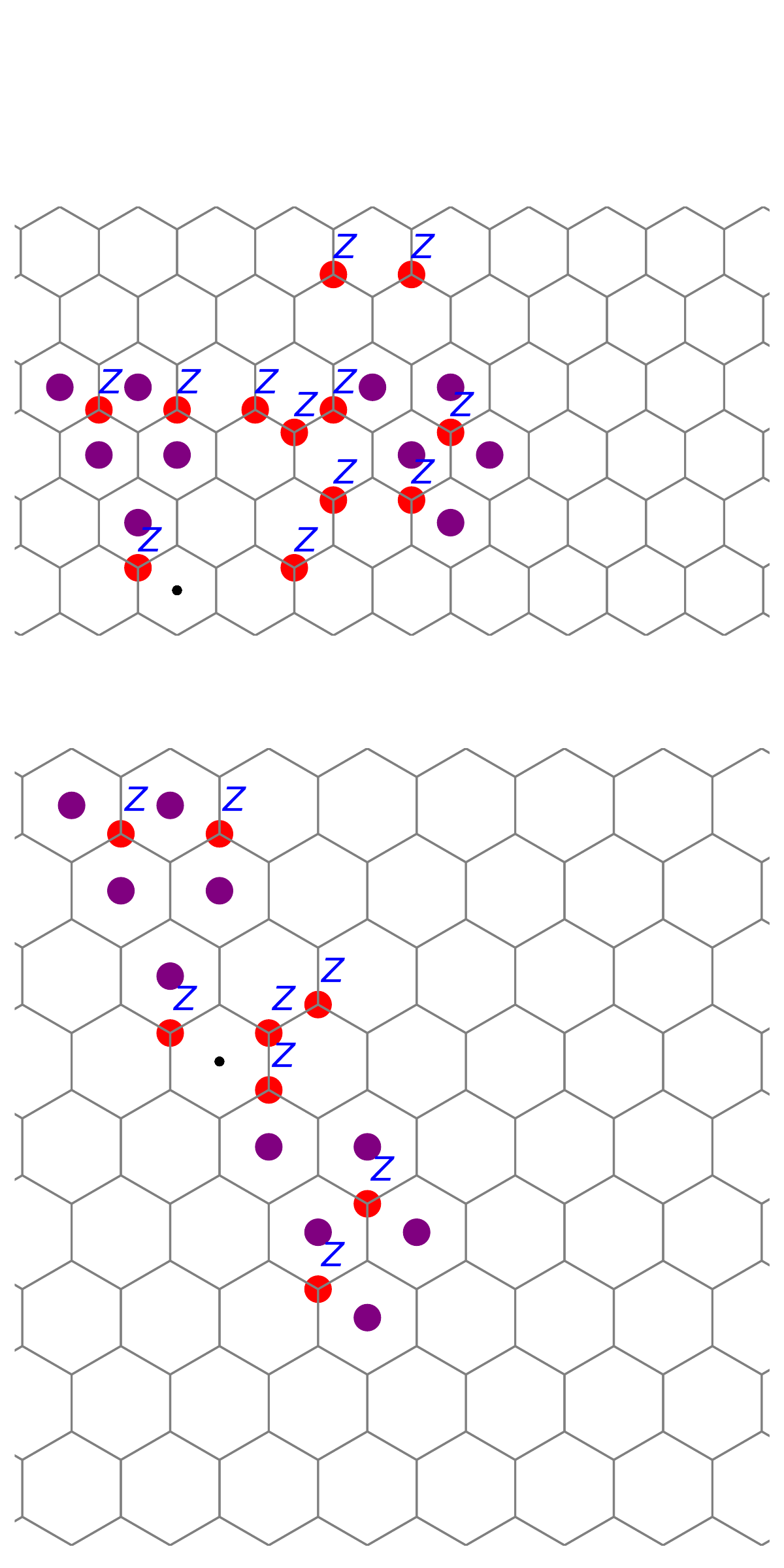}}
    \subfigure[string operator for $v_7$ ]{\includegraphics[width=0.24\textwidth]{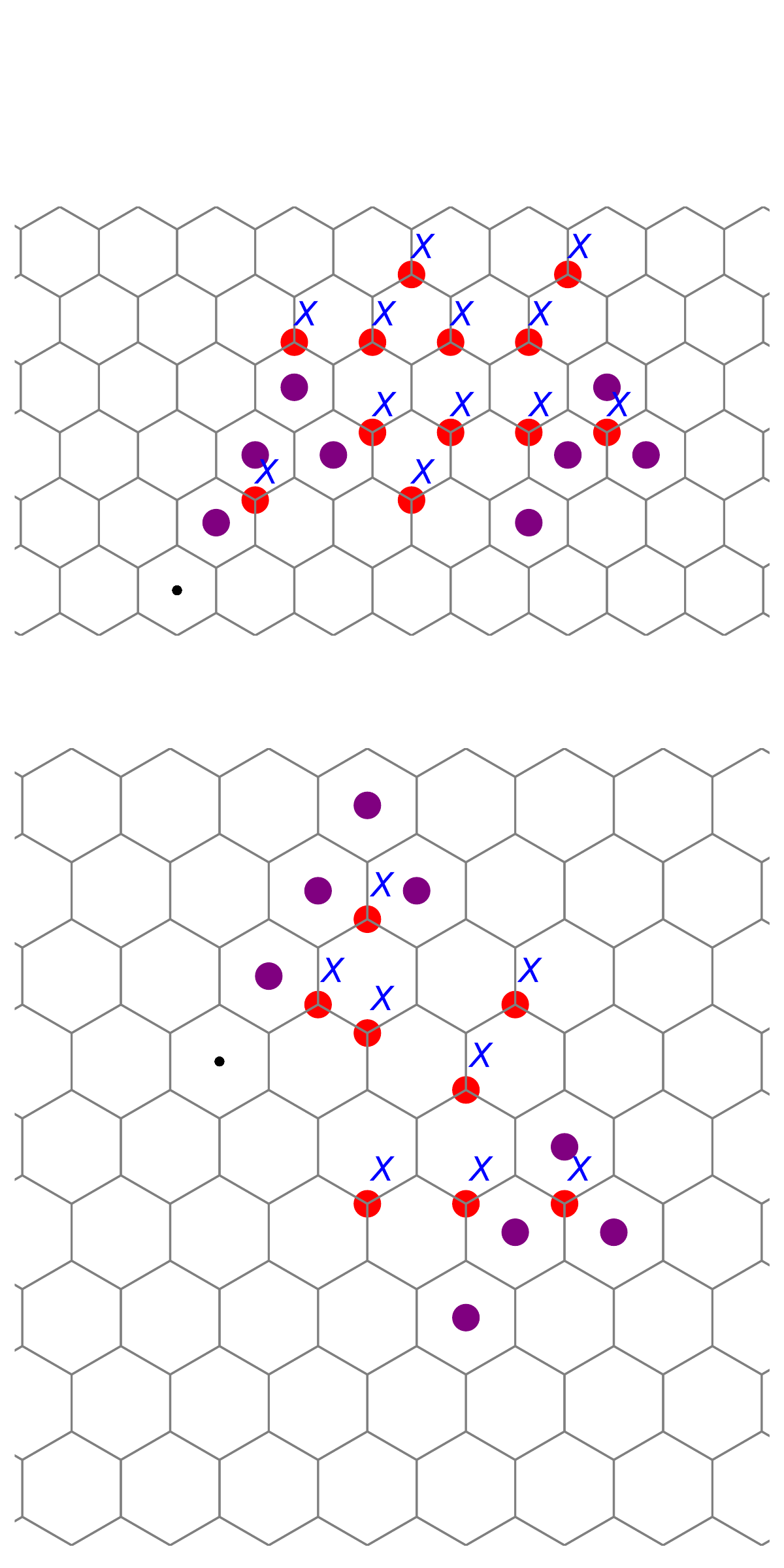}}
    \subfigure[string operator for $v_8$ ]{\includegraphics[width=0.24\textwidth]{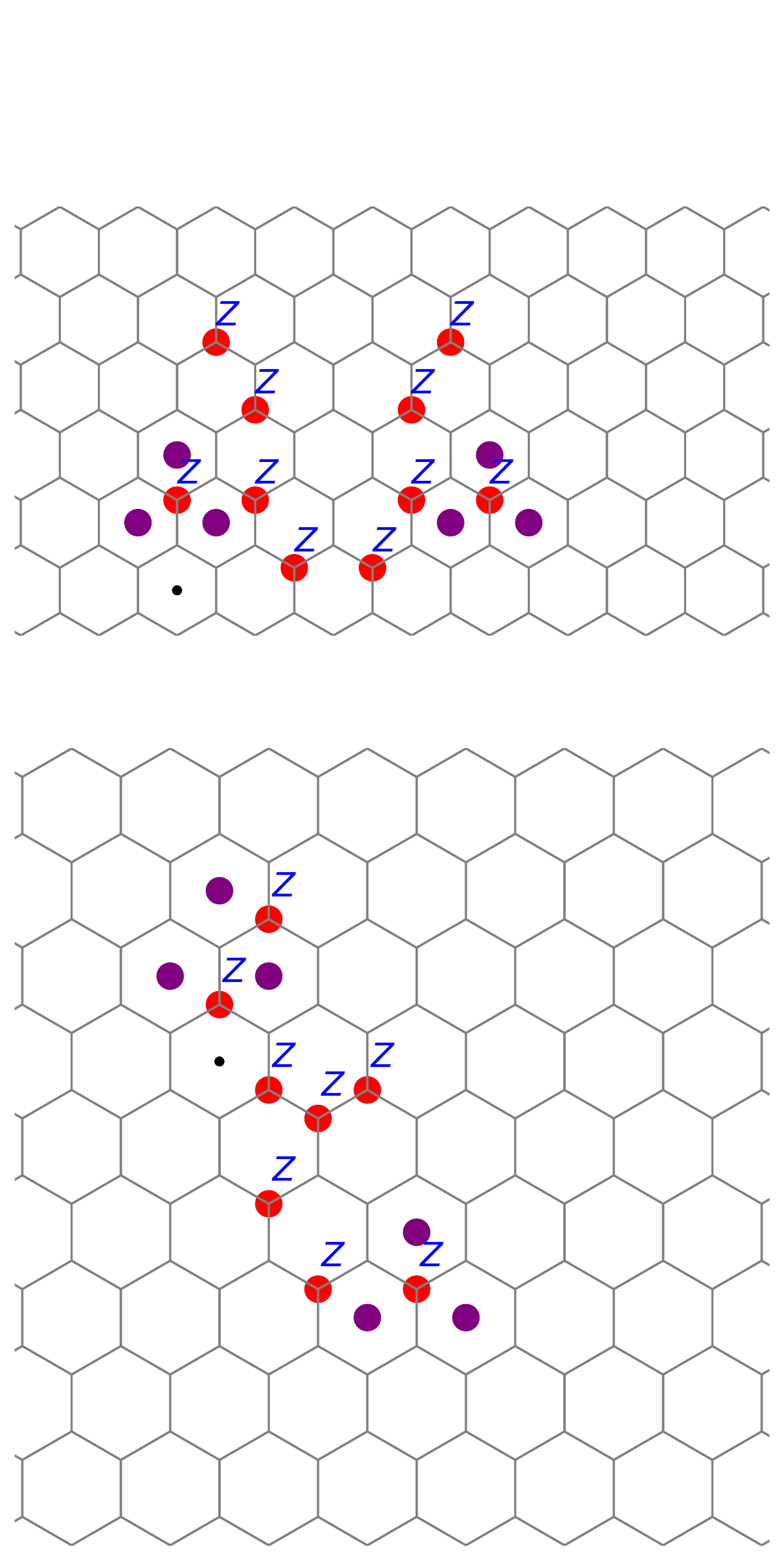}}
    \caption{The string operators in the modified color code D (example \textcircled{6}). Subfigures (a)$\sim$ (h) represent the string operators for anyons $v_1,..., v_8$. For each subfigure, the first row represents the anyon string operators that move an anyon along the $x$-direction, and the second row represents the anyon string operators that move an anyon along the $y$-direction.}
\label{fig:string_example6_1}
\end{figure*}

\begin{figure*}
\centering
    \subfigure[string operator for $v_9$ ]{\includegraphics[width=0.24\textwidth]{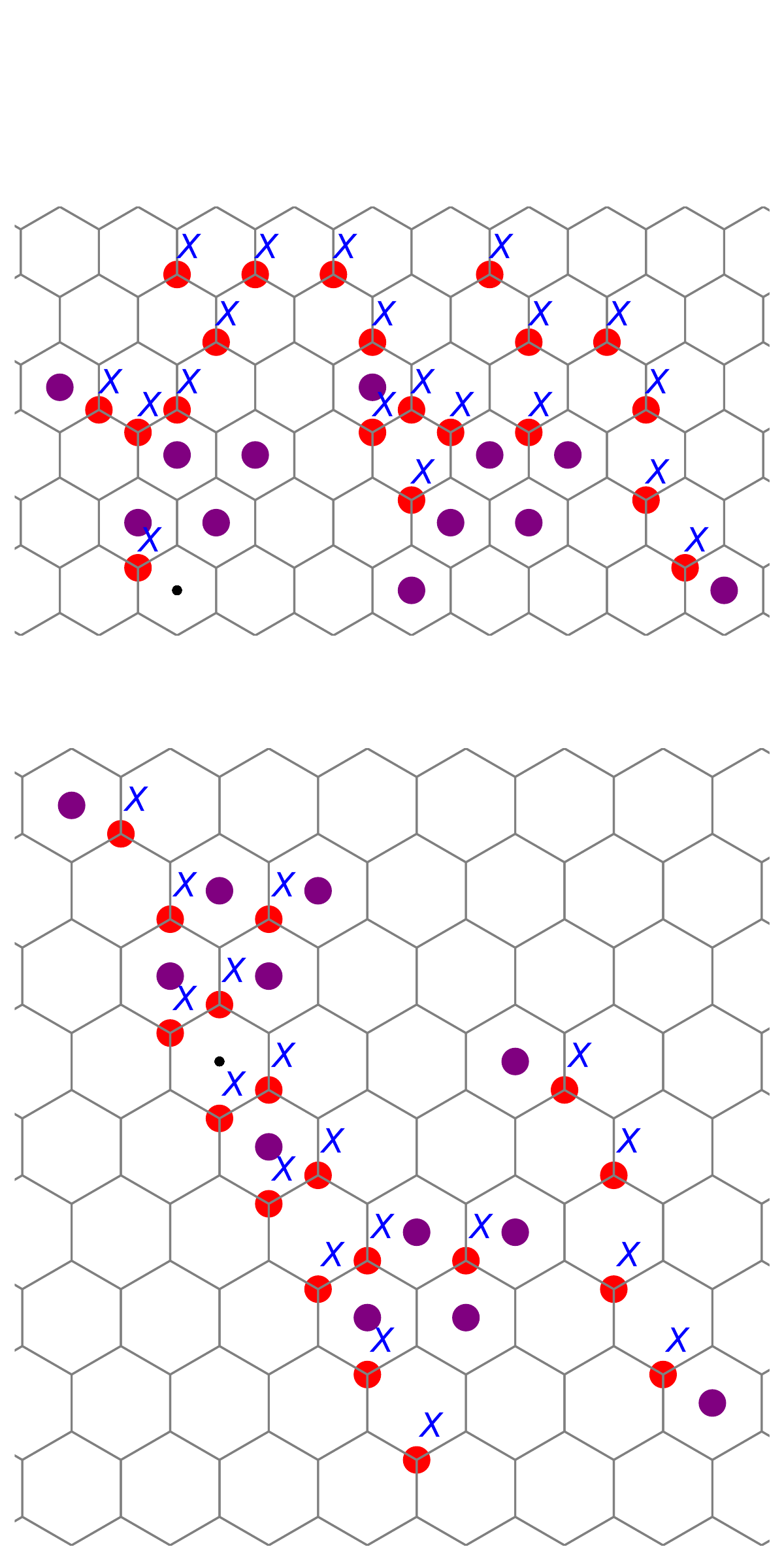}}
    \subfigure[string operator for $v_{10}$ ]{\includegraphics[width=0.24\textwidth]{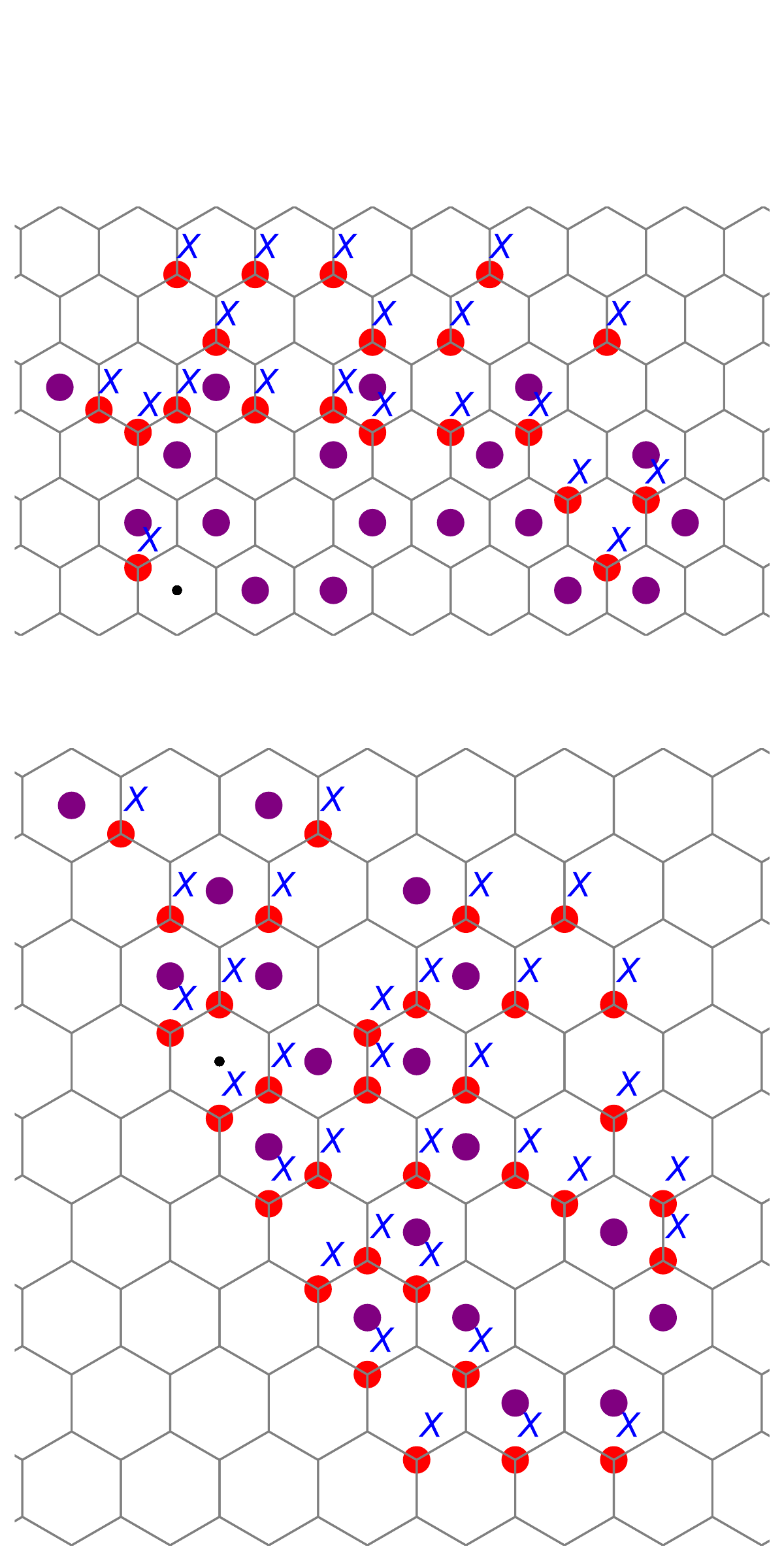}}
    \subfigure[string operator for $v_{11}$ ]{\includegraphics[width=0.24\textwidth]{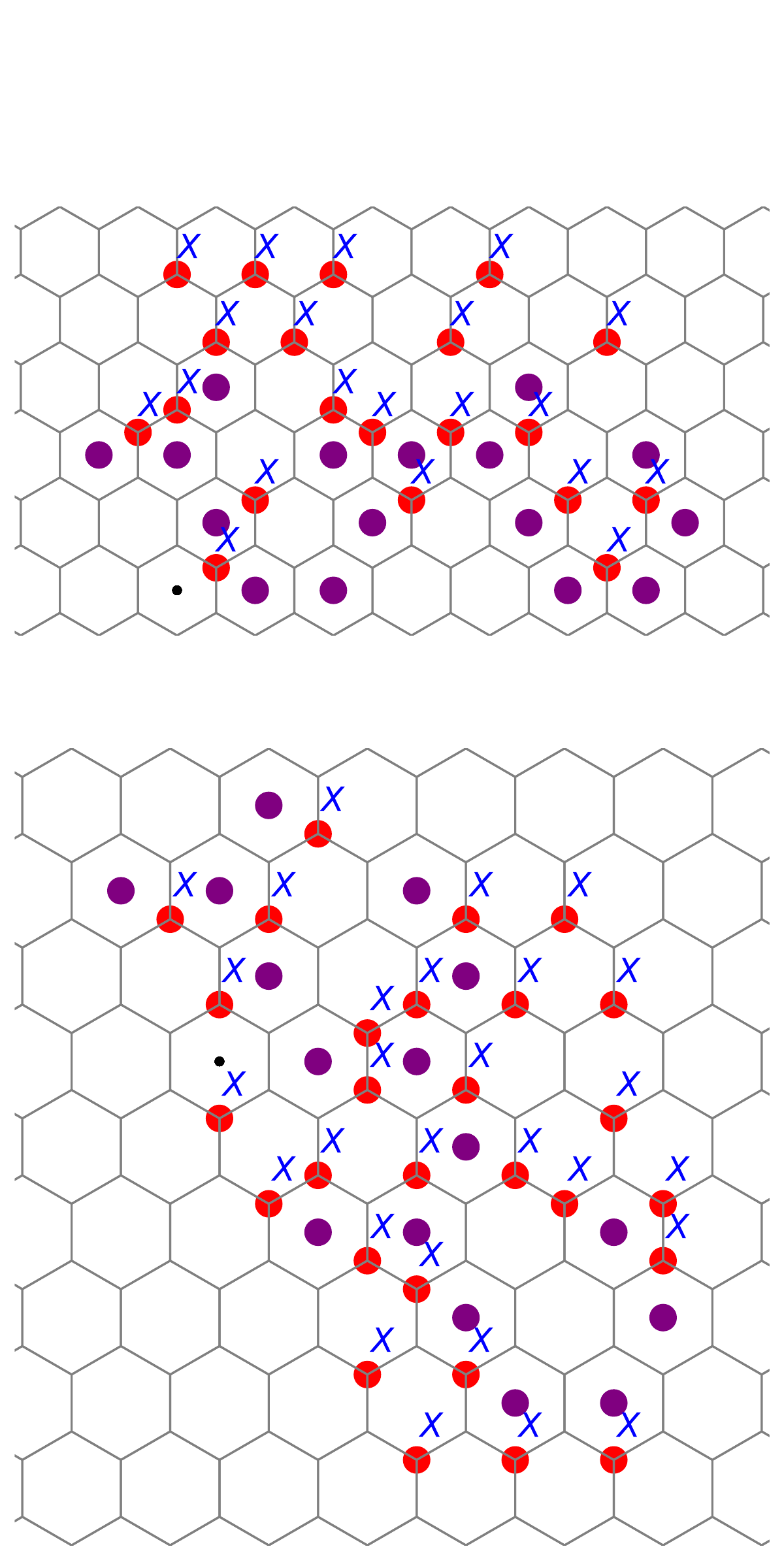}}
    \subfigure[string operator for $v_{12}$ ]{\includegraphics[width=0.24\textwidth]{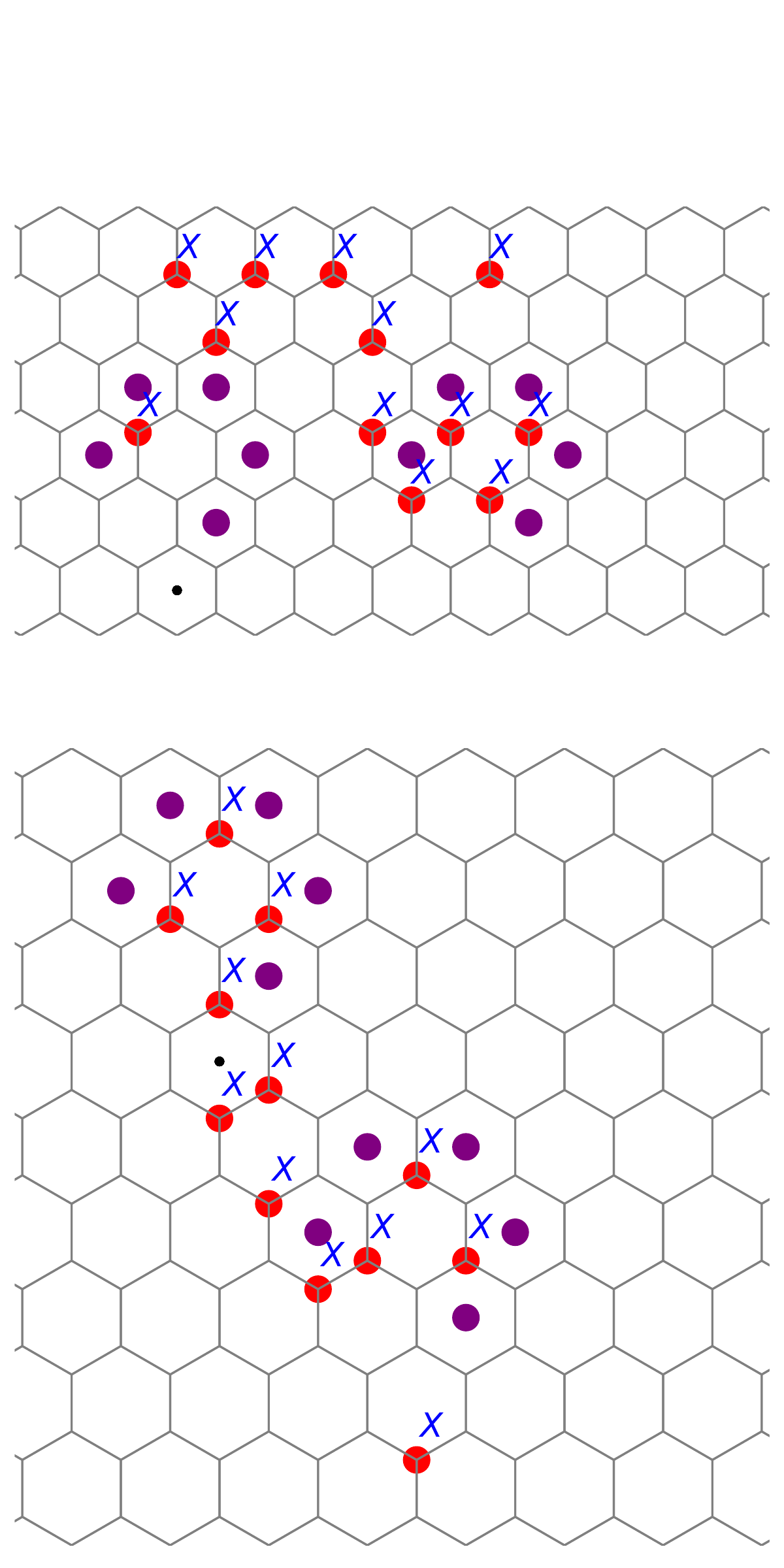}}
    \caption{(Continued) The string operators in the modified color code D (example \textcircled{6}). Subfigures (a)$\sim$ (d) represent the string operators for anyons $v_9,..., v_{12}$. For each subfigure, the first row represents the anyon string operators that move an anyon along the $x$-direction, and the second row represents the anyon string operators that move an anyon along the $y$-direction.}
\label{fig:string_example6_2}
\end{figure*}

\subsection{CSS code induced from double semion code}

For the CSS code induced from double semion code, we obtain four basis anyons at $(1-x^1)$ and $(1-y^{-1})$ (also for greater $n_x, n_y$). The string operators for $n_x, n_y=5$ are shown in Fig.~\ref{fig:string_css_double_semion1} and \ref{fig:string_css_double_semion2}.
\begin{figure*}[htb]
    \centering
    \subfigure[The string operator of anyon $v_1$ in the CSS code induced from double semion code.]{\includegraphics[width=0.48\textwidth]{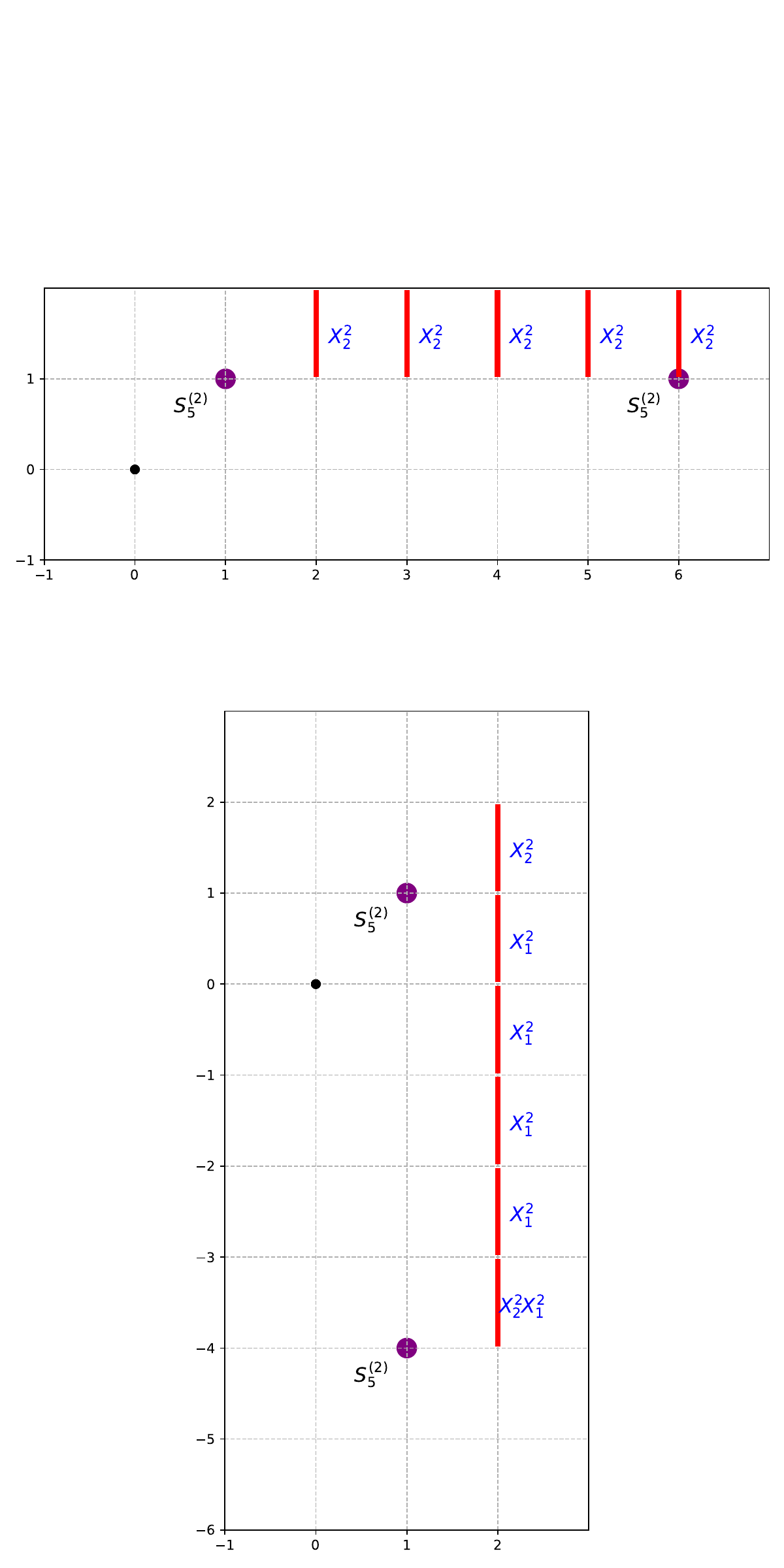}}
    \subfigure[The string operator of anyon $v_2$ in the CSS code induced from double semion code.]{\includegraphics[width=0.48\textwidth]{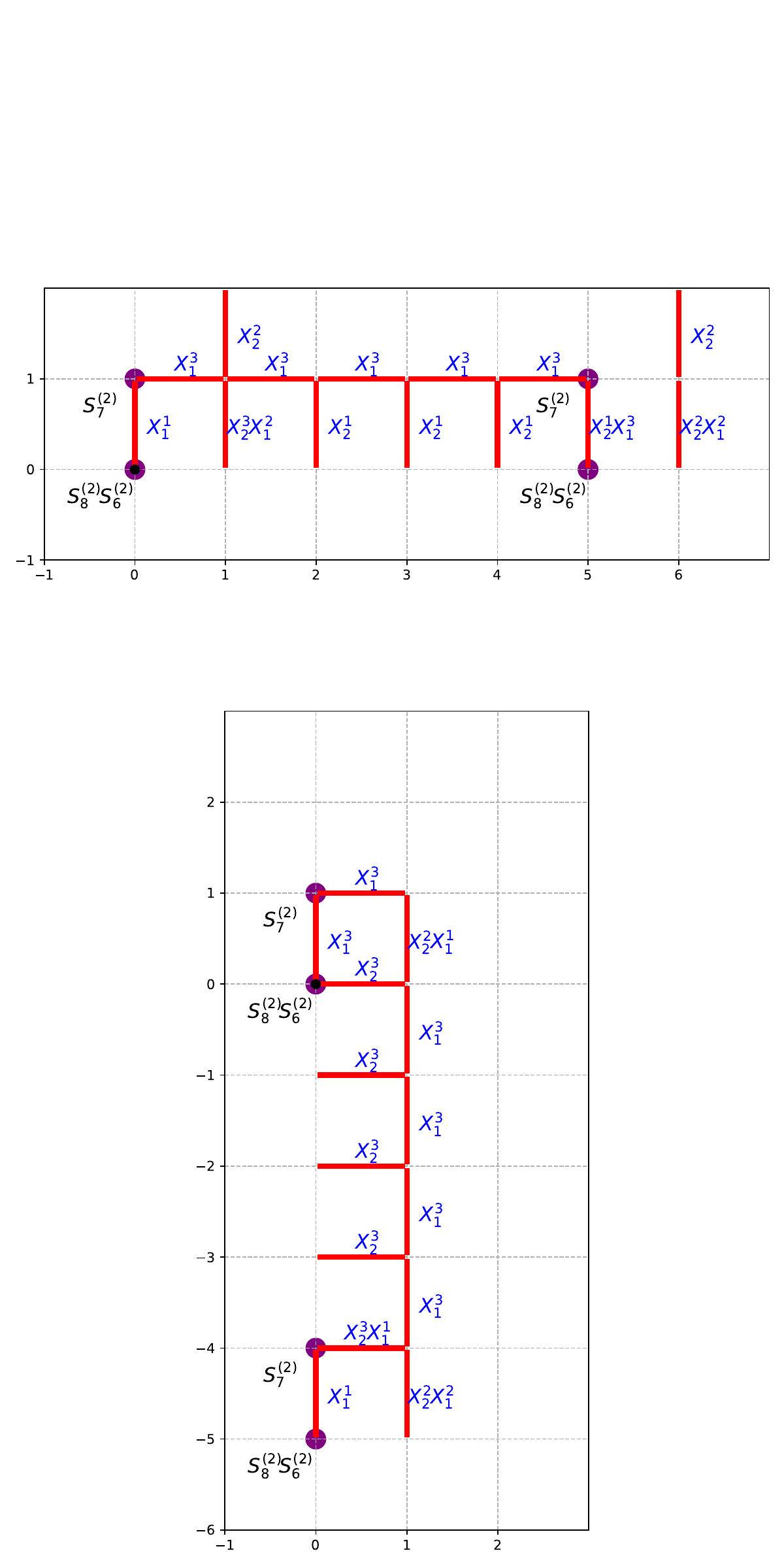}}
    \caption{The string operators of anyons $v_1$ and $v_2$ in the CSS code induced from double semion codes. For each subfigure, the first row represents the anyon string operators that move an anyon along the $x$-direction, and the second row represents the anyon string operators that move an anyon along the $y$-direction. Purple dots label the syndrome patterns at the endpoints of each string. Here, we denote the violation of plaquette terms $S_2$ and $S_6$ by purple dots on their bottom-left vertex. For $S_3, S_7$, we denote the violation of them by purple dots on the top vertex above the vertical edge. For $S_4, S_8$, we denote the violation of them by purple dots at the vertex below the vertical edge. Since this code is defined over $\mathbb{Z}_4$ qudits, we put superscript on $S$ to indicate the violation of the stabilizer with a phase, for example, $S_5^{(2)}$ indicates that violations of $S_5$ up to a phase $\omega^2=-1$ where $\omega=e^{i2\pi/4}$. We use this convention in all the following figures.}
    \label{fig:string_css_double_semion1}
\end{figure*}

\begin{figure*}
    \centering
    \subfigure[The string operator of anyon $v_3$ in the CSS code induced from double semion code.]{\includegraphics[width=0.48\textwidth]{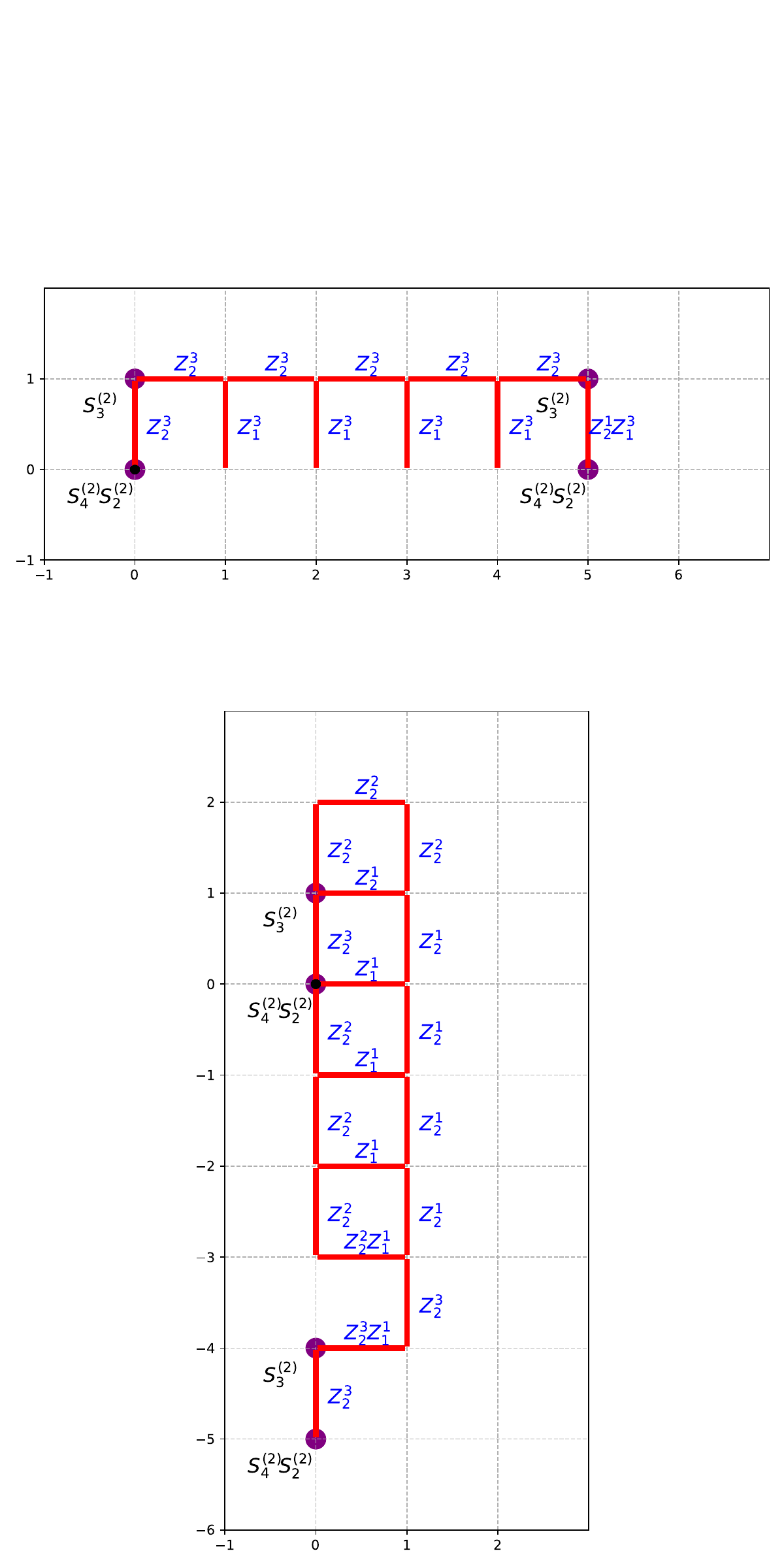}}
    \subfigure[The string operator of anyon $v_4$ in the CSS code induced from double semion code.]{\includegraphics[width=0.48\textwidth]{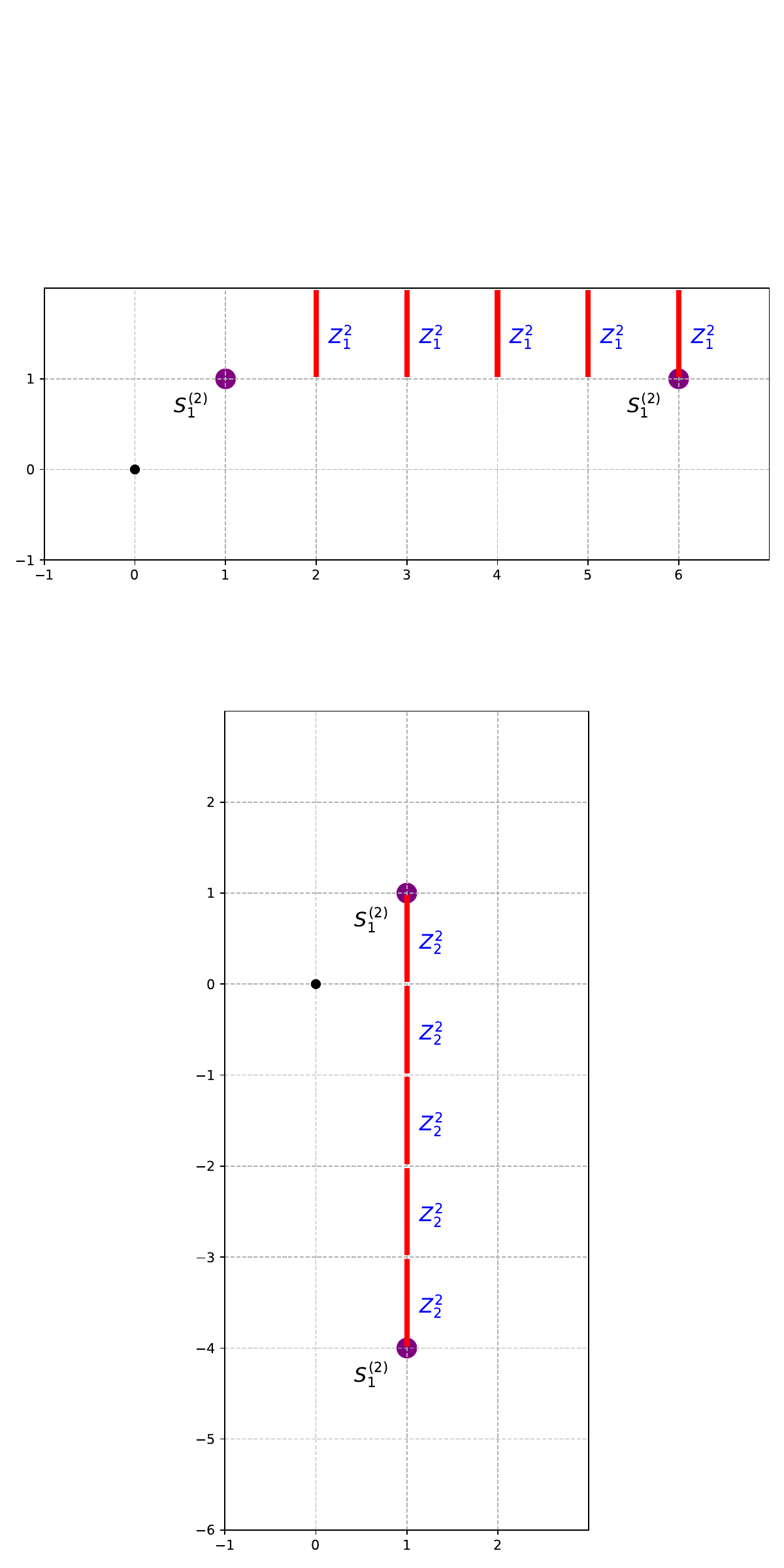}}
    \caption{(Continued) The string operators of anyons $v_3$ and $v_4$ in the CSS code induced from double semion codes. For each subfigure, the first row represents the anyon string operators that move an anyon along the $x$-direction, and the second row represents the anyon string operators that move an anyon along the $y$-direction. Purple dots label the syndrome patterns at the endpoints of each string. }
    \label{fig:string_css_double_semion2}
\end{figure*}

\subsection{Double semion code}

For the six-semion code, we obtain two basis anyons at $(1-x^1)$ and $(1-y^{-1})$ (also for greater $n_x, n_y$). The string operators for $n_x, n_y=5$ are shown in Fig.~\ref{fig:string_double_semion}.

\begin{figure*}[htb]
    \centering
    \subfigure[The string operator of anyon $v_1$ in the double semion code.]{\includegraphics[width=0.48\textwidth]{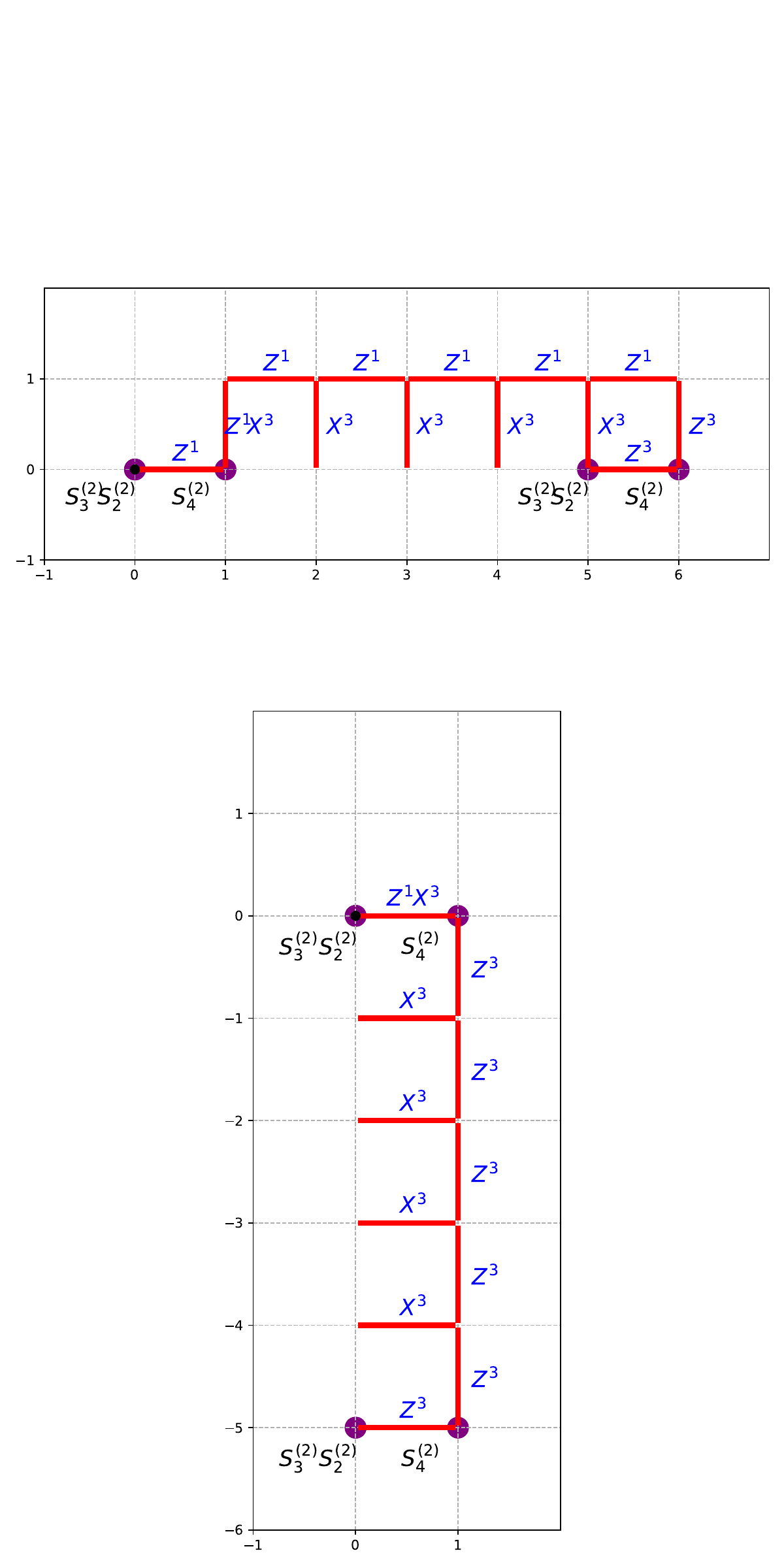}}
    \subfigure[The string operator of anyon $v_2$ for the double semion code.]{\includegraphics[width=0.48\textwidth]{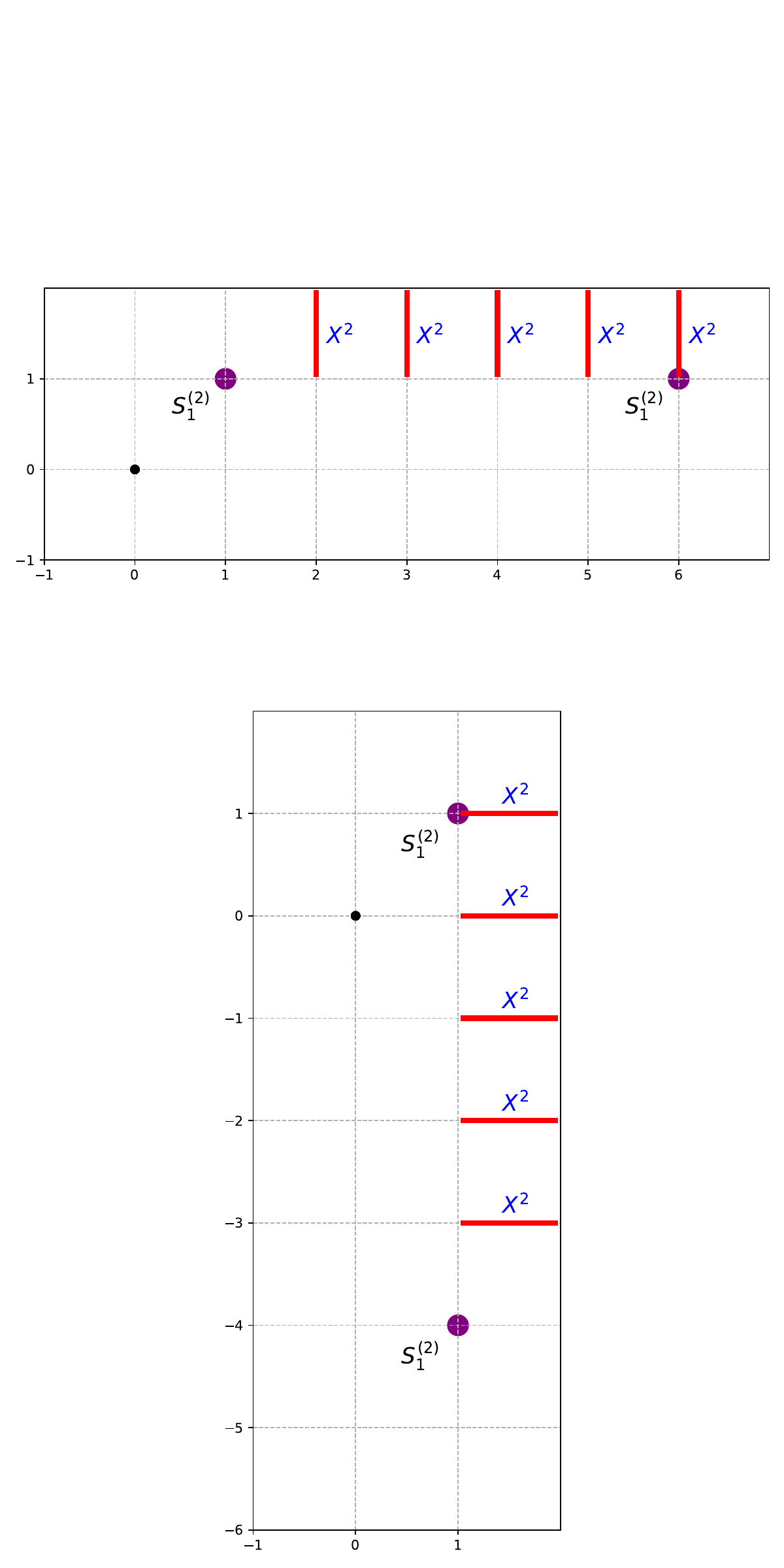}}
    \caption{The string operators of anyons $v_1$ and $v_2$ in the double semion code. For each subfigure, the first row represents the anyon string operators that move an anyon along the $x$-direction, and the second row represents the anyon string operators that move an anyon along the $y$-direction. Purple dots label the syndrome patterns at the endpoints of each string. }
    \label{fig:string_double_semion}
\end{figure*}

\subsection{Six-semion code}

For the six-semion code, we obtain {\color{black}two} basis anyons at $(1-x^1)$ and $(1-y^{-1})$ (also for greater $n_x, n_y$). The string operators for $n_x, n_y=5$ are shown in Fig.~\ref{fig:string_six_semion}. 
\begin{figure*}[htb]
    \centering
    \subfigure[The string operator of anyon $v_1$ in the six-semion code.]{\includegraphics[width=0.48\textwidth]{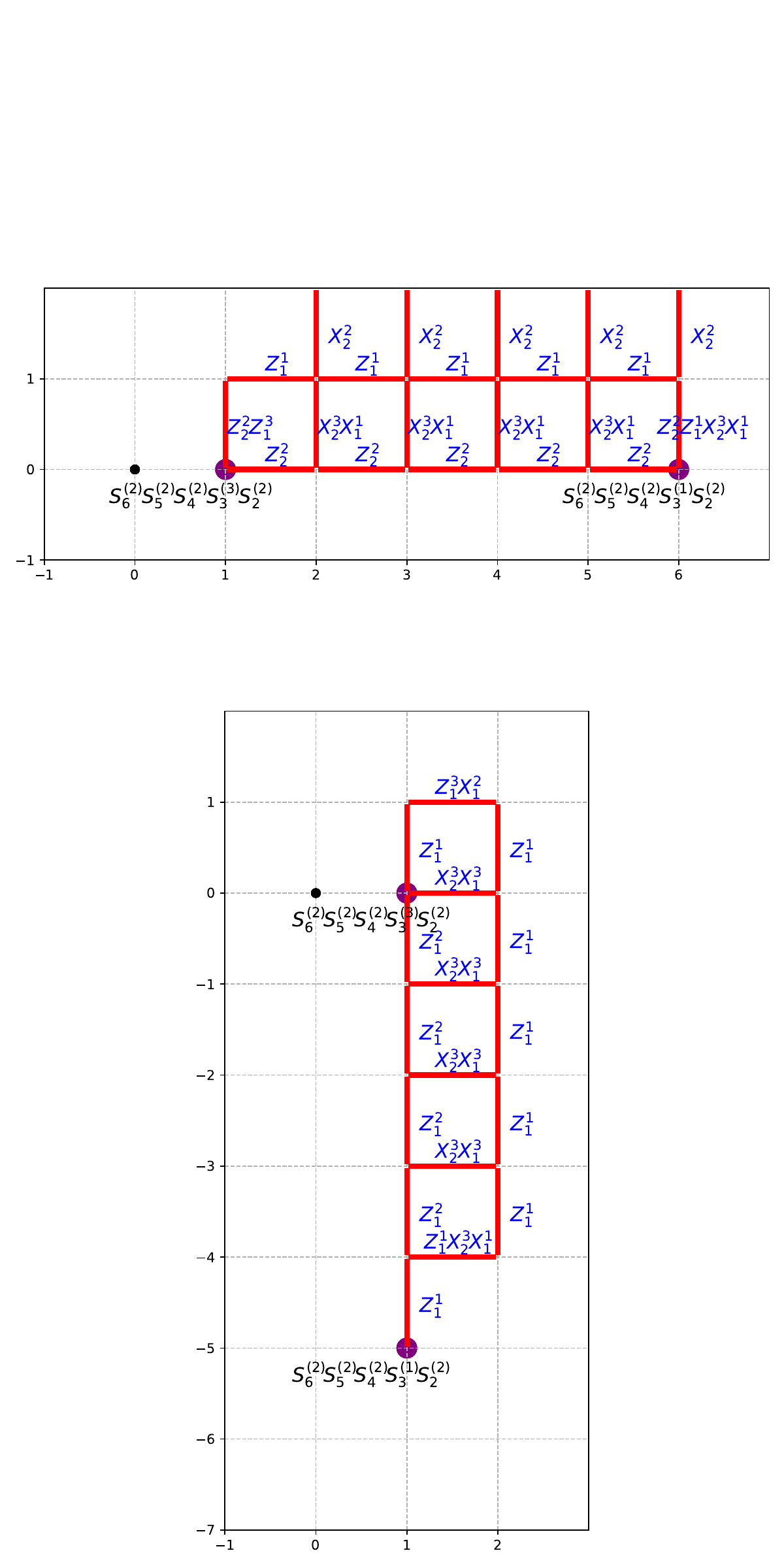}}
    \subfigure[The string operator of anyon $v_2$ in the six-semion code.]{\includegraphics[width=0.48\textwidth]{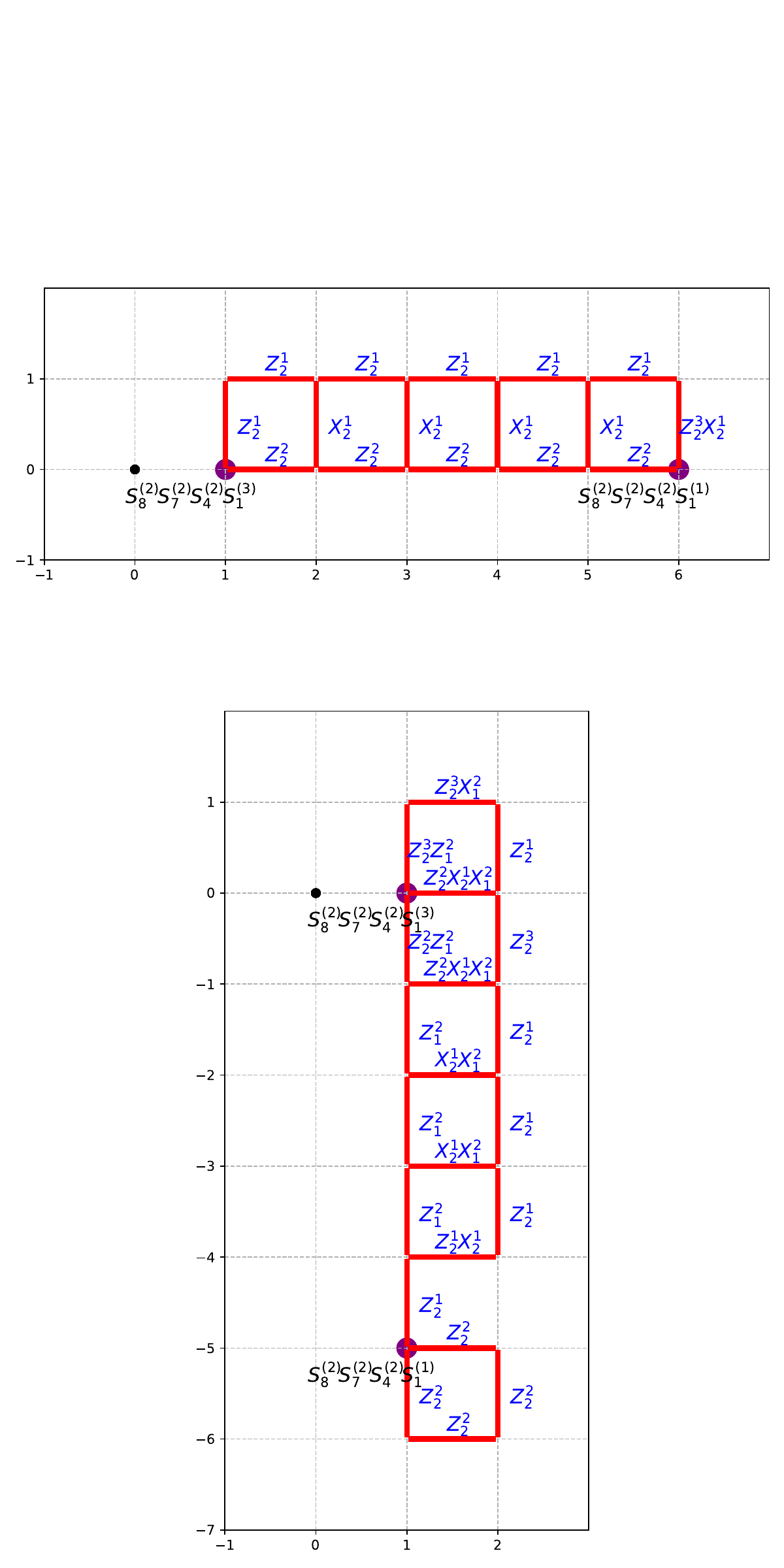}}
    \caption{The string operators of anyons $v_1$ and $v_2$ in the six-semion code. For each subfigure, the first row represents the anyon string operators that move an anyon along the $x$-direction, and the second row represents the anyon string operators that move an anyon along the $y$-direction. The syndrome pattern of endpoints of each string is labeled by purple dots. }
    \label{fig:string_six_semion}
\end{figure*}

\clearpage
\section{Braiding statistics for modified color codes}\label{appendix: braiding_tables}

This appendix demonstrates the complete braiding statistics of modified color codes (examples \circled{3} , \circled{4}, \circled{5}, \circled{6}) in Tables.~\ref{tab:Model 3}, \ref{tab:Model 4}, \ref{tab:Model 5}, \ref{tab:Model 6}. The anyons $v_i$ refers to the anyon string operators defined in the previous Appendix~\ref{appendix: string_operators}.

\begin{table}[h]
    \centering
    \begin{tabular}{|c|c|c|c|c|c|c|c|c|}
    \hline
         &$v_1$& $v_2$& $v_3$& $v_4$ & $v_5$& $v_6$& $v_7$&$v_8$\\
         \hline
         $v_1$& 1& 1& 1&1 & 1& 1& 1&\textcolor{red}{-1}\\
         \hline
         $v_2$& 1& 1& 1&\textcolor{red}{-1}& 1& 1& 1&1 
\\
         \hline
         $v_3$& 1& 1& 1& 1& 1& \textcolor{red}{-1}& 1&1\\
         \hline
         $v_4$& 1& \textcolor{red}{-1}& 1&1 & 1& 1& 1&1 \\
         \hline
 $v_5$& 1& 1& 1& 1 
& 1& 1& \textcolor{red}{-1}&1 
\\\hline
 $v_6$& 1& 1& \textcolor{red}{-1}& 1 
& 1& 1& 1&1 
\\\hline
 $v_7$& 1& 1& 1& 1& \textcolor{red}{-1}& 1& 1&1\\\hline
 $v_8$& \textcolor{red}{-1}& 1& 1& 1 & 1& 1& 1&1 \\\hline
    \end{tabular}
    \caption{Topological spins and braiding statistics of anyons for the modified color code A (example \textcircled{3}). We can see this table is formed by four copies of decoupled toric code, where $\{v_1, v_8\}, \{ v_2, v_4\}, \{v_3, v_6\}, \{v_5,v_7\}$ correspond to the $\{e,m\}$ pairs in the four copies of toric codes.}
    \label{tab:Model 3}
\end{table}

\begin{table}[h]
    \centering
    \resizebox{\columnwidth}{!}{%
    \begin{tabular}{|c|c|c|c|c|c|c|c|c|c|c|c|c|c|c|c|c|}
    \hline
        &$v_1$& $v_2$& $v_3$& $v_4$ & $v_5$& $v_6$& $v_7$&$v_8$ & $v_9$& $v_{10}$& $v_{11}$&$v_{12}$ & $v_{13}$& $v_{14}$& $v_{15}$&$v_{16}$\\
         \hline
        $v_1$& 1& 1& 1&1& 1& 1& 1&1 & \textcolor{red}{-1}& 1& 1&1  & 1& 1& 1&1 \\
         \hline
        $v_2$& 1& 1& 1&1& 1& 1& 1&1  & 1& 1& \textcolor{red}{-1}&1 & 1& 1& 1&1 \\
         \hline
        $v_3$& 1& 1& 1& 1& 1& 1& 1&1 & 1& 1& 1&1  & \textcolor{red}{-1}& 1& 1&1 \\
        \hline
        $v_4$& 1& 1& 1&1 & 1& 1& 1&1  & 1& 1& 1&\textcolor{red}{-1}& 1& 1& 1&1 \\
         \hline
        $v_5$& 1& 1& 1& 1 & 1& 1& 1&1& 1 & 1& 1&1 & 1& \textcolor{red}{-1}& 1&1 \\\hline
        $v_6$& 1& 1& 1& 1 & 1& 1& 1&1& 1& \textcolor{red}{-1}& 1&1 & 1& 1& 1&1 \\\hline
        $v_7$& 1& 1& 1& 1& 1& 1& 1&1 & 1& 1& 1&1  & 1& 1& \textcolor{red}{-1}&1\\\hline
        $v_8$& 1& 1& 1& 1 & 1& 1& 1&1& 1& 1& 1 &1& 1& 1& 1&\textcolor{red}{-1}\\\hline
        $v_9$ & \textcolor{red}{-1}& 1& 1& 1& 1& 1 & 1& 1 & 1& 1& 1&1  & 1& 1& 1&1\\\hline
        $v_{10}$& 1& 1& 1& 1 & 1& \textcolor{red}{-1}& 1& 1& 1 & 1& 1&1  & 1& 1& 1&1\\\hline
        $v_{11}$& 1& \textcolor{red}{-1}& 1& 1& 1 & 1& 1& 1& 1& 1 & 1&1  & 1& 1& 1&1 \\\hline
        $v_{12}$& 1& 1& 1& \textcolor{red}{-1}& 1& 1& 1 & 1& 1& 1& 1&1  & 1& 1& 1&1\\\hline
        $v_{13}$& 1& 1& \textcolor{red}{-1}& 1 & 1& 1& 1& 1  & 1& 1& 1& 1  & 1& 1& 1&1 \\\hline
        $v_{14}$& 1& 1& 1& 1 
        & \textcolor{red}{-1}& 1& 1& 1& 1 & 1& 1& 1& 1& 1 & 1&1\\\hline
        $v_{15}$& 1& 1& 1& 1& 1 & 1 & \textcolor{red}{-1}& 1& 1 
        & 1& 1& 1& 1& 1 & 1&1\\\hline
        $v_{16}$& 1& 1& 1& 1 
        & 1& 1& 1& \textcolor{red}{-1}& 1 & 1& 1& 1& 1& 1& 1&1\\\hline
    \end{tabular}}
    \caption{Topological spins and braiding statistics of anyons for the modified color code B (example \textcircled{4}). We can see this table is formed by four copies of decoupled toric code, where $\{v_1, v_9\}, \{ v_2, v_{11}\}, \{v_3, v_{13}\}$, $\{v_4,v_{12}\}, \{v_5,v_{14}\}, \{v_6,v_{10}\}, \{v_7,v_{15}\}, \{v_8,v_{16}\}$ correspond to the $\{e,m\}$ pairs in the eight copies of toric codes.}
    \label{tab:Model 4}
\end{table}

\begin{table}[h]
    \centering
    \begin{tabular}{|c|c|c|c|c|c|c|c|c|}
    \hline
         &$v_1$& $v_2$& $v_3$& $v_4$ & $v_5$& $v_6$& $v_7$&$v_8$\\
         \hline
         $v_1$& 1& 1& 1&\textcolor{red}{-1}& 1& 1& 1&1\\
         \hline
         $v_2$& 1& 1& 1&1& 1& \textcolor{red}{-1}& 1&1 \\
         \hline
         $v_3$& 1& 1& 1& 1& 1& 1& \textcolor{red}{-1}&1\\
         \hline
         $v_4$& \textcolor{red}{-1}& 1& 1&1 & 1& 1& 1&1 \\
         \hline
 $v_5$& 1& 1& 1& 1 
& 1& 1& 1&\textcolor{red}{-1}\\\hline
 $v_6$& 1& \textcolor{red}{-1}& 1& 1 
& 1& 1& 1&1 
\\\hline
 $v_7$& 1& 1& \textcolor{red}{-1}& 1& 1& 1& 1&1\\\hline
 $v_8$& 1& 1& 1& 1 & \textcolor{red}{-1}& 1& 1&1 \\\hline
    \end{tabular}
    \caption{Topological spins and braiding statistics of anyons for the modified color code C (example \textcircled{5}). We can see this table is formed by four copies of decoupled toric code, where $\{v_1, v_4\}, \{ v_2, v_6\}, \{v_3, v_7\}, \{v_5,v_8\}$ correspond to the $\{e,m\}$ pairs in the four copies of toric codes.}
    \label{tab:Model 5}
\end{table}

\begin{table}[h]
    \centering
    \begin{tabular}{|c|c|c|c|c|c|c|c|c|c|c|c|c|}
    \hline
         &$v_1$& $v_2$& $v_3$& $v_4$ & $v_5$& $v_6$& $v_7$&$v_8$ & $v_9$& $v_{10}$& $v_{11}$&$v_{12}$\\
         \hline
         $v_1$& 1& 1& 1&\textcolor{red}{-1}& 1& 1& 1&1 & 1& 1& 1&1 \\
         \hline
         $v_2$& 1& 1& 1&1& 1& 1& 1&1 
 & \textcolor{red}{-1}& 1& 1&1\\
         \hline
         $v_3$& 1& 1& 1& 1& 1& 1& \textcolor{red}{-1}&1 & 1& 1& 1&1 \\
         \hline
         $v_4$& \textcolor{red}{-1}& 1& 1&1 & 1& 1& 1&1  & 1& 1& 1&1 \\
         \hline
 $v_5$& 1& 1& 1& 1 
& 1& 1& 1&1& 1 & \textcolor{red}{-1}& 1&1\\\hline
 $v_6$& 1& 1& 1& 1 
& 1& 1& 1&1& 1& 1 & \textcolor{red}{-1}&1\\\hline
 $v_7$& 1& 1& \textcolor{red}{-1}& 1& 1& 1& 1&1 & 1& 1& 1&1 \\\hline
 $v_8$& 1& 1& 1& 1 & 1& 1& 1&1& 1& 1& 1 &\textcolor{red}{-1}\\\hline
 $v_9$
& 1& \textcolor{red}{-1}& 1& 1& 1& 1 & 1& 1 & 1& 1& 1&1 \\\hline
 $v_{10}$
& 1& 1& 1& 1 & \textcolor{red}{-1}& 1& 1& 1& 1 & 1& 1&1 \\\hline
 $v_{11}$
& 1& 1& 1& 1& 1 & \textcolor{red}{-1}& 1& 1& 1& 1 & 1&1 \\\hline
 $v_{12}$& 1& 1& 1& 1& 1& 1& 1 & \textcolor{red}{-1}& 1& 1& 1&1 \\\hline
    \end{tabular}
    \caption{Topological spins and braiding statistics of anyons for the modified color code D (example \textcircled{6}). We can see this table is formed by four copies of decoupled toric code, where $\{v_1, v_4\}, \{ v_2, v_9\}, \{v_3, v_7\}, \{v_5,v_{10}\},  \{v_6,v_{11}\},  \{v_8,v_{12}\}$ correspond to the $\{e,m\}$ pairs in the six copies of toric codes.}
    \label{tab:Model 6}
\end{table}

\section{Algorithm pseudocode for extracting topological orders from generalized Pauli stabilizer codes}\label{appendix: pseudocode}

{\color{black}
This appendix presents the pseudocode for the algorithm described in Sections~\ref{sec: computational_method} and~\ref{sec: algorithm}.
\begin{breakablealgorithm}
    \caption{Checking topological order condition }
    \begin{algorithmic}[1] 
        \Require Stabilizer polynomials $\mathcal{S}$, truncation range $k$, translation range $m$, searching range $m'${, $\ZZ_d$ $d$}
        \Ensure Whether the stabilizers satisfy the TO condition
        \State Calculate the error syndrome $\epsilon(\mathcal{P})$ for single-Pauli operator $\forall \mathcal{P} \in \{\mX_1, \mZ_1,...,\mX_w,\mZ_w\}$ from stabilizer polynomials. Get the syndrome matrix 
        \begin{eqs}
        E\leftarrow \begin{bmatrix}
            \epsilon(\mX_1)\\
            \epsilon(\mX_2)\\
            \vdots\\
            \epsilon(\mX_w)\\
            \epsilon(\mZ_1)\\
            \vdots\\
            \epsilon(\mZ_w)
        \end{bmatrix},
        \end{eqs}
        which is a $(2w)\times [t(2k+1)^2]$ matrix for system with $t$ stabilizers and $w$ physical qudits per unit cell.
        \State Apply the translation duplicate map $\text{TD}_m$ with range $m$ to each row of $E$, such that 
        \begin{eqs}
        \wwide{M}_1 \leftarrow \wwide{\text{TD}_m(E)}=\begin{bmatrix}
            \wwide{\text{TD}_m(\epsilon(\mX_1))}\\
            \wwide{\text{TD}_m(\epsilon(\mX_2))}\\
            \vdots\\
            \wwide{\text{TD}_m(\epsilon(\mZ_w))}
        \end{bmatrix}.\end{eqs} $\wwide{M}_1$ is a $[2w(2m+1)^2]\times [t(2k+1)^2]$ matrix 
        \State Perform modified Gaussian elimination (for non-prime dimensional qudits) or Gaussian elimination (for prime dimensional qudits) of on $\wwide{M}_1$, get $\text{(M)GE}(\wwide{M}_1)$ and a $[2w(2m+1)^2]\times [(2k+1)^2]$ relation matrix $R_1$.
        \State Obtain local operator set $O$ from rows of $R_1$ that correspond to zero rows in (modified) Gaussian elimination
        
        \State Apply translation duplication map $\text{TD}_{m'}$ with range $m'<k$ to stabilizer matrix $\mathcal{S}$, 
        \begin{eqs}
            \wwide{M}_2 \leftarrow \begin{bmatrix}
            \wwide{\text{TD}_{m'}(\mS_1)}\\
            \wwide{\text{TD}_{m'}(\mS_2)}\\
            \vdots\\
            \wwide{\text{TD}_{m'}(\mS_t)}
        \end{bmatrix}.\end{eqs} $\wwide{M}_2$ is a $[t(2m'+1)^2]\times [(2k+1)^2]$ matrix.

        \State Perform modified Gaussian elimination (for non-prime dimensional qudits) or Gaussian elimination (for prime dimensional qudits) of on $\wwide{M}_2$, get $\text{(M)GE}(\wwide{M}_2)$.

        


        \If{$d$ is prime}
            \State $r_1\leftarrow \text{rank}(GE(\wwide{M}_2))$.
             \State $r_2$ $\leftarrow \text{GE}(\text{concate}(O,\text{GE}(\wwide{M}_2)))$
             \If{$r_1\neq r_2$}
                 \State \Return "Does not satisfy the TO condition"
             \EndIf
         \EndIf
         \If{$d$ is nonprime}
            \For{$O_i$ in local operator set $O$}
                 \If{$O_i$ is not in the row span of $\text{MGE}(\wwide{M}_2)$.}
                     \State \Return "Does not satisfy the TO condition"
                \EndIf
             \EndFor
         \EndIf
         \State \Return "Satisfy the TO condition"
        
    \end{algorithmic}
\end{breakablealgorithm}

\begin{breakablealgorithm}
    \caption{Extracting string operators }
    \begin{algorithmic}[1] 

    \Require Syndrome matrix $E$, $\wwide{M}_1=\wwide{\text{TD}_m(E)}$, search range $N_x$, $N_y$ for $x$- and $y$-direction.
    \Ensure Basis anyon matrix $\wwide{V}$ and string operators $\wwide{P_x}, \wwide{P_y}$
    \For{$n_x = 1, 2, ..., N_x$}
        \State Define matrix $\wwide{M}_3^X$ as 
        \begin{eqs}
            \wwide{M}_3^X(n_x) \leftarrow \begin{bmatrix}
            \wwide{\mathrm{TD}_m( \eps(\mX_1))} \\
            \vdots\\
            \wwide{\mathrm{TD}_m( \eps(\mZ_w))} \\
            \wwide{\mathrm{TD}_m( (1-x^{n_x})\mathbf{1}_1)} \\
            \wwide{\mathrm{TD}_m(  (1-x^{n_x})\mathbf{1}_2)} \\
            \vdots \\
            \wwide{\mathrm{TD}_m(  (1-x^{n_x})\mathbf{1}_q)}
        \end{bmatrix},
        \end{eqs}
        where $\mathbf{1}_i$ represents a $t$-dimensional one-hot vector with the $i$-th entry equals to 1. $\wwide{M}_3$ is a $[(2w+t)(2m+1)^2] \times [t(2k+1)^2]$ matrix.
        \State Calculate $\text{(M)GE}(\wwide{M}_3^X (n_x))$, obtain a anyon matrix 
        \begin{eqs}
            \wwide{V_X(n_x)}\leftarrow \begin{bmatrix}
                \wwide{v_1(n_x)}\\
                \vdots\\
                \wwide{v_{\alpha} (n_x)}
            \end{bmatrix}
        \end{eqs} which is a $\alpha \times (t(2k+1)^2)$ matrix.\footnote{Assume we get $\alpha$ different anyon solutions here.}
        Their string operators along the $x$-direction form the string operator matrix
        \begin{eqs}
            \wwide{P_x(n_x)}\leftarrow\begin{bmatrix}
                \wwide{P_x^{v_1(n_x)}}\\
                \vdots\\
                \wwide{P_x^{v_\alpha(n_x)}}
            \end{bmatrix}
        \end{eqs}
        which is a $\alpha \times (2w(2k+1)^2)$ matrix obtained from the relation matrix $R_X(n_x)$.
        \State $\wwide{V_X(n_x)}, \wwide{P_x(n_x)}\leftarrow \text{Find\_Independent\_Anyons} (\wwide{V_X(n_x)}, \wwide{P_x(n_x)})$
        
    \EndFor
    \State 
    \begin{eqs}
        &n_x^*\leftarrow \min(\argmax_{n_x}(\text{Shape}(\wwide{V_X(n_x)},0))) \\
        &\wwide{M}_3^X\leftarrow \wwide{M}_3^X(n_x^*)\\
        &\wwide{V_X} \leftarrow \wwide{V_X(n_x^*)}\\
        &\wwide{P_x} \leftarrow \wwide{P_x(n_x^*)}
    \end{eqs}
    
    \State
    \For{$n_y = 1, 2, ..., N_y$}
        \State Define matrix $\wwide{M}_3^Y$ as 
        \begin{eqs}
             \wwide{M}_3^Y \leftarrow \begin{bmatrix}
             \wwide{\mathrm{TD}_m( \eps(\mX_1))} \\
             \vdots\\
             \wwide{\mathrm{TD}_m( \eps(\mZ_w))} \\
             \wwide{\mathrm{TD}_m( (1-y^{n_y})\mathbf{1}_1)} \\
             \wwide{\mathrm{TD}_m(  (1-y^{n_y})\mathbf{1}_2)} \\
             \vdots \\
             \wwide{\mathrm{TD}_m(  (1-y^{n_y})\mathbf{1}_q)}
        \end{bmatrix},
        \end{eqs}
         $\wwide{M}_3$ is a $[(2w+t)(2m+1)^2] \times [t(2k+1)^2]$ matrix.
        \State Calculate $\text{(M)GE}(\wwide{M}_3^X (n_y))$, obtain a anyon matrix $\wwide{V_Y(n_y)}=\{\wwide{v_1(n_y)}, \wwide{v_2(n_y)},...\}$ and their string operators along the $y$-direction $\wwide{P_y(n_y)}=\{ \wwide{P_y^{v_1(n_y)}}, \wwide{P_y^{v_2(n_y)}},... \}$ from the relation matrix $R_Y(n_y)$.
        \State $\wwide{V_Y(n_y)}, \wwide{P_y(n_y)}\leftarrow \text{Find\_Independent\_Anyons} (\wwide{V_Y(n_y)}, \wwide{P_y(n_y)})$
    \EndFor
     \State 
    \begin{eqs}
        &n_y^*\leftarrow \min(\argmax_{n_y}(\text{Shape}(\wwide{V_Y(n_x)},0))\\
        &\wwide{M}_3^Y\leftarrow \wwide{M}_3^Y(n_y^*)
    \end{eqs}
    
    {
    \State
    \State $\wwide{V}, \wwide{P_x} \leftarrow \text{Find\_Independent\_Anyons}(\wwide{V_X},\wwide{P_x})$ \Comment{Extract the basis anyon matrix $\wwide{V}$ and its string operators $\wwide{P_x}$ from $\wwide{V_X}$ and $\wwide{P_x}$.}
 
    \State Reset $\wwide{M}_1 \leftarrow \wwide{\text{TD}_m(E)}$
    \For{$v_i$ is a rows of anyon matrix $\wwide{V}$}
    \State Expand $\wwide{(1-y^{-n_y})}v_i$ in the span of rows of $\wwide{M}_1$, obtain the corresponding string operator $\wwide{P_y^{v_i}}$
    \State $\wwide{P_y} \leftarrow \text{concate}(\wwide{P_y^{v_i}} , \wwide{P_y})$
    \EndFor
    \State \Return Basis anyon matrix $\wwide{V}$ and their string operators along the $x-$ and $y-$ direction $\wwide{P_x}, \wwide{P_y}$
    
    }
    \State
    \State\Function{Find\_Independent\_Anyons}{$\widetilde{V}',\widetilde{P}'$}
    \Comment{$\widetilde{V}’$ is the input anyon set, $\widetilde{P}'$ is the corresponding string operators.}

    \State Construct $M_v$ using the convention in Eq.~\eqref{eq: anyon relation example}
        \State Calculate the Smith normal form of $M_v $ as  $PAQ=M_v$
        \State $index \leftarrow \arg_i{A(i,i)\neq \pm 1}$
        \State The relation matrix between basis anyon matrix $\widetilde{V}$ and the input anyon matrix $\widetilde{V}'$ is $Q(index,:)$ such that $\widetilde{V}=\widetilde{V}'Q(:,index) $. \Comment{Find the basis anyon matrix.}
        \State $\widetilde{P}\leftarrow  \widetilde{P}' Q(:,index) $
    \State \Return $\widetilde{V}, \widetilde{P}$
  \EndFunction
    \end{algorithmic}
\end{breakablealgorithm}

\begin{breakablealgorithm}
    \caption{Extracting topological spins } 
    \begin{algorithmic}[1]
        \Require String operator matrices $\wwide{P_x}, \wwide{P_y}$ for basis anyon matrix $\wwide{V}$ in the $x$ and $y$ direction, the length extension $q$ of string operator 
        \Ensure Topological spin array $T$ and braiding matrix $B$

        \State Define Topological spin array $T=\{\}$
        \For {$\wwide{v_i}$ is a row of anyon matrix $\wwide{V}$}
            \State Convert $\wwide{P_x^{v_i}}$ and $\wwide{P_y^{v_i}}$ to the corresponding polynomials $P_x^{v_i}$ and $P_y^{v_i}$ 
            \State Extend the length of string operators to its $q$ times 
            \begin{eqs}
                    U_1^{v_i} &\leftarrow(x^{-q n_x}+ x^{-(q-1) n_x}+\cdots +x^{-n_x} )P_x^v,\\
        U_2^{v_i}  &\leftarrow (1+y^{-n_y}+y^{-2n_y}+\cdots+y^{-q n_y})P_y^v,\\
        U_3^{v_i}  &\leftarrow -(1+x^{n_x}+x^{2 n_x}+\cdots +x^{q n_x})P_x^v, 
                \end{eqs}
            \State 
                \begin{eqs}
                \theta(v_i)\leftarrow \frac{2 \pi i}{d} [U_1, U_2]+ [U_2, U_3] + [U_3, U_1] 
                \end{eqs}
            \State $T\leftarrow \text{concate}(\theta(v_i), T)$
        \EndFor
        
        \State
        \State Define braiding matrix $B=\{\}$
        \For {$\wwide{v_i}$ is a row of anyon matrix $\wwide{V}$}
            \For{$\wwide{v_j}$ is a row of anyon matrix $\wwide{V}$, and $\wwide{v_j}\neq \wwide{v_i}$ }

                \State Convert $\wwide{P_x^{v_i}}, \wwide{P_y^{v_i}}, \wwide{P_x^{v_j}}, \wwide{P_y^{v_j}}$ to the corresponding polynomials $P_x^{v_i}, P_y^{v_i}, P_x^{v_j}, P_y^{v_j}$.
                \State Combine the string operator 
                \begin{eqs}
                    &P_x^{v_i\times v_j}\leftarrow P_x^{v_i}P_x^{v_j}\\
                    &P_y^{v_i\times v_j}\leftarrow P_y^{v_i}P_y^{v_j}
                \end{eqs}
                \State Extend the length of string operators to its $q$ times 
                \begin{eqs}
        U_1^{v_i\times v_j}  &\leftarrow(x^{-q n_x}+ x^{-(q-1) n_x}+\cdots +x^{-n_x} )P_x^{v_i\times v_j},\\
        U_2^{v_i\times v_j}  &\leftarrow (1+y^{-n_y}+y^{-2n_y}+\cdots+y^{-q n_y})P_y^{v_i\times v_j},\\
        U_3^{v_i\times v_j}  &\leftarrow -(1+x^{n_x}+x^{2 n_x}+\cdots +x^{q n_x})P_x^{v_i\times v_j}, 
                \end{eqs}
                \State
                \begin{eqs}
                \theta(v_i\times v_j)\leftarrow \frac{2 \pi i}{d} [U_1, U_2]+ [U_2, U_3] + [U_3, U_1] 
                \end{eqs}
                \State $B(v_i, v_j)\leftarrow \frac{\theta(v_i \times v_j)}{\theta(v_i) \theta(v_j)}$
            
            \EndFor
        \EndFor
        \State Take the converged values of $T$ and $B$. 

        \State \Return Topological spin array $T$ and braiding matrix $B$
    \end{algorithmic}
\end{breakablealgorithm}

\begin{breakablealgorithm}
    \caption{Rearranging the basis anyons} 
    \begin{algorithmic}[1]
        \Require Basis anyon set $V$, topological spin array $T$ and braiding matrix $B$
        \Ensure Rearranged anyon set $V_{\text{dec}}$ formed by decoupled $\{e,m\}$ pairs
        \State $V_{\text{dec}} \leftarrow \{\}$
        \For{Find a nontrivial boson $b\in V$}
            \State Find a anyon $c\in (V-\{b\})$ such that $\textcolor{black}{B(b,c)=\exp (\frac{2\pi i}{p})}$
            \If{$c$ is not a boson}
                \State $c \leftarrow c \times b^\#$ \Comment{make $c$ to be a boson}
                \State $V_{\text{dec}}\leftarrow\{ \{b,c\} \text{pairs} \}\cup V_{\text{dec}}$
            \EndIf
            \For{$v_i \in (V-\{b,c\})$}
                \If{$B(b,v_i) \neq 1$ or $B(c,v_i) \neq 1$}
                    \State $v_i \gets v_i\times b^{\#}\times c^{\#}$  \Comment{decouple $b,c$ with the rest of anyons}
                    \State $V\leftarrow V-\{b,c\}$
                \EndIf
            \EndFor
        \EndFor

        \State \Return Rearranged anyon set $V_{\text{dec}}$ formed by decoupled $\{e,m\}$ pairs
    \end{algorithmic}
\end{breakablealgorithm}
We have completed the pseudocode for our main algorithm. For a detailed theoretical background, please refer to Sections~\ref{sec: laurent_polynomial}, \ref{sec: computational_method}, and \ref{sec: algorithm}.
}

\bibliography{bibliography.bib}

\end{document}